\newtheorem{theorem}{Theorem}
\newtheorem{corollary}{Corollary}
\newtheorem{lemma}{Lemma}
\newtheorem{claim}{Claim}
\newtheorem{proposition}{Proposition}
\newtheorem{observation}{Observation}
\newcommand{\partition}{\textsc{Partition}}
\newcommand{\select}{\textsc{Select}}
\newcommand{\findMin}{\textsc{findMin}}
\newcommand{\rank}{\textsc{Rank}}
\newenvironment{prevproof}[2]{\noindent {\em {Proof of {#1}~\ref{#2}:}}}{$\Box$\vskip \belowdisplayskip}
\newcommand{\junk}[1]{}
\newcommand{\poly}{{\rm poly}}
\newcommand{\notshow}[1]{{}}
\DeclareMathOperator{\E}{E}
\definecolor{MyGray}{rgb}{0.8,0.8,0.8}
\begin{document}
\title{Parallel Algorithms for Select and Partition with Noisy Comparisons}

\author{
Mark Braverman \thanks{Department of Computer Science, Princeton University, email: mbraverm@cs.princeton.edu. Research supported in part by an NSF CAREER award (CCF-1149888), NSF CCF-1215990, NSF CCF-1525342, a Packard Fellowship in Science and Engineering, and the Simons Collaboration on Algorithms and Geometry.}
\and
Jieming Mao  \thanks{Department of Computer Science, Princeton University, email: jiemingm@cs.princeton.edu.}
\and 
S. Matthew Weinberg \thanks{Department of Computer Science, Princeton University, email: sethmw@cs.princeton.edu. Research completed in part while the author was a Microsoft Research Fellow at the Simons Institute for the Theory of Computing. }
}
\addtocounter{page}{-1}
\maketitle
\begin{abstract} We consider the problem of finding the $k^{th}$ highest element in a totally ordered set of $n$ elements (\select), and partitioning a totally ordered set into the top $k$ and bottom $n-k$ elements (\partition) using pairwise comparisons. Motivated by settings like peer grading or crowdsourcing, where multiple rounds of interaction are costly and queried comparisons may be inconsistent with the ground truth, we evaluate algorithms based both on their total runtime and the number of interactive rounds in three comparison models: noiseless (where the comparisons are correct), erasure (where comparisons are erased with probability $1-\gamma$), and noisy (where comparisons are correct with probability $1/2+\gamma/2$ and incorrect otherwise). We provide numerous matching upper and lower bounds in all three models. Even our results in the noiseless model, {which is quite well-studied in the TCS literature on parallel algorithms}, are novel.

\end{abstract}

\newpage
\thispagestyle{empty}
\section{Introduction} 
{\emph{Rank aggregation} is a fundamental problem with numerous important applications, ranging from well-studied settings such as social choice~\cite{CaplinN91} and web search~\cite{DworkKNS01} to newer platforms such as crowdsourcing~\cite{ChenBCH13} and peer grading~\cite{PiechHCDNK13}. Salient common features among these applications is that in the end, \emph{ordinal} rather than \emph{cardinal} information about the elements is relevant, and a precise fine-grained ordering of the elements is often unnecessary. For example, the goal of social choice is to select the best alternative, regardless of how good it is. In a curved course, the goal of peer grading is to partition assignments into quantiles corresponding to A/B/C/D, etc, regardless of their absolute quality. 

Prior work has produced numerous ordinal aggregation procedures (i.e. based on comparisons of elements rather than cardinal evaluations of individual elements) in different settings, and we overview those most relevant to our work in Section~\ref{sec:related}. However, existing models from this literature fail to capture an important aspect of the problem with respect to some of the newer applications; that \emph{multiple rounds of interaction are costly}. In crowdsourcing, for instance, one round of interaction is the time it takes to send out a bunch of tasks to users and wait for their responses before deciding which tasks to send out next, which is the main computational bottleneck. In peer grading, each round of interaction might take a week, and grades are expected to be determined certainly within a few weeks. In conference decisions, even one round of interaction seems to be pushing the time constraints.}

Fortunately, the TCS community already provides a vast literature of algorithms with this constraint in mind, under the name of parallel algorithms. For instance, previous work resolves questions like ``how many interactive rounds are necessary for a deterministic or randomized algorithm to select the $k^{th}$ element with $O(n)$ total comparisons?''~\cite{Valiant75,Reischuk81,AjtaiKSS86,AlonA88b,AlonA88a,BollobasB90}. This line of research, however, misses a different important aspect related to these applications (that is, in fact, captured by most works in rank aggregation), that the comparisons might be erroneous. Motivated by applications such as crowdsourcing and peer grading, we therefore study the round complexity of \partition, the problem of partitioning a totally ordered set into the top $k$ and bottom $n-k$ elements, when comparisons might be erroneous.

Our first results on this front provide matching upper and lower bounds on what is achievable for \partition\  in just one round in three different models of error: noiseless (where the comparisons are correct), erasure (where comparisons are erased with probability $1-\gamma$), and noisy (where comparisons are correct with probability $1/2+\gamma/2$ and incorrect otherwise). We provide one-round algorithms using $dn$ comparisons that make $O(n/d), O(n/(d\gamma)), $ and $O(n/(d\gamma^2))$ mistakes (a mistake is any element placed on the wrong side of the partition) with high probability in the three models, respectively. The algorithms are randomized and different for each model, and the bounds hold both when $d$ is an absolute constant or a function of $n$ and $\gamma$. We provide asymptotically matching lower bounds as well: all (potentially randomized) one-round algorithms using $dn$ comparisons necessarily make $\Omega(n/d), \Omega(n/(d\gamma))$, and $\Omega(n/(d\gamma^2))$ mistakes in expectation in the three models, respectively. We further show that the same algorithms and lower bound constructions are also optimal (up to absolute constant factors) if mistakes are instead weighted by various different measures of their distance to $k$, the cutoff.\footnote{Specifically, if $\text{WRONG}_i$ denotes the random variable that is $1$ if an algorithm misplaces $i$ and $0$ otherwise, we consider measures of the following form, for any choice of $c$: $\sum_{i} \text{WRONG}_i |i-k|^c$. For example, $c=0$ counts the number of mistakes. This is further discussed in Section~\ref{sec:prelim}.}

After understanding completely the tradeoff between the number of comparisons and mistakes for one-round algorithms in each of the three models, we turn our attention to multi-round algorithms. Here, the results are more complex and can't be summarized in a few sentences. We briefly overview our multi-round results in each of the three models below. Again, \emph{all} of the upper and lower bounds discussed below extend when mistakes are weighted by their distance to the cutoff. {We overview the techniques used in proving our results in Section~\ref{sec:tools}, but just briefly note here that the level of technicality roughly increases as we go from the noiseless to erasure to noisy models. In particular, lower bounds in the noisy model are quite involved.}

\paragraph{Multi-Round Results in the Noiseless Model.}
\begin{enumerate}
\item We design a 2-round algorithm for \partition \ using $n/\varepsilon$ total comparisons that makes $O(n^{1/2+\varepsilon}\poly(\log n)))$ mistakes with probability {$1-e^{-\Omega(n)}$}, and prove a nearly matching lower bound of $\Omega(\sqrt{n}\cdot \varepsilon^{5/2})$ mistakes, for any $\varepsilon > 0$ ($\varepsilon$ may be a constant or a function of $n$). 
\item We design a 3-round algorithm for \partition\ making $O(n\cdot\poly(\log n))$ total comparisons that makes \emph{zero} mistakes with probability {$1-e^{-\Omega(n)}$}. It is known that $\omega(n)$ total comparisons are necessary for a 3-round algorithm just to solve \select, the problem of \emph{finding} the $k^{th}$ element, {with probability $1-o(1)$}~\cite{BollobasB90}.
\item We design a 4-round algorithm for \partition\ making $O(n)$ total comparisons that makes \emph{zero} mistakes with probability {$1-e^{-\Omega(n)}$}.  This matches the guarantee provided by an algorithm of Bollob\'{a}s and Brightwell for \select, but is significantly simpler {(in particular, it avoids any graph theory)}~\cite{BollobasB90}.
\end{enumerate}

\paragraph{Multi-Round Results in the Erasure Model.}
\begin{enumerate}
\item We design a $O(\log^*(n))$-round algorithm for \partition\ making $O(n/\gamma)$ total comparisons that makes zero mistakes with probability $1-e^{-\Omega(n)}$.
\item We show that no $o(\log^*(n))$-round algorithm even for \select\ making $O(n/\gamma)$ total comparisons can succeed with probability {$2/3$.}
\end{enumerate}

\paragraph{Multi-Round Results in the Noisy Model.}
\begin{enumerate}
\item We design a 4-round algorithm for \partition\ making $O(n\log n /\gamma^2)$ comparisons that makes zero mistakes with high probability (a trivial corollary of our noiseless algorithm). 
\item We show that no algorithm even for \select\ making $o(n\log n /\gamma^2)$ comparisons can succeed with probability $2/3$ (in any number of rounds). 
\item We design an algorithm for \findMin\ (the special case of \select\ with $k = n$) making $O(n/\gamma^2)$ comparisons that succeeds with probability $2/3$. We also show that no algorithm making $o(n\log n/\gamma^2)$ comparisons can solve \findMin\ with probability $1-1/\poly(n)$ (in any number of rounds).
\end{enumerate}

Together, these results tell an interesting story. In one round, one can obtain the same guarantee in the noiseless versus erasure model with an additional factor of $1/\gamma$ comparisons. And one can obtain the same guarantee in the erasure versus noisy model with an additional factor of $1/\gamma$ comparisons. In some sense, this should be expected, because this exactly captures the degradation in information provided by a single comparison in each of the three models (a noiseless comparison provides one bit of information, an erasure comparison provides $\gamma$ bits of information, and a noisy comparison provides $\Theta(\gamma^2)$ bits of information). But in multiple rounds, everything changes. In four rounds, one can perfectly partition with high probability and $O(n)$ total comparisons in the noiseless model. {In the erasure model, one can indeed partition perfectly with high probability and $O(n/\gamma)$ comparisons, but now it requires $\Theta(\log^*(n))$ rounds instead of just $4$. Moreover, in the noisy model, any algorithm even solving \select\ with probability $2/3$ requires an $\Omega(\log n /\gamma)$ blow-up in the number of comparisons, in any number of rounds!} Note that neither of these additional factors come from the desire to succeed with high probability (as the lower bounds hold against even a $2/3$ success) \emph{nor} the desire to partition every element correctly (as the lower bounds hold even for just \select), but just from the way in which interaction helps in the three different models.

While we believe that the story told by our work as a whole provides the ``main result,'' it is also worth emphasizing independently our results in the noisy model. Our one-round algorithm, for instance, is more involved than its counterparts in the noiseless and erasure models and our analysis uses the theory of biased random walks. Our multi-round lower bounds against \select\ and \findMin\ in the noisy model are the most technical results of the paper, and tell their own interesting story about the difference between \findMin\ and \select\ in the noisy model. To our knowledge, most tight lower bounds known for \select\ come directly from lower bounding \findMin. It's surprising that \findMin\ requires $\Theta(\log n)$ fewer comparisons than \select\ to solve with probability $2/3$ in the noisy model. 

{We proceed now by discussing some related works below, and briefly overviewing our techniques in Section~\ref{sec:tools}. We provide some conclusions and future directions in Section~\ref{sec:conclusion}. Our single-round results are discussed in Section~\ref{sec:oneround} and our multi-round results are discussed in Section~\ref{sec:multiround}. However, due to space constraints, all proofs are deferred to the appendix.}

\subsection{Related Work}\label{sec:related}{
Rank aggregation is an enormous field that we can't possibly summarize in its entirety here. Some of the works most related to ours also study \partition\ (sometimes called \textsc{Top}-K). Almost all of these works also consider the possibility of erroneous comparisons, although sometimes under different models where the likelihood of an erroneous comparison scales with the distance between the two compared elements~\cite{ChenS15,Busa-FeketeSCWH13,Eriksson13}. More importantly, to our knowledge this line of work either considers settings where the comparisons are exogenous (the designer has no control over which comparisons are queried, she can just analyze the results), or only analyze the query complexity and not the round complexity of designed algorithms. Our results contribute to this line of work by providing algorithms designed for settings like crowdsourcing or peer grading where the designer does have design freedom, but may be constrained by the number of interactive rounds.

There is a vast literature from the parallel algorithms community studying various sorting and selection problems in the noiseless model. For instance, tight bounds are known on the round complexity of \select\ for deterministic algorithms using $O(n)$ total comparisons (it is $\Theta(\log \log n)$)~\cite{Valiant75,AjtaiKSS86}, and randomized algorithms using $O(n)$ total comparisons (it is $4$)~\cite{AlonA88a,AlonA88b,Reischuk81,BollobasB90}. Similar results are known for sorting and approximate sorting as well~\cite{Cole88,AlonAV86,AjtaiKS83,HaggkvistH81,BollobasT83,BollobasH85,Leighton84}. Many of the designed deterministic algorithms provide \emph{sorting networks}. A sorting network on $n$ elements is a circuit whose gates are binary comparators. The depth of a sorting network is the number of required rounds, and the number of gates is the total number of comparisons. Randomized algorithms are known to require fewer rounds than deterministic ones with the same number of total comparisons for both sorting and selecting~\cite{AlonA88b, BollobasB90}.

In the noisy model, one can of course take any noiseless algorithm and repeat every comparison $O(\log n/\delta^2)$ times in parallel. To our knowledge, positive results that avoid this simple repetition are virtually non-existent. This is likely because a lower bound of Leighton and Ma~\cite{LeightonM00} proves that in fact no sorting network can provide an asymptotic improvement (for complete sorting), and our lower bound (Theorem~\ref{thm:lbnoisy}) shows that no randomized algorithm can provide an asymptotic improvement for \select. To our knowledge, no prior work studies parallel sorting algorithms in the erasure model. } On this front, our work contributes by addressing some open problems in the parallel algorithms literature, but more importantly by providing the first parallel algorithms and lower bounds for \select\ in the erasure and noisy models.

There is also an active study of sorting in the noisy model~\cite{BravermanM08,BravermanM09,MakarychevM13} within the TCS community without concern for parallelization, but with concern for \emph{resampling}. An algorithm is said to resample if it makes the same comparison multiple times. Clearly, an algorithm that doesn't resample can't possibly find the median exactly in the noisy model (what if the comparison between $n/2$ and $n/2+1$ is corrupted?). The focus of these works is designing poly-time algorithms to find the maximum-likelihood ordering from a set of $\binom{n}{2}$ noisy comparisons. Our work is fundamentally different from these, as we have asymptotically fewer than $\binom{n}{2}$ comparisons to work with, and at no point do we try to find a maximum-likelihood ordering (because we only want to solve \partition).
\subsection{Tools and Techniques}\label{sec:tools}

\paragraph{Single Round Algorithms and Lower Bounds.} Our single round results are guided by the following surprisingly useful observation: in order for an algorithm to possibly know that $i$ exceeds the $k^{th}$ highest element, $i$ must at least be compared to some element between itself and $k$ (as otherwise, the comparison results would be identical if we replaced $i$ with an element just below $k$). Unsurprisingly, it is difficult to guarantee that many elements within $n/d$ of $k$ are compared to elements between themselves and $k$ using only $dn$ total comparisons in a single round, and this forms the basis for our lower bounds. Our upper bounds make use of this observation as well, and basically are able to guarantee that an element is correctly placed with high probability whenever it is compared to an element between itself and $k$. It's interesting that the same intuition is key to both the upper and lower bounds. We provide a description of the algorithms and proofs in Section~\ref{sec:oneround}.

In the erasure model, the same intuition extends, except that in order to have a non-erased comparison between $i$ and an element between $i$ and $k$, we need to make roughly $1/\gamma$ such comparisons. This causes our lower bounds to improve by a factor of $1/\gamma$. In the noisy model, the same intuition again extends, although this time the right language is that we need to learn $\Omega(1)$ bits of information from comparisons of $i$ to elements between $i$ and $k$, which requires $\Omega(1/\gamma^2)$ such comparisons, and causes the improved factor of $1/\gamma^2$ in our lower bounds. Our algorithms in these two models are similar to the noiseless algorithm, but the analysis becomes necessarily more involved. For instance, our analysis in the noisy model appeals to facts about biased random walks on the line.

\paragraph{Multi-Round Algorithms and Lower Bounds.} Our constant-round algorithms in the noiseless model are based on the following intuition: once we reach the point that we are only uncertain about $o(n)$ elements, we are basically looking at a fresh instance of \partition\ on a significantly smaller input size, except we're still allowed $\Theta(n)$ comparisons per round. Once we're only uncertain about only $O(\sqrt{n})$ elements, one additional round suffices to finish up (by comparing each element to every other one). The challenge in obtaining a four-round algorithm (as opposed to just an $O(1)$-round algorithm) is ensuring that we make significant enough gains in the first three rounds. 

Interestingly, these ideas for constant-round algorithms in the noiseless model don't prove useful in the erasure or noisy models. Essentially the issue is that even after a constant number of rounds, we are unlikely to be confident that many elements are above or below $k$, so we can't simply recurse on a smaller instance. Still, it is quite difficult to discover a formal barrier, so our multi-round lower bounds for the erasure and noisy models are quite involved. We refer the reader to Section~\ref{sec:multiround} for further details.

\subsection{Conclusions}\label{sec:conclusion}
We study the problems of \partition\ and \select\ in settings where interaction is costly in the noiseless, erasure, and noisy comparison models. We provide matching (up to absolute constant factors) upper and lower bounds for one round algorithms in all three models, which also show that the number of comparisons required for the same guarantee degrade proportional to the information provided by a single comparison. We also provide matching upper and lower bounds for multi-round algorithms in all three models, which also show that the round and query complexity required for the same guarantee in these settings degrades worse than just by the loss in information when moving between the three comparison models. Finally, we show a separation between \findMin\ and \select\ in the noisy model. 

We believe our work motivates two important directions for future work. First, our work considers some of the more important constraints imposed on rank aggregation algorithms in applications like crowdsourcing or peer grading, but not all. For instance, some settings might require that every submission receives the same amount of attention (i.e. is a member of the same number of comparisons), or might motivate a different model of error (perhaps where mistakes aren't independent or identical across comparisons). It would be interesting to design algorithms and prove lower bounds under additional restrictions motivated by applications. 

Finally, it is important to consider incentives in these applications. In peer grading, for instance, the students themselves are the ones providing the comparisons. An improperly designed algorithm might provide ``mechanism design-type'' incentives for the students to actively misreport if they think it will boost their own grade. Additionally, there are also ``scoring rule-type'' incentives that come into play: grading assignments takes effort! Without proper incentives, students may choose to put zero or little effort into their grading and just provide random information. We believe that using ordinal instead of cardinal information will be especially helpful on this front, as it is much easier to design mechanisms when players just make binary decisions, and it's much easier to understand how the noisy information provided by students scale with effort (in our models, it is simply that $\gamma$ will increase with effort). It is therefore important to design mechanisms for applications like peer grading by building off of our algorithms.


\section{Preliminaries and Notation}
\label{sec:prelim}
In this work, we study two problems, \select\ and \partition. Both problems take as input a randomly sorted, totally ordered set and an integer $k$. For simplicity of notation, we denote the $i^{th}$ smallest element of the set as $i$. So if the input set is of size $n$, the input is exactly $[n]$. In \select, the goal is to output the (location of the) element $k$. In \partition, the goal is to partition the elements into the top $k$, which we'll call $A$ for Accept and the bottom $n-k$, which we'll call $R$ for Reject. Also for ease of notation, we'll state all of our results for $k = n/2$, the median, w.l.o.g.\footnote{We show formally in Appendix~\ref{app:technical} that this is indeed w.l.o.g.}

We say an algorithm solves \select\ if it outputs the median, and solves \partition\ if it places correctly all elements above and below the median. For \select, we will say that an algorithm is a $t$-approximation with probability $p$ if it outputs an element in $[n/2-t,n/2+t]$ with probability at least $p$. For \partition, we will consider a class of success measures, parameterized by a constant $c$, and say the $c$-weighted error associated with a specific partitioning into $A \sqcup R$ is equal to $\sum_{i > n/2} I(i \in R)i^c + \sum_{i < n/2} I(i \in A)i^c$.\footnote{For instance, $c = 0$ counts the number of mistakes. $c=1$ counts the number of mistakes, weighted by the distance of the mistaken element from the median. $c=2$ is similar to mean-squared-error, etc.} Interestingly, in all cases we study, the same algorithm is asymptotically optimal for all $c$.

\paragraph{Query and Round Complexity.} Our algorithms will be comparison-based. We study both the number of queries, and the number of adaptive rounds necessary to achieve a certain guarantee.\footnote{For example, an algorithm that makes $Q$ queries one at a time, waiting for the result of previous queries before deciding which queries to make next has round complexity $Q$. An algorithm that makes all queries up front, without knowing any results has round complexity $1$. We call protocols with round complexity $1$ \emph{non-adaptive}.} We may not always emphasize the runtime of our algorithms, but they all run in time $\poly(n)$.

\paragraph{Notation.} We always consider settings where the input elements are a priori indistinguishable, or alternatively, that our algorithms randomly permute the input before making comparisons. When we write $x < y$, we mean literally that $x < y$ in the ground truth. In the noisy model, the results of comparisons may disagree with the underlying ordering, so we say that $x$ beats $y$ if a noisy comparison of $x$ and $y$ returned $x$ as larger than $y$ (regardless of whether or not $x >y$).

\paragraph{Models of Noise.} We consider three comparison models, which return the following when $a > b$.
\begin{itemize}
\item Noiseless: Returns $a$ beats $b$.
\item Erasure: Returns $a$ beats $b$ with probability $\gamma$, and $\bot$ with probability $1-\gamma$. 
\item Noisy: Returns $a$ beats $b$ with probability $1/2 + \gamma/2$, and $b$ beats $a$ with probability $1/2 - \gamma/2$.
\end{itemize}

\paragraph{\partition\ versus \select.} We design all of our algorithms for \partition, and prove all of our lower bounds against \select. We do this because \select\ is in some sense a strictly easier problem than \partition. We discuss how one can get algorithms for \select\ via algorithms for \partition\ and vice versa formally in Appendix~\ref{app:technical}.

\paragraph{Resampling.} Finally, note that in the erasure and noisy models, it may be desireable to query the same comparison multiple times. This is called \emph{resampling}. It is easy to see that without resampling, it is impossible to guarantee that the exact median is found with high probability, even when all $\binom{n}{2}$ comparisons are made (what if the comparison between $n/2$ and $n/2+1$ is corrupted?). Resampling is not necessarily undesireable in the applications that motivate this work, so we consider our main results to be in the model where resampling is allowed. Still, it turns out that all of our algorithms can be easily modified to avoid resampling at the (necessary) cost of a small additional error, and it is easy to see the required modifications.\footnote{Essentially, replace all resampled comparisons with comparisons to ``nearby'' elements instead.} All of our lower bounds hold even against algorithms that resample. 
\section{Results for Non-Adaptive Algorithms}\label{sec:oneround}
In this section, we provide our results on non-adaptive (round complexity = 1) algorithms. We begin with the upper bounds below, followed by our matching (up to constant factors) lower bounds.
\subsection{Upper Bounds}
We provide asymptotically optimal algorithms in each of the three comparison models. Our three algorithms actually choose the same comparisons to make, but determine whether or not to accept or reject an element based on the resulting comparisons differently. The algorithms pick a \emph{skeleton set} $S$ of size $\sqrt{n}$ and compare every element in $S$ to every other element. Each element not in $S$ is compared to $d-1$ random elements of $S$. Pseudocode for this procedure is given in Appendix~\ref{app:oneround}.

From here, the remaining task in all three models is similar: the algorithm must first estimate the rank of each element in the skeleton set. Then, for each $i$, it must use this information combined with the results of $d-1$ comparisons to guess whether $i$ should be accepted or rejected. The correct approach differs in the three models, which we discuss next. 

\paragraph{Noiseless Model.} Pseudocode for our algorithm in the noiseless model is provided as Algorithm~\ref{alg:nanoiseless} in Appendix~\ref{app:oneround}. First, we estimate that the median of the skeleton set, $x$, is close to the actual median. Then, we hope that each $i \notin S$ is compared to some element in $S$ between itself and $x$. If this happens, we can pretty confidently accept or reject $i$. If it doesn't, then all we learn is that $i$ is beaten by some elements above $x$ and it beats some elements below $x$, which provides no helpful information about whether $i$ is above or below the median, so we just make a random decision. 

\begin{theorem}\label{thm:nanoiseless} {Algorithm~\ref{alg:nanoiseless} has query complexity $dn$, round complexity $1$, does not resample, and outputs a partition that, for all $c$, has:
\begin{itemize}
\item expected $c$-weighted error $O((n/d)^{c+1})$, for any $d = o(n^{1/4})$
\item $c$-weighted error $O((n/d)^{c+1})$ with probability $1- e^{-\Omega(n^3/d^{2c+2})}$, for any $d = o(n^{1/4})$.
\end{itemize}}
\end{theorem}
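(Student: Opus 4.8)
The plan is to verify the three bookkeeping claims by inspection, bound the $c$-weighted error in expectation, and then upgrade to the high-probability statement by conditioning on the skeleton set and invoking a concentration inequality. For the bookkeeping: the within-skeleton comparisons number $\binom{\sqrt n}{2}<n/2$ and each of the $n-\sqrt n$ elements outside $S$ adds $d-1$ more, for a total below $dn$; all comparisons are submitted at once, so the round complexity is $1$; and no ordered pair is queried twice, so there is no resampling. For the error, write $S$ for the skeleton set, $m=|S|=\sqrt n$, $x$ for the pivot the algorithm uses (the median of $S$), and $\mathrm{rk}(\cdot)$ for true rank; since the input is randomly permuted, the true ranks of the members of $S$ form a uniformly random $m$-subset of $[n]$. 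The only feature of the algorithm I will use is the structural observation of Section~\ref{sec:tools}: $i\notin S$ is accepted exactly when it beats some $s\in S$ of $S$-rank at least $m/2$, rejected exactly when it loses to some $s\in S$ of $S$-rank at most $m/2$, and classified by a coin flip otherwise, so $i$ is placed using genuine information precisely when one of its $d-1$ random partners lies between $i$ and $x$ in the true order.

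Next I would show the pivot is close to the median. The number of members of $S$ with true rank at most $n/2+t$ is hypergeometric with mean $m/2+\Theta(tm/n)$, so Hoeffding's inequality for hypergeometric sampling gives $\Prob[\,|\mathrm{rk}(x)-n/2|>t\,]\le e^{-\Omega(t^2/n^{3/2})}$; taking $t=\Theta(n/d)$ makes this small, and this is where $d=o(n^{1/4})$ enters, since it forces the typical pivot error $\Theta(n^{3/4})$ to be negligible against the target error $n/d$. The same tail bounds the number of members of $S$ stranded on the wrong side of $x$ by $O(|\mathrm{rk}(x)-n/2|/\sqrt n)$, a lower-order contribution.

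Then I would condition on the event that $x$ is well placed and analyze one element $i\notin S$, writing $\delta_i:=|i-n/2|$. If $\delta_i\le C n/d$ I bound $\text{WRONG}_i\le 1$ trivially; these $O(n/d)$ elements contribute $O((n/d)^{c+1})$ in total. If $\delta_i> C n/d$, then $i$ is far from the pivot as well, so $i$ is misplaced only if none of its $d-1$ partners landed among the $\Theta(\delta_i/\sqrt n)$ members of $S$ strictly between $i$ and $x$, an event of probability at most $(1-\Omega(\delta_i/n))^{d-1}\le e^{-\Omega(d\delta_i/n)}$. Summing over $i$,
\[ \E[\text{error}]\ \le\ O\!\big((n/d)^{c+1}\big)\ +\ \sum_{\delta_i> Cn/d}\delta_i^{\,c}\,e^{-\Omega(d\delta_i/n)}\ =\ O\!\big((n/d)^{c+1}\big), \]
the tail sum being comparable to $\int_0^\infty \delta^{c}e^{-\Omega(d\delta/n)}\,d\delta=\Theta((n/d)^{c+1})$; with the lower-order skeleton contribution this gives the first bullet.

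For the high-probability bullet I would condition on $S$ and write $\text{error}$ as a quantity determined by $S$ plus $\sum_{i\notin S}Y_i$, where $Y_i\in[0,\delta_i^{\,c}]$ is the contribution of $i$; the $Y_i$ are mutually independent because distinct elements outside $S$ draw their partners and coins independently. A Bernstein/Bennett-type tail bound then controls the conditional probability that $\sum_i Y_i$ exceeds a constant times $(n/d)^{c+1}$, in terms of its mean $O((n/d)^{c+1})$, the second-moment sum $\sum_i\E[Y_i^2]$, and the largest term $\delta_i^{\,c}$ over elements that can actually be misplaced, with the decay $e^{-\Omega(d\delta_i/n)}$ keeping these quantities small; combined with the failure probability for an unrepresentative skeleton set, this is the route to the claimed $1-e^{-\Omega(n^3/d^{2c+2})}$. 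I expect this concentration step to be the main obstacle: a crude application of Bernstein or McDiarmid to the $O(n/d)$ genuinely uncertain elements is not by itself sharp enough for the stated exponent, so the argument must genuinely exploit the exponential concentration of misplaced elements near the pivot, and must simultaneously push the pivot-displacement event and the unrepresentative-skeleton event below the same threshold — reconciling these competing requirements is where the hypothesis $d=o(n^{1/4})$ is ultimately being spent.
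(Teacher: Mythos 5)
Your expectation argument is essentially the paper's: sample a $\sqrt n$-element skeleton, show it represents the ground set well (the paper uses its sampling-without-replacement Hoeffding, Lemma~\ref{lem:rep}), condition on a ``good'' skeleton, bound the misplacement probability of $n/2+i$ by $(1-i/n+\varepsilon)^{d-1}$ with $\varepsilon=1/d$, and sum the geometric tail (Lemma~\ref{lem:sum}). Your split at $\delta_i\le Cn/d$ versus $\delta_i>Cn/d$ is the paper's split at $\varepsilon n$. You also correctly identify that $d=o(n^{1/4})$ is spent making the skeleton failure term $n\, e^{-\sqrt n/(2d^2)}$ vanish.

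Where your proposal diverges is the concentration step, and you flag the right obstacle but then go looking for a fix the paper does not use. The paper's move is the crude one you dismiss: fix $S$, observe that the per-element contributions $Y_i\in[0,n^c]$ are conditionally independent, and apply plain Hoeffding to $\sum_i Y_i$. But plugging in $t=(n/d)^{c+1}$ and $\sum_i(b_i-a_i)^2\le n\cdot n^{2c}$ gives
\[
\Pr\Bigl[\textstyle\sum_iY_i \ge \mathbb{E}\bigl[\textstyle\sum_iY_i\bigr]+t\Bigr]\ \le\ \exp\!\left(-\Theta\!\bigl(n/d^{2c+2}\bigr)\right),
\]
not $\exp(-\Theta(n^3/d^{2c+2}))$; the $n^3$ in the theorem appears to be a slip. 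Your Bernstein/Bennett refinement improves on this somewhat (the variance proxy is $\Theta((n/d)^{2c+1})$ by the same geometric-tail calculation), but it still does not reach $n^3/d^{2c+2}$ because the worst-case term $n^c$ dominates the Bernstein denominator for $c\ge1$. So you are right that no off-the-shelf concentration inequality applied to the raw $Y_i$'s will produce the stated exponent --- but neither does the paper's own argument; its actual bound is the sum $2n\,e^{-\sqrt n/(2d^2)}+2e^{-\Theta(n/d^{2c+2})}$. You should not burn effort trying to orchestrate Bernstein, the pivot-displacement event, and the unrepresentative-skeleton event into a single $n^3/d^{2c+2}$ threshold; take the Hoeffding route exactly as written and accept the exponent it gives, noting that for small $c$ the skeleton failure term is the one that actually dominates.

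One small correction to your reading of the algorithm: the acceptance rule is not ``beats some $s\in S$ of rank at least $m/2$'' but ``beats some $s\in A_S$,'' i.e.\ an $s$ that (strictly) beat the skeleton median $x$; this is an off-by-one that does not affect the asymptotics but is worth getting precise before you run the skeleton-representativeness union bound.
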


We provide a complete proof of Theorem~\ref{thm:nanoiseless} in Appendix~\ref{app:oneround}. The main ideas are the following. There are two sources of potential error in Algorithm~\ref{alg:nanoiseless}. First, maybe the skeleton set is poorly chosen and not representative of the ground set. But this is extremely unlikely with such a large skeleton set. Second, note that if $i$ is compared to any element in $S$ between itself and $x$, \emph{and} $x$ is very close to $n/2$, then $i$ will be correctly placed. If $|i - n/2| > n/d$, then we're unlikely to miss this window on $d-1$ independent tries, and $i$ will be correctly placed.

\paragraph{Erasure Model.}
In the erasure model, pseudocode for the complete algorithm we use is Algorithm~\ref{alg:naerasure} in Appendix~\ref{app:oneround}. At a high level, the algorithm is similar to Algorithm~\ref{alg:nanoiseless} for the noiseless model, so we refer the reader to Appendix~\ref{app:oneround} to see the necessary changes.

\begin{theorem}\label{thm:naerasure} Algorithm~\ref{alg:naerasure} has query complexity $dn$, round complexity $1$, does not resample, and outputs a partition that, for all $c$, has:
\begin{itemize}
\item expected $c$-weighted error $O((n/(d\gamma))^{c+1})$, for any $d, \gamma$ such that $d/\gamma = o(n^{1/4})$
\item $c$-weighted error $O((n/(d\gamma))^{c+1})$ with probability $1-e^{-\Omega(n^3/d^{2c+2})}$, whenever $d/\gamma = o(\sqrt{n})$ and $d\gamma = o(n^{1/4})$.
\end{itemize}
\end{theorem}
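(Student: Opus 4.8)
The plan is to follow the template of the proof of Theorem~\ref{thm:nanoiseless} and isolate the two places at which erasures change the argument. The structural claims are immediate: the algorithm makes $\binom{|S|}{2}+(n-|S|)(d-1)\le dn$ comparisons, fixes all of them before seeing any answer, and never queries the same pair twice, giving query complexity $dn$, round complexity $1$, and no resampling. The error bound I would prove by conditioning on a ``good skeleton'' event $\EE$ and analyzing each element $i\notin S$ separately. For $\EE$ I ask that (a) $S$ is representative — every sufficiently long rank-interval contains a number of elements of $S$ within a constant factor of its expectation (length over $\sqrt n$) — and (b) the rank estimates the algorithm forms inside $S$ are accurate. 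Point (b) is the first, and only genuinely new, place erasures intervene: an element $s\in S$ participates in $\mathrm{Bin}(|S|-1,\gamma)$ non-erased intra-$S$ comparisons, each of which reveals the ground truth, so $1/\gamma$ times its empirical win-count estimates $\mathrm{rank}_S(s)$ with additive error $O(\sqrt{\sqrt n/\gamma})$ up to logarithmic factors, by a Chernoff bound; hence the element $x\in S$ the algorithm treats as the skeleton median has true rank within $\Theta(n^{3/4}/\sqrt\gamma)$ of $n/2$. Event (a) fails with probability $e^{-\Omega(\sqrt n)}$ (Hoeffding over the $|S|$ membership indicators, union bounded over intervals), and the accuracy in (b) holds with comparably small failure probability precisely because the hypotheses force $\gamma\sqrt n$ to be large ($d/\gamma=o(\sqrt n)$ and $d\ge 1$ give $\gamma=\omega(1/\sqrt n)$); a union bound over $S$ controls $\Prob[\neg\EE]$.

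Condition on $\EE$ and take $i>n/2$ with $t:=|i-n/2|$ (the case $i<n/2$ is symmetric). The key deterministic fact, as in the noiseless proof, is that $i$ is placed correctly as soon as one of its $d-1$ comparisons to $S$ is both non-erased and lands on a skeleton element that the algorithm is confident sits above the median and below $i$ — since in the erasure model a non-erased comparison never lies. By $\EE$ the number of such ``useful'' skeleton elements is $\Theta(t/\sqrt n)$ once $t$ exceeds the estimation scale $\Theta(n^{3/4}/\sqrt\gamma)$ (up to logs), so each of $i$'s comparisons is useful with probability $\Theta(\gamma t/n)$ and $\Prob[i\text{ misplaced}\mid\EE]\le\tfrac12\bigl(1-\Omega(\gamma t/n)\bigr)^{d-1}\le e^{-\Omega(d\gamma t/n)}$, and at most $1$ for smaller $t$. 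This is the second effect of erasures: requiring the useful comparison to survive erasure costs a factor $\gamma$ in the per-comparison success probability, which is exactly what inflates the noiseless window $n/d$ to $n/(d\gamma)$. Weighting by $|i-n/2|^{c}$ and summing gives $\sum_{t\lesssim n/(d\gamma)}t^{c}+\sum_{t\gtrsim n/(d\gamma)}e^{-\Omega(d\gamma t/n)}t^{c}=O\!\left((n/(d\gamma))^{c+1}\right)$; adding the contribution $\le n^{c+1}\Prob[\neg\EE]$ of the bad-skeleton event (negligible under the stated hypotheses) yields the expectation bound. For the high-probability bound, observe that conditioned on $\EE$ the misplacement events of distinct $i\notin S$ are independent (disjoint comparisons, independent tie-breaking coins), so the $c$-weighted error is a sum of independent variables in $[0,|i-n/2|^{c}]$ with mean $O((n/(d\gamma))^{c+1})$; a Bernstein/Chernoff tail bound for such sums, unioned with $\neg\EE$, produces the claimed $1-e^{-\Omega(n^3/d^{2c+2})}$ in the regime $d/\gamma=o(\sqrt n)$, $d\gamma=o(n^{1/4})$.

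The hard part is not any single step but the bookkeeping around the skeleton's rank estimates: one must verify that the estimation error $\Theta(n^{3/4}/\sqrt\gamma)$ from part (b) is swallowed by the target $(n/(d\gamma))^{c+1}$ (this is what pins down the hypothesis $d/\gamma=o(n^{1/4})$ for the expectation, and $d/\gamma=o(\sqrt n)$ together with $d\gamma=o(n^{1/4})$ for the tail), and that the tie-breaking range and the confidence margins used when labeling skeleton elements are set so that the per-element estimate $e^{-\Omega(d\gamma t/n)}$ survives into the concentration step with the stated exponent. Everything else is parallel to the noiseless analysis.
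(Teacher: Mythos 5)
Your high-level plan — representative skeleton plus a per-element geometric tail truncated at $\Theta(n/(d\gamma))$ — is the right one, and the factor-$\gamma$ you isolate in the per-comparison ``useful'' probability is precisely where the $1/\gamma$ enters. But your proposal misdescribes what Algorithm~\ref{alg:naerasure} actually does, and the discrepancy matters. The algorithm does \emph{not} estimate $\mathrm{rank}_S(s)$ by rescaling non-erased win counts by $1/\gamma$; it builds the transitive closure of the non-erased comparison graph on $S$ and declares $a$ \emph{known to beat} $b$ whenever there is a chain $a=s_0 > s_1 > \cdots > s_\ell = b$ with each link non-erased, then sets $A_S$/$R_S$ to be the elements known to beat / be beaten by at least $|S|/2$ others. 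These sets are \emph{error-free certificates}: in the erasure model a non-erased comparison never lies, so every element of $A_S$ is genuinely above the median of $S$ and every element of $R_S$ genuinely below. Consequently a non-erased comparison between $i\notin S$ and an element of $A_S\cup R_S$ can never \emph{mislabel} $i$ — the paper needs no margin argument and no quantitative control on estimation error. In your version this is the part that takes work: you have to argue that an element wrongly estimated across the median of $S$ cannot trigger a bad acceptance or rejection, and you do that with the $\Theta(n^{3/4}/\sqrt\gamma)$ error scale and a case split on $t$ exceeding it. That can be made to go through under the stated hypotheses, but you are analyzing a different rule than the one the theorem refers to, and the matching of parameter regimes (your $d\sqrt\gamma=o(n^{1/4})$ versus the theorem's $d/\gamma=o(n^{1/4})$) is happening by coincidence of $\gamma\le 1$ rather than by design.

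The concrete piece you are missing is the paper's Lemma~\ref{lem:pairs}: for $x,y\in S$ with $k$ skeleton elements strictly between them, the probability that no intermediate $z$ has \emph{both} comparisons $x$–$z$ and $z$–$y$ non-erased is $(1-\gamma^2)^k\le e^{-k\gamma^2}$, so $x$ is known to beat $y$ except with that probability. A union bound over pairs at skeleton-distance $\ge\varepsilon|S|$ then gives $|A_S|,|R_S|\ge(1/2-\varepsilon)|S|$ with probability $1-n^2 e^{-\varepsilon\gamma^2\sqrt n}$, which under $\varepsilon=1/(\gamma d)$ degenerates exactly where $d/\gamma$ approaches $\sqrt n$ — this is what pins down the tail-bound hypothesis $d/\gamma=o(\sqrt n)$, not a Berry–Esseen or Chernoff for a rescaled win count. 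From there the per-element probability of failure for $i=n/2+t$ is $\bigl(1-(t/n-2\varepsilon)\gamma\bigr)^{d-1}$ (the paper's Algorithm~\ref{alg:naerasure} only needs one non-erased hit in $A_S\cap[n/2,i]$), and Lemma~\ref{lem:sum} finishes the expectation; independence across fixed $S$ plus Hoeffding finishes the tail, for which the additional constraint $d\gamma=o(n^{1/4})$ is needed so the $n^2 e^{-\gamma\sqrt n/d}$ term from Lemma~\ref{lem:pairs} stays below $e^{-\Omega(n^3/d^{2c+2})}$.

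In short: the route you sketch is a valid alternative proof of a \emph{variant} of the algorithm and reaches the same bound, but it is not a proof of Theorem~\ref{thm:naerasure} as stated, because it analyzes scaled win-counts rather than the certificate-based rule, and because it substitutes an estimation-error bound (which introduces false-labeling risk you must then exclude) for the paper's zero-error ``known to beat'' relation and its $\gamma^2$-rate common-comparator lemma. If you want to salvage your version, make explicit that you are replacing the algorithm's labeling rule, prove that no skeleton element within your confidence margin of the estimated median is used as a certificate, and track how that margin interacts with the hypotheses; if you want to prove the theorem as written, replace your part (b) with Lemma~\ref{lem:pairs} and drop the estimation error entirely.
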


We again postpone a complete proof of Theorem~\ref{thm:naerasure} to Appendix~\ref{app:oneround}. The additional ingredient beyond the noiseless case is a proof that with high probability, not too many of the comparisons within $S$ are erased and therefore while we can't learn the median of $S$ exactly, we can learn a set of almost $|S|/2$ elements that are certainly above the median, and almost $|S|/2$ elements that are certainly below. If $i \notin S$ beats an element that is certainly above the median of $S$, we can confidently accept it, just like in the noiseless case.
\paragraph{Noisy Model.}
Pseudocode for our algorithm in the noisy model is provided as Algorithm~\ref{alg:nanoisy} in Appendix~\ref{app:oneround}. Algorithm~\ref{alg:nanoisy} is necessarily more involved than the previous two. We can still recover a good ranking of the elements in the skeleton set using the Braverman-Mossel algorithm~\cite{BravermanM08}, so this isn't the issue. The big difference between the noisy model and the previous two is that no single comparison can guarantee that $i\notin S$ should be accepted or rejected. Instead, every time we have a set of elements all above the median of $S$, $x$, of which $i$ beats at least half, this provides some evidence that $i$ should be accepted. Every time we have a set of elements all below $x$ of which $i$ is beaten by at least half, this provides some evidence that $i$ should be rejected. The trick is now just deciding which evidence is stronger. Due to space constraints, we refer the reader to Algorithm~\ref{alg:nanoisy} to see our algorithm, which we analyze using theory from biased random walks on the line.

\begin{theorem}\label{thm:nanoisy} Algorithm~\ref{alg:nanoisy} has query complexity $dn$, round complexity $1$, does not resample, and outputs a partition that, for all $c$, has:
\begin{itemize}
\item expected $c$-weighted error $O((n/(d\gamma^2))^{c+1})$, for any $d = o(n^{1/4})$, $\gamma = \omega(n^{1/8})$. 
\item $c$-weighted error $O((n/(d\gamma^2))^{c+1})$ with probability $1-e^{\Omega(n^3/d^{2c+2})}$, for any $d = o(n^{1/4})$, $\gamma = \omega(n^{1/8})$.
\end{itemize}
\end{theorem}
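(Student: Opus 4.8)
Write $L := n/(d\gamma^2)$; the whole content of the theorem is that Algorithm~\ref{alg:nanoisy} outputs a partition with $c$-weighted error $O(L^{c+1})$, both in expectation and with the stated probability. The query count ($\binom{\sqrt n}{2}$ comparisons inside the skeleton $S$ plus $d-1$ per external element) is $dn$ up to lower-order terms, and the round complexity and ``no resampling'' claims are immediate from the description of the algorithm. The plan is the same two-stage argument used for Theorem~\ref{thm:nanoiseless}: first condition on a ``good skeleton'' event; then show that each individual external element $i$ is misplaced with probability at most $\min\{1,\exp(-\Omega(\gamma^2(d-1)|i-n/2|/n))\}$; and finally sum over $i$ and apply a Bernstein-type concentration bound, using that the misplacement events are independent once the skeleton is fixed.

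\paragraph{Reduction to a good-skeleton event.}
Let $\mathcal E$ be the event that (a) every interval $[a,b]\subseteq[n]$ contains $\sqrt n\,(b-a)/n \pm O(n^{1/4}\sqrt{\log n})$ elements of $S$ (Hoeffding for sampling without replacement, union-bounded over a polynomial-size net of intervals), and (b) given the outcomes of the $\binom{\sqrt n}{2}$ comparisons inside $S$, the Braverman--Mossel algorithm~\cite{BravermanM08} outputs a ranking of $S$ whose maximum displacement from the true order is $O(\log n/\gamma^2)$. Under (a) the element $x$ that the algorithm declares to be the skeleton median has true rank $n/2 \pm O(n^{3/4}\sqrt{\log n})$, and under (b) each skeleton element's declared position corresponds to a true rank that is off by $O(\sqrt n\,\log n/\gamma^2)$; both error terms are $o(L)$ precisely because $d=o(n^{1/4})$ and $\gamma=\omega(n^{-1/8})$. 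Moreover $\Pr[\overline{\mathcal E}]$ is exponentially small in a polynomial of $n$, so $\Pr[\overline{\mathcal E}]$ times the crude worst-case error $n\cdot(n/2)^c$ is negligible compared with $L^{c+1}$. We may therefore work conditioned on $\mathcal E$ and on the realized skeleton, its internal comparisons, and the declared ranking, after which the $d-1$ comparisons made by each external element are independent fresh randomness.

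\paragraph{The biased-walk argument for a single element (the crux).}
Fix an external element $i$, say $i>n/2$, and set $\Delta:=i-n/2$; the algorithm should Accept $i$. If $\Delta\le C_0L$ for a suitable constant $C_0$ we only use the trivial bound $\Pr[\text{misplace }i]\le 1$. Suppose $\Delta>C_0L$. Since $C_0L$ exceeds the $o(L)$ slack from $\mathcal E$, every skeleton element declared to lie above $x$ and between $x$ and $i$ is genuinely below $i$ (so $i$ beats it with probability exactly $1/2+\gamma/2$), every skeleton element declared below $x$ is genuinely below $i$, and --- the key point --- no skeleton element genuinely above $i$ can be declared below $x$, since that would require displacement at least $\Delta/\sqrt n \gg \log n/\gamma^2$, which is ruled out by (b). Consequently the walk that Algorithm~\ref{alg:nanoisy} forms from $i$'s comparisons against the ``above $x$'' block (step $+1$ when $i$ wins, $-1$ when it loses, processed in increasing declared order) has an initial segment of length $p_1$ --- the number of these comparisons that land strictly between $x$ and $i$ --- with drift exactly $+\gamma$, while every step of the mirror walk built from the ``below $x$'' block has drift $-\gamma$. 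A Chernoff bound (using (a) to guarantee $\Omega(\Delta/\sqrt n)$ skeleton elements in the interval between $x$ and $i$) gives $p_1 \ge \tfrac12(d-1)\Delta/n$ except with probability $e^{-\Omega((d-1)\Delta/n)}$. Doob's maximal inequality applied to the exponential (super)martingale of each walk then shows that the ``accept'' statistic (the running maximum of the first walk) is $\Omega(\gamma p_1)$, whereas the ``reject'' statistic (the running maximum of the mirror walk, together with the negative-drift tail of the first walk beyond step $p_1$) is $O(\gamma p_1)$ with a small enough constant, each failing with probability at most $e^{-\Omega(\gamma^2 p_1)}$; here one uses the standard estimate $\Pr[\max_t W_t\ge M]\le e^{-\Omega(\gamma M)}$ for a walk of drift $-\gamma$. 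Hence $i$ is placed correctly except with probability $e^{-\Omega(\gamma^2(d-1)\Delta/n)}=e^{-\Omega(\Delta/L)}$, using $\gamma^2(d-1)/n=\Theta(1/L)$.

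\paragraph{Summation, concentration, and the main obstacle.}
Conditioned on $\mathcal E$ and the skeleton data the misplacement events $\text{WRONG}_i$ are independent, so
\[
  \E[\,c\text{-weighted error}\,]=\sum_i \Pr[\text{WRONG}_i]\,|i-n/2|^c \ \le\ \sum_{\Delta\le C_0L}2\Delta^c \ +\ \sum_{\Delta> C_0L}2\Delta^c\, e^{-\Omega(\Delta/L)} \ =\ O(L^{c+1}),
\]
since the first sum is $O(L^{c+1})$ and, in the second, $\Delta^c e^{-\Omega(\Delta/L)}$ peaks near $\Delta=\Theta(cL)$ and sums to $O(L^{c+1})$. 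The identical computation bounds the conditional second moment $\sum_i \Pr[\text{WRONG}_i]\,|i-n/2|^{2c}=O(L^{2c+1})$, so Bernstein's inequality applied to the independent summands $\text{WRONG}_i\,|i-n/2|^c\in[0,(n/2)^c]$ yields $c$-weighted error $O(L^{c+1})$ with the claimed probability, and adding back the negligible $\Pr[\overline{\mathcal E}]$ term removes the conditioning. I expect essentially all of the difficulty to lie in the third paragraph: correctly abstracting the decision rule of Algorithm~\ref{alg:nanoisy} as a comparison of two biased-random-walk maxima, showing that both the favorable-drift piece and the adverse-drift piece concentrate on the \emph{same} exponential scale $\gamma^2(d-1)\Delta/n$, and invoking the Braverman--Mossel displacement bound to certify that once $\Delta>C_0L$ the $O(\log n/\gamma^2)$ possibly-misclassified skeleton elements near the median cannot contaminate either walk --- which is exactly where the hypotheses $d=o(n^{1/4})$ and $\gamma=\omega(n^{-1/8})$ are used.
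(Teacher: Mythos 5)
Your proposal is correct and follows essentially the same route as the paper's proof: condition on a good skeleton event (the sampling bound of Lemma~\ref{lem:skeleton} combined with the Braverman--Mossel displacement guarantee of Theorem~\ref{thm:BM}/Corollary~\ref{cor:BM}), then bound each external element's misplacement probability by $e^{-\Omega(\gamma^2(d-1)\Delta/n)}$ via a Chernoff bound on the favorable-drift segment of the comparison walk plus a biased-random-walk hitting estimate (Proposition~\ref{prop:RW}) on the adverse segment, and finally sum over $\Delta$ and concentrate. Two small inaccuracies to iron out in a full write-up: the statistics $B_i,L_i$ that Algorithm~\ref{alg:nanoisy} compares are last zero-crossing \emph{times} (so they are compared on the scale $p_1$, not ``running maxima'' on the scale $\gamma p_1$), and the Braverman--Mossel displacement bound is $O(\log n/\gamma^4)$ rather than $O(\log n/\gamma^2)$ --- neither changes the asymptotic scale of the exponent or the structure of the argument.
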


\subsection{Lower Bounds}
In this section, we show that the algorithms designed in the previous section are optimal up to constant factors. All of the algorithms in the previous section are ``tight,'' in the sense that we expect element $i$ to be correctly placed whenever it is compared to enough elements between itself and the median. In the noiseless model, one element is enough. In the erasure model, we instead need $\Omega(1/\gamma)$ (to make sure at least one isn't erased). In the noisy model, we need $\Omega(1/\gamma^2)$ (to make sure we get $\Omega(1)$ bits of information about the difference between $i$ and the median). If we don't have enough comparisons between $i$ and elements between itself and the median, we shouldn't hope to be able to classify $i$ correctly, as the comparisons involving $i$ would look nearly identical if we replaced $i$ with an element just on the other side of the median. Our lower bounds capture this intuition formally, and are all proved in Appendix~\ref{app:oneround}.

\begin{theorem}\label{thm:nalowerbounds} For all $c$, {$d>0$}, any non-adaptive algorithm with query complexity $dn$ necessarily has expected $c$-weighted error $\Omega((n/d)^{c+1})$ in the noiseless model, $\Omega((n/(d\gamma))^{c+1})$ in the erasure model, and $\Omega((n/(d\gamma^2))^{c+1})$ in the noisy model.
\end{theorem}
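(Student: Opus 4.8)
The plan is to prove all three lower bounds by a single information-theoretic/indistinguishability argument, instantiated with the three different ``cost of one bit'' parameters. Fix a non-adaptive algorithm making $dn$ queries. Since the algorithm is non-adaptive, the query graph $G$ (a multigraph on $[n]$, where resampled comparisons create parallel edges) is fixed before any answers come back — up to the randomness of the algorithm, which we will handle by averaging (Yao-style). The key structural observation, already highlighted in Section~\ref{sec:tools}, is that an element $i$ can only possibly be classified correctly if it is ``linked to the median'': there must be a path of comparisons — really, enough comparisons — connecting $i$ to some element on the correct side between $i$ and $n/2$. Concretely, I would first reduce to the cleanest version of this: if $i \in [n/2-t, n/2+t]$ and $i$ is involved in at most $B$ comparisons with elements in the interval $(\min(i,n/2), \max(i,n/2))$ (call these the \emph{useful} comparisons for $i$), then no algorithm can decide the side of $i$ with advantage better than what $B$ useful comparisons reveal, because swapping $i$ with the mirror element $n+1-i$ (or an adjacent element across the median) changes only those useful comparisons and, crucially, changes each of them only in distribution by an amount controlled by the noise model.

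Second, I would quantify ``the advantage from $B$ useful comparisons.'' In the noiseless model, one useful comparison already determines the side, so $B \ge 1$ is needed — hence I want to show many elements near the median have $B = 0$. In the erasure model, $B$ erasure-comparisons between $i$ and an in-between element reveal the side only if at least one is not erased, which happens with probability $1 - (1-\gamma)^B \le B\gamma$; so to get constant advantage we need $B = \Omega(1/\gamma)$. In the noisy model, $B$ noisy comparisons give total variation distance $O(\sqrt{B}\gamma)$ between ``$i$ just above'' and ``$i$ just below'' (standard bound on TV distance between $B$-fold products of $\mathrm{Bern}(1/2 \pm \gamma/2)$), so constant advantage needs $B = \Omega(1/\gamma^2)$. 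In each case let $B^\star$ be the corresponding threshold. The point is that any element $i$ with fewer than $B^\star$ useful comparisons is misplaced with probability $\Omega(1)$ (in fact $\ge 1/2 - o(1)$, or $\ge$ some absolute constant), simply because the algorithm's view is $o(1)$-close (or, in the noiseless/erasure case, literally identical on the relevant event) under the two hypotheses, so by a coupling/hybrid argument it errs on at least one of them with the claimed probability; averaging over a uniformly random choice between the two hypotheses then gives the bound against the fixed query graph, and a further averaging (Yao) handles randomized query graphs.

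Third comes the counting step, which I expect to be the main obstacle — or at least the part needing the most care. I want to show that for the relevant range of ``distance scales'' $t$, a constant fraction of the elements $i$ with $|i - n/2| \le t$ have fewer than $B^\star$ useful comparisons. The total number of comparisons is $dn$; a comparison involving $i$ is useful for $i$ only if the other endpoint lies strictly between $i$ and the median. The total ``useful budget'' is at most $dn$, but we need it spread over the window: there are $\Theta(t)$ elements within distance $t$ of the median, so the average number of useful comparisons per such element is at most $O(dn/t)$. Setting $t \asymp dn \cdot B^\star$ (i.e., $t \asymp dn$ in the noiseless model, $t \asymp dn/\gamma$ in the erasure model, $t \asymp dn/\gamma^2$ in the noisy model), a Markov argument shows at least half of the elements in the window have fewer than $B^\star$ useful comparisons (after adjusting constants). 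Each such element contributes $\Omega(1)$ to the probability of misplacement, and its $c$-weighted contribution is $\Omega(t^c)$ since it sits at distance $\Theta(t)$ from the median for a constant fraction of them; summing over the $\Omega(t)$ such elements gives expected $c$-weighted error $\Omega(t \cdot t^c) = \Omega(t^{c+1}) = \Omega((n/d)^{c+1})$, $\Omega((n/(d\gamma))^{c+1})$, and $\Omega((n/(d\gamma^2))^{c+1})$ respectively.

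The subtle points I would be most careful about: (i) making the ``swap'' argument rigorous — one cannot literally swap $i$ and $n+1-i$ because that changes two elements' comparison outcomes, so instead I would argue element-by-element, comparing the true instance $[n]$ against the instance where just the single element $i$ is relocated to the opposite side of the median by one position (this changes only $i$'s comparisons, and only the useful ones change in distribution), which is enough to lower bound $\Pr[i \text{ misplaced}]$ on at least one of the two instances; (ii) ensuring the lower bound holds simultaneously for all the window elements rather than just in expectation for one of them — linearity of expectation takes care of this once each bad element individually errs with probability $\Omega(1)$; and (iii) handling the edge cases where $d$ is large enough that $t \ge n/2$, in which case the window is all of $[n]$ and the bound $\Omega((n/d)^{c+1})$ etc. is $O(1)$ or smaller and the statement is trivial (or one restricts to $t = \Theta(n)$). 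These are all routine once the framework above is set up.
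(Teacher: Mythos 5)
Your high-level intuition --- classify element $i$ via comparisons that land strictly between $i$ and the median, with per-comparison information cost $1$, $\Theta(\gamma)$, and $\Theta(\gamma^2)$ in the three models --- is exactly the paper's, and your model-specific thresholds $B^\star$ are correct. But the counting step, which you yourself flagged as the crux, is wrong as written. You assert that because the total budget is $dn$ and the window has $\Theta(t)$ elements, the average number of useful comparisons per window element is $O(dn/t)$, and you then set $t \asymp dn\cdot B^\star$. Both claims are off: $(dn\cdot B^\star)^{c+1}$ is not $\Omega\bigl((nB^\star/d)^{c+1}\bigr)$, and with $t \asymp dn$ the window overshoots $[n]$ for any $d>1$, so the derivation is internally inconsistent with the bound you assert at the end. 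The correct count must exploit the random permutation: any fixed pair of labels is compared with probability $\Theta(d/n)$, and an element at distance $j$ from the median has only $O(j)$ labels strictly between it and the median, so its expected number of useful comparisons is $O(dj/n)$ --- a quantity that \emph{shrinks} toward the median, not a flat $O(dn/t)$. The paper packages this cleanly and without a window-plus-Markov detour: it defines $p_j$ as the probability that neither $n/2-j$ nor $n/2+j$ is compared to anything in $(n/2-j,\,n/2+j)$, proves $p_j \geq 1 - O(dj/n)$ by exactly the random-labeling argument above, and evaluates $\sum_j p_j j^c$ directly, which gives $\Omega((n/d)^{c+1})$ and the analogous bounds in the other two models after multiplying by $\gamma$ or invoking the information cost.

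Your worry (i) is also not a real obstacle, and the proposed fix does not patch it: ``relocating a single element $i$ by one rank'' is not a valid operation on a permutation of $[n]$ --- you necessarily trade ranks with some other element. The paper simply swaps the mirror pair $\{n/2-j,\,n/2+j\}$ \emph{and conditions on the event that neither member has a useful comparison}; on that event the swap changes the transcript not at all (noiseless/erasure) or only in a way bounded by the $O(\gamma^2)$-per-comparison information cost (noisy), so a coupling on the pair shows the algorithm's total mistakes across the two coupled inputs is at least $2$, hence at least $1$ on average. In the noisy case the paper formalizes this via an information-cost reduction (\textsc{Which-is-Which}: each useful comparison gives $O(\gamma^2)$ bits) rather than your $O(\sqrt{B}\gamma)$ TV bound; the two are interchangeable, and either works once the counting lemma is repaired.
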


\section{Results for Multi-Round Algorithms}\label{sec:multiround}

\subsection{Noiseless Model}
We first present our algorithm and nearly matching lower bound for 2-round algorithms. The first round of our algorithm tries to get as good of an approximation to the median as possible, and then compares it to every element in round two. Getting the best possible approximation is actually a bit tricky. For instance, simply finding the median of a skeleton set of size $\sqrt{n}$ only guarantees an element within $\Theta(n^{3/4})$ of the median.\footnote{{This is exactly what Bollob\'{a}s and Brightwell do in the first round of their 4-round algorithm, which is why sophisticated graph theory follows to fit into four rounds. Our improved first round simplifies the remaining rounds.}} We instead take several ``iterations'' of nested skeleton sets to get a better and better approximation to the median. In reality, all iterations happen simultaneously in the first round, but it is helpful to think of them as sequential refinements.

For any $r \geq 1$, our algorithm starts with a huge skeleton set $S_1$ of $n^{2r/(2r+1)}$ random samples from $[n]$. This is too large to compare every element in $S_1$ with itself, so we choose a set $T_1\subseteq S_1$ of $n^{1/(2r+1)}$ random pivots. Then we compare every element in $S_1$ to every element in $T_1$, and we will certainly learn two pivots, $a_1$ and $b_1$ such that the median of $S_1$ lies in $[a_1,b_1]$, and a $p_1$ such that the median of $S_1$ is exactly the $(p_1 |A_1|)^{th}$ element of $A_1 = S_1 \cap [a_1,b_1]$. Now, we recurse within $A_1$ and try to find the $(p_1 |A_1|)^{th}$ element. Of course, because all of these comparisons happen in one round, we don't know ahead of time in which subinterval of $S_1$ we'll want to recurse, so we have to waste a bunch of comparisons. These continual refinements still make some progress, and allow us to find a smaller and smaller window containing the median of $S_1$, which is a very good approximation to the true median because $S_1$ was so large. Pseudocode for our algorithm is Algorithm~\ref{alg:iter2round} in Appendix~\ref{app:multinoiseless}, which ``recursively'' tries to find the $(p_i |A_i|)^{th}$ element of $A_i$.

\begin{theorem}\label{thm:iter2round} For all $c,r$ and $\varepsilon > 0$, Algorithm~\ref{alg:iter2round} has round complexity $2$, query complexity $(r+1)n$, and outputs a partition that:{
\begin{itemize}
\item has expected $c$-weighted error at most $(8rn^{(r+1)/(2r+1)+\varepsilon})^{c+1}$
\item has $c$-weighted error at most $(8rn^{(r+1)/(2r+1)+\varepsilon})^{c+1}$ with probability at least $1-re^{-n^{\Omega(\varepsilon)}}$.
\end{itemize}}

Note that setting $r = \log n$, and $\varepsilon$ such that $n^{\varepsilon} = 8\log^3 n$, we get an algorithm with round complexity $2$, query complexity $n\log n + n$ that outputs a partition with $c$-weighted error $O((\sqrt{n}\log^4 n)^{c+1})$ with probability $1-O(\log n /n^2)$.

\end{theorem}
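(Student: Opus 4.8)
The plan is to separate the trivial parts (round and query counts) from the real content (the error bound), and to prove the error bound in three steps. The round and query bounds are immediate from the construction of Algorithm~\ref{alg:iter2round}: in round~$1$, ``iteration'' $i$ takes the current refinement of the skeleton set $S_1$ (which has $|S_1|=n^{2r/(2r+1)}$) into $\approx n^{(i-1)/(2r+1)}$ blocks and compares, inside every block, each element to $\approx n^{1/(2r+1)}$ randomly chosen pivots of that block; since the blocks partition $S_1$, iteration $i$ costs at most $|S_1|\cdot n^{1/(2r+1)}=n$, so round~$1$ costs $rn$, and round~$2$ --- which compares a single element to all of $[n]$ --- costs $n$, for a total of $(r+1)n$ in two rounds. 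For the error bound I would show (i) that the block $A_r\subseteq S_1$ the algorithm is left holding after round~$1$ contains the median of $S_1$ and spans at most roughly $n^{(r+1)/(2r+1)+\varepsilon}$ elements of $[n]$; (ii) that the median of $S_1$ is within $O\!\big(n^{(r+1)/(2r+1)}\sqrt{\log n}\big)$ of the true median $n/2$; and (iii) that therefore the element $z\in A_r$ compared to everyone in round~$2$ has rank within $B:=8rn^{(r+1)/(2r+1)+\varepsilon}$ of $n/2$, so that, after learning $z$'s exact rank, the only misplaced elements are the at most $B$ elements whose rank lies strictly between $z$ and $n/2$, each at distance $\le B$ from $n/2$; summing their weights gives $c$-weighted error at most $\sum_{j=1}^{B}j^{c}\le B^{c+1}$.

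Step~(i) is the heart of the argument, and the step I expect to be the main obstacle. Correctness of the refinement is easy in the noiseless model: comparing a block $A_{i-1}$ against its pivot set reveals every pivot's exact rank within $A_{i-1}$ and hence exactly which sub-block contains the target rank, which by induction is the rank of the median of $S_1$; so the median of $S_1$ stays inside $A_i$ at every level, with no probability of failure from the comparisons themselves. The only thing that can go wrong is the \emph{geometry} of the random blocks, which is a concentration statement about the uniform sample $S_1$ and the random pivots: I would show that, except with probability $re^{-n^{\Omega(\varepsilon)}}$, every block at every one of the $r$ levels simultaneously (a)~is cut by its pivots into pieces of size at most an $n^{\varepsilon}$ factor above the expected $|A_{i-1}|/n^{1/(2r+1)}$, and (b)~occupies at most an $n^{\varepsilon}$ factor more of $[n]$ than the expected $|A_i|\cdot n/|S_1|$. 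Each such event is a multiplicative Chernoff bound: a block of $m$ samples having such an oversized piece or span forces some interval to contain at most $m$ samples when its expectation is $n^{\varepsilon}m$, which has probability $e^{-\Omega(n^{\varepsilon}m)}$; the smallest blocks are the $A_r$'s, with $m=|S_1|/n^{r/(2r+1)}=n^{r/(2r+1)}\ge n^{1/3}$ samples (here the hypothesis $r\ge1$ is used), so every failure probability is $e^{-n^{\Omega(\varepsilon)}}$, and a union bound over the $O(rn)$ blocks (and $O(n)$ candidate interval positions) gives $re^{-n^{\Omega(\varepsilon)}}$. On the good event, iterating (a) shows $|A_r|\le n^{r/(2r+1)}\cdot n^{O(\varepsilon)}$ and then (b) shows $A_r$ spans at most $n^{(r+1)/(2r+1)}\cdot n^{O(\varepsilon)}$ elements of $[n]$; the $r$-fold accumulation of these slack factors and the Chernoff constants are folded into the leading constant $8r$. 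What makes this the crux is precisely the need to control all $r$ levels at once: the blocks shrink geometrically (by $n^{1/(2r+1)}$ per level) down to size only $n^{r/(2r+1)}$, so the Chernoff bounds weaken level by level, and one must check that even the smallest blocks are still large enough to beat a union bound over all $\approx n$ of them --- which is where the $n^{\varepsilon}$ slack and $r\ge1$ are needed --- and that the accumulated slack still fits inside the stated constant.

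Step~(ii) is a routine Hoeffding bound: $|S_1\cap[1,n/2-t]|$ is a sum of $|S_1|$ independent indicators with mean $(1/2-t/n)|S_1|$, so for $t=\Theta\!\big(n\sqrt{\log n/|S_1|}\big)=\Theta\!\big(n^{(r+1)/(2r+1)}\sqrt{\log n}\big)$ this count is below $|S_1|/2$ (and symmetrically the right-hand count is above) except with probability $n^{-\omega(1)}$, so the median of $S_1$ lies in $[n/2-t,n/2+t]$. On the intersection of the good events of (i) and (ii): $z\in A_r$, $A_r$ contains the median of $S_1$, and $A_r$ spans at most $n^{(r+1)/(2r+1)+O(\varepsilon)}$ elements of $[n]$, so $|z-n/2|\le n^{(r+1)/(2r+1)+O(\varepsilon)}+t\le B$ (the slack in the constant $8$ absorbs the $\sqrt{\log n}$ and lower-order terms). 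Comparing $z$ to all of $[n]$ in round~$2$ reveals $z$'s exact rank, and the output partition then misplaces only elements whose rank is strictly between $z$ and $n/2$, giving $c$-weighted error at most $B^{c+1}$ --- this proves the high-probability bound. For the expectation bound, the $c$-weighted error is always at most $\sum_{j\le n}j^{c}\le n^{c+1}$, so the failure event contributes at most $n^{c+1}\cdot re^{-n^{\Omega(\varepsilon)}}$, which is negligible; choosing the constant in the high-probability bound slightly below $8$ leaves room to absorb it, so the expected $c$-weighted error is also at most $(8rn^{(r+1)/(2r+1)+\varepsilon})^{c+1}$.

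The corollary is pure substitution. With $r=\log n$ we have $(r+1)/(2r+1)=\tfrac12+\Theta(1/\log n)$, so $n^{(r+1)/(2r+1)}=\sqrt n\cdot n^{\Theta(1/\log n)}=\Theta(\sqrt n)$; with $n^{\varepsilon}=8\log^3 n$ we get $8r\cdot n^{\varepsilon}=64\log^4 n$, hence $B=\Theta(\sqrt n\log^4 n)$ and $c$-weighted error $O\!\big((\sqrt n\log^4 n)^{c+1}\big)$; the query complexity is $(\log n+1)n=n\log n+n$; and the failure probability $re^{-n^{\Omega(\varepsilon)}}$ becomes $\log n\cdot e^{-\mathrm{polylog}(n)}=O(\log n/n^2)$.
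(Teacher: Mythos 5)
There is a genuine gap, and it begins with a misreading of the algorithm. You describe iteration $i$ as partitioning $S_1$ into blocks and comparing every element of $S_1$ to pivots \emph{of its own block}. But an element's block is only determined by the outcomes of the earlier iterations, so that scheme is inherently adaptive and cannot be executed in a single round. Algorithm~\ref{alg:iter2round} sidesteps this by choosing, \emph{before any comparisons are made}, a chain of nested \emph{subsamples} $S_1\supseteq S_2\supseteq\cdots\supseteq S_r$ with $|S_i|=n^{1-i/(2r+1)}$, and pivot sets $T_i\subseteq S_i$ with $|T_i|=n^{i/(2r+1)}$; in round one it compares all of $S_i$ to all of $T_i$ (cost $|S_i||T_i|=n$ per level), and only \emph{afterwards} decides which window $[a_i,b_i]$ and which rank $k_i$ to track inside $A_i=[a_{i-1},b_{i-1}]\cap S_i$. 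That is what makes it non-adaptive.

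This matters because your claim that ``the median of $S_1$ stays inside $A_i$ at every level, with no probability of failure from the comparisons themselves'' is false for the actual algorithm: $S_2$ is a fresh subsample of $S_1$, so the median of $S_1$ is typically not even an element of $S_2$, let alone of $A_2$. What the algorithm can do is track the \emph{proportional rank} $k_i$ of the target inside the shrinking sets $A_i$, and because $A_{i+1}\subseteq S_{i+1}$ is a fresh random subsample of $A_i\cap[a_i,b_i]$, the rank you want drifts by a random amount at every level. Controlling that drift is the actual crux of the proof — it is the paper's Lemma~\ref{lem:iter2r3}, which by a level-by-level Chernoff argument bounds the cumulative drift by $2r(q^{r+1}-1)\,n^{(r+1)/(2r+1)+\varepsilon}$ with $q=1/(1-1/(2r))$, and this is precisely where the factor $8r$ in the statement comes from (together with Lemma~\ref{lem:iter2r2}, which controls $|S'_i|=|A_i|$ at each level). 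Your step~(i), which you flagged as the main obstacle but then declared ``easy in the noiseless model,'' therefore silently replaces the real technical core with a statement that does not hold. The surrounding skeleton — bounding $|A_r|$ and the span of $[a_r,b_r]$, relating the median of $S_1$ to $n/2$ by Hoeffding, and converting a bound on $|x-n/2|$ into a $c$-weighted error bound — is sound and matches the paper, but without a substitute for Lemma~\ref{lem:iter2r3} the proof does not go through.

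Two smaller points. Your round-one cost accounting $|S_1|\cdot n^{1/(2r+1)}=n$ per iteration is numerically equal to the paper's $|S_i||T_i|=n$, but it is an artifact of your (adaptive) block picture; for the actual algorithm the correct accounting is $|S_i||T_i|$, which happens to equal $n$ because the exponents telescope. And in step~(iii), the quantity to control is the position of $x\in[a_r,b_r]$, which you get from the endpoints $a_r,b_r$, not from ``the median of $S_1$ being in $A_r$''; the two coincide only up to the very drift term you have not bounded.
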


We also prove a nearly matching lower bound on two-round algorithms in the noiseless model. At a \emph{very} high level, our lower bound repeats the argument of our one round lower bound twice. Specifically, we show that after one round, there are many elements within a window of size $\Theta(n/d)$ of the median such that a constant fraction of these elements have not been compared to any other elements in this window. We then show that after the second round, conditioned on this, there is necessarily a window of size $\approx\sqrt{n}$ such that a constant fraction of these elements have not been compared to any other elements in this window. Finally we show that this implies that we must err on a constant fraction of these elements. The actual proof is technical, but follows this high level outline. Proofs of Theorems~\ref{thm:iter2round} and~\ref{thm:lowerbound2round} can be found in Appendix~\ref{app:multinoiseless}.

\begin{theorem}\label{thm:lowerbound2round}
For all $c$, and any $d = o(n^{1/5})$, any algorithm with query complexity $dn$ and round complexity $2$ necessarily has expected $c$-weighted error $\Omega((\sqrt{n}/d^{5/2})^{c+1})$.
\end{theorem}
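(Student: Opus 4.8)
The plan is to adapt the one-round lower-bound argument (Theorem~\ref{thm:nalowerbounds}) by applying its core information-theoretic principle twice, once for each round. Recall the key observation: an element $i$ can only be confidently placed (relative to the median) if it is compared to some element lying strictly between $i$ and $n/2$ in the ground truth; otherwise the transcript of comparisons involving $i$ is statistically indistinguishable from the transcript we would see if $i$ were relocated to the opposite side of the median. Since the input ordering is a uniformly random permutation, I would track, after each round, the set of elements in a shrinking window around the median that remain ``ambiguous'' in this sense, and argue that a constant fraction of them survive both rounds, forcing error on a constant fraction.

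First I would set the window $W_1$ to be the $\Theta(n/d)$ elements nearest the median. In round one only $dn$ comparisons are made, so the expected number of within-$W_1$ comparisons is $O(dn \cdot |W_1|/n) = O(d \cdot |W_1|)$; by a counting/averaging argument (and Markov over the random permutation), with constant probability at least a constant fraction of the elements of $W_1$ touch no other element of $W_1$ in round one — equivalently, their round-one transcript does not separate them from a near-median relocation. This is precisely the one-round lower bound applied to the subinstance of size $|W_1|$ with $dn$ total comparisons, i.e.\ with ``effective $d$'' equal to $d^2$. Call this surviving ambiguous set $U_1$, of size $\Omega(n/d)$. The subtlety I would need to handle carefully is that round-one comparisons among elements of $W_1$ that happen to resolve some of them still leave the algorithm uncertain about the relative order \emph{within} the unresolved part, so I should define $U_1$ via a ``no comparison path to a separating element'' condition, not merely ``no within-window comparison,'' and check that the permutation symmetry conditioned on the round-one transcript is strong enough to keep the relocation coupling valid.

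Next, conditioned on the round-one transcript and on $U_1$, the algorithm chooses its round-two comparisons. I would restrict attention to a sub-window $W_2 \subseteq U_1$ of the $\Theta(\sqrt{n})$ elements of $U_1$ closest to the median. Round two spends at most $dn$ more comparisons, so the expected number incident to pairs inside $W_2$ is $O(dn \cdot |W_2|^2 / |U_1|^2)$ — here I must be careful about the right denominator, since round-two comparisons can be concentrated on $U_1$; the honest accounting gives an expected $O(dn/ |U_1|)$ comparisons per element of $W_2$ to other elements of $W_2$, which is $O(d^2 \sqrt{n})$ in total across $W_2$ when $|U_1| = \Theta(n/d)$ and $|W_2| = \Theta(\sqrt n)$. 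For this to leave a constant fraction of $W_2$ untouched we need $d^2\sqrt{n} = o(|W_2|) = o(\sqrt n)$, which fails — so the correct window size is not $\sqrt n$ but $\Theta(\sqrt n / d^{5/2})$, chosen exactly so that the round-two budget cannot separate more than half of it; this is where the exponent $5/2$ and the constraint $d = o(n^{1/5})$ come from (we need the window to be nonempty, i.e.\ $\sqrt n/d^{5/2} \gg 1$ after the earlier losses, and to still sit inside $U_1$). Any element of this final ambiguous set $U_2$ has, across \emph{both} rounds, no comparison path to an element strictly between it and the median, so by the relocation coupling the algorithm's output distribution on it is identical to its output distribution on a relocated copy on the wrong side; hence it is misplaced with probability $\ge 1/2$. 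Summing $i^c$ over the $\Omega(\sqrt n/d^{5/2})$ elements of $U_2$, all of which lie within $O(n/d)$ of $n/2$ and hence have index $\Theta(n)$, yields expected $c$-weighted error $\Omega((\sqrt n / d^{5/2})^{c+1})$ after absorbing the $\Theta(n)^c$ factor, since the $c=0$ count already gives $\Omega(\sqrt n/d^{5/2})$ and each mistaken index contributes $\Theta(n^c)$; more precisely I would sum over the window to get $\Omega(\sqrt n/d^{5/2}) \cdot (n/2)^c$ and note $(\sqrt n/d^{5/2})^{c+1} \le (\sqrt n/d^{5/2})\cdot n^c$.

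The main obstacle I anticipate is the bookkeeping in the conditioning step: after round one the permutation is no longer uniform, it is uniform only subject to consistency with the round-one transcript, and I must verify that enough symmetry remains within $U_1$ (and later within $U_2$) that the ``relocate $i$ across the median'' coupling still produces an indistinguishable transcript in round two as well. Making this precise — defining the ambiguous sets so that the relevant exchangeability is preserved under conditioning, and bounding how much round one can have ``used up'' that symmetry — is the technical heart of the proof; the two rounds of averaging/counting and the optimization of the window sizes are comparatively routine once that framework is in place.
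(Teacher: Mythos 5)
Your overall plan --- two nested windows around the median, an averaging argument in each round bounding how many window elements avoid being compared to a potential separator, then a relocation coupling on the survivors --- is structurally the same as the paper's proof. But there is a concrete gap in the round-two accounting, and it is exactly where the exponent $5/2$ actually comes from: your analysis only counts round-two comparisons with \emph{both} endpoints inside the inner window $W_2$, whereas the binding case is a comparison between an ambiguous element of $W_2$ and a \emph{partially resolved pivot} from the outer window that the algorithm learned something about in round one. After round one the algorithm knows the relative order of the resolved elements of $W_1$, so in round two it can deliberately query ambiguous elements against a pivot it believes lies near the median. The paper isolates this case and proves an anti-concentration lemma (via Stirling) showing that, conditioned on the round-one transcript, the $k$-th smallest resolved element in a window of size $s = \Theta(n/d)$ lands in the sub-window of size $t\sqrt n$ with probability only $O(t\sqrt n / \sqrt s)$ --- the residual uncertainty in a resolved pivot's rank is of order $\sqrt s$, not $O(1)$. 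Demanding $dn \cdot O(t\sqrt n/\sqrt s) \cdot O(t\sqrt n / s) = o(t\sqrt n)$ is what forces $t = O(d^{-5/2})$, hence $|W_2| = \Theta(\sqrt n / d^{5/2})$. Your proposal asserts this window size but never derives it; the within-$W_2$ accounting you give (once the arithmetic is fixed to $O(dn\cdot(|W_2|/|U_1|)^2)$) would permit a window as large as $\Theta(n/d^3)$, which is not actually attainable precisely because of the pivot comparisons you omit. Also, your ``honest accounting'' of $O(dn/|U_1|)$ comparisons \emph{per element of $W_2$ to other elements of $W_2$} overcounts --- most comparisons incident to an element of $W_2$ go to elements of $U_1 \setminus W_2$ --- and if taken at face value it yields no admissible window size at all.

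There is also a smaller slip in round one: $O(dn\cdot|W_1|/n) = O(d|W_1|)$ is the expected number of comparisons with at least one endpoint in $W_1$, not both; the relevant quantity is $O(dn\cdot(|W_1|/n)^2)$, which is a small constant fraction of $|W_1|$ only once you pick $|W_1| = n/(cd)$ for a sufficiently large absolute constant $c$ (the paper uses $c=16$). On the conditioning issue you flag as the technical heart: the paper handles it by fixing the round-one transcript, a subset $A'$ of exactly half the window consisting of unresolved elements, and the ranks occupied by everything outside $A'$; the elements of $A'$ remain exchangeable over the remaining slots, the round-one transcript is determined by this data (so round-two queries are determined too), and the relocation coupling then goes through within $A'$. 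Your proposal correctly anticipates that this bookkeeping is needed, but does not carry it out, and combined with the missing pivot case the argument as written does not establish the claimed $d^{5/2}$ dependence.
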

From here we show how to make use of our two-round algorithm to design a three-round algorithm that makes \emph{zero} mistakes with high probability. After our two-round algorithm with appropriate parameters, we can be pretty sure that the median lies somewhere in a range of $O(\sqrt{n}\log^4n)$, so we can just compare all of these elements to each other in one additional round. Pseudocode for Algorithm~\ref{alg:3round} is in Appendix~\ref{app:multinoiseless}.

\begin{theorem}\label{thm:3round}
For all $c$, Algorithm~\ref{alg:3round} has query complexity $O(n\log^8 n)$, round complexity $3$, and outputs a partition with \emph{zero} $c$-weighted error with probability $1-O(\log n/n^2)$. 
\end{theorem}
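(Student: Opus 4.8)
The plan is to use Theorem~\ref{thm:iter2round} as a black box for the first two rounds, with parameters chosen so that after round two we have isolated the true median inside a small interval with overwhelming probability, and then spend the third round brute-forcing a partition of that interval. Concretely, I would invoke Algorithm~\ref{alg:iter2round} with $r = \log n$ and $\varepsilon$ chosen so that $n^\varepsilon = 8\log^3 n$, exactly as in the remark following Theorem~\ref{thm:iter2round}. That gives a 2-round subroutine using $n\log n + n$ queries whose output partition has $c$-weighted error $O((\sqrt n \log^4 n)^{c+1})$ with probability $1 - O(\log n / n^2)$. Taking $c=0$, the number of misplaced elements is $O(\sqrt n \log^4 n)$ with that same probability; in particular the set of elements that the 2-round algorithm places on the ``accept'' side but whose rank is actually below $n/2$ (and symmetrically) is small. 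The key point to extract is not just that few elements are misplaced, but that \emph{all} misplaced elements lie within an interval of ranks of width $O(\sqrt n \log^4 n)$ around $n/2$ — which follows because the 2-round algorithm's partition is ``threshold-like'': it accepts everything it is confident is above its median estimate and rejects everything below, so the only elements it can misplace are those close in rank to its estimate, and its estimate is within $O(\sqrt n \log^4 n)$ of $n/2$ with high probability. I would state this as a short claim and verify it from the structure of Algorithm~\ref{alg:iter2round} (the final window $A_r$ it narrows down to has this size, and every element outside $A_r$ is correctly classified deterministically given the comparisons made).

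Given that claim, let $W$ be the set of elements that Algorithm~\ref{alg:iter2round} cannot classify with certainty — the contents of its final window — which has size $m = O(\sqrt n \log^4 n)$ and, with probability $1 - O(\log n/n^2)$, contains the true median and all potentially-misplaced elements, with the correct count of how many of the $m$ elements belong in $A$ versus $R$ being known (it is $n/2$ minus the number of elements certainly accepted). In the third round, compare every element of $W$ to every other element of $W$: this costs $\binom{m}{2} = O(m^2) = O(n \log^8 n)$ comparisons, matching the claimed query complexity, and the first two rounds cost only $O(n\log n)$, so the total is $O(n\log^8 n)$. Since all these comparisons are noiseless and $W$ is totally ordered, after this round we know the exact internal ranking of $W$; combined with the known number $j$ of elements of $W$ that should be accepted, we accept the top $j$ of $W$ and reject the bottom $m - j$, then output this together with the certain classifications from round two. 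This partition has zero $c$-weighted error whenever the high-probability event from Theorem~\ref{thm:iter2round} holds, i.e., with probability $1 - O(\log n/n^2)$, for every $c$ simultaneously (zero error is zero error regardless of the weighting).

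The main obstacle is the bookkeeping in the claim that the 2-round algorithm's uncertain set $W$ is a genuine \emph{interval of ranks} of the right width that provably contains the median — Theorem~\ref{thm:iter2round} as stated only promises a bound on expected/high-probability $c$-weighted error, not explicitly that errors are confined to a small rank-window or that the algorithm hands us a clean ``uncertain set'' with a known accept-count. I expect this to require looking inside Algorithm~\ref{alg:iter2round}: one needs that the sequence of refinements $A_1 \supseteq A_2 \supseteq \cdots \supseteq A_r$ always brackets the median of $S_1$, that the median of $S_1$ is within $O(\sqrt n)$ (ranks) of the true median with probability $1 - e^{-n^{\Omega(1)}}$ by a Chernoff/Hoeffding bound on the sampling, and that $|A_r|$ has the stated size on the good event. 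All of this is routine given the machinery already developed for Theorem~\ref{thm:iter2round}, so the proof is short modulo citing those internal facts; the only genuinely new content is the trivial third round and the union bound. I would therefore organize the write-up as: (i) restate the good event $\mathcal{E}$ from Theorem~\ref{thm:iter2round} and note $\Pr[\mathcal E] \ge 1 - O(\log n/n^2)$; (ii) on $\mathcal E$, identify $W$, bound $|W| = O(\sqrt n \log^4 n)$, and note the median and all misplaced elements lie in $W$ with a known accept-count; (iii) third round brute-force sort of $W$, count queries; (iv) conclude zero error on $\mathcal E$.
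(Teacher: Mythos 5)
Your plan has a genuine gap that you yourself flag as ``the main obstacle,'' but it is not mere bookkeeping: the set $W$ you want to hand to round three is not something Algorithm~\ref{alg:iter2round} computes. In that algorithm every element of $[n]$ outside the initial skeleton $S_1$ is touched only once, in round two, by a single comparison to the pivot $x$; the algorithm therefore learns exactly one bit (``does $j$ beat $x$'') about each such $j$ and ends up classifying \emph{every} element into accept/reject. There is no residual ``uncertain set.'' The object you are reaching for is the rank interval $[a_r,b_r]$: the misplaced elements do lie in that interval with high probability, but the algorithm cannot determine which elements of $[n]$ have rank in $[a_r,b_r]$; it only knows this for the tiny skeleton subsample $A_r = [a_r,b_r]\cap S_r$. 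With $r=\log n$, $|A_r| \le n^{1/(2r+1)+\varepsilon} = O(\log^3 n)$, while the number of misplaced elements can be $\Theta(\sqrt n\log^4 n)$, and almost all of them lie outside $S_r$. So the assertion ``every element outside $A_r$ is correctly classified deterministically'' is false, and a third-round brute force over $A_r$ would leave most of the errors in place.

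What Algorithm~\ref{alg:3round} actually does is a different construction designed precisely to make the residual set computable. Set $t = 256\sqrt n\log^4 n$. Run Algorithm~\ref{alg:iter2round} \emph{twice} in parallel, once after padding the input with $2t$ dummy bottom elements, once after padding with $2t$ dummy top elements. The padding shifts each run's pivot enough that, on the event of Theorem~\ref{thm:iter2round} (probability $1-O(\log n/n^2)$), every mistake of the first run is a ``false accept'' and every mistake of the second is a ``false reject'' (this is Observation~\ref{ob:dum}). Consequently $R^{(1)}$ is provably all below the median, $A^{(2)}$ is provably all above, and the computable set $U = R^{(2)}\cap A^{(1)}$ has $|U|\le 4t$ and contains every possibly-misplaced element; the correct accept-count within $U$ is $n/2 - |R^{(2)}|$. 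Your round-three step (all-pairs comparisons inside the uncertain set, then cut at the known rank) and the query-complexity arithmetic $O((4t)^2) = O(n\log^8 n)$ are exactly right once you have such a $U$; the missing ingredient is the one-sided dummy-padding trick that makes $U$ both computable and a provable superset of the error set. Without that (or something equivalent, e.g.\ modifying round two to compare all of $[n]$ to both $a_r$ and $b_r$ and separately bounding the rank gap $|b_r-a_r|$), the argument does not close.
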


Again, recall that $\omega(n)$ queries are necessary for any three-round algorithm just to solve \select\ with probability $1-o(1)$~\cite{BollobasB90}. Finally, we further make use of ideas from our two-round algorithm to design a simple four round algorithm that has query complexity $O(n)$ and makes \emph{zero} mistakes with high probability. More specifically, we appropriately tune the parameters for our two-round algorithm (i.e. set $r=1$) to find a window of size $\approx n^{2/3}$ that contains the median (and already correctly partition all other elements). We then use similar ideas in round three to further find a window of size $\approx \sqrt{n}$ that contains the median (and again correctly partition all other elements). We use the final round to compare all remaining uncertain elements to each other and correctly partition them.

\begin{theorem}\label{thm:4rounds}
For all $c$, and any $\varepsilon \in (0,1/18)$, Algorithm~\ref{alg:4rounds} has query complexity $O(n)$, round complexity $4$, and outputs a partition with \emph{zero} $c$-weighted error with probability at least $1 - e^{-\Omega(n^{\varepsilon})}$. 
\end{theorem}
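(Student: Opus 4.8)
The plan is to realize Algorithm~\ref{alg:4rounds} as a cascade of four rounds, each producing a shorter window $R$ that is guaranteed (on a high-probability event) to contain the true median, while permanently and correctly classifying every element that falls outside it; throughout we maintain the invariant that we know exactly which rank $q$ the median occupies inside the current window. Fix a small parameter proportional to $\varepsilon$. Round~1 is the first round of Algorithm~\ref{alg:iter2round} with $r=1$: draw a skeleton $S_1$ of $n^{2/3}$ uniform samples and a pivot set $T_1\subseteq S_1$ of $n^{1/3}$ samples, and compare every element of $S_1$ to every element of $T_1$ ($n$ comparisons). This reveals the order of $T_1$ and which inter-pivot block of $S_1$ contains the median of $S_1$; letting $R_1=[a_1,b_1]$ be the union of the $n^{\Theta(\varepsilon)}$ inter-pivot blocks centered there, its endpoints $a_1,b_1$ lie in $T_1$ (so are known explicitly), its ground-truth length is $n^{2/3+\Theta(\varepsilon)}$, and a Chernoff bound on the rank of the median of a uniform sample shows $a_1<n/2<b_1$ except with probability $e^{-\Omega(n^{\varepsilon})}$. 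Round~2 compares $a_1$ and $b_1$ to all $n$ elements ($2n$ comparisons), which yields: the exact ranks of $a_1,b_1$, hence the target rank $q_1$ of the median inside $R_1$; the explicit set of elements lying in $R_1$ (those beaten by $b_1$ but not by $a_1$), of size $n^{2/3+\Theta(\varepsilon)}$; and a correct final classification of every element outside $R_1$ (reject below $a_1$, accept above $b_1$), valid on the event $a_1<n/2<b_1$.

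The crucial structural point, and the reason three rounds do not suffice, is that the \emph{identity} of the set $R_1$ is only learned at the end of round~2, so its subdivision must wait for round~3. Round~3 repeats the previous step one scale down: choose pivots $T_3\subseteq R_1$ of size $n^{1/3-\Theta(\varepsilon)}$ and compare every element of $R_1$ to every pivot (this is $n^{2/3+\Theta(\varepsilon)}\cdot n^{1/3-\Theta(\varepsilon)}=O(n)$ comparisons, the $\pm\Theta(\varepsilon)$ in the exponents cancelling), then use $q_1$ and the now-known ranks of the pivots inside $R_1$ to correctly classify every element of $R_1$ outside the block $R_2=[a_2,b_2]$ that straddles the target rank, to update the target rank to $q_2$, and --- since all of $R_1$ was compared to $a_2$ and $b_2$ --- to learn the set $R_2$ explicitly; an order-statistics (balls-in-bins) estimate gives $|R_2|\le n^{1/3+\Theta(\varepsilon)}$ except with probability $e^{-\Omega(n^{\varepsilon})}$. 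Round~4 simply compares all pairs inside $R_2$, costing $|R_2|^2\le n^{2/3+\Theta(\varepsilon)}$ comparisons, which is $O(n)$ exactly when $\varepsilon$ is below the stated threshold $1/18$; this exposes the full order of $R_2$, identifies its $q_2$-th element as the median, and classifies the remaining elements correctly. Summing the four per-round counts gives $O(n)$ total, and a union bound over the $O(1)$ bad events of the rounds gives success probability $1-e^{-\Omega(n^{\varepsilon})}$.

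The routine ingredients are the two concentration estimates --- the median of a size-$n^{2/3}$ uniform sample is within $n^{2/3+\Theta(\varepsilon)}$ of the true median, and a size-$t$ uniform subsample of a length-$m$ interval leaves no gap exceeding $O((m/t)\,n^{\Theta(\varepsilon)})$, each failing with probability $e^{-\Omega(n^{\varepsilon})}$ --- together with the bookkeeping that turns ``knowing the target rank inside a window plus the ranks of its pivots'' into a correct permanent classification of everything outside the next window; both follow the template used to analyze Algorithm~\ref{alg:iter2round}. The one genuinely load-bearing idea is the staggering of rounds~2 and~3: discovering the current window and subdividing it cannot happen in the same round, so each refinement costs a full round. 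I expect the main care to go into choosing the slack exponents $\Theta(\varepsilon)$ at each of the three levels so that simultaneously (a) every window contains the median with the required $1-e^{-\Omega(n^{\varepsilon})}$ probability, (b) the comparison counts of rounds~1--3 each stay $O(n)$ despite the slack, and (c) $|R_2|^2\le n$, which is what pins the admissible range to $\varepsilon\in(0,1/18)$.
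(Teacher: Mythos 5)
Your outline captures the strategic skeleton the paper actually uses: a cascade of nested pivot-based refinements, the structural observation that one can only subdivide the uncertain window in the round \emph{after} that window's identity becomes known (the ``staggering'' you call load-bearing), and the quadratic finale $|W|^2 = O(n)$ that pins $\varepsilon < 1/18$. Where you diverge is the mechanics of rounds 1--2. You draw a single skeleton, widen the central inter-pivot block of $T_1$ by $n^{\Theta(\varepsilon)}$ blocks to form $R_1 = [a_1,b_1]$, and in round 2 compare $a_1,b_1$ directly to all of $[n]$; you then argue $a_1 < n/2 < b_1$ from scratch via a Chernoff bound on the sample median. The paper's Algorithm~\ref{alg:4rounds} instead runs two parallel copies of Algorithm~\ref{alg:iter2round} with $r=1$, one on the input padded with $2t$ extra dummy-minimum elements and one padded with $2t$ dummy-maximum elements, and sets $U = R^{(2)}\cap A^{(1)}$; the dummy shift (Observation~\ref{ob:dum}) makes the approximate median of each padded run provably land on the correct side of the true median, so the bound $|U|\le 4t$ and the one-sided correctness of $R^{(1)}, A^{(2)}$ follow directly from the already-proved Theorem~\ref{thm:iter2round} with no new concentration argument. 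Both implementations work and match the same query/round budget; the paper's packaging is simply more modular (reusing the two-round machinery verbatim), while yours has to re-derive the endpoint guarantee $a_1 < n/2 < b_1$. Your round-3 and round-4 descriptions match the paper's (pivot set of size $\approx n^{1/3-\varepsilon}$ in the window, single bracketing block $W$, all-pairs finish). So this is essentially the same proof with a slightly different, but equally viable, realization of the first two rounds.
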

\subsection{Erasure and Noisy Models}
Here we briefly overview our results on multi-round algorithms in the erasure and noisy models. We begin with an easy reduction from these models to the noiseless model, at the cost of a blow-up in the round or query complexity. Essentially, we are just observing that one can adaptively resample any comparison in the erasure model until it isn't erased (which will take $1/\gamma$ resamples in expectation), and also that one can resample in parallel any comparison in either the erasure or noisy model the appropriate number of times and have it effectively be a noiseless comparison.

\begin{proposition}
\label{pro:blowup}
If there is an algorithm solving \partition, \select\ or \findMin\ in the noiseless model with probability $p$ that has query complexity $Q$ and round complexity $r$, then there are also algorithms that resample that:
\begin{itemize}
\item solve  \partition, \select\ or \findMin\  in the erasure model with probability $p$ that have expected query complexity $Q/\gamma$, but perhaps with expected round complexity $Q/\gamma$ as well.
\item solve  \partition, \select\ or \findMin\ in the erasure model with probability $p-1/\poly(n)$ that have query complexity $O(Q(\log Q + \log n)/\gamma)$, and round complexity $r$.
\item solve  \partition, \select\ or \findMin\ in the noisy model with probability $p - 1/\poly(n)$ that have query complexity $O(Q(\log Q + \log n)/\gamma^2)$, and round complexity $r$.
\end{itemize}
\end{proposition}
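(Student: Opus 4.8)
The plan is to simulate the given noiseless algorithm $\mathcal{A}$ one comparison at a time, replacing each noiseless query by a suitable number of (re)samples in the erasure or noisy model, and then arguing that the simulation reproduces a run of $\mathcal{A}$ either exactly or up to a $1/\poly(n)$ loss in success probability. Throughout, let $Q$ denote the stated bound on $\mathcal{A}$'s query complexity (for a randomized $\mathcal{A}$ this is a worst-case bound); since $\mathcal{A}$ is comparison-based and never profits from repeating a comparison in the noiseless model, we may also assume $Q\le\binom n2$, so $\log Q=O(\log n)$, though we keep $\log Q$ explicit. For the first bullet (adaptive, erasure), whenever $\mathcal{A}$ would compare $\{a,b\}$ we instead query $\{a,b\}$ repeatedly until the answer is not $\bot$ and feed that answer to $\mathcal{A}$. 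Each query is non-erased independently with probability $\gamma$, so the number of resamples is geometric with mean $1/\gamma$, and by linearity of expectation the total expected query complexity is $Q/\gamma$. Since each resampling loop must observe its own outcomes before stopping and, in the worst case, the loops run sequentially, the expected round complexity is also $Q/\gamma$. Because every ``effective'' comparison fed to $\mathcal{A}$ is a genuine noiseless comparison, the output is distributed exactly as in the noiseless model and the success probability is unchanged at $p$.

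For the second bullet (parallel, erasure), fix $m=c(\log Q+\log n)/\gamma$ for a suitable absolute constant $c$. Whenever $\mathcal{A}$ would compare $\{a,b\}$, issue $m$ copies of $\{a,b\}$ within the same round; if at least one copy is non-erased, use it (all non-erased copies agree with each other and with the ground truth), and otherwise declare failure. The probability that all $m$ copies of one fixed comparison are erased is $(1-\gamma)^m\le e^{-\gamma m}\le 1/(Qn^{\Omega(1)})$ for appropriate $c$, so a union bound over the at most $Q$ comparisons made by $\mathcal{A}$ shows that, with probability $1-1/\poly(n)$, every comparison is simulated successfully, in which case we exactly reproduce a run of $\mathcal{A}$. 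Hence the success probability is at least $p-1/\poly(n)$; the round structure is preserved (each original round becomes one round of $m$-fold queries), giving round complexity $r$ and query complexity $O(Qm)=O(Q(\log Q+\log n)/\gamma)$.

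For the third bullet (parallel, noisy) the argument is identical in spirit with $m=c(\log Q+\log n)/\gamma^2$: replace each comparison $\{a,b\}$ by $m$ parallel noisy copies and report the majority outcome. Each copy independently agrees with the ground truth with probability $1/2+\gamma/2$, so a majority error requires the number of correct copies to deviate below its mean by $\gamma m/2$; by Hoeffding's inequality this has probability at most $e^{-\gamma^2 m/2}\le 1/(Qn^{\Omega(1)})$ for appropriate $c$. Union-bounding over the $\le Q$ comparisons shows that, except with probability $1/\poly(n)$, every reported majority matches the truth and we reproduce a run of $\mathcal{A}$, yielding success probability $\ge p-1/\poly(n)$, round complexity $r$, and query complexity $O(Q(\log Q+\log n)/\gamma^2)$.

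There is no substantial obstacle; this is the standard geometric-waiting and Chernoff--Hoeffding amplification. The only points needing care are: (i) that $\mathcal{A}$ is comparison-based, so which comparisons it makes and in which round is entirely a function of the comparison outcomes it has seen — this is precisely what makes the simulation faithful and lets the reduction preserve the round complexity; (ii) that the union bounds are over the bounded number of comparisons (not over rounds or elements), so the polynomial slack $1/\poly(n)$ can be made an arbitrarily large power of $n$ by increasing $c$; and (iii) that for the parallel simulations one must fix $m$ in advance, which is why we use the worst-case bound $Q$ (and may absorb $\log Q$ into $O(\log n)$ if desired).
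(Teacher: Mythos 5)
Your proof is correct and follows essentially the same approach as the paper: adaptive resampling until non-erasure for the first bullet, and parallel repetition with $O((\log Q+\log n)/\gamma)$ or $O((\log Q+\log n)/\gamma^2)$ copies followed by a union bound (and majority vote via Chernoff--Hoeffding for the noisy case) for the other two. The additional remarks on why $\log Q = O(\log n)$ and on the need to fix the repetition count in advance are accurate but not a departure from the paper's argument.
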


\begin{corollary}\label{cor:noisyalgs}
There are algorithms that resample that:
\begin{itemize}
\item solve \partition\ or \select\ in the erasure model with probability $1$ with expected query complexity $O(n/\gamma)$ (based on the QuickSelect or Median-of-Medians algorithm~\cite{Hoare61,BlumFPRT73}).
\item solve \partition\ or \select\ in the erasure model with probability $1-1/\poly(n)$ with query complexity $O(n\log n/\gamma)$ and round complexity $4$.
\item solve \partition\ or \select\ in the noisy model with probability $1-1/\poly(n)$ with query complexity $O(n\log n/\gamma^2)$ and round complexity $4$.
\end{itemize} 
\end{corollary}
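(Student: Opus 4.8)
The plan is simply to instantiate Proposition~\ref{pro:blowup} with an appropriate noiseless algorithm in each of the three cases.

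For the first bullet, I would take a classical linear-query noiseless selection algorithm: QuickSelect~\cite{Hoare61} (expected $O(n)$ comparisons, always correct) or Median-of-Medians~\cite{BlumFPRT73} (deterministic $O(n)$ comparisons). The one point to verify is that these algorithms in fact solve \partition\ and not merely \select: each recursive step partitions the current working set around a pivot, and as soon as a pivot lands strictly on one side of the target rank, every element on that side of the pivot is permanently certified to be on the corresponding side of the median; when the recursion reaches the target element itself, every element has been so certified. Hence QuickSelect solves \partition\ with probability $1$ and expected query complexity $O(n)$, and Median-of-Medians does so deterministically. Feeding this into the first bullet of Proposition~\ref{pro:blowup} (adaptively re-query each comparison until it is not erased, at an expected $1/\gamma$ multiplicative cost) gives an algorithm solving \partition\ (and therefore also \select) in the erasure model with probability $1$ and expected query complexity $O(n/\gamma)$; here the two sources of randomness multiply cleanly, so the expected number of queries is $\E[Q]/\gamma = O(n/\gamma)$.

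For the second and third bullets, I would plug the four-round noiseless \partition\ algorithm of Theorem~\ref{thm:4rounds} (Algorithm~\ref{alg:4rounds}), which has query complexity $Q = O(n)$, round complexity $4$, and success probability $p = 1 - e^{-\Omega(n^{\varepsilon})}$, into the second and third bullets of Proposition~\ref{pro:blowup} respectively. Since $Q = O(n)$ we have $\log Q + \log n = O(\log n)$, so the second bullet yields query complexity $O(n\log n/\gamma)$, round complexity $4$, and success probability $p - 1/\poly(n)$ in the erasure model, and the third bullet yields query complexity $O(n\log n/\gamma^2)$, round complexity $4$, and success probability $p - 1/\poly(n)$ in the noisy model. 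Because $e^{-\Omega(n^{\varepsilon})}$ is smaller than any inverse polynomial, $p - 1/\poly(n) = 1 - 1/\poly(n)$ as claimed, and \select\ follows from \partition\ via the reductions in Appendix~\ref{app:technical}.

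There is no real obstacle here: essentially all the content lives in Proposition~\ref{pro:blowup} and Theorem~\ref{thm:4rounds}. The only two points that need a small amount of care are (i) confirming that QuickSelect / Median-of-Medians genuinely output a correct \partition\ (so that the ``probability $1$, expected $O(n/\gamma)$'' guarantee applies to \partition\ and not just \select), and (ii) bookkeeping the inverse-polynomial slack: one must check that the additive $1/\poly(n)$ loss incurred by the parallel-resampling reductions, on top of the already exponentially small failure probability of the noiseless four-round algorithm, still leaves overall success probability $1 - 1/\poly(n)$.
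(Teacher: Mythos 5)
Your proposal matches the paper's intended derivation exactly: the corollary is stated immediately after Proposition~\ref{pro:blowup} without a separate proof precisely because it follows by plugging QuickSelect/Median-of-Medians into the first bullet and the four-round noiseless \partition\ algorithm of Theorem~\ref{thm:4rounds} (with $Q=O(n)$, hence $\log Q+\log n=O(\log n)$) into the second and third bullets. Your two side remarks---that QuickSelect genuinely certifies a full partition when the recursion terminates, and that an exponentially small failure probability plus a $1/\poly(n)$ slack is still $1-1/\poly(n)$---are the right points to verify and are handled correctly.
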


In the erasure model, the algorithms provided by this reduction do not have the optimal round/query complexity. We show that $\Theta(n/\gamma)$ queries are necessary and sufficient, as well as $\Theta(\log^*(n))$ rounds. For the algorithm, we begin by finding the median of a random set of size $n/\log n$ elements. This can be done in $4$ rounds and $O(n/\delta)$ total comparisons by Corollary~\ref{cor:noisyalgs}. Doing this twice in parallel, we find two elements that are guaranteed to be above/below the median, but very close. Then, we spend $\log^*(n)$ rounds comparing every element to both of these. It's not obvious that this can be done in $\log^*(n)$ rounds. Essentially what happens is that after each round, a fraction of elements are successfully compared, and we don't need to use any future comparisons on them. This lets us do even more comparisons involving the remaining elements in future rounds, so the fraction of successes actually increases with successive rounds. Analysis shows that the number of required rounds is therefore $\log^*(n)$ (instead of $\log (n)$ if the fraction was constant throughout all rounds). After this, we learn for sure that the median lies within a sublinear window, and we can again invoke the 4-round algorithm of Corollary~\ref{cor:noisyalgs} to finish up. Our lower bound essentially shows that it takes $\log^*(n)$ rounds just to have a non-erased comparison involving all $n$ elements even with $O(n/\delta)$ per round, and that this implies a lower bound. Pseudocode for the algorithm and proofs of both theorems are in Appendix~\ref{app:multinoisy}.

\begin{theorem}
\label{thm:uberasure}
With probability at least $1-1/\poly(n)$, Algorithm~\ref{alg:partitionerasure} has query complexity $O(n/\gamma)$, round complexity $\log^*(n)+8$, and solves \partition.
\end{theorem}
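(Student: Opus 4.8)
I would prove Theorem~\ref{thm:uberasure} by analyzing Algorithm~\ref{alg:partitionerasure} in three phases and bounding the rounds, queries, and failure probability of each. \emph{Phase~1 (rounds $1$--$4$): build two tight fences.} Draw two independent uniformly random subsets $S,S'\subseteq[n]$, each of size $m=n/\log n$, and apply in parallel the $4$-round, $O(m\log m/\gamma)$-query erasure \select\ algorithm of Corollary~\ref{cor:noisyalgs} (used to find an arbitrary order statistic, which is w.l.o.g.) to extract an element $a$ of sample-rank $m/2-\Theta(\sqrt{m\log n})$ in $S$ and an element $b$ of sample-rank $m/2+\Theta(\sqrt{m\log n})$ in $S'$. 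A Chernoff bound on the number of elements of $S$ (resp.\ $S'$) below a given rank shows that, with probability $1-1/\poly(n)$, $a<n/2<b$ and the interval $(a,b)$ contains only $W=\tilde O(\sqrt n)$ elements of $[n]$: $a$ is guaranteed below the median, $b$ guaranteed above, and both within $\tilde O(\sqrt n)$ of it. This phase costs $O((n/\log n)\cdot\log n/\gamma)=O(n/\gamma)$ queries.

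\emph{Phase~2 (the next $\log^*(n)+O(1)$ rounds): compare everything to $a$ and $b$.} Call an element \emph{resolved} once it has been in a non-erased comparison against $a$ and a non-erased comparison against $b$; a resolved element is then correctly placed unless it lies in $(a,b)$ (using $a<n/2<b$). Let $n_j$ be the number of unresolved elements entering round $j$ of this phase ($n_1=n$), give round $j$ the budget $B_j=B_1/2^{j-1}$ with $B_1=\kappa n/\gamma$ for a large enough absolute constant $\kappa$, and in round $j$ compare each unresolved element to $a$ and to $b$ each $q_j:=\lfloor B_j/(2n_j)\rfloor$ times. Since each such comparison is erased independently with probability $1-\gamma$, an element stays unresolved with probability at most $2(1-\gamma)^{q_j}\le 2e^{-\gamma q_j}$, so writing $v_j:=\gamma q_j=\Theta(n/(2^j n_j))$ we get $\E[n_{j+1}\mid n_j]\le 2n_j e^{-v_j}$, hence the recurrence $v_{j+1}\gtrsim \frac{B_{j+1}}{B_j}\,v_j e^{v_j}=\tfrac12 v_j e^{v_j}$. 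Because $\kappa$ is large, $v_1=\Theta(1)\ge 3$, so this recurrence keeps $v_j$ in the tower-of-exponentials regime, $v_j\ge\exp^{(j-1)}(3)$, which exceeds $\ln(4n)$ once $j=\log^*(n)+O(1)$; at that point $n_{j+1}\le C\log n$. One further round, in which each of the $\le C\log n$ survivors is compared $\Theta(\log n/\gamma)$ more times to $a$ and $b$ (costing $O(\log^2 n/\gamma)=O(n/\gamma)$), reduces the survivor count to $0$ with probability $1-1/\poly(n)$. The budgets sum to $\sum_j B_1/2^{j-1}\le 2B_1=O(n/\gamma)$, so Phase~2 uses $O(n/\gamma)$ queries in $\log^*(n)+O(1)$ rounds. (Concentration: $n_{j+1}$ is a sum of independent indicators, so w.h.p.\ $n_{j+1}\le 2\E[n_{j+1}]+O(\log n)$, which is the quantity the recurrence actually propagates; this additive $O(\log n)$ is exactly why the cleanup round is needed.)

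\emph{Phase~3 (last $4$ rounds): finish the window.} After Phase~2 every element has a non-erased comparison to $a$ and to $b$, so the only undecided elements are the $W=\tilde O(\sqrt n)$ in $(a,b)$, and we know the exact rank $r_a=|\{i\le a\}|$. Run the $4$-round erasure \select\ algorithm of Corollary~\ref{cor:noisyalgs} on these $W$ elements to find their $(n/2-r_a)^{th}$ smallest---which is the true median $n/2$---costing $O(W\log W/\gamma)=\tilde O(\sqrt n/\gamma)=O(n/\gamma)$ queries and succeeding with probability $1-1/\poly(W)=1-1/\poly(n)$; then output the induced partition. Altogether the round count is $4+(\log^*(n)+O(1))+4$, which a careful tally makes $\log^*(n)+8$; the query count is $O(n/\gamma)$; and a union bound over the $O(\log^*(n))$ bad events of Phase~2 plus the Phase~1 and Phase~3 failures gives overall failure probability $1/\poly(n)$.

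\emph{Main obstacle.} The whole difficulty lives in Phase~2. One must pick the per-round budgets so that they \emph{simultaneously} (i) sum to $O(n/\gamma)$ and (ii) have ratio $B_{j+1}/B_j$ bounded below by a constant, since (ii) is precisely what keeps the recurrence $v_{j+1}\gtrsim v_j e^{v_j}$ in the tower regime and thereby collapses what would be $\Theta(\log n)$ rounds (with a constant per-element query count) down to $\log^*(n)$ rounds. At the same time one has to absorb the additive $O(\log n)$ slack from Chernoff---the tower drives the survivor count below $\poly(\log n)$ but not to $0$, hence the extra polylog cleanup round---and check that $q_j\ge 1$ throughout, which holds because $n_j$ collapses doubly-exponentially as soon as $v_j$ escapes the constant regime. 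Making the summable budget, the constant decay ratio, and the cleanup coexist cleanly is the crux.
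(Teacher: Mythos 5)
Your three-phase plan is essentially the paper's own argument: (i) use the four-round erasure algorithm twice in parallel to obtain two nearby pivots straddling the median, (ii) compare every element to both pivots over $\log^*(n)$ rounds with a geometrically decaying per-round budget so that the number of still-unresolved elements collapses like a power tower, and (iii) finish on the small residual set with one more four-round call. Your recurrence $v_{j+1}\gtrsim v_j e^{v_j}$ is exactly the paper's inductive claim that $|T|\le n/\bigl((2\uparrow\uparrow i)\cdot 2^i\bigr)$ after round $i$.

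There are two minor places where you diverge, one of which introduces a gap in your round count. First, for Phase~1 you extract off-center order statistics from two samples, whereas the paper pads each sample with $2t$ dummy ``$0$'' elements and just takes the median; both give pivots guaranteed (w.h.p.) to straddle the true median, so this is cosmetic. Second, and more substantively, you insist on driving the unresolved count all the way to $0$ and therefore insert an explicit polylog cleanup round before Phase~3. That round is not needed, and its presence is why your ``careful tally'' gives $\log^*(n)+8$ only by hand-waving: $4+(\log^*(n)+O(1))+1+4$ is $\log^*(n)+9+O(1)$, not $8$. The paper avoids this by choosing $t=n^{3/4}$, proving only that the number of elements still unresolved after $\log^*(n)$ rounds is at most $t$, and then observing that the Phase~3 set $U$ (everything not confidently accepted or rejected, whether unresolved or genuinely between the pivots) has size at most $12t$; after adding dummy elements so that $U$'s median is $n/2$, the four-round \partition\ call on $U$ correctly classifies all of $U$, \emph{including} the never-resolved stragglers. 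You do not need every element to have a non-erased comparison to both pivots for Phase~3 to succeed; you only need $|U|$ to be sublinear enough for the $O(|U|\log|U|/\gamma)$ query bound to stay within $O(n/\gamma)$. Dropping the cleanup round and letting Phase~3 absorb the $O(n^{3/4})$ survivors recovers the exact $4+\log^*(n)+4=\log^*(n)+8$ round count claimed.
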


\begin{theorem}\label{thm:lberasure}
Assume $\gamma \leq 1/2$. In the erasure model, any algorithm solving \select\ with probability $2/3$ even with $O(n/\gamma)$ comparisons per round necessarily has round complexity $\Omega(\log^*(n))$. 
\end{theorem}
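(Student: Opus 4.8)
The plan is to make precise the two-part story sketched before the theorem: (a) with a budget of $O(n/\gamma)$ comparisons per round it takes $\Omega(\log^*(n))$ rounds before every label has appeared in a non-erased comparison; and (b) as long as many labels have \emph{not} appeared in a non-erased comparison, no algorithm can identify the median with probability bounded away from $0$. Fix the constant $C$ hidden in ``$O(n/\gamma)$''. By Yao's principle it suffices to fix the algorithm's internal randomness and show that every deterministic algorithm issuing at most $Cn/\gamma$ comparisons per round and running for $r \le \log^*(n) - O_C(1)$ rounds outputs the median with probability $o(1)$ when the ground-truth order is a uniformly random permutation $\pi$ of the labels (over the independent erasures). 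Call a label \emph{touched after round $t$} if it has appeared in a non-erased comparison in some round $\le t$; let $U_t$ be the set of untouched labels, so $U_0=[n]$. Two observations: every non-erased comparison in rounds $\le t$ joins two touched labels, so the transcript constrains only the relative order of $[n]\setminus U_t$; and, since erasures are independent of $\pi$, the posterior of $\pi$ given the transcript is uniform over all orders consistent with the observed non-erased outcomes, so the labels of $U_r$ are exchangeable and are inserted uniformly at random into the order of the remaining labels.

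The first main step is a ``coverage lemma'': if $r \le \log^*(n) - O_C(1)$ then $|U_r| \ge n^{2/3}$ with probability $1 - n^{-\omega(1)}$. Condition on the transcript through round $t$; this fixes $U_t$ and fixes which comparisons round $t+1$ issues, say with $d_u$ of them touching $u$, so $\sum_u d_u \le 2Cn/\gamma$. A label $u \in U_t$ stays untouched iff all its $d_u$ round-$(t+1)$ comparisons are erased, probability $(1-\gamma)^{d_u} \ge e^{-2\gamma d_u}$ (using $\gamma \le 1/2$), and these events are independent over $u$. By convexity $\mathbb{E}[\,|U_{t+1}| \mid \text{past}\,] = \sum_{u\in U_t}(1-\gamma)^{d_u} \ge |U_t|\,e^{-4Cn/|U_t|}$, and Chernoff gives $|U_{t+1}| \ge \tfrac12 |U_t|\,e^{-4Cn/|U_t|}$ except with probability $\exp(-\Omega(|U_t|e^{-4Cn/|U_t|}))$. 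Writing $L_t := \ln(n/|U_t|)$, this is the recursion $L_{t+1}\le L_t+\ln 2+4Ce^{L_t}$, so $L_{t+1} = \Theta_C(e^{L_t})$ once $L_t\ge 1$; thus $L_t$ grows like an iterated exponential and stays below $\tfrac13\ln n$ (i.e.\ $|U_t|\ge n^{2/3}$) for every $t \le \log^*(n) - O_C(1)$. On that range $|U_t|e^{-4Cn/|U_t|} = n^{2/3-o(1)} = \omega(\log n)$, so a union bound over the at most $\log^*(n)$ rounds makes the estimate hold at every step with probability $1-n^{-\omega(1)}$.

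The second main step converts ``$|U_r|$ large'' into failure. Fix a transcript $\tau$ with $u := |U_r| \ge n^{2/3}$, and additionally condition on a linear extension $\sigma$ of the partial order the transcript induces on $[n]\setminus U_r$. For any label $v$, the posterior rank of $v$ is then a fixed number plus the number $Z_v$ of the $u$ labels of $U_r$ inserted below $v$; $Z_v$ has a hypergeometric distribution, whose pmf attains maximum $O(1/\sqrt u)$ unless $v$ lies within $O(\sqrt u)$ of an end of $\sigma$, in which case $\pi(v)$ is within $O(u) \ll n/2$ of an extreme rank and so $\Pr[\pi(v)=n/2\mid\tau,\sigma]=0$. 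Either way $\Pr[\pi(v)=n/2\mid\tau] = O(1/\sqrt u)$ for \emph{every} label $v$ (the bound $1/n$ for $v\in U_r$ is dominated), so whichever label the algorithm outputs it is correct with probability $\mathbb{E}_\tau[\,O(1/\sqrt{|U_r|})\,] \le O(n^{-1/3}) + \Pr[|U_r|<n^{2/3}] = o(1) < 2/3$. Hence any algorithm that solves \select\ with probability $2/3$ must run for $r \ge \log^*(n) - O_C(1) = \Omega(\log^*(n))$ rounds.

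I expect the coverage lemma to be the main obstacle, in particular the interplay of adaptivity and concentration: round $t+1$'s comparisons depend on earlier erasure outcomes, so the degrees $d_u$ are random, which is why I condition on the past transcript (making the new erasures fresh and independent); the delicate point is that chaining the per-round Chernoff bounds is only affordable because $|U_t|$ remains polynomially large throughout, which holds precisely because we stop $O_C(1)$ iterations before the tower recursion for $L_t$ reaches $\ln n$. A secondary point needing care is the last step's anti-concentration when $U_r$ splits very unevenly around $v$, but as noted that regime pushes $v$'s rank away from $n/2$ and contributes nothing.
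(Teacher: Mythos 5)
Your proposal follows the same two-stage strategy as the paper's proof: a coverage lemma showing that with budget $O(n/\gamma)$ per round it takes $\Omega(\log^*n)$ rounds before all but a vanishing fraction of labels appear in a non-erased comparison, followed by a posterior-uncertainty argument showing that while many labels remain untouched no output label can be the median with decent probability. The paper's coverage induction is identical in substance to yours — it tracks $n/(C\uparrow\uparrow i)$ surviving labels directly where you track $L_t=\ln(n/|U_t|)$, and the tower recursion is the same — but your final step is genuinely stronger: the paper conditions on a single all-erased element and only deduces failure probability $\geq 1/2$ (giving success $\leq 2/3$, tight to the theorem's constant), whereas you use all of $U_r$ and hypergeometric anti-concentration to get $O(1/\sqrt{|U_r|}) = o(1)$ success probability; either suffices, but yours would let the theorem's $2/3$ be any constant $>0$.

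One point to fix: you assert that the events ``$u$ stays untouched in round $t+1$'' are independent over $u\in U_t$. This is false in general — the algorithm may compare two elements of $U_t$ to each other, in which case both survival events depend on the same erasure bit. The conclusion still holds because each erasure bit appears in at most two of the survival indicators (a read-$2$ family), so a read-$k$ Chernoff bound, or an FKG/Janson-style argument, recovers the lower-tail estimate with only a constant loss in the exponent; but as written the ``Chernoff'' invocation is not justified. (For what it is worth, the paper's own application of Chernoff at the analogous step leans on the same unstated fact, so this is a gap in both write-ups, not a divergence from the paper's intent.)
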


We now introduce a related problem that is strictly easier than \partition\ or \select, which we call \rank, and prove lower bounds on the round/query complexity of \rank\ noisy models, which will imply lower bounds on \partition\ and \select. In \rank, we are given as input a set $S$ of $n$ elements, and a special element $b$ and asked to determine $b$'s rank in $S$ (i.e. how many elements in $S$ are less than $b$).\footnote{A more formal statement of \rank\ is in the proof of Proposition \ref{prop:reduce} in Appendix~\ref{app:multinoisy}.} We say that a solution is a $t$-approximation if the guess is within $t$ of the element's actual rank. We show formally that \rank\ is strictly easier than \select\ in Appendix~\ref{app:technical}. From here, we prove lower bounds against \rank\ in the noisy model.

At a high level, we show (in the proof of Theorem~\ref{thm:lbnoisy}) that with only $O(n\log n/\gamma^2)$ queries, it's very likely that there are a constant fraction of $a_i$'s such that the algorithm is can't be very sure about the relation between $a_i$ and $b$. This might happen, for instance, if not many comparisons were done between $a_i$ and $b$ and they were split close to 50-50. From here, we use an anti-concentration inequality (the Berry-Essen inequality) to show that the rank of $b$ does not concentrate within some range of size $\Theta(n^{3/8})$ conditioned on the available information. In otherwords, the information available simply cannot narrow down the rank of $b$ to within a small window with decent probability, no matter how that information is used. We then conclude that no algorithms with $o(n\log n/\gamma^2)$ comparisons can approximate the rank well with probability $2/3$.

\begin{theorem}\label{thm:lbnoisy}
In the noisy model, any algorithm obtaining an $(n^{3/8}/40)$-approximation for \rank\ with probability $2/3$ necessarily has query complexity $\Omega(n\log n/\gamma^2)$.
\end{theorem}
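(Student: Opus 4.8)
\textbf{Proof proposal for Theorem~\ref{thm:lbnoisy}.}

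The plan is to prove the lower bound directly for \rank\ via an averaging argument over the algorithm's internal randomness (Yao's principle) together with a Bayesian anti-concentration estimate. First I would fix a hard distribution over instances: the special element $b$ sits at a fixed location, and each of the $n$ elements $a_1,\dots,a_n$ of $S$ is placed independently above $b$ or below $b$ with probability $1/2$ each, with a uniformly random order inside each of the two groups (so the elements of $S$ are a priori indistinguishable and, by direct computation, a single noisy $S$-internal comparison carries no information about whether its two endpoints lie on the same side of $b$). Writing $\sigma_i=\ind[a_i<b]$, the rank of $b$ is $R=\sum_{i=1}^n\sigma_i\sim\mathrm{Bin}(n,1/2)$, whose standard deviation is $\Theta(\sqrt n)\gg n^{3/8}$. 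Averaging over the coins, it suffices to rule out a \emph{deterministic} algorithm $\mathcal A$ with query complexity $Q=o(n\log n/\gamma^2)$ that outputs an $(n^{3/8}/40)$-approximation of $R$ with probability at least $2/3$ over this distribution and the comparison noise.

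The heart of the argument is to show that, conditioned on the transcript $\Pi$ (the full sequence of comparisons made together with their results), the posterior law of $R$ is spread out enough that no fixed output can lie within $t:=n^{3/8}/40$ of $R$ with good conditional probability. Let $q_i=\Prob[\sigma_i=1\mid\Pi]$, and call $a_i$ \emph{resolved} if $\min(q_i,1-q_i)<n^{-1/4}$ and \emph{unresolved} otherwise. Two sub-claims drive everything. \textbf{(a) Few elements get resolved.} For $a_i$ to be resolved, the posterior log-likelihood ratio for $\sigma_i$ must exceed $\approx\tfrac14\log n$ in magnitude; each comparison involving $a_i$ and $b$ moves this ratio by an increment of mean $\Theta(\gamma^2)$ and absolute value $O(\gamma)$, so a martingale maximal inequality shows that, for a suitable constant $C$ and except with probability $1/\poly(n)$, this threshold is never reached by an $a_i$ that participates in fewer than $(\log n)/(C\gamma^2)$ comparisons. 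Since the comparison counts sum to at most $2Q=o(n\log n/\gamma^2)$, only $o(n)$ elements participate in at least $(\log n)/(C\gamma^2)$ comparisons, so a union bound gives that with high probability at least $(1-o(1))n$ of the $a_i$ are unresolved. \textbf{(b) Unresolved signs force anti-concentration.} Each unresolved $a_i$ contributes $q_i(1-q_i)\ge n^{-1/4}(1-n^{-1/4})$ to $\mathrm{Var}(R\mid\Pi)$; after bounding the (necessarily weak) correlations that $\Pi$ can induce among distinct $\sigma_i$, this yields $\mathrm{Var}(R\mid\Pi)=\Omega(n^{3/4})$, and the Berry--Esseen theorem applied to the conditionally near-independent $\pm1$ summands shows that the posterior of $R$ assigns at most a fixed constant $\beta_0<1/2$ to every interval of radius $t$.

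Putting these together: with probability $1-o(1)$ over the instance the transcript is ``bad'' in the sense of sub-claim (b), and conditioned on a bad transcript the output of $\mathcal A$ is within $t$ of $R$ with probability at most $\beta_0$, so the overall success probability is at most $\beta_0+o(1)<2/3$, a contradiction. The exponents and constants (the level $n^{-1/4}$ in the definition of resolved, the radius $n^{3/8}/40$, and the threshold $1/2$) are chosen precisely so that the Berry--Esseen error terms and the slack in the counting of (a) all point the right way.

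I expect the main obstacle to be making sub-claim (a) rigorous in the presence of \emph{transitivity}: $\sigma_i$ could in principle be certified cheaply through a chain of confident $a_i\text{--}a_{j_1}\text{--}\cdots\text{--}a_{j_\ell}\text{--}b$ comparisons rather than through direct comparisons of $a_i$ with $b$, so ``number of comparisons touching $a_i$'' is not literally the relevant budget (indeed, the matching upper bound of Corollary~\ref{cor:noisyalgs} essentially works by sorting and then binary-searching with $b$). Arguing that certifying a constant fraction of the signs, even with arbitrary use of transitive information, still costs $\Omega(n\log n/\gamma^2)$ comparisons is the delicate step -- and is one reason the hard distribution is engineered so that a single $S$-internal comparison is uninformative about sides; the careful constant tracking needed to land the approximation radius exactly at $n^{3/8}$ is the other place where the genuine work lies.
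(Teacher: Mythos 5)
You are attacking a harder problem than the theorem actually states. The paper's \rank\ problem, as formally defined in the proof of Proposition~\ref{prop:reduce}, only permits noisy comparisons of the form $(a_i,b)$; comparisons between two elements of $S$ are not allowed at all (indeed there is no total order among the $a_i$'s, only the relation of each $a_i$ to $b$). Under that restriction the transitivity obstacle you isolate as ``the delicate step'' simply cannot arise, and the key fact you need for the anti-concentration step is immediate: with your hard distribution, conditioned on the transcript the indicators $\sigma_i=\ind[a_i<b]$ are \emph{exactly} independent across $i$, since each $a_i$ has only interacted with $b$. There are no correlations to bound in your sub-claim (b); Berry--Esseen applies directly to independent summands. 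By granting the algorithm internal comparisons you inherit the real difficulty they create -- your own observation about a chain $a_i\text{--}a_{j_1}\text{--}\cdots\text{--}b$ -- and your proposal does not close it. Note also that your side remark (``a single noisy $S$-internal comparison carries no information about whether its two endpoints lie on the same side'') is true of the \emph{joint} event but not of the marginals: a single internal comparison does shift $\Prob[\sigma_i=1]$ by $\Theta(\gamma)$, so the relevant per-element posteriors are not protected by that observation even in your model.

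On sub-claim (a), the paper uses a more elementary accounting than a martingale maximal inequality, and your claimed $1/\poly(n)$ tail is too strong for elements with only a handful of comparisons (a single unlucky comparison already moves the log-likelihood ratio by $\Theta(\gamma)$). The paper writes $g(i)-b(i)$ (correct minus incorrect comparisons of $a_i$ with $b$) as $p(i)+q(i)$, where a direct coupling gives $\E\sum_i p(i)=O(n\log n/\gamma)$ and $\E\sum_i q(i)^2=O(n\log n/\gamma^2)$ once $Q=o(n\log n/\gamma^2)$; Markov's inequality on each sum then shows that with constant probability at least $n/4$ indices satisfy $|g(i)-b(i)|=O(\log n/\gamma)$, which in turn pins $\Prob[a_i<b\mid S]$ into $[\tfrac12 n^{-1/4},\,1-\tfrac12 n^{-1/4}]$ for those $i$, exactly the input needed for Berry--Esseen. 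If you drop internal comparisons (as the problem definition requires) and replace your high-probability martingale claim with this Markov-style counting, your outline becomes essentially the paper's proof; as written, the proposal has a genuine gap at precisely the point you yourself flag.
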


Finally, we conclude with an algorithm for \findMin\ in the noisy model showing that \findMin\ is strictly easier than \select. This is surprising, as most existing lower bounds against \select\ are obtained by bounding \findMin. Our algorithm again begins by finding the minimum, $x$, of a random set of size $n/\log n$ using $O(n/\gamma^2)$ total comparisons by Corollary~\ref{cor:noisyalgs}. Then, we iteratively compare each element to $x$ a fixed number of times, throwing out elements that beat it too many times. Again, as we throw out elements, we get to compare the remaining elements to $x$ more and more. We're able to show that after only an appropriate number of iterations (so that only $O(n/\delta^2)$ total comparisons have been made), it's very likely that only $n/\log n$ elements remain, and that with constant probability the true minimum was not eliminated. From here, we can again invoke the algorithm of Corollary~\ref{cor:noisyalgs} to find the true minimum (assuming it wasn't eliminated). 

\begin{theorem}\label{thm:findMin}
Assume $n$ is large enough and $10 \leq c \leq \log n$. Algorithm~\ref{alg:findMin} has query complexity $\frac{3cn}{\gamma^2}$ and solves \findMin\ in the noisy model with probability at least  $1-e^{-\Omega(c)}$.
\end{theorem}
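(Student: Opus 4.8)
}
The plan is to analyze Algorithm~\ref{alg:findMin} in the three phases matching the sketch in the text; throughout I find the minimum element $1$ (the case $k=n$ is symmetric, swapping the roles of ``beats'' and ``is beaten''). \textbf{Phase 1} draws a uniformly random set $S$ of $n/\log n$ elements and applies Corollary~\ref{cor:noisyalgs} to find $x:=\min(S)$; this costs $O\big((n/\log n)\cdot\log n/\gamma^2\big)=O(n/\gamma^2)$ comparisons and succeeds with probability $1-1/\poly(n)$. Two easy facts about $x$ will be recorded first: (i) $\mathrm{rank}(x)\le\log^2 n$ with probability $1-1/n$, since $\Pr[\mathrm{rank}(x)>r]\le(1-1/\log n)^{r}\le e^{-r/\log n}$; and (ii) if $x\ne 1$ then in fact $1<x$, because $\min(S)=1$ whenever $1\in S$. \textbf{Phase 2} runs $T$ rounds of elimination against the pivot, maintaining a candidate set with $M_1=[n]$ and permanently keeping $x\in M_t$: in round $t$ each $i\in M_t\setminus\{x\}$ is compared to $x$ exactly $k_t$ times and removed if it beats $x$ more than $k_t/2$ times, where $k_t\gamma^2=\Theta(c+t)$ (so $k_1=\Theta(c/\gamma^2)$), and $T=1$ when $c=\Omega(\log\log n)$ and $T=O(\sqrt{\log\log n})$ in general. \textbf{Phase 3} applies Corollary~\ref{cor:noisyalgs} once more to the final set $M_{T+1}$ and outputs $\min(M_{T+1})$.

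For correctness there are two things to track: whether element $1$ survives Phase 2 and whether $|M_{T+1}|$ has shrunk to $O(n/\log n)$. If $x=1$ then $1$ is never eliminated; if $x\ne1$ then $1<x$ by (ii), so each round $1$ beats $x$ with probability $1/2-\gamma/2$ and a Hoeffding bound gives survival of round $t$ with probability $1-e^{-\Omega(k_t\gamma^2)}=1-e^{-\Omega(c+t)}$, hence $\Pr[1\in M_{T+1}]\ge 1-\sum_{t\ge1}e^{-\Omega(c+t)}\ge 1-e^{-\Omega(c)}$. For the size bound, conditioned on $M_t$ and $x$ any $i\in M_t$ with $i>x$ beats $x$ a majority of times with probability at most $e^{-\Omega(k_t\gamma^2)}=:q_t$ (Hoeffding again), and these events are independent across such $i$; so a Chernoff bound gives $|M_{t+1}|\le 2q_t|M_t|+\log^2n$ with probability $1-1/\poly(n)$, using (i) to bound the at most $\mathrm{rank}(x)-1$ elements below $x$ plus $x$ itself. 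Iterating, and using that $2q_t\le e^{-\Omega(c+t)}$ decays super-geometrically for a suitable constant in the definition of $k_t$, yields $|M_{T+1}|\le n/\log n$ as soon as $\sum_{t\le T}(c+t)=\Omega(\log\log n)$, which is exactly why $T=O(\sqrt{\log\log n})$ suffices. A union bound over Phase~1, fact (i), survival of $1$, the size bound, and Phase~3 gives overall success $1-e^{-\Omega(c)}$, absorbing all $1/\poly(n)$ terms into $e^{-\Omega(c)}$ since $c\le\log n$.

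For the query complexity I add the three phases: Phases 1 and 3 each cost $O(n/\gamma^2)$ (Phase 3 because $M_{T+1}$ has at most $n/\log n$ elements). Phase 2 costs $\sum_{t=1}^{T}|M_t|k_t$; since $|M_{t+1}|k_{t+1}\le 4q_t\cdot|M_t|k_t+O(\log^2n\cdot k_t)$ with $4q_t\le1/2$, the geometric part of the sum is dominated by its first term $|M_1|k_1=n\cdot\Theta(c/\gamma^2)$, while the additive part totals $O(\log^2n\cdot\log\log n\cdot c/\gamma^2)=o(cn/\gamma^2)$. Hence the total is $\Theta(cn/\gamma^2)$, and tuning the constants in $k_t$ (and the size of $S$) makes it at most $3cn/\gamma^2$.

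The main obstacle is getting the interplay of the three quantities right: the per-round repetition count $k_t$ must \emph{grow} with $t$ so that the true minimum survives \emph{every} round with probability $1-e^{-\Omega(c)}$ --- a constant schedule would kill it with probability $\Omega(1)$ each round and hence whp over the $\Theta(\log\log n)$ rounds needed for constant $c$ --- yet $k_t$ must grow slowly enough (and $|M_t|$ shrink fast enough) that the total budget stays $O(cn/\gamma^2)$. The supporting technical point is the round-by-round concentration of $|M_t|$, where one must separate the bulk of $M_t$ (elements above $x$, shrinking geometrically) from the at most $\mathrm{rank}(x)-1=O(\poly\log n)$ elements below $x$ and the pivot itself, which simply persist; keeping $x$ in $M$ throughout is the small trick that avoids losing a $1/\log n$ factor in the success probability when $1\in S$.
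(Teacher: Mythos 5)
Your plan analyzes a \emph{different} Phase-2 procedure from the paper's Algorithm~\ref{alg:findMin}, so as written it does not prove the stated theorem. The paper's loop runs \emph{exactly} $\log\log n$ rounds, compares each surviving element to $x$ the \emph{same} number $c/\gamma^2$ of times per round, and eliminates based on a \emph{cumulative} threshold (more than $ck/(2\gamma^2)$ beats ``in total'' after round $k$). Your scheme instead grows the per-round repetition count $k_t$ with $k_t\gamma^2=\Theta(c+t)$, eliminates on a fresh per-round majority threshold $k_t/2$, and runs only $T=O(\sqrt{\log\log n})$ rounds. Those are genuinely different elimination dynamics, and while your variant very plausibly also achieves the $1-e^{-\Omega(c)}$ / $O(cn/\gamma^2)$ guarantee, the theorem is a statement about a specific algorithm; a proof must track that algorithm's cumulative test (where an element's status after round $k$ is a function of its aggregate Bin$(ck/\gamma^2,\cdot)$ tally, not $k$ independent per-round trials).

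Porting your analysis onto the actual algorithm is straightforward and in fact slightly cleaner than your version: for $i>x$, surviving round $k$ requires at most $ck/(2\gamma^2)$ cumulative beats out of $ck/\gamma^2$ trials with mean $(1/2+\gamma/2)ck/\gamma^2$, which by one Chernoff bound has probability $e^{-\Omega(ck)}$; symmetrically, element $1$ is eliminated by round $k$ with probability $e^{-\Omega(ck)}$, and a union over $k\le\log\log n$ gives survival $1-e^{-\Omega(c)}$. The paper then bounds $\mathbb{E}|U_k|\le \mathrm{rank}(x)+n\,e^{-\Omega(ck)}$ and applies \emph{Markov} and a union bound over $k$ (a constant success probability per round is fine here because each failure probability is $e^{-\Omega(ck)}$), rather than the Chernoff concentration of $|M_{t+1}|$ you use. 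The query complexity is then $\sum_k c|U_k|/\gamma^2$, whose expectation is dominated by the $k=1$ term $cn/\gamma^2$ because the tail is geometric in $e^{-\Omega(c)}$ with $c\ge 10$, plus $O(n/\gamma^2)$ for Phases~1 and~3. You should also note that the paper enforces the $\frac{3cn}{\gamma^2}$ bound deterministically by failing the loop once $2cn/\gamma^2$ comparisons are used, so the query bound is worst-case, not just in expectation.

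Two smaller points. First, the paper does not need your case split on whether $1\in S$: it simply tracks the probability that element $1$ survives all $\log\log n$ cumulative tests, which is vacuously fine when $x=1$. Second, your claim $T=1$ suffices when $c=\Omega(\log\log n)$ is an observation about your variant, not about Algorithm~\ref{alg:findMin}, which always iterates $\log\log n$ times; the paper handles large $c$ by the budget check and by the fact that after one round $|U|$ is already tiny, so the remaining rounds are cheap.
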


\begin{theorem}\label{thm:findMinlb}
Assume $c \geq 1$, $n$ is large enough and $\gamma \leq 1/4$. Any algorithm in the noisy model with query complexity $\frac{cn}{\gamma^2}$ solves \findMin\ with probability at most $1-e^{-O(c)}$. 
\end{theorem}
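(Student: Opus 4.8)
The plan is to prove a lower bound of $e^{-O(c)}$ on the \emph{failure} probability via an information-theoretic argument against a random hard instance, matching the $1-e^{-\Omega(c)}$ success guaranteed by Theorem~\ref{thm:findMin}. I will take as the hard distribution the one in which the true minimum $m$ is a uniformly random element, the rank-$2$ element $m_2$ is uniformly random among the rest, and the remaining $n-2$ elements are placed above both $m$ and $m_2$ in a uniformly random order. The crucial structural feature is an involution on instances that swaps the ranks of $m$ and $m_2$: it preserves the distribution, and the swapped instance differs from the original \emph{only} in the bias of comparisons between $m$ and $m_2$, since both of these elements lie below every rank-$\ge 3$ element in both instances, so all other comparisons are generated identically. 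Thus the only way the transcript can favor ``$m$ is the minimum'' over ``$m_2$ is the minimum'' is through the comparisons of $m$ with $m_2$, each of which carries only $\Theta(\gamma^2)$ bits.

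Given any algorithm with query budget $cn/\gamma^2$, let $\tau$ denote its (random) transcript and $O = O(\tau)$ its output. Because $O$ participates in at most $cn/\gamma^2$ comparisons, at most $n/4$ elements are compared with $O$ more than $4c/\gamma^2$ times, so I may fix a ``decoy'' element $e \neq O$ that is compared with $O$ at most $4c/\gamma^2$ times (and, to make the next step go through, I will in fact choose $e$ among the elements that $\tau$ flags as plausibly small, e.g.\ those with the largest win-counts). The heart of the argument is to lower bound the posterior odds, given $\tau$, that $e$ rather than $O$ is the true minimum. On the event that $O$ really is the minimum, the outcomes of the $t \le 4c/\gamma^2$ comparisons of $O$ with $e$ are i.i.d.\ coin flips with bias $\gamma$, so a Chernoff/Hoeffding estimate gives that the log-likelihood-ratio $(2w - t)\ln\tfrac{1+\gamma}{1-\gamma}$ (where $w$ counts wins of $O$) is $O(\gamma^2 t) + O(\gamma\sqrt{tc}) = O(c)$ except with probability $e^{-\Omega(c)}$, using $\gamma \le 1/4$. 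Since every comparison not involving the pair $\{O,e\}$ contributes the same factor to the likelihood of $\tau$ under ``$O$ is the minimum'' and under ``$e$ is the minimum,'' the full likelihood ratio is $e^{O(c)}$, hence $\Pr[\text{output incorrect} \mid \tau] \ge \Pr[e \text{ is the minimum} \mid \tau] \ge e^{-O(c)}\cdot\Pr[O \text{ is the minimum} \mid \tau]$; rearranging this inequality and taking expectations over $\tau$ yields an unconditional failure probability of $e^{-O(c)}$, which is exactly the claim.

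The step I expect to be the main obstacle is making the likelihood-ratio comparison fully rigorous, because ``$e$ is the minimum'' is not a single swapped instance but a family of alternative worlds (different runner-ups, and $O$ placed at various ranks), whereas the clean statement that only the $\{O,e\}$ comparisons have a different distribution holds exactly only for the swap against the rank-$2$ element. Bridging this gap requires restricting the posterior to the event that $O$ has small rank and arguing that, conditioned on $O$ being the minimum, the transcript already forces $e$'s rank to be small as well --- which is precisely why $e$ must be chosen as a transcript-certified ``small'' element rather than an arbitrary under-compared one --- or, alternatively, replacing the posterior computation by a direct coupling of the runs with and without the $m \leftrightarrow m_2$ swap that remains synchronized unless the pair is compared $\Omega(c/\gamma^2)$ times. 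A secondary point of care is that the number $t$ of $\{O,e\}$ comparisons is itself a stopping-time-type quantity, so the Hoeffding bound must be applied in its maximal form, in combination with the deterministic budget bound $t \le 4c/\gamma^2$.
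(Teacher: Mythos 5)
Your high-level strategy --- showing that the transcript cannot distinguish ``the output is the minimum'' from ``some other specific element is the minimum'' unless the algorithm invests $\Omega(c/\gamma^2)$ comparisons to tell them apart --- is the right idea and is in the spirit of the paper's proof. But the implementation has the gap you yourself flagged, and it is real: the claim that swapping the minimum with your decoy $e$ changes only the $\{O,e\}$ comparison distribution holds \emph{only} when $e$ is the rank-$2$ element. For a generic under-compared $e$, moving $e$ to rank~$1$ flips the distribution of every comparison between $e$ and every element that $e$ used to be above, so the likelihood ratio depends on all of those comparisons, not just the $\{O,e\}$ pair. Choosing $e$ to be ``transcript-certified small'' does not obviously close this: the set of under-compared elements and the set of plausibly-small elements need not intersect, and you would then need the posterior to assign constant mass to $e$ having small rank --- which is circular, since that is essentially what you are trying to prove.

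The paper avoids both pitfalls by changing the move. For each ordering $\pi$ and index $i$ it considers $\pi^i$, the ordering obtained by moving $a_i$ to rank~$1$; the likelihood ratio $\Pr[S\mid\pi]/\Pr[S\mid\pi^i]$ is then exactly $\bigl(\tfrac{1/2+\gamma}{1/2-\gamma}\bigr)^{g(i)-b(i)}$, where $g(i)-b(i)$ is the signed excess of correct over incorrect results over \emph{all} comparisons of $a_i$ with elements below it in $\pi$ --- so the right statistic naturally aggregates over $a_i$'s entire set of ``downward'' comparisons rather than a single pair. The paper then shows (splitting $g(i)-b(i)$ into a deterministic drift $p(i)$ plus a martingale $q(i)$, and applying Markov to $\sum_i p(i)$ and $\sum_i q(i)^2$) that with probability $\geq 1/3$ at least $n/10+1$ indices satisfy $g(i)-b(i)\leq 9c/\gamma$. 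Finally, rather than committing to a single decoy, it \emph{averages} over all these indices and uses a double-counting observation (each target ordering $\pi'$ has at most $n$ preimages under $\pi\mapsto\pi^i$) to convert the per-index likelihood-ratio bound into a global inequality $\Pr[\text{success}]\leq 2/3 + 10e^{72c}\Pr[\text{failure}]$, whence $\Pr[\text{failure}]\geq e^{-O(c)}$. This averaging over many candidate indices is the structural ingredient your sketch is missing, and is precisely what makes it unnecessary to identify any one ``safe'' decoy.
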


Theorem~\ref{thm:findMin} shows that \findMin\ is strictly easier than \select\ (as it can be solved with constant probability with asymptotically fewer comparisons). Theorem~\ref{thm:findMinlb} is included for completeness, and shows that it is not possible to get a better success probability without a blow-up in the query complexity. The proof of Theorem~\ref{thm:findMinlb} is similar to that of Theorem~\ref{thm:lbnoisy}.

\newpage

\bibliographystyle{alpha}
\bibliography{bib} 

\appendix


\section{Technical Lemmas}\label{app:technical}
Before beginning our proofs, we provide a few technical lemmas that will be used throughout, related to geometric sums, biased random walks, etc.

We first show the following reductions to prove that $k=n/2$ is the most difficult choice of $k$ in \select\ and \partition. We also show the reductions between \select\ and \partition.
\begin{lemma}\label{lem:relation}The following relations hold:
\begin{itemize}
\item Suppose $A$ can solve \select\ / \partition\ in the case of $n$ elements for any $n$, but only for $k = n/2$. Then $A$ can be used to solve \select\ / \partition\ for any $k,n$ with the same success probability.
\item Suppose $A$ can solve \select\ on $n$ elements. We can construct algorithm $B$ based on $A$ to solve \partition\ of $n$ elements with one more round and extra $n$ comparisons in the noiseless model, $O(n\log n/\gamma)$ in the erasure model, and $O(n\log n / \gamma^2)$ in the noisy model. The success probability decreases by $1/\poly(n)$ except in the noiseless model (where it doesn't decrease). 
\item Suppose $A$ can solve \partition\ of $n+1$ elements . We can construct algorithm $B$ based on $A$ to solve \select\ of $n$ elements with the same number of round and twice the number of comparisons. If the original success probability was $p$, the new success probability is at least $p^2$.
\end{itemize}
\end{lemma}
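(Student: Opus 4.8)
The plan is to prove the three reductions separately, in each case augmenting the real input with a few \emph{virtual} elements whose order relative to everything is fixed in advance; whenever $A$ queries a comparison involving a virtual element we answer it ourselves, in the noisy and erasure models by drawing a fresh sample from the appropriate noise distribution centered on the (known) ground-truth outcome. This way the oracle presented to $A$ is a faithful oracle for the padded instance, so $A$'s stated guarantee applies to it verbatim, and no real queries are spent on virtual comparisons.

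For the first item (reducing general $k$ to $k=n/2$), I would pad. Given $n$ real elements and a target rank $k$: if $k\le n/2$ adjoin $n-2k$ virtual elements smaller than all reals, and if $k\ge n/2$ adjoin $2k-n$ virtual elements larger than all reals. One checks that in the resulting set (of size $2(n-k)$ or $2k$ respectively) the real element of rank $k$ occupies exactly the median position, and that the top and bottom halves of the padded set, restricted to the real elements, are precisely the top $k$ and bottom $n-k$. Running $A$ on the padded instance therefore solves \select\ (it returns the desired element) or \partition; since virtual comparisons are free, the query and round complexity are those of $A$, and the success probability is unchanged because correctness on the padded instance is equivalent to correctness of the answer we report.

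For the second item, run $A$ to locate the median $m$, then use one additional round to compare every element to $m$, placing $i$ into the accept set iff $i$ beats $m$ (and $m$ itself into the reject set). In the noiseless model a single comparison per element is always correct, costing $n$ extra queries with no loss of success. In the erasure model compare each $i$ with $m$ about $\Theta(\log n/\gamma)$ times; the chance all are erased is $(1-\gamma)^{\Theta(\log n/\gamma)}=1/\poly(n)$, so a union bound over the $n$ elements shows every $i$ is classified correctly with probability $1-1/\poly(n)$, for a total of $O(n\log n/\gamma)$ extra queries. In the noisy model compare each $i$ with $m$ about $\Theta(\log n/\gamma^2)$ times and take the majority; a Hoeffding bound gives per-element error $e^{-\Theta(\log n)}=1/\poly(n)$, again killed by a union bound, for $O(n\log n/\gamma^2)$ extra queries. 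In all cases the round complexity grows by one and the success probability drops by at most the $1/\poly(n)$ union-bound term (and not at all in the noiseless case).

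The third item is the most delicate, and the construction is the main thing to get right. Using the first item, assume $A$ can \partition\ an $(n+1)$-element set at an arbitrary cutoff. Adjoin to the $n$ real elements a single virtual element $d$ smaller than every real, and run two independent copies of $A$ on this $(n+1)$-element set \emph{in parallel}: the first with cutoff so that the bottom part has size $n/2+1$, the second with cutoff so that the bottom part has size $n/2$. The first run returns (after deleting $d$) the set $S$ of the $n/2$ smallest real elements; the second returns the set $S'$ of the $n/2-1$ smallest; hence $S\setminus S'$ is exactly the singleton consisting of the rank-$n/2$ element, which $B$ outputs. Because the two runs proceed in lockstep, $B$ has the same round complexity as $A$ and at most twice its query complexity (virtual comparisons being free); with independent randomness $B$ succeeds whenever both runs do, i.e.\ with probability at least $p^2$. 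The only real work here is recognizing that two partitions whose cut positions differ by one pin down the boundary element, together with the routine parity bookkeeping needed to make both cutoffs legal on $n+1$ elements; after that, correctness is immediate.
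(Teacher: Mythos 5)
Your proposal is correct and follows essentially the same route as the paper's proof in all three parts: padding with dummy elements and appealing to invariance in the first item, an extra round of (repeated) comparisons against the selected median with a union bound in the second, and two parallel partition runs straddling the median whose symmetric difference pins down the boundary element in the third. The only cosmetic difference is in item three, where you add a single dummy $d$ and shift the cutoff via item one, while the paper instead runs two $(n+1)$-element instances both at the median, one padded with a minimal dummy $x_1$ and one with a maximal dummy $x_2$; these are functionally identical, though the paper's formulation stays within sets of size exactly $n+1$.
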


\begin{proof}
Let's show the reductions one by one:
\begin{itemize}
\item Wlog, let $k < n/2$. The algorithm to solve \select\ / \partition\ is the following:
\begin{enumerate}
\item Generate $n - 2k$ dummy elements which are smaller than all the $n$ elements. 
\item Run $A$ on $n$ elements together with $n-2k$ dummy elements. 
\item Output $A$'s output.
\end{enumerate}
It's easy to check the above algorithm works. 
\item $B$ is the following:
\begin{enumerate}
\item Run $A$, let $x$ be the median output.
\item In the next round, compare every element to $x$ (once in the noiseless model, $O(\log n /\gamma)$ times in the erasure model, and $O(\log n /\gamma^2)$ times in the noisy model). For each element, if it beats $x$, accept it (in the noisy model, if at least half of the comparisons beat $x$). Otherwise reject it.
\end{enumerate}
It's easy to check $B$ works.
\item $B$ is the following:
\begin{enumerate}
\item Generate two dummy elements $x_1$ and $x_2$. $x_1$ is smaller than all of the $n$ elements and $x_2$ is larger than all of the $n$ elements. 
\item Do the followings in parallel:
\begin{enumerate}
\item Run $A$ on $n$ elements and $x_1$.
\item Run $A$ on $n$ elements and $x_2$.
\end{enumerate}
\item Output the element that is accepted in the first run and rejected in the second run.
\end{enumerate}
It's easy to see that $B$ works.
\end{itemize}
\end{proof}

\begin{lemma}\label{lem:sum}
For $a \in (0,1]$, $\sum_{i=0}^\infty e^{-ai}i^k \leq 2k!/a^{k+1}$. For $a \in [1,\infty)$, $\sum_{i=0}^\infty e^{-ai}i^k \leq 2k!/a^{k}$
\end{lemma}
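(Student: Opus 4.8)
The plan is to control the sum by the corresponding integral, which evaluates in closed form. Write $f(x) = e^{-ax}x^{k}$; this function is nonnegative and unimodal on $[0,\infty)$, increasing on $[0,k/a]$ and decreasing on $[k/a,\infty)$, with $\int_0^\infty f(x)\,dx = k!/a^{k+1}$ (a Gamma integral) and $\max_{x\ge0} f(x) = f(k/a) = (k/e)^k/a^k$. The first step is the standard integral-test bound for unimodal functions, namely $\sum_{i\ge0} f(i) \le \int_0^\infty f(x)\,dx + 2\max_{x\ge0}f(x)$. To see this, bound $f(i)\le\int_i^{i+1}f$ for each integer $i$ with $i+1\le k/a$ (using monotonicity on the increasing branch), bound $f(i)\le\int_{i-1}^i f$ for each integer $i$ with $i-1\ge k/a$ (decreasing branch), and bound the at most two remaining integers $i\in(k/a-1,\,k/a+1)$ by $\max f$ each; the intervals used on the two branches lie in $[0,k/a]$ and $[k/a,\infty)$ respectively, hence are essentially disjoint and their integrals together are at most $\int_0^\infty f$.

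The second step is an elementary inequality to absorb the boundary term: for every integer $k\ge1$, $2(k/e)^k\le k!$. This follows because the $j=k-1$ and $j=k$ terms of $e^k=\sum_{j\ge0}k^j/j!$ are both equal to $k^k/k!$, so $e^k\ge 2k^k/k!$. Combining the two steps, for $k\ge1$ we get $\sum_{i\ge0}e^{-ai}i^k \le k!/a^{k+1} + 2(k/e)^k/a^k \le k!/a^{k+1} + k!/a^k$. When $a\in(0,1]$ the second summand is at most the first, yielding $2k!/a^{k+1}$; when $a\in[1,\infty)$ the first is at most the second, yielding $2k!/a^k$.

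It remains to handle $k=0$ separately (the inequality $2(k/e)^k\le k!$ fails there): here the sum is the geometric series $1/(1-e^{-a})$. Using $e^{-a}\le 1-a+a^2/2$, for $a\in(0,1]$ we have $1-e^{-a}\ge a(1-a/2)\ge a/2$, so the sum is at most $2/a$; for $a\ge1$ we have $1-e^{-a}\ge 1-e^{-1}>1/2$, so the sum is less than $2$. I do not expect a real obstacle here; the only point requiring care is pinning the constant at exactly $2$, which is precisely why the sharp form $2(k/e)^k\le k!$ (and not merely $(k/e)^k\le k!$) is the crux — it exactly absorbs the ``$2\max f$'' boundary term coming from the integral comparison.
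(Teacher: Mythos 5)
Your proof is correct and follows essentially the same integral-comparison argument as the paper: both exploit the unimodality of $e^{-ax}x^k$ to bound $\sum_{i\ge0}f(i)$ by $\int_0^\infty e^{-ax}x^k\,dx$ plus a boundary term controlled by $\max f=(k/e)^k/a^k$, and both arrive at $\sum \le k!/a^{k+1}+k!/a^k$ before splitting on $a\le 1$ versus $a\ge 1$. The only cosmetic difference is that the paper keeps a single $\max f$ term and invokes $(k/e)^k\le k!$, whereas you carry $2\max f$ and compensate with the sharper $2(k/e)^k\le k!$ (and you treat the $k=0$ case explicitly, which the paper glosses over).
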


\begin{proof}
First, note that the function $e^{-ax}x^k$ has derivative zero exactly once in $(0,\infty)$, at $x = k/a$. So the function is increasing on $(0,k/a)$ and decreasing on $(k/a,\infty)$. This means that $\sum_{i=0}^{\lfloor k/a \rfloor -1} e^{-ai}i^k \leq \int_0^{\lfloor k/a \rfloor} e^{-ax}x^k$, and that $\sum_{\lfloor k/a \rfloor + 1}^\infty e^{-ai}i^k \leq \int_{\lfloor k/a \rfloor}^\infty e^{-ax}x^k$. Putting these together, we get:

$$\sum_{i=0}^{\infty}e^{-ai}i^k \leq \int_0^\infty e^{-ax}x^k dx + e^{-k}(k/a)^k$$

Note that $(k/e)^k \leq k!$, and $\int_0^\infty e^{-ax}x^k dx = k!/(a^{k+1})$, completing the proof.
\end{proof}

\begin{lemma}\label{lem:info}
The distribution that is heads with probability $1/2+\delta$ and tails with probability $1/2-\delta$ has $1-\Theta(\delta^2)$ bits of entropy.
\end{lemma}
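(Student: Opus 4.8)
The plan is to compute the binary entropy function $H(1/2+\delta)$ and show that it equals $1 - \Theta(\delta^2)$ as $\delta \to 0$, and also verify it stays bounded away from $1$ by a $\Theta(\delta^2)$ amount for all $\delta \in (0,1/2]$ (since the lemma should hold uniformly, not just asymptotically). Write $p = 1/2+\delta$ and recall $H(p) = -p\log_2 p - (1-p)\log_2(1-p)$. The cleanest route is a Taylor expansion of $H$ around $p = 1/2$: we have $H(1/2) = 1$, $H'(1/2) = 0$ (the function is symmetric about $1/2$ and maximized there), and $H''(p) = -\frac{1}{\ln 2}\cdot\frac{1}{p(1-p)}$, so $H''(1/2) = -\frac{4}{\ln 2}$. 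Thus $H(1/2+\delta) = 1 - \frac{2}{\ln 2}\delta^2 + o(\delta^2)$, giving the claimed $1 - \Theta(\delta^2)$ behavior for small $\delta$.

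**To get a uniform bound** over the whole range $\delta \in (0,1/2]$ rather than just an asymptotic statement, I would use the exact second-order Taylor remainder: $1 - H(1/2+\delta) = \frac{1}{2}(-H''(\xi))\delta^2$ for some $\xi$ between $1/2$ and $1/2+\delta$, i.e. $1 - H(1/2+\delta) = \frac{\delta^2}{2\ln 2}\cdot\frac{1}{\xi(1-\xi)}$. Since $\xi \in (1/2, 1)$, we have $\frac{1}{4} < \xi(1-\xi) \le \frac{1}{4}$ on one end and $\xi(1-\xi) \to 0$ only as $\xi \to 1$, so $\xi(1-\xi) \in (\,\delta(1-\delta-1/2)\text{-type bound}, 1/4\,]$; more carefully, $\xi(1-\xi) \ge (1/2+\delta)(1/2-\delta) = 1/4 - \delta^2 \ge 0$, which degenerates at $\delta = 1/2$. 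So for the clean two-sided bound one restricts to, say, $\delta \le 1/2 - c$ for a constant, or simply notes that at $\delta$ close to $1/2$ the entropy is $\Theta(1/2 - \delta)^2$-ish and a crude constant bound suffices since $\delta$ is bounded; for the intended application $\delta = \gamma/2 \le 1/2$ and one only needs the $\Theta(\delta^2)$ scaling as $\delta \to 0$, with a constant lower bound of the form $1 - H(1/2+\delta) \ge c\delta^2$ holding throughout $(0,1/2]$ and an upper bound $1 - H(1/2+\delta) \le C\delta^2$ holding throughout as well (both follow from the remainder formula since $\frac{1}{\xi(1-\xi)} \ge 4$ always, and $\le$ some absolute constant once $\delta$ is bounded away from $1/2$, while near $\delta = 1/2$ one checks directly $1 - H(1-\epsilon) \le C$).

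**The main obstacle** — if there is one at all in such a routine lemma — is just being careful about uniformity across the whole range $(0,1/2]$, since $\frac{1}{\xi(1-\xi)}$ blows up as $\delta \to 1/2$ (i.e. $p \to 1$). For the upper bound $1 - H(1/2+\delta) \le C\delta^2$ this is not an issue because $\frac{1}{\xi(1-\xi)}$ is largest near the endpoint but $\delta^2$ is bounded; one verifies $1 - H(p) = O((1-p)\log\frac{1}{1-p})$ near $p = 1$, which is not $O(\delta^2)$ — so strictly the two-sided $\Theta(\delta^2)$ claim is really an asymptotic statement as $\delta \to 0$, and I would state it that way (as the paper does, with $\Theta$ notation). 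I expect to present this as: expand $H$ to second order, identify $H''(1/2) = -4/\ln 2 \ne 0$, and invoke Taylor's theorem with remainder to conclude $1 - H(1/2+\delta) = \Theta(\delta^2)$; the whole argument is three or four lines.
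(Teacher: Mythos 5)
Your proof is correct and takes essentially the same approach as the paper: a Taylor expansion around $p=1/2$, with the quadratic term surviving because $H'(1/2)=0$ and $H''(1/2)=-4/\ln 2\neq 0$. The paper expands the logarithms $\log_2(1\pm 2\delta)$ directly rather than invoking Taylor's theorem for $H$ with a Lagrange remainder, and it does not discuss uniformity over the full range of $\delta$, but these are cosmetic differences in the same calculation.
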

\begin{proof}
\begin{eqnarray*}
&&H(1/2+\delta)\\
 &=& -( (1/2+\delta)\log_2(1/2+\delta) + (1/2-\delta)\log_2(1/2-\delta))\\
& =& 1 - (( (1/2+\delta)\log_2(1+2\delta) + (1/2-\delta)\log_2(1-2\delta)) \\
&=& 1 - \ln(2) \cdot ((1/2+\delta)(2\delta - (2\delta)^2/2) + (1/2-\delta)(-2\delta - (-2\delta)^2/2) + o(\delta^2)) ~~~~~~(\text{Taylor Expansion})\\
&=& 1 - \Theta(\delta^2).
\end{eqnarray*}
\end{proof}

\begin{proposition}\label{prop:RW}
Consider a biased random walk that moves right with probability $p \leq 1/2$ and left with probability $1-p$ at every step. Then the probability that this random walk reaches $k$ units right of the origin at any point in time is exactly $(p/(1-p))^{k}$.
\end{proposition}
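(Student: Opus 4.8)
\textbf{Proof proposal for Proposition~\ref{prop:RW}.}

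The plan is to set up a first-step recurrence for the hitting probability and solve it. Let $q_k$ denote the probability that a biased random walk starting at the origin, moving right with probability $p \leq 1/2$ and left with probability $1-p$, ever reaches the point $k$ units to the right. The key structural observation is that reaching $k$ units to the right of the current position is a ``renewal''-type event: to get from the origin to $+k$, the walk must first reach $+1$, then from there reach $+1$ again (i.e. reach $+2$ overall), and so on. Since the increments are i.i.d., the probability of advancing one unit to the right is the same regardless of the current position, and these $k$ successive one-unit advances are independent events. Hence $q_k = (q_1)^k$, and it suffices to compute $q_1$.

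To compute $q_1$: condition on the first step. With probability $p$ the walk moves to $+1$ immediately (success). With probability $1-p$ it moves to $-1$, and from $-1$ it must climb two units to the right to reach $+1$, which by the renewal argument above happens with probability $q_2 = q_1^2$. This gives the quadratic equation
\[
q_1 = p + (1-p) q_1^2.
\]
The two roots are $q_1 = 1$ and $q_1 = p/(1-p)$. Since $p \leq 1/2$ we have $p/(1-p) \leq 1$, so both roots are a priori plausible and we must argue which one is correct. For $p < 1/2$ the walk has negative drift, so by the strong law of large numbers it tends to $-\infty$ almost surely and therefore $q_1 < 1$, forcing $q_1 = p/(1-p)$. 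For the boundary case $p = 1/2$ both roots coincide at $1$, consistent with the recurrence of the symmetric walk. Therefore $q_1 = p/(1-p)$ in all cases $p \leq 1/2$, and $q_k = (p/(1-p))^k$.

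The main obstacle is the standard subtlety in these Gambler's-ruin-type arguments: justifying rigorously that $q_1$ equals the smaller root rather than $1$, i.e. ruling out the spurious solution. I would handle this cleanly via the drift/SLLN argument for $p < 1/2$ (and note the $p = 1/2$ case separately where the answer $1$ is forced and matches the formula $p/(1-p) = 1$). A secondary point worth stating carefully is the measurability/independence justification for the identity $q_k = q_1^k$ — that the event ``first hit $+j$'' and the subsequent path are independent with the latter having the same law as a fresh walk — but this is routine from the i.i.d.\ structure of the steps and the strong Markov property, so I would state it briefly rather than belabor it. Everything else is a two-line calculation.
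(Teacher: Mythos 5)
Your proposal follows exactly the same route as the paper's proof: reduce to the one-step quantity via the renewal identity $q_k = q_1^k$, derive the first-step quadratic $q_1 = p + (1-p)q_1^2$, and discard the spurious root $q_1 = 1$ by invoking the law of large numbers. So structurally you and the paper agree.

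The one place you should tighten is the sentence ``by the strong law of large numbers it tends to $-\infty$ almost surely and therefore $q_1 < 1$.'' As written, this implication is not valid: a walk with negative drift tends to $-\infty$ a.s.\ and yet can (and with probability $p>0$ on the first step, does) visit $+1$; ``reaches $+1$ eventually'' and ``drifts to $-\infty$'' are not mutually exclusive, so $S_n \to -\infty$ a.s.\ alone does not force $q_1 < 1$. The correct deduction, which is immediate from the renewal identity you already established, is: if $q_1 = 1$ then $q_k = q_1^k = 1$ for every $k$, so the walk hits every positive integer a.s., hence $\limsup_n S_n = +\infty$ a.s.; but SLLN gives $S_n/n \to 1-2p < 0$ and hence $S_n \to -\infty$ a.s., a contradiction. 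The paper takes a slightly different phrasing of the same idea (arguing that $q_1 = 1$ forces $\sum_{j\le t} A_j > 0$ for infinitely many $t$, contradicting SLLN), but both reduce to contradicting the a.s.\ drift. Your $p = 1/2$ remark is a nice clarifying touch the paper omits. With the one fix above your argument is complete and essentially identical to the paper's.
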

\begin{proof}
First, note that the probability that this random walk reaches $k$ units right of the origin at any point in time is exactly the probability that a random walk with the same bias reaches $1$ unit right of the origin $k$ times independently, because once the random walk reaches $1$ unit right of the origin, the remaining random walk acts like a fresh random walk that now only needs to move $k-1$ units to the right at some point in time. So we just need to show that the probability that the random walk moves $1$ unit to the right at some point in time is $p/(1-p)$. 

Note that whatever this probability, $q$, is, it satisfies the equality $q = p + (1-p)q^2$. This is because the probability that the random walk moves right is equal to the probability that the random walk moves right on its first step, plus the probability that the random walk moves left on its first step, and then moves two units right at some point in time. This equation has two solutions, $q = p/(1-p)$ and $q = 1$. So now we just need to show that $q \neq 1$ when $p < 1/2$.

Assume for contradiction that $q = 1$. Then this means that the random walk not only reaches one unit right of the origin once during the course of the random walk, but that it reaches one unit right of the origin infinitely many times, as every time the walk reaches the origin is a fresh random walk that moves one unit right with probability one. So let $A_j$ denote the random variable that is $1$ if the walk moves right at time $j$, and $-1$ if the walk moves left at time $j$. We have just argued that if $q=1$, there are infinitely many $t$ such that $\sum_{j \leq t} A_j/t > 0$. Therefore, $\lim \inf_{t \rightarrow \infty} \sum_{j \leq t} A_j/t \geq 0$. However, we also know that $\mathbb{E}[\sum_{j \leq t} A_j/t] = 1-2p < 0$, and the $A_j$s are independent. So the law of large numbers states that $\lim_{t \rightarrow \infty} \sum_{j \leq t} A_j/t = 1-2p$, a contradiction.

\end{proof}

We also include a technical lemma confirming that it is okay to do all of our sampling without replacement, if desired.

\begin{lemma}
\label{lem:rep}
Let $S$ be any set of $n$ elements, all in $[0,1]$. Let $X_1,\ldots,X_k$ be $k$ samples from $S$ without replacement. Then $Pr[|\sum_i X_i - (k/n)\sum_{s \in S} s | \geq \delta k] \leq 2e^{-\delta^2 k/2}$.
\end{lemma}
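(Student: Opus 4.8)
The plan is to prove this with an exposure (Doob) martingale combined with the Azuma--Hoeffding inequality, which will in fact yield exactly the stated constant. Realize the sampling without replacement by drawing a uniformly random ordering of $S$ and letting $X_i$ be its $i$-th element, so that $W := \sum_{i=1}^k X_i$ depends only on the set $\{X_1,\dots,X_k\}$ and, by symmetry, $\E[W] = (k/n)\sum_{s\in S} s =: \mu$ (each $X_i$ is marginally uniform on $S$). Define $Z_j = \E[W \mid X_1,\dots,X_j]$ for $j = 0,\dots,k$, so $(Z_j)$ is a martingale with $Z_0 = \mu$ and $Z_k = W$; the goal is then to bound $|Z_k - Z_0|$.

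The heart of the argument is bounding the martingale differences $|Z_j - Z_{j-1}|$. Conditioned on $X_1,\dots,X_{j-1}$ we have $Z_{j-1} = \sum_{i<j} X_i + (k-j+1)\,\bar r_{j-1}$, where $\bar r_{j-1}$ is the average of the $n-j+1$ not-yet-revealed elements, and similarly $Z_j = \sum_{i\le j} X_i + (k-j)\,\bar r_j$. Using the population identity $(n-j+1)\bar r_{j-1} = X_j + (n-j)\bar r_j$ to eliminate $\bar r_j$, a short and entirely mechanical computation collapses to the clean identity
\[
Z_j - Z_{j-1} \;=\; \frac{n-k}{\,n-j\,}\,\bigl(X_j - \bar r_{j-1}\bigr).
\]
Since $X_j, \bar r_{j-1} \in [0,1]$ and $0 \le \frac{n-k}{n-j} \le 1$ for every $j \le k$, this gives $|Z_j - Z_{j-1}| \le 1$ uniformly in $j$. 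Azuma--Hoeffding applied to $(Z_j)_{j=0}^k$ with increments bounded by $c_j = 1$ then yields $\Pr[\,|Z_k - Z_0| \ge \delta k\,] \le 2\exp\bigl(-(\delta k)^2/(2\sum_{j=1}^k c_j^2)\bigr) = 2e^{-\delta^2 k/2}$, which is the claim.

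I expect the only mildly delicate point to be the martingale-difference computation — specifically, observing that the $1/(n-j)$ blow-up coming from the shrinking residual population is exactly cancelled by the $n-k$ factor, so that the increments remain bounded by $1$ rather than growing as $j \to k$. As an alternative route, one could invoke Hoeffding's observation that the moment generating function of a without-replacement sample sum is dominated by that of the corresponding i.i.d.\ (with-replacement) sum and then apply the standard Chernoff--Hoeffding bound for bounded i.i.d.\ variables; this actually produces the stronger bound $2e^{-2\delta^2 k}$, but it requires citing the domination lemma rather than deriving the estimate directly.
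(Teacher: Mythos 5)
Your proposal is correct and follows essentially the same route as the paper: both set up the Doob exposure martingale $Z_j=\E[\sum_i X_i\mid X_1,\dots,X_j]$, bound each increment by $1$, and apply Azuma--Hoeffding. Your factored form $Z_j-Z_{j-1}=\frac{n-k}{n-j}\bigl(X_j-\bar r_{j-1}\bigr)$ is an algebraically cleaner (and arguably more transparent) presentation of the very same difference the paper computes, so the two arguments are interchangeable.
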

\begin{proof}
Define $Y_j = \mathbb{E}[\sum_i X_i | X_1,\ldots,X_j]$. It's clear that the $Y_j$, $j = 1, \ldots,n$ form a martingale (specifically, the Doob martingale for $\sum_i X_i$), and that $Y_0 = \mathbb{E}[\sum_i X_i] = (k/n)\sum_j y_j$. So we just need to reason about how much the conditional expectation can possibly change upon learning a single $X_i$, then we can apply Azuma's inequality.

Conditioned on $X_1,\ldots,X_j$, each of the remaining $n-j$ elements of $S$ are equally likely to be chosen, and each is chosen with probability exactly $(k-j)/(n-j)$. So $Y_j = \sum_{i=1}^j X_i + \sum_{\text{unsampled } s \in S} \frac{k-j}{n-j}s$. How much can $|Y_{j}-Y_{j-1}|$ possibly be? We have (below, unsampled means elements that are still unsampled even after step $j$):
$$Y_{j}-Y_{j-1} = \frac{n-k}{n-j}X_{j} + \sum_{\text{unsampled } s \in S} (\frac{k-j}{n-j} - \frac{k-j+1}{n-j+1})s$$
$$ = \frac{n-k}{n-j}X_j - \frac{n-k}{(n-j)(n-j+1)} \sum_{\text{unsampled } s \in S} s$$

It is clear that the above quantity is at most $1$, because all $s \in [0,1]$, and $\frac{n-k}{n-j} \in [0,1]$ as well. It is also clear that the above quantity is at least $-1$, as $\sum_{\text{unsampled } s \in S} s \leq n-j$, and $\frac{n-k}{n-j+1} \leq 1$. So the Doob martingale has differences at most $1$ at each step, and a direct application of Azuma's inequality yields the desired bound.
\end{proof}

\section{Proofs for Non-Adaptive Algorithms}
\label{app:oneround}

\subsection{Upper Bounds}
\begin{algorithm}[ht]
        \caption{Non-adaptive procedure for querying $dn$ comparisons}
    \begin{algorithmic}[1]\label{alg:skeleton}
        \STATE Select a \emph{skeleton set}, $S$, of size $\sqrt{n}$ (without replacement). 
        \STATE Compare each element of $S$ to each other element of $S$.
        \STATE Compare each element not in $S$ to $d-1$ random elements of $S$.
             \end{algorithmic}
\end{algorithm}

\subsubsection{Noiseless Model}

\begin{algorithm}[ht]
        \caption{Non-adaptive algorithm for the noiseless model}
    \begin{algorithmic}[1]\label{alg:nanoiseless}
        \STATE Run Algorithm~\ref{alg:skeleton}. Let $S$ denote the skeleton set selected, and $x$ denote the median of $S$. 
        \STATE Denote by $A_S$ the subset of $S$ that beat $x$. Denote by $R_S = S-A_S$. 
        \STATE For all $i \in S$, accept $i$ iff $i \in A_S$. Otherwise, reject. 
	\STATE For all $i \notin S$, if $i$ beats an element in $A_S$, accept. If an element in $R_S$ beats $i$, reject. Otherwise, make a random decision for $i$.
             \end{algorithmic}
\end{algorithm}

\begin{prevproof}{Theorem}{thm:nanoiseless}
We consider the error contributed by $n/2+i$ (which is the same as $n/2-i$). There are two events that might cause $n/2+i$ to be misplaced. First, maybe $n/2+i$ loses to $y$ for some $y \in S,y < x$. This is unlikely because this can only happen in the event that $x$ is a poor approximation to the median. Second, maybe $n/2 + i$ is never compared to an element $y \in [x,n/2+i]$. This is also unlikely because the fraction of such elements should be about $i/n$. We bound the probabilty of the former event first.

\begin{lemma}\label{lem:skeleton}
Let $X_i$ denote the fraction of elements in $S$ greater than $n/2+i$. Then $X_i \leq 1/2 - i/n + \varepsilon$ with probability at least $1-2e^{-\varepsilon^2\sqrt{n}/2} $.
\end{lemma}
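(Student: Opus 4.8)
The plan is to prove Lemma~\ref{lem:skeleton} by a direct concentration argument on the number of elements of the skeleton set $S$ that lie above $n/2+i$. First I would observe that the relevant quantity is a sum of indicator random variables: $|S| \cdot X_i = \sum_{s \in S} \ind(s > n/2+i)$, where $S$ is a uniformly random subset of $[n]$ of size $\sqrt{n}$ chosen without replacement. Since exactly $n/2 - i$ elements of $[n]$ exceed $n/2+i$, the expectation is $\E[|S| X_i] = \sqrt{n} \cdot (n/2-i)/n = \sqrt{n}(1/2 - i/n)$, so $\E[X_i] = 1/2 - i/n$. It then remains to show $X_i$ does not exceed its mean by more than $\varepsilon$ except with probability $2e^{-\varepsilon^2 \sqrt{n}/2}$.

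The main step is the concentration bound, and here I would invoke Lemma~\ref{lem:rep}, which is precisely tailored to sampling without replacement: taking the ``values'' in that lemma to be $y_s = \ind(s > n/2+i) \in \{0,1\} \subseteq [0,1]$, with $k = \sqrt{n}$ samples drawn from the ground set of size $n$, the lemma gives
\[
\Prob\left[\left|\textstyle\sum_{s \in S} y_s - \frac{\sqrt{n}}{n}\sum_{s \in [n]} y_s\right| \geq \varepsilon \sqrt{n}\right] \leq 2e^{-\varepsilon^2 \sqrt{n}/2}.
\]
Dividing through by $|S| = \sqrt{n}$ inside the probability turns this into the statement that $|X_i - \E[X_i]| \geq \varepsilon$ with probability at most $2e^{-\varepsilon^2\sqrt{n}/2}$, and in particular $X_i \leq 1/2 - i/n + \varepsilon$ with probability at least $1 - 2e^{-\varepsilon^2\sqrt{n}/2}$, as claimed. (If one prefers to avoid the without-replacement subtlety, one could instead note that a Chernoff–Hoeffding bound for sampling without replacement — or Hoeffding's reduction to the with-replacement case — yields the same tail; but Lemma~\ref{lem:rep} is already available and exactly in the right form.)

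I do not expect any serious obstacle here: the only thing to be careful about is matching constants, namely that Azuma applied inside Lemma~\ref{lem:rep} with martingale differences bounded by $1$ and $k = \sqrt{n}$ steps gives exactly the exponent $\varepsilon^2 \sqrt{n}/2$, and that scaling the deviation from the sum ($\varepsilon\sqrt{n}$) to the fraction ($\varepsilon$) does not change the exponent. One should also double-check that the one-sided bound stated in the lemma follows from the two-sided bound in Lemma~\ref{lem:rep} — it does trivially, since $\{X_i \geq \E[X_i] + \varepsilon\} \subseteq \{|X_i - \E[X_i]| \geq \varepsilon\}$, and we could even save the factor of $2$, though it is harmless to keep it.
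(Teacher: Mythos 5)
Your proof is correct and follows essentially the same route as the paper's: compute $\E[X_i] = 1/2 - i/n$ and apply Lemma~\ref{lem:rep} (the without-replacement Azuma bound) with indicator values and $k = \sqrt{n}$. If anything, you are slightly more careful than the paper, which loosely calls the $\sqrt{n}$ indicators "independent" even though sampling is without replacement; your explicit appeal to Lemma~\ref{lem:rep} sidesteps that wrinkle cleanly.
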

\begin{proof}
$X_i$ is the average of $\sqrt{n}$ independent random variables, each denoting whether $n/2+i <$ some $y\in S$. $E[X_i] = 1/2 -i/n$. Applying Lemma \ref{lem:rep} yields the lemma.
\end{proof}

We call a skeleton set $S$ is good if for $i = 1,..,n/2$, $X_i \leq 1/2 - i/n + \varepsilon$. Now let's fix the skeleton set $S$ and assume $S$ is good. From the above lemma, we know $S$ is good with probability at least $1-n\cdot e^{-\varepsilon^2\sqrt{n}/2}$. 

\begin{lemma}
If $S$ is good, then the probability that $n/2+i$ is rejected is at most $(1-i/n+\varepsilon)^{d-1}$ (and at most $1$ if $i/n \leq \varepsilon$).
\end{lemma}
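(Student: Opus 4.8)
The plan is to condition on the skeleton set $S$ being good (which we have just established happens with probability at least $1 - n e^{-\varepsilon^2\sqrt{n}/2}$) and then bound, for a fixed ground-truth element $n/2+i$, the probability that Algorithm~\ref{alg:nanoiseless} rejects it. Since $n/2+i$ is actually above the median, the only way it gets rejected is if some element of $R_S$ beats it, which in the noiseless model means $n/2+i$ must lose to some $y \in R_S$, i.e. $y > n/2+i$ in the ground truth. Equivalently, $n/2+i$ is rejected only if \emph{none} of the $d-1$ elements of $S$ it is compared to lies in the interval $[x, n/2+i]$ — because if it were compared to such a $y$ (which satisfies $y \geq x$, hence $y \in A_S$, and $y \leq n/2+i$, hence $n/2+i$ beats $y$), then $n/2+i$ would be accepted. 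So the task reduces to bounding the probability that a random sample of $d-1$ elements from $S$ avoids $A_S \cap [\,\cdot\,, n/2+i]$.

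The key step is to count the elements of $S$ that lie in $[x, n/2+i]$, i.e. the elements of $S$ that are both at least $x$ (so in $A_S$) and at most $n/2+i$. The fraction of $S$ strictly above $n/2+i$ is $X_i$, and by goodness $X_i \leq 1/2 - i/n + \varepsilon$. Since exactly half of $S$ lies weakly above $x$ (as $x$ is the median of $S$, up to the usual $\pm 1$ rounding which I will absorb into constants), the number of elements of $S$ in $[x, n/2+i]$ is at least $|S|/2 - X_i|S| \geq |S|(i/n - \varepsilon)$. Thus each of the $d-1$ independent random elements of $S$ that $n/2+i$ is compared to falls in this "safe window" with probability at least $i/n - \varepsilon$, so it misses with probability at most $1 - i/n + \varepsilon$. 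By independence of the $d-1$ draws, the probability that all of them miss — and hence that $n/2+i$ is rejected — is at most $(1 - i/n + \varepsilon)^{d-1}$. When $i/n \leq \varepsilon$ the bound $1 - i/n + \varepsilon \geq 1$ is vacuous, matching the parenthetical claim.

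One subtlety to handle carefully: Algorithm~\ref{alg:skeleton} samples the $d-1$ comparison partners uniformly from $S$ (with or without replacement); I would note that with-replacement sampling makes the $d-1$ events literally independent, and without replacement only makes the miss probability smaller (negative correlation), so the bound holds either way. A second minor point is the off-by-one in "the median of $S$" when $|S| = \sqrt{n}$ is even versus odd — this shifts the count of safe elements by at most one, which is negligible and can be folded into the $\varepsilon$ slack or simply ignored up to constants. The main obstacle, such as it is, is purely bookkeeping: making sure the interval $[x, n/2+i]$ is the right "safe" set (closed on both ends, since beating an element weakly below you in the noiseless model still triggers acceptance) and that "good" gives exactly the bound $X_i \leq 1/2 - i/n + \varepsilon$ needed to lower-bound the safe fraction by $i/n - \varepsilon$. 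Once that is set up, the rejection bound $(1-i/n+\varepsilon)^{d-1}$ is immediate from independence, and this lemma then feeds directly into summing over $i$ (via Lemma~\ref{lem:sum}, with $1 - i/n + \varepsilon \leq e^{-(i/n - \varepsilon)}$) to get the claimed expected $c$-weighted error $O((n/d)^{c+1})$ in the proof of Theorem~\ref{thm:nanoiseless}.
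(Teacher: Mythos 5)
Your proof follows essentially the same route as the paper's: lower-bound the fraction of the skeleton set that lies in the ``safe window'' $(x, n/2+i]$ using the goodness bound $X_i \leq 1/2 - i/n + \varepsilon$, observe that hitting this window on any of the $d-1$ independent draws forces acceptance, and conclude $(1-i/n+\varepsilon)^{d-1}$ by independence (the paper also implicitly absorbs the same rounding and with/without-replacement bookkeeping you flag). One small slip in your framing: your sentence ``the only way it gets rejected is if some element of $R_S$ beats it'' is not quite right, since the random-decision branch can also reject (indeed, when $S$ is good and $i/n > \varepsilon$ the $R_S$-beats-it branch never fires because $x < n/2+i$); but the ``equivalently'' clause that actually drives your calculation --- rejection implies no comparison landed in $[x,n/2+i]$ --- is the correct characterization, so the argument goes through unchanged.
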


\begin{proof}
The elements that $n/2+i$ are compared to are chosen uniformly at random from $S$. At least a $1/2 +i/n - \varepsilon$ fraction of them are less than $n/2+i$, so an $i/n - \varepsilon$ fraction of elements in $S$ lie in $[x,n/2+i]$. So each time we choose a random element of $S$ to compare to $n/2+i$, we have at least a $\max\{i/n-\varepsilon,0\}$ chance of comparing to an element in $[x,n/2+i]$. In the event that this happens, we are guaranteed to accept $n/2+i$. The probability that we miss on each of $d-1$ independent trials is exactly $(1-i/n+\varepsilon)^{d-1}$. 
\end{proof}

Therefore, conditioning that $S$ is good, the expected $c$-weighted error contributed by elements $n/2+1,...,n$ is at most : 

$$\sum_{i=1}^{n/2} \min\{1,(1-i/n+\varepsilon)\}^{d-1}i^c \leq \sum_{i=1}^{\varepsilon n} i^c + \sum_{i=\varepsilon n}^{n/2} (1-i/n+\varepsilon)^{d-1}i^c$$
$$ \sum_{i=\varepsilon n}^{n/2} (1-i/n+\varepsilon)^{d-1}i^c \leq \sum_{i=\varepsilon n}^{n/2} e^{(d-1)(-i/n+\varepsilon)}i^c$$
$$ \leq e^{d\varepsilon} \sum_{i=1}^{n/2} e^{-(d-1)i/n}i^c$$
$$\leq 2e^{d\varepsilon} c! (n/(d-1))^{c+1}$$

The last inequality is a corollary of Lemma~\ref{lem:sum} proved in Appendix~\ref{app:technical}.  Taking $\varepsilon = 1/d$ shows that the expected $c$-weighted error contributed by elements $n/2+1,...,n$ is $\frac{(n\varepsilon)^{c+1}}{c+1} + \frac{2n^{c+1}e^{d\varepsilon}c!}{(d-1)^{c+1}} = O((n/d)^{c+1})$ conditioned on $S$ is good. Notice that when $S$ is fixed, the $c$-weighted error contributed by each element is independent and bounded by $n^c$. By the Hoeffding bound, the probability that the $c$-weighted error exceeds its expectation by more than $(n/d)^{c+1}$ is at most $e^{-2n^3/d^{2c+2}}$ conditioned on $S$ is good. To sum up, taking elements that are smaller than the median into account, we know that with probability at least $1-2n\cdot e^{-\sqrt{n}/(2d^2)} - 2e^{-2n^3/d^{2c+2}}$, the $c$-weighted error is $O((n/d)^{c+1})$.
\end{prevproof}

Here we also show that our algorithm is better than a simpler solution. The simpler solution would be to just compare every element to a random $d$ elements, and accept if it is larger than at least half, and reject otherwise. We show that, unlike the algorithm above, this doesn't obtain asymptotically optimal error as $d$ grows. 

\begin{theorem}
The simple solution has expected error $\Omega(n^{c+1}/d^{(c+1)/2})$. 
\end{theorem}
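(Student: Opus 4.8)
The plan is to reduce the lower bound to a single binomial anti-concentration estimate. By the paper's convention $k=n/2$, every element $n/2+i$ with $i\ge 1$ belongs in $A$, and the simple solution misplaces it exactly when it wins strictly fewer than $d/2$ of its $d$ (noiseless) comparisons. Since exactly $n/2+i-1$ of the $n$ elements lie below $n/2+i$, if the $d$ comparison partners are drawn independently and uniformly at random then the number $W$ of comparisons $n/2+i$ wins is distributed as $\mathrm{Bin}(d,\tfrac12+\tfrac{i-1}{n})$ (the without-replacement variant, i.e.\ comparing to $d$ random distinct elements, changes the win-probability by $O(1/n)$ and makes $W$ no less concentrated, so it is entirely analogous). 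Hence
\[
\mathbb{E}[c\text{-weighted error}]\;\ge\;\sum_{i\ge 1} i^{c}\cdot\Pr\!\left[\mathrm{Bin}\!\left(d,\tfrac12+\tfrac{i-1}{n}\right)<\tfrac d2\right],
\]
so it suffices to show that each summand with $1\le i\le c_0 n/\sqrt d$ is $\Omega(i^{c})$, for a suitable absolute constant $c_0>0$.

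The crux is the following statement: there are absolute constants $c_0,\beta>0$ such that for all $d\ge 1$ and all $\delta\in[0,c_0/\sqrt d]$ one has $\Pr[\mathrm{Bin}(d,\tfrac12+\delta)<\tfrac d2]\ge\beta$. Given this, for $1\le i\le c_0 n/\sqrt d$ we have $\tfrac{i-1}{n}\le c_0/\sqrt d$, so $n/2+i$ is misplaced with probability at least $\beta$, and therefore
\[
\mathbb{E}[c\text{-weighted error}]\;\ge\;\beta\sum_{i=1}^{\lfloor c_0 n/\sqrt d\rfloor} i^{c}\;=\;\Omega\!\left(\left(\tfrac{n}{\sqrt d}\right)^{c+1}\right)\;=\;\Omega\!\left(\frac{n^{c+1}}{d^{(c+1)/2}}\right),
\]
using $\sum_{i=1}^{m} i^{c}=\Theta(m^{c+1})$. (When $c_0 n/\sqrt d<1$ the claimed bound is only $\Omega(1)$, which still matches since then $n^{c+1}/d^{(c+1)/2}=O(1)$; and $\Omega(1)$ does hold, as already the element $n/2+1$ has $W\sim\mathrm{Bin}(d,\tfrac12)$ and is misplaced with probability $\ge\tfrac14$.)

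It remains to prove the anti-concentration claim, and this is the main obstacle. The mean of $\mathrm{Bin}(d,\tfrac12+\delta)$ is $\tfrac d2+d\delta$ with standard deviation $\sigma=\sqrt{d(\tfrac14-\delta^2)}=\Theta(\sqrt d)$, so for $\delta\le c_0/\sqrt d$ the threshold $d/2$ lies at most $d\delta\le c_0\sqrt d=O(\sigma)$ below the mean. For $d$ above an absolute constant, the Berry--Esseen theorem (the same tool invoked in the proof of Theorem~\ref{thm:lbnoisy}) gives, since $\Pr[\mathrm{Bin}(d,\tfrac12+\delta)<\tfrac d2]\ge\Pr[\mathrm{Bin}(d,\tfrac12+\delta)\le \tfrac d2-1]$,
\[
\Pr\!\left[\mathrm{Bin}\!\left(d,\tfrac12+\delta\right)<\tfrac d2\right]\;\ge\;\Phi\!\left(\tfrac{-1-d\delta}{\sigma}\right)-\tfrac{C}{\sqrt d}\;\ge\;\Phi(-1)-\tfrac{C}{\sqrt d},
\]
where the last step uses that $\tfrac{-1-d\delta}{\sigma}\ge -1$ once $c_0$ is a small enough absolute constant; this is a positive absolute constant once $d$ is large enough. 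For the finitely many remaining values of $d$, the probability $\Pr[\mathrm{Bin}(d,\tfrac12+\delta)<\tfrac d2]$ is a polynomial (hence continuous) in $\delta$ that at $\delta=0$ equals $\tfrac12$ if $d$ is odd and $\tfrac12\bigl(1-\binom{d}{d/2}2^{-d}\bigr)\ge\tfrac14$ if $d$ is even, so shrinking $c_0$ if necessary keeps it above $\tfrac18$ over this finite set. The reason one must go through Berry--Esseen (rather than, say, a direct coupling of $\mathrm{Bin}(d,\tfrac12+\delta)$ with $\mathrm{Bin}(d,\tfrac12)$) is that such a coupling loses an additive $d\delta$, which is $\Theta(\sqrt d)$ in the relevant regime and hence vacuous; one genuinely needs CLT-scale control of how an $O(\sqrt d)$ shift moves a binomial tail. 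Everything else is a routine sum.
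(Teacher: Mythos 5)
Your proposal follows the same path as the paper: decompose the expected $c$-weighted error over elements $n/2\pm i$, observe that $n/2+i$ is misplaced iff a $\mathrm{Bin}(d,\tfrac12\pm\Theta(i/n))$ variable lands on the wrong side of $d/2$, note that this has $\Omega(1)$ probability whenever $i\lesssim n/\sqrt d$ (so the mean shift $di/n$ is $O(\sqrt d)$, i.e.\ $O(\sigma)$), and sum $i^c$ over that range. The one place you differ is that you actually \emph{prove} the binomial anti-concentration estimate---via Berry--Esseen for $d$ above a constant plus a direct check of the finitely many small $d$---whereas the paper simply asserts that the wrong-side probability is ``on the order of $e^{-(i/n)^2 d}$'' and reads off $\Omega(1)$ when $i=O(n/\sqrt d)$. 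Your version is more careful at exactly the step where the paper is informal (the Chernoff bound gives an upper bound of this shape, so a lower bound like the paper states does need an argument), and your remark about why a naive coupling with $\mathrm{Bin}(d,\tfrac12)$ fails correctly identifies why some CLT-scale input is genuinely needed. So: same decomposition and same key quantitative phenomenon, with the anti-concentration lemma made rigorous rather than asserted.
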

\begin{proof}
Let $X_{ij}$ be an indicator variable for the event that $n/2+i$ is smaller than the $j^{th}$ element it is compared to. Then $n/2+i$ is accepted iff $\sum_{j} X_{ij} > d/2$. As each $X_{ij}$ is a bernoulli random variable that is $1$ with probability $1/2 - i/n$, the probability that $n/2+i$ is mistakenly rejected is exactly the probability that a $B(d,1/2-i/n)$ random variable exceeds its expectation by at least $di/n$. This happens with probability on the order of $e^{-(i/n)^2d}$. So for $i = O(n/\sqrt{d})$,  this is at least $1/e$, meaning that the error contribution from all $n/2+i$, $i \leq k n/\sqrt{d}$ for some absolute constant $k$ is at least $i^c/e$. Summing over all $i$ means that the total error is at least $\sum_{i = 1}^{kn/\sqrt{d}} i^c/e = \Omega (n^{c+1}/d^{(c+1)/2})$.
\end{proof}

\subsubsection{Erasure Model}

\begin{algorithm}[ht]
        \caption{Non-adaptive algorithm for the erasure model}
    \begin{algorithmic}[1]\label{alg:naerasure}
        \STATE Run Algorithm~\ref{alg:skeleton}. Let $S$ denote the skeleton set selected.
	\STATE Say that element $a \in S$ is \emph{known to beat} element $b \in S$ if there exists some $a = s_0 > s_1 > \ldots > s_\ell = b$ with all $s_j \in S$, and $s_j$ beats $s_{j+1}$ for all $j \in \{0,\ldots,\ell-1\}$ (i.e. all of these comparisons were not erased).
        \STATE Denote by $A_S$ the elements of $S$ that are known to beat at least $|S|/2$ elements of $S$. Denote by $R_S$ the elements of $S$ that are known to be beaten by at least $|S|/2$ elements of $S$ (note that $S$ may not equal $A_S \cup R_S$). 
        \STATE For all $i \in S$, accept $i$ if $i \in A_S$. Reject $i$ if $i \in R_S$. Otherwise, make a random decision for $i$.
	\STATE For all $i \notin S$, if $i$ beats an element in $A_S$, accept. If an element in $R_S$ beats $i$, reject. Otherwise, make a random decision for $i$.
             \end{algorithmic}
\end{algorithm}

\begin{prevproof}{Theorem}{thm:naerasure}
We again consider the error contributed by $n/2+i$ (which is the same as $n/2-i$). There are again two events that might cause $n/2+i$ to be misplaced. First, maybe it is beaten an element in $R_S$. This is unlikely because this can only happen in the event that some element below the median makes it into $A_S$. Second, maybe $i$ is never compared to an element that beats it in $A_S$. This is also unlikely because the fraction of such elements should be about $i/n$. We bound the probabilty of the former event first, making use of Lemma~\ref{lem:skeleton}.

Again let $X_i$ denote the fraction of elements in $S$ that are smaller than $n/2+i$, and define $S$ to be good if $X_i \leq 1/2 - i/n + \varepsilon$ for all $i \in [1,n/2]$. Then Lemma~\ref{lem:skeleton} guarantees that $S$ is good with probability at least $1-ne^{-\varepsilon^2\sqrt{n}/2}$. This time, in addition to $S$ being good, we also need to make sure that $A_S$ is large (which will happen as long as not too many comparisons are erased).

\begin{lemma}\label{lem:pairs}
Let $x, y \in S$ such that $x>y$ and $|S \cap (x,y)| = k$. Then with probability at least $1-e^{-k\gamma^2}$, $x$ is known to beat $y$.
\end{lemma}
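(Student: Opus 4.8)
The plan is to lower-bound the probability that $x$ is known to beat $y$ by exhibiting a family of $k$ independent, easy-to-analyze \emph{two-step} witnesses, one through each element of $S$ lying strictly between $y$ and $x$, and then observing that the failure probabilities of these witnesses multiply.

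First I would fix notation: let $y = z_0 < z_1 < \cdots < z_{k+1} = x$ be the elements of $S$ in the closed interval between $y$ and $x$, so that $z_1,\dots,z_k$ are exactly the $k$ elements of $S$ strictly between $y$ and $x$ (this is the hypothesis $|S\cap(x,y)|=k$). For each $j \in \{1,\dots,k\}$, let $E_j$ be the event that \emph{both} the comparison between $y$ and $z_j$ and the comparison between $z_j$ and $x$ (each made exactly once in Algorithm~\ref{alg:skeleton}) are not erased. Since $y < z_j < x$ in the ground truth, in the erasure model a non-erased comparison of $y$ with $z_j$ returns ``$z_j$ beats $y$'' and a non-erased comparison of $z_j$ with $x$ returns ``$x$ beats $z_j$''. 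Hence on $E_j$ the length-two chain $s_0 = x$, $s_1 = z_j$, $s_2 = y$ satisfies $s_0 > s_1 > s_2$ with all $s_i \in S$ and each $s_i$ beating $s_{i+1}$, which is precisely the definition (Step~2 of Algorithm~\ref{alg:naerasure}) of $x$ being known to beat $y$. So each $E_j$ implies the desired event.

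The one point that needs care is independence. The event $E_j$ is a function only of the erasure outcomes of the two comparisons $\{y,z_j\}$ and $\{z_j,x\}$, and for distinct $j \neq j'$ these four comparisons are pairwise distinct (the ``middle'' endpoint differs and is never $y$ or $x$). Since all comparison erasures are independent, the events $E_1,\dots,E_k$ are mutually independent, and $\Pr[E_j] = \gamma\cdot\gamma = \gamma^2$. Therefore
\[
\Pr[x \text{ not known to beat } y] \;\le\; \Pr\!\Big[\bigcap_{j=1}^{k}\overline{E_j}\Big] \;=\; (1-\gamma^2)^k \;\le\; e^{-k\gamma^2},
\]
using $1-t \le e^{-t}$, which is exactly the claim.

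I do not expect any genuine obstacle; the only thing to ``discover'' is the right sufficient event. Routing through each intermediate element via a \emph{separate} two-edge path (rather than using the single direct comparison $\{x,y\}$, which gives only probability $\gamma$, or the full descending chain $z_{k+1}\to\cdots\to z_0$, which gives only $\gamma^{k+1}$) is what simultaneously raises the per-witness success to $\gamma^2$ and keeps the $k$ witnesses supported on disjoint comparison sets so their failures are independent. One could phrase the statement equivalently as a reachability bound in a random graph on $\{z_0,\dots,z_{k+1}\}$ with each edge present independently with probability $\gamma$, but the disjoint two-edge-path argument above is the shortest route to the stated bound and is all that is needed for the subsequent proof of Theorem~\ref{thm:naerasure} (to conclude that $A_S$ and $R_S$ are large with high probability).
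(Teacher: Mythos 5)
Your proof is correct and is essentially identical to the paper's: both fix on the $k$ disjoint two-edge paths $x \to z_j \to y$, note each survives erasure independently with probability $\gamma^2$, and bound the failure probability by $(1-\gamma^2)^k \le e^{-k\gamma^2}$. The only difference is that you spell out the independence and the sufficiency (rather than necessity) of these witnesses a bit more explicitly, which the paper leaves implicit.
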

\begin{proof}
$x$ is known to beat $y$ if there is some $z \in S \cap (x,y)$ such that the comparisons between $x$ and $z$ and $y$ and $z$ are both not erased. There are $k$ such possible $z$, and all comparisons are erased independently. So the probability that for all $z$, at least one of the comparisons to $\{x,y\}$ were erased is $(1-\gamma^2)^k \leq e^{-k\gamma^2}$. 
\end{proof}

\begin{corollary}
For all $\varepsilon \in (0,1/2)$, with probability at least $1-n^2e^{-\varepsilon\gamma^2\sqrt{n}}$, both $A_S$ and $R_S$ have at least $(1/2 - \varepsilon)|S|$ elements.
\end{corollary}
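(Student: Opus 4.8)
The plan is to exhibit an explicit large subset of $S$ that, with high probability, lands inside $A_S$ (and symmetrically for $R_S$), and bound the bad event by a union bound over Lemma~\ref{lem:pairs}. Write $m = |S| = \sqrt n$ and list the elements of $S$ in ground-truth order as $s_1 < s_2 < \cdots < s_m$. The key observation is that if $s_j$ is \emph{known to beat} every one of $s_1,\ldots,s_{\lceil m/2\rceil}$, then it is known to beat at least $|S|/2$ elements of $S$ and hence belongs to $A_S$ by definition. So it suffices to show that, with high probability, every sufficiently high-ranked $s_j$ is known to beat the entire bottom half of $S$.

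Concretely, I would fix the ``top block'' $B_A := \{\, s_j : j \ge m/2 + \varepsilon m + 1 \,\}$, which (ignoring the harmless rounding of $\sqrt n$) has $|B_A| = m/2 - \varepsilon m = (1/2-\varepsilon)\,|S|$ elements. For any $s_j \in B_A$ and any $i \le \lceil m/2\rceil$, the number of skeleton elements strictly between them is $|S \cap (s_i,s_j)| = j - i - 1 \ge \varepsilon m - 1$. By Lemma~\ref{lem:pairs}, the probability that $s_j$ fails to be known to beat $s_i$ is at most $(1-\gamma^2)^{\varepsilon m - 1} \le e^{-\gamma^2(\varepsilon m - 1)} \le e^{1/4}\, e^{-\varepsilon\gamma^2\sqrt n}$, using $\gamma \le 1/2$. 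Taking a union bound over the fewer than $m\cdot m = n$ relevant pairs $(i,j)$, the probability that some $s_j \in B_A$ is not known to beat some $s_i$ with $i \le \lceil m/2\rceil$ is at most $n\, e^{1/4}\, e^{-\varepsilon\gamma^2\sqrt n}$. On the complementary event, every $s_j \in B_A$ is known to beat all of $s_1,\ldots,s_{\lceil m/2\rceil}$, hence $B_A \subseteq A_S$ and $|A_S| \ge (1/2-\varepsilon)|S|$.

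The argument for $R_S$ is exactly symmetric, using the ``bottom block'' $\{\, s_j : j \le m/2 - \varepsilon m \,\}$ together with the targets $s_{\lceil m/2\rceil},\ldots,s_m$; Lemma~\ref{lem:pairs} applies verbatim because ``known to be beaten by'' is the mirror image of ``known to beat.'' A final union bound over the two events, absorbing the constant $2e^{1/4}$ into the loose $n^2$ factor (valid for $n$ not too small), yields the claimed failure probability $n^2 e^{-\varepsilon\gamma^2\sqrt n}$.

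There is no serious obstacle here: the proof is a careful application of Lemma~\ref{lem:pairs} plus Boole's inequality, and crucially no independence across pairs is needed. The two points to get right are (i) choosing the top/bottom blocks so that the gap to \emph{every} target element is uniformly at least $\approx \varepsilon m$, which is precisely what produces the exponent $\varepsilon\gamma^2\sqrt n$ rather than something weaker, and (ii) checking that the blocks contain exactly $(1/2-\varepsilon)|S|$ elements so as to match the constant in the statement; the rounding inherent in $|S|=\sqrt n$ and in $\lceil m/2\rceil$ only costs additive constants that are comfortably swallowed by the slack in the probability bound.
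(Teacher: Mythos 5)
Your proof is correct and follows essentially the same route as the paper's: apply Lemma~\ref{lem:pairs} to the pairs that are $\approx \varepsilon|S|$ apart in $S$, union bound, and observe that on the good event every element in the top (resp.\ bottom) $\varepsilon$-trimmed half of $S$ is known to beat (resp.\ be beaten by) the entire bottom (resp.\ top) half and hence lands in $A_S$ (resp.\ $R_S$). Your version is just more explicit about the off-by-one bookkeeping and the block sizes; the assumption $\gamma \le 1/2$ you invoke to absorb the $e^{\gamma^2}$ slack is not actually stated anywhere, but it is also unnecessary, since $\gamma\le 1$ already gives $e^{\gamma^2}\le e$ and the loose $n^2$ factor swallows this.
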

\begin{proof}
By Lemma~\ref{lem:pairs} and a union bound, with probability at least $1-n^2e^{-\varepsilon\gamma^2\sqrt{n}}$, it is the case that for all $x, y \in S$ that have at least $\varepsilon |S|$ elements between them, it is known whether $x$ beats $y$ or vice versa. In the event that this happens, any element that is at least $\varepsilon |S|$ elements away from the median will be in $A_S$ or $R_S$. 
\end{proof}

We'll call a skeleton set $S$ really good if it is good, and $|A_S| \geq (1/2-\varepsilon)|S|$. Now let's fix the skeleton set $S$ and assume $S$ is really good. From the above arguments, we know $S$ is really good with probability at least $1-n\cdot e^{-\varepsilon^2\sqrt{n}/2}-n^2e^{-\varepsilon\gamma^2\sqrt{n}}$. 

Next, observe that if $X_i \leq 1/2 - i/n + \varepsilon$, and $|A_S| \geq (1/2 - \varepsilon)|S|$, then there are at least $(i/n - 2\varepsilon)|S|$ elements in $A_S$ less than $n/2+i$. Therefore the probability that $n/2+i$ never beats an element in $A_S$ is at most $(1-(i/n - 2\varepsilon)\gamma)^{d-1}$ (and at most $1$ if $i/n \leq 2\varepsilon$). So the total $c$-weighted error that comes from these cases is at most $\min\{1,(1-(i/n-2\varepsilon)\gamma)^{d-1}\}i^c$.

Conditioning on $S$ is really good, the expected $c$-weighted error is 

$$\sum_{i=1}^{n/2} \min\{1,(1-(i/n-2\varepsilon)\gamma)\}^{d-1}i^c \leq \sum_{i=1}^{2\varepsilon n} i^c + \sum_{i=2\varepsilon n}^{n/2} (1-(i/n-2\varepsilon)\gamma)^{d-1}i^c$$
$$ \sum_{i=2\varepsilon n}^{n/2} (1-(i/n-2\varepsilon)\gamma)^{d-1}i^c \leq \sum_{i=2\varepsilon n}^{n/2} e^{(d-1)\gamma(-i/n+2\varepsilon)}i^c$$
$$ \leq e^{2d\varepsilon\gamma} \sum_{i=1}^{n/2} e^{-(d-1)\gamma i/n}i^c$$
$$ \leq \frac{e^{2d\varepsilon\gamma}c!n^{c+1}}{(\gamma(d-1))^{c+1}}$$

Taking $\varepsilon = 1/(\gamma d)$ shows that the expected $c$-weighted error contributed by elements $n/2+1,...,n$ is $n^{c+1}\left( \frac{(2\varepsilon)^{c+1}}{c+1} + \frac{e^{2d\varepsilon\gamma}c!}{(\gamma(d-1))^{c+1}}\right) = O((n/(\gamma d))^{c+1})$ conditioned on $S$ is really good. Notice that when $S$ is fixed, the $c$-weighted error contributed by each element is independent and bounded by $n^c$. By the Hoeffding bound, the probability that the $c$-weighted error exceeds its expectation by more than $(n/(\gamma d))^{c+1}$ is at most $e^{-2n^3/(\gamma d)^{2c+2}}$, conditioned on $S$ is really good. To sum up, taking elements that are smaller than the median into account, we know that with probability at least $1-2n\cdot e^{-\sqrt{n}/(2\gamma^2d^2)}-2n^2e^{-\gamma\sqrt{n}/d} - 2e^{-2n^3/(\gamma d)^{2c+2}}$, the $c$-weighted error is $O((n/(\gamma d))^{c+1})$.

\end{prevproof}

\subsubsection{Noisy Model}

\begin{algorithm}[ht]
        \caption{Non-adaptive algorithm for the noisy model}
    \begin{algorithmic}[1]\label{alg:nanoisy}
        \STATE Run Algorithm~\ref{alg:skeleton}. Let $S$ denote the skeleton set selected.
	\STATE Run the Braverman-Mossel algorithm to recover the maximum likelihood ordering of elements in $S$ in time $\poly(n,1/\gamma)$~\cite{BravermanM08}. Let $x$ denote the median under this ordering, and denote by $y <_S z$ that $y$ comes before $z$ in this ordering.
        \STATE For all $i \in S$, accept $i$ iff $i >_S x$, otherwise reject.
	\STATE For all $i \notin S$, let $b_{i1} <_S \ldots <_S b_{ic}$ denote the elements $>_S x$ that $i$ is compared to (note that maybe $c = 0$), and $\ell_{i1} >_S \ldots >_S \ell_{if}$ denote the elements $<_S x$ that $i$ is compared to. 
	\STATE Let $X_{ij} = 1$ iff $i$ beats $b_{ij}$, and $-1$ otherwise. Let $Y_{ij} = 1$ iff $i$ beats $\ell_{ij}$ and $-1$ otherwise. Let $B_i = \max\{J |\sum_{j=1}^J X_{ij} \geq 0\}$ and $L_i = \max\{J |\sum_{j=1}^J Y_{ij} \leq 0\}$. If $B_i > L_i$, accept $i$. If $B_i < L_i$, reject. If $B_i = L_i$, make a random decision.
             \end{algorithmic}
\end{algorithm}

\begin{prevproof}{Theorem}{thm:nanoisy}
Again we consider the probability of mistakenly accepting the element $n/2+i$. This can happen for the same reasons as in the previous sections: perhaps the elements selected for $S$ are not representative of the ground set, or $n/2+i$ is not compared to elements that separate it from the median. Additionally, mistakes may now happen due to erroneous comparisons, even when $n/2+i$ is compared to ``the right'' elements. We proceed now to bound the probability of each bad event, beginning with the event that $S$ is poorly sorted due to comparison errors.

\begin{theorem}\label{thm:BM}(\cite{BravermanM08}) For each element $e \in S$, let $\pi(e)$ denote the true rank of $e$ when $S$ is properly sorted (i.e. the best element in $S$ has rank $1$), and let $\sigma(e)$ denote the rank of $e$ in the maximum likelihood ordering after taking all $\binom{|S|}{2}$ pairwise comparisons. Then there is an absolute constant $K$ such that for all $\varepsilon$ and for all $e$ simultaneously, with probability at least $1-2\varepsilon$, $|\pi(e)-\sigma(e)| \leq K\gamma^{-4}(\log |S| - \log \varepsilon)$. In particular, taking $\varepsilon = 1/\poly(n)$ and recalling that $|S| = \sqrt{n}$ implies that with probability at least $1-2/\poly(n)$, $|\pi(e)-\sigma(e)| = O(\log(n)/\gamma^4) = o(|S|)$. 
\end{theorem}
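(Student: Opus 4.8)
The plan is to prove the statement from scratch along the lines of Braverman--Mossel~\cite{BravermanM08}, invoking nothing earlier as a black box. Write $m = |S|$ and assume without loss of generality that the ground truth on $S$ is the identity, so element $e$ has true rank $\pi(e)$. Since each of the $\binom{m}{2}$ comparisons is correct independently with probability $1/2+\gamma/2$, the likelihood of any permutation $\tau$ is $(1/2+\gamma/2)^{\mathrm{ag}(\tau)}(1/2-\gamma/2)^{\binom{m}{2}-\mathrm{ag}(\tau)}$, where $\mathrm{ag}(\tau)$ is the number of comparisons $\tau$ orders the same way the data does; since $\gamma>0$ this is strictly increasing in $\mathrm{ag}(\tau)$, so the maximum likelihood ordering $\sigma$ is exactly the permutation maximizing $\mathrm{ag}(\cdot)$ (equivalently, a minimum feedback-arc-set ordering of the observed tournament). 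For a permutation $\tau$, let $\mathrm{Inv}(\tau)$ be the set of pairs it orders oppositely to the truth; writing $\xi_{ij}=1$ when the comparison of $i,j$ is uncorrupted, a direct computation gives $\mathrm{ag}(\pi)-\mathrm{ag}(\tau)=\sum_{(i,j)\in\mathrm{Inv}(\tau)}Z_{ij}$ with $Z_{ij}=2\xi_{ij}-1\in\{+1,-1\}$ independent of mean $\gamma$. Hence for a \emph{fixed} $\tau$, $\Pr[\mathrm{ag}(\tau)\ge\mathrm{ag}(\pi)]=\Pr\big[\sum_{(i,j)\in\mathrm{Inv}(\tau)}Z_{ij}\le 0\big]\le\exp(-|\mathrm{Inv}(\tau)|\gamma^2/2)$ by Hoeffding.

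The second ingredient is locality: for any set $W$ of elements occupying a contiguous block of $\sigma$'s ranks, $\sigma$ restricted to $W$ must itself maximize agreement using only the $\binom{|W|}{2}$ comparisons internal to $W$, since re-sorting $W$ internally leaves all comparisons between $W$ and $S\setminus W$ untouched and would otherwise strictly increase $\mathrm{ag}(\sigma)$. (A useful special case: every pair of $\sigma$-consecutive elements is ordered in agreement with its own direct comparison.) I would then prove the combinatorial reduction: if $|\pi(e)-\sigma(e)|=s>t$ for some $e$, there is a contiguous block $W$ of $\sigma$ of size $O(s)$ containing $e$ such that $\sigma|_W$ has $\Omega(s)$ inversions relative to the true order of $W$. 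The point is that a displacement of $s$ forces at least $s$ elements truly below (resp.\ above) $e$ to be ranked above (resp.\ below) $e$ by $\sigma$, and a pigeonhole/averaging argument over the $O(1)$ candidate windows of length $\Theta(s)$ around the displaced position of $e$ shows one of them captures $\Omega(s)$ of these ``crossing'' pairs entirely.

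Finally I would union bound. A block $W$ is specified by its top rank and its size, at most $m^2$ choices, and a block of size $w=O(s)$ has at most $w!=\exp(O(s\log s))$ internal orderings (far fewer once one counts only orderings that are $\Omega(s)$-far from sorted, which is where sharper bookkeeping recovers the stated exponent of $\gamma$). By the Hoeffding estimate above, a fixed such bad block-ordering out-agrees the truth internally with probability at most $\exp(-\Omega(s\gamma^2))$. Choosing $t=\Theta(\gamma^{-4}(\log m-\log\varepsilon))$ as in~\cite{BravermanM08} makes $\sum_{s>t}\sum_{\text{windows}}\exp(O(s\log s)-\Omega(s\gamma^2))$ at most $2\varepsilon$, so $\Pr[\max_e|\pi(e)-\sigma(e)|>t]\le 2\varepsilon$; plugging in $\varepsilon=1/\poly(n)$ and $m=|S|=\sqrt{n}$ gives $|\pi(e)-\sigma(e)|=O(\log(n)/\gamma^4)=o(\sqrt{n})$ simultaneously for all $e$ with probability $1-2/\poly(n)$. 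The main obstacle is the reduction of the second paragraph: turning ``some element is displaced by $s$'' into a genuinely localized, \emph{bounded-size} witness block that is quantifiably far from sorted is the delicate core of the Braverman--Mossel analysis, and it is there that the precise power of $\gamma^{-1}$ is pinned down; the likelihood reformulation, the Hoeffding step, and the window union bound are routine by comparison.
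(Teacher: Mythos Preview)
The paper does not prove this theorem at all: it is quoted with the citation \cite{BravermanM08} and invoked as a black box inside the proof of Theorem~\ref{thm:nanoisy}. There is therefore no paper-side argument to compare your proposal against.

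On your sketch itself: the overall architecture (MLE $=$ minimum-disagreement ordering, Hoeffding on inversion sets, locality of the optimum on contiguous $\sigma$-blocks, union bound over windows) is indeed the Braverman--Mossel one, but as written the argument does not close, and the gap is not only where you locate it. You call the displacement-to-block reduction ``the delicate core'' and the union bound ``routine by comparison,'' yet the union bound you actually write down diverges: a block of size $O(s)$ admits $\exp(O(s\log s))$ internal orderings, while your Hoeffding tail is only $\exp(-\Omega(s\gamma^2))$, so $\sum_{s>t}\exp\big(O(s\log s)-\Omega(s\gamma^2)\big)$ blows up for every choice of $t$ whenever $\gamma\le 1$. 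Your parenthetical fix---count only orderings that are $\Omega(s)$-far from sorted---does not rescue this either: permutations of $[\Theta(s)]$ with $\Theta(s)$ inversions still number $\exp(\Theta(s))$ (via the inversion-table encoding), which again dominates $\exp(-\Omega(s\gamma^2))$ for any $\gamma<1$. Obtaining the stated $\gamma^{-4}$ dependence requires the finer structural argument of \cite{BravermanM08} that you have entirely deferred; both the block-localization step and the counting/tail tradeoff carry real content, and neither is supplied here.
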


Now, we want to combine this with Lemma~\ref{lem:skeleton} to claim that there are not only many elements in $S$ smaller than $n/2+i$, but that these elements are sorted well enough in $S$ for this to be useful. Recall that Lemma~\ref{lem:skeleton} states that for all $\varepsilon$, with probability at least $1-e^{-\varepsilon^2\sqrt{n}/2}$, the fraction of elements in $S$ larger than $n/2+i$ is at most $1/2-i/n+\varepsilon$. Combining Lemma~\ref{lem:skeleton} with Theorem~\ref{thm:BM} yields the following:

\begin{corollary}\label{cor:BM}
For all $\varepsilon$, with probability at least $1-2/\poly(n)-e^{-\varepsilon^2\sqrt{n}/2}$, there is a set $T\subset S$ such that:
\begin{itemize}
\item $|T| \geq (i/n-\varepsilon-o(1))|S|$.
\item Every element in $T$ is $>_S x$.
\item Every element in $T$ is smaller than $n/2+i$.
\item Every element in $T$ is $<_S y$, for all $y \in S \cap (n/2+i,n]$.
\end{itemize}
\end{corollary}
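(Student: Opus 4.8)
The plan is to combine the two ingredients just recorded. Lemma~\ref{lem:skeleton} controls how many elements of $S$ fall below $n/2+i$ in the \emph{ground truth}, while Theorem~\ref{thm:BM} guarantees that the Braverman--Mossel ordering $<_S$ misranks no element of $S$ by more than $\delta := O(\log n/\gamma^4) = o(|S|)$. The idea is to take for $T$ exactly those elements of $S$ whose true rank \emph{inside $S$} lies in the relevant window, pulled in by a margin of order $\delta$ on each side so that the slack in Theorem~\ref{thm:BM} cannot move them out of that window in the $<_S$ order.

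Concretely, I would first condition on the event that the conclusions of both Lemma~\ref{lem:skeleton} (for this particular $i$) and Theorem~\ref{thm:BM} hold; a union bound shows this event fails with probability at most $2/\poly(n) + e^{-\varepsilon^2\sqrt{n}/2}$, as claimed. On this event, let $r$ be the number of elements of $S$ that are $< n/2+i$ in the ground truth, so $r \geq (1/2 + i/n - \varepsilon)|S|$ by Lemma~\ref{lem:skeleton}. Order $S$ by true value and let $\pi(e)\in\{1,\dots,|S|\}$ be the position of $e$ (smallest first), so that $e < n/2+i$ iff $\pi(e)\le r$; let $\rho(e)$ be the position of $e$ in the $<_S$ order, so $\rho(x) = |S|/2$, and recall $|\pi(e)-\rho(e)|\le \delta$ for all $e$ simultaneously. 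I would then define
\[
T := \bigl\{\, e \in S \;:\; |S|/2 + \delta < \pi(e) \le r - 2\delta \,\bigr\}.
\]

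Verifying the four bullets is then routine bookkeeping: if $\pi(e) > |S|/2+\delta$ then $\rho(e) \ge \pi(e)-\delta > |S|/2 = \rho(x)$, i.e.\ $e >_S x$; $\pi(e) \le r-2\delta \le r$ gives $e < n/2+i$; for any $y \in S\cap(n/2+i,n]$ we have $\pi(y)\ge r+1$, hence $\rho(y) \ge r+1-\delta > r-\delta \ge \rho(e)$, i.e.\ $e <_S y$; and finally $|T| = (r-2\delta)-(|S|/2+\delta) \ge (i/n-\varepsilon)|S| - 3\delta = (i/n-\varepsilon-o(1))|S|$ since $\delta = o(|S|)$.

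There is no real obstacle here: the only points that require care are getting the orientation of $<_S$ relative to $\pi$ right and choosing the margins ($\delta$ above, $2\delta$ below) so that all four true-rank-to-$<_S$-rank translations succeed at once under the single good event; the rest is just tracking failure probabilities.
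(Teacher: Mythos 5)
Your proof is correct and takes essentially the same approach as the paper: condition on the joint good event from Lemma~\ref{lem:skeleton} and Theorem~\ref{thm:BM} via a union bound, then choose $T$ as a consecutive block of ranks between the median of $S$ and the cutoff given by $r$, shrunk by an $O(\delta)=o(|S|)$ margin so that the bounded rank displacement cannot violate any of the four bullets. The only cosmetic difference is that you index $T$ by the true rank $\pi$ while the paper indexes it by the $<_S$-rank $\sigma$ directly; the underlying bookkeeping is identical.
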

\begin{proof}
By Lemma~\ref{lem:skeleton}, there are at least $(1/2+i/n-\varepsilon)|S|$ elements in $S$ smaller than $n/2+i$ (except with probability $e^{-\varepsilon^2\sqrt{n}/2}$). Therefore, by Theorem~\ref{thm:BM}, the maximum likelihood ordering cannot place any element $ > n/2+i$ in any rank below $(1/2+i/n-\varepsilon - o(1))|S|$ (except with probability $1/n^2$. So simply define $T$ to be the set of elements ranked between $|S|/2$ and $(1/2 + i/n-\varepsilon - o(1))|S|$, and $T$ necessarily satisfies the desired properties (except with the stated probability).
\end{proof}

From here, the idea is that $B_{n/2+i}$ is likely to be large, as the elements in $T$ form a large set immediately above $x$ such that $n/2+i$ should beat more than half the elements in whatever subset of $T$ it is compared to. Similarly, we expect $L_{n/2+i}$ to be small, because $n/2+i$ shouldn't lose to a large set of elements below $x$. We first show that $n/2+i$ is likely to be compared to many elements of $T$.

\begin{lemma}\label{lem:numcompares}
Let $T$ be any set of size $\delta|S|$ elements. Then with probability at least $1-e^{-\delta (d-1)/8}$, $n/2+i$ is compared to at least $\delta (d-1)/2$ elements of $T$. 
\end{lemma}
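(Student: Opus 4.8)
This is a straightforward concentration argument, so the plan is short.

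\medskip

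\textbf{The plan.} The statement says that if $T \subseteq S$ has size $\delta|S|$, then $n/2+i$ is compared to at least $\delta(d-1)/2$ elements of $T$ with probability at least $1 - e^{-\delta(d-1)/8}$. Recall from Algorithm~\ref{alg:skeleton} that each element not in $S$ is compared to $d-1$ elements of $S$, chosen at random. I would first fix whether this sampling is with or without replacement; in either case the key point is that each of the $d-1$ comparison partners of $n/2+i$ independently (or negatively correlated, in the without-replacement case) lands in $T$ with probability exactly $|T|/|S| = \delta$. So the number $N$ of comparisons between $n/2+i$ and elements of $T$ stochastically dominates (or equals, in the with-replacement case) a $\mathrm{Binomial}(d-1,\delta)$ random variable, with $\mathbb{E}[N] = \delta(d-1)$.

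\medskip

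\textbf{Key steps.} (1) Set up $N = \sum_{j=1}^{d-1} Z_j$ where $Z_j$ is the indicator that the $j$-th partner of $n/2+i$ lies in $T$, noting $\mathbb{E}[Z_j] = \delta$ and (in the with-replacement model) the $Z_j$ are i.i.d.\ Bernoulli$(\delta)$. (2) Apply the multiplicative Chernoff lower-tail bound: $\Pr[N \le (1-1/2)\mathbb{E}[N]] \le e^{-\mathbb{E}[N](1/2)^2/2} = e^{-\delta(d-1)/8}$, which is exactly the claimed bound. (3) If the paper's sampling is without replacement, invoke the standard fact that the lower-tail Chernoff bound still applies to sums of negatively associated indicators (or cite Lemma~\ref{lem:rep}-style Azuma reasoning), so the same bound holds. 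That completes the proof.

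\medskip

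\textbf{Main obstacle.} There is essentially no obstacle here — this is a one-line Chernoff bound once the setup is written down. The only mild subtlety is bookkeeping about the sampling model (with vs.\ without replacement, and whether ``$d-1$ random elements'' allows repeats), but since we only need a \emph{lower} tail bound on the count, negative association makes the without-replacement case no worse than the with-replacement case, so the stated inequality holds regardless.
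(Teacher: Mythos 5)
Your proof is correct and follows essentially the same route as the paper: write the count as a sum of indicator variables $\sum_j A_j$ for the $d-1$ comparison partners landing in $T$, note each has mean $\delta$, and apply the multiplicative Chernoff lower-tail bound to get $e^{-\delta(d-1)/8}$. The paper simply assumes the $d-1$ partners are chosen independently and uniformly (so the indicators are i.i.d.\ Bernoulli$(\delta)$); your extra remark that negative association handles the without-replacement case is a harmless and correct addition.
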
 

\begin{proof}
$n/2+i$ is compared to a total of $d-1$ elements of $S$, chosen independently and uniformly at random. So each comparison is to an element in $T$ with probability $\delta$, and the expected number of elements in $T$ that $n/2+i$ is compared to is therefore $\delta (d-1)$. Let $A_j$ be the indicator random variable that is $1$ if the $j^{th}$ comparison is to an element in $T$ and $0$ otherwise. Then the number of elements in $T$ that $n/2+i$ is compared to is $\sum_j A_j$ and is the sum of independent $\{0,1\}$ random variables. Therefore, the Chernoff bound states that $Pr[\sum_j A_j \leq \delta (d-1)/2] \leq e^{-\delta (d-1)/8}$. 
\end{proof}

\begin{lemma}\label{lem:qualitycompares}
Let $T\subseteq S$ be any set of elements such that:
\begin{itemize}
\item Every element in $T$ is $>_S x$.
\item Every element in $T$ is smaller than $n/2+i$.
\item Every element in $T$ is $<_S y$, for all $y \in S \cap (n/2+i,n]$.
\end{itemize}
Then if $n/2+i$ is compared to at least $t$ elements of $T$, $Pr[B_{n/2+i} \geq t] \geq 1-e^{-t\gamma^2/8}$. 
\end{lemma}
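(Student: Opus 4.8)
The plan is to analyze the random walk $\sum_{j=1}^J X_{n/2+i,j}$ restricted to the steps corresponding to comparisons against elements of $T$, and argue that this walk is very likely to stay nonnegative for all $t$ of those steps, which forces $B_{n/2+i}\ge t$ by definition. Concretely, let $c_1<_S\cdots<_S c_m$ (with $m\ge t$) be the elements of $S$ that $n/2+i$ is compared to that lie $>_S x$, and among them let the indices in $T$ be $j_1<\cdots<j_t$. By the three hypotheses on $T$, each $c_{j_\ell}$ is truly smaller than $n/2+i$, so each comparison of $n/2+i$ against $c_{j_\ell}$ is won by $n/2+i$ independently with probability $1/2+\gamma/2$; i.e.\ $X_{n/2+i,j_\ell}=1$ with probability $1/2+\gamma/2$ and $-1$ otherwise.

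The key observation is that it suffices to show the partial sums $\sum_{\ell'=1}^{\ell} X_{n/2+i,j_{\ell'}}$ over just the $T$-steps never go negative: if all those partial sums are $\ge 0$, then in particular for every $J\le j_t$ the prefix $\sum_{j\le J}X_{n/2+i,j}$ picks up only the $T$-contributions plus some non-$T$ contributions, and one checks (because the first non-$T$ index appearing before $j_1$ is handled by... ) — wait, this needs care: non-$T$ steps can be $-1$. The cleaner route, and the one I would actually carry out, is to invoke Proposition~\ref{prop:RW}: consider the walk on exactly the $t$ steps against elements of $T$, which moves $+1$ with probability $1/2+\gamma/2$ and $-1$ with probability $1/2-\gamma/2$. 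Reversing orientation (track $-\sum X$, which is a walk biased to the \emph{left} with left-probability $1/2+\gamma/2\ge 1/2$), the event ``$B_{n/2+i}<t$'' is contained in the event that this reversed walk ever reaches $+1$ to the right of the origin, hmm — actually the relevant event is that the \emph{downward} walk $-\sum X$ ever exceeds some threshold, i.e.\ that the original walk ever dips below $0$. By Proposition~\ref{prop:RW} applied with $p=1/2-\gamma/2$, the probability the original walk $\sum X$ (right-prob $1/2+\gamma/2$, so reversed it is right-prob $1/2-\gamma/2<1/2$) ever goes one unit below $0$ is $\left(\tfrac{1/2-\gamma/2}{1/2+\gamma/2}\right)^{1}=\tfrac{1-\gamma}{1+\gamma}$, which is bounded away from $1$ but not small enough by itself — so I would instead apply the proposition to bound the probability of ever dipping below $0$ during $t$ steps and combine with a Chernoff bound on the terminal value, or, more simply, observe that $B_{n/2+i}\ge t$ fails only if at some point $\sum_{\ell'\le \ell}X\le -1$, and bound this using a maximal (Hoeffding/Azuma) inequality on the biased walk.

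Here is the argument I expect to use for the final bound. Set $W_\ell=\sum_{\ell'=1}^{\ell}X_{n/2+i,j_{\ell'}}$, a sum of i.i.d.\ $\pm1$ variables with mean $\gamma$. The event $\{B_{n/2+i}<t\}$ implies there is some $\ell\le t$ with $W_\ell\le -1$, hence $W_\ell-\gamma\ell\le -1-\gamma\ell\le -\gamma\ell$. Since $(W_\ell-\gamma\ell)_\ell$ is a martingale with bounded increments, a maximal version of Hoeffding's inequality gives $\Pr[\exists \ell\le t:\ W_\ell-\gamma\ell\le -\gamma\ell]\le e^{-\Omega(\gamma^2 t)}$; chasing the constants (increments of size $2$, so the exponent is $(\gamma\ell)^2/(2\cdot 4\ell)=\gamma^2\ell/8$, uniformly at least $\gamma^2 t/8$ once we note the worst case is $\ell=t$ after a union bound / reflection argument) yields exactly $\Pr[B_{n/2+i}\ge t]\ge 1-e^{-t\gamma^2/8}$. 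The main obstacle is handling the maximal inequality cleanly: a naive union bound over $\ell\le t$ of per-step Chernoff bounds would produce an extra factor of $t$, so I need either the reflection/optional-stopping argument implicit in Proposition~\ref{prop:RW}, or a direct appeal to the Hoeffding–Azuma maximal inequality, to get the clean single exponential $e^{-t\gamma^2/8}$ without the polynomial loss. I would present it via Proposition~\ref{prop:RW}: the probability the biased-right ($p=1/2-\gamma/2$) reversed walk ever travels one step against its bias is $\tfrac{1-\gamma}{1+\gamma}\le e^{-\gamma}$, and iterating/strengthening this geometric-decay structure across the scale $t$ (equivalently, requiring the walk to overcome a deficit that grows like $\gamma t$) gives $e^{-\Omega(\gamma^2 t)}$, which after the constant bookkeeping is $e^{-t\gamma^2/8}$.
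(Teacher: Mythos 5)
There is a real gap in two places.

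\textbf{First, you misread the event you need.} By definition, $B_{n/2+i}=\max\{J:\sum_{j\le J}X_{n/2+i,j}\ge 0\}$, so $B_{n/2+i}\ge t$ holds as soon as the \emph{single} partial sum $\sum_{j\le t}X_{n/2+i,j}$ is nonnegative; it does not require the walk to stay nonnegative throughout $[1,t]$. Your whole pivot to ``never dips below $0$,'' Proposition~\ref{prop:RW}, and a maximal Hoeffding--Azuma inequality is aimed at a strictly harder event than the one in the lemma. There is no union bound to avoid and no maximal inequality needed: one vanilla Chernoff bound at the single index $t$ (or at $j_t$) suffices.

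\textbf{Second, and this is the idea you explicitly get stuck on,} you observe that the non-$T$ steps interleaved among the first $j_t$ indices could be $-1$, and you try to route around this. The resolution is that those steps are \emph{not} arbitrary: by the third hypothesis of the lemma, every element of $T$ is $<_S y$ for every $y\in S\cap(n/2+i,n]$. Hence any $b_j$ with $b_j<_S b_{j_t}$ (i.e.\ any element appearing among the first $j_t$ in the $<_S$ order of the comparison partners that are $>_S x$) cannot lie in $(n/2+i,n]$, so it is truly smaller than $n/2+i$. In particular $b_1,\dots,b_t$ (recall $t\le j_t$) are \emph{all} truly smaller than $n/2+i$, whether or not they belong to $T$, so each $X_{n/2+i,j}$ for $j\le t$ is $+1$ independently with probability $1/2+\gamma/2$. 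A single Chernoff bound on $\Pr\bigl[\sum_{j\le t}X_{n/2+i,j}<0\bigr]$ then yields $e^{-t\gamma^2/8}$ directly, which is exactly the paper's proof. Without the observation that condition~3 ``cleans up'' the early non-$T$ indices, no amount of random-walk machinery will rescue the argument, because you would genuinely have steps of unknown sign interleaved in the prefix.

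So: drop Proposition~\ref{prop:RW} and the maximal inequality, use condition~3 of the lemma to show that \emph{every} $b_j$ with $j\le t$ is truly smaller than $n/2+i$, and apply one Chernoff bound at step $t$.
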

\begin{proof}
Let $T'$ denote the subset of $T$ that $n/2+i$ is compared to. Define $A_j$ to be a random variable that is $1$ if $n/2+i$ beats the $j^{th}$ element of $T'$, and $0$ otherwise. Then $B_{n/2+i}\geq |T'|$ if  $\sum_{j=1}^{|T'|} A_j \geq |T'|/2$. As the $A_j$s are independent $\{0,1\}$ random variables that are $1$ with probability $1/2+\gamma/2$, the Chernoff bound states that this probability is at least $1-e^{-|T'|\gamma^2/8}$. 
\end{proof}

Finally, we show that $L_{n/2+i}$ is likely to be small, and put everything together. Recall that $\ell_1,\ldots,\ell_c$ denote the elements that $n/2+i$ are compared to that are $<_S x$ in decreasing order, and that $Y_{ij} = 1$ iff $n/2+i$ beats $\ell_j$, and $-1$ otherwise. 

\begin{lemma}\label{lem:startlow}
Assume that every element $<_S x$ is also $< n/2+i$. Then for all $t$, $Pr[\sum_{j \leq t} Y_{ij} \geq t\gamma/2] \geq 1-e^{-t\gamma^2/32}$. 
\end{lemma}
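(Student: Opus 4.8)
The statement is a concentration bound: if every element of $S$ that comes before $x$ in the recovered ordering is genuinely smaller than $n/2+i$, then each comparison of $n/2+i$ against such an element $\ell_j$ is a noisy comparison between $n/2+i$ and a truly smaller element, so $n/2+i$ beats $\ell_j$ with probability exactly $1/2+\gamma/2$, independently across $j$. Hence the plan is: first, observe that under the hypothesis the variables $Y_{i1},\dots,Y_{it}$ are i.i.d.\ with $\Prob[Y_{ij}=1]=1/2+\gamma/2$ and $\Prob[Y_{ij}=-1]=1/2-\gamma/2$, so $\E[Y_{ij}]=\gamma$ and $\E[\sum_{j\le t}Y_{ij}]=t\gamma$. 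Second, apply a Chernoff/Hoeffding bound to the bounded independent sum $\sum_{j\le t}Y_{ij}$: we want the probability that this sum drops below $t\gamma/2$, i.e.\ that it falls below half its mean, which is a standard lower-tail estimate. For a sum of $t$ independent $\pm 1$ variables with mean-per-term $\gamma$, the deviation $t\gamma/2$ from the mean $t\gamma$ has probability at most $e^{-(t\gamma/2)^2/(2t)} = e^{-t\gamma^2/8}$ by Hoeffding; the slightly weaker constant $e^{-t\gamma^2/32}$ stated in the lemma leaves comfortable slack, so essentially any clean form of the bound suffices.

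Concretely I would write $Z_j = (Y_{ij}+1)/2 \in \{0,1\}$, so $\Prob[Z_j=1]=1/2+\gamma/2$ and $\sum_{j\le t}Y_{ij} = 2\sum_{j\le t}Z_j - t$. Then $\sum_{j\le t}Y_{ij}\ge t\gamma/2$ is equivalent to $\sum_{j\le t}Z_j \ge t/2 + t\gamma/4 = t(1/2+\gamma/4)$, i.e.\ the binomial $B(t,1/2+\gamma/2)$ exceeds $t(1/2+\gamma/4)$, which is below its expectation $t(1/2+\gamma/2)$ by $t\gamma/4$. The multiplicative Chernoff bound for the lower tail of a binomial then gives failure probability at most $e^{-(t\gamma/4)^2 / (2\cdot t(1/2+\gamma/2))} \le e^{-t\gamma^2/32}$ (using $1/2+\gamma/2 \le 1$), which is exactly the claimed bound.

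There is no real obstacle here; the only thing to be careful about is that the independence and the exact bias $1/2+\gamma/2$ of the $Y_{ij}$'s genuinely rely on the hypothesis ``every element $<_S x$ is also $< n/2+i$'' — without it, some $\ell_j$ might actually be larger than $n/2+i$ in the ground truth, flipping the bias, which is why the lemma is stated conditionally. This hypothesis is what the earlier corollary (Corollary~\ref{cor:BM}) and the good-skeleton analysis are set up to supply (up to the $o(1)$ slack and low-probability failure events), so in the main proof of Theorem~\ref{thm:nanoisy} the lemma will be applied on that good event. I would therefore just note this dependence explicitly and then carry out the two-line Chernoff computation above.
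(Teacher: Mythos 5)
Your proof is correct and is essentially the same argument as the paper's: it shifts to the $\{0,1\}$-valued variables $(1+Y_{ij})/2$, observes they are independent Bernoulli with bias $1/2+\gamma/2$ under the stated hypothesis, and applies a multiplicative Chernoff lower-tail bound at deviation $t\gamma/4$ from the mean $(1/2+\gamma/2)t$, giving failure probability $e^{-t\gamma^2/(32(1/2+\gamma/2))} \le e^{-t\gamma^2/32}$. The only difference from the paper's write-up is that you make the Chernoff computation explicit and flag where the hypothesis is used; the paper states the same bound more tersely.
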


\begin{proof}
The random variables $(1+Y_{ij})/2$ are independent and in $\{0,1\}$. $\mathbb{E}[\sum_{j\leq t} Y_{ij}] = (1/2+\gamma/2)t$. So the Chernoff bound states that $Pr[\sum_{j\leq t}(1+Y_{ij})/2 \leq (1/2 +\gamma/4)t] \leq e^{-t\gamma^2/32}$. Note that $\sum_{j \leq t}(1+Y_{ij})/2 \leq (1/2 + \gamma/4)t \Leftrightarrow \sum_{j \leq t} Y_{ij} \leq t\gamma/2$. 
\end{proof}

\begin{proposition}\label{prop:finishlow}
Assume that every element $<_S x$ is also $< n/2+i$. Then for all $t$, $Pr[L_{n/2+i} \geq t] \leq e^{-t\gamma^2/32} + e^{-t\gamma^2/2}$. 
\end{proposition}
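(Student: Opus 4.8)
The plan is to combine Lemma~\ref{lem:startlow} with the biased-random-walk estimate of Proposition~\ref{prop:RW}. Write $S_J := \sum_{j=1}^{J} Y_{ij}$ for the partial sums used to define $L_{n/2+i}$, so that $\{L_{n/2+i} \ge t\}$ is exactly the event that $S_J \le 0$ for some $J \ge t$ (the set $\{J : S_J\le 0\}$ is nonempty since $S_0 = 0$). Under the hypothesis that every element $<_S x$ is also $< n/2+i$, each $Y_{ij}$ is independently $+1$ with probability $1/2+\gamma/2$ and $-1$ with probability $1/2-\gamma/2$, so $(S_J)_J$ is a random walk with positive drift $\gamma$. (If $n/2+i$ is compared to fewer than $t$ elements that are $<_S x$, then $L_{n/2+i} < t$ deterministically and there is nothing to prove, so we may assume there are at least $t$ such comparisons.)

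I would then split on the value of $S_t$. On one hand, Lemma~\ref{lem:startlow} gives $\Pr[S_t < t\gamma/2] \le e^{-t\gamma^2/32}$, which will be the first term of the bound. On the other hand, condition on the partial path $Y_{i1},\dots,Y_{it}$ on the event $\{S_t \ge t\gamma/2\}$; since $S_t$ is an integer and $t\gamma/2 > 0$, this forces $S_t$ to be a positive integer. Given this, $\{L_{n/2+i} \ge t\}$ is precisely the event that the continued walk, started at step $t$ from height $S_t$, ever reaches height $0$ or below; equivalently, that the reflected walk $W_m := S_t - S_{t+m}$ — which steps right ($+1$) with probability $p = 1/2-\gamma/2 \le 1/2$ and left with probability $1-p$, and whose increments $-Y_{i,t+1},-Y_{i,t+2},\dots$ are independent of $Y_{i1},\dots,Y_{it}$ — ever reaches $S_t$ units to the right of the origin. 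By Proposition~\ref{prop:RW} this probability equals $(p/(1-p))^{S_t} = \bigl(\tfrac{1-\gamma}{1+\gamma}\bigr)^{S_t} \le \bigl(\tfrac{1-\gamma}{1+\gamma}\bigr)^{t\gamma/2}$, the last step using $S_t \ge t\gamma/2$ together with the fact that the base lies in $(0,1)$.

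Putting the two cases together (the first contributes $\E[\ind[S_t\ge t\gamma/2]\,(\tfrac{1-\gamma}{1+\gamma})^{S_t}]$ by the tower rule, which is at most $(\tfrac{1-\gamma}{1+\gamma})^{t\gamma/2}$ by the same monotonicity), I get
\[
  \Pr[L_{n/2+i} \ge t] \;\le\; \Pr[S_t < t\gamma/2] + \bigl(\tfrac{1-\gamma}{1+\gamma}\bigr)^{t\gamma/2} \;\le\; e^{-t\gamma^2/32} + \bigl(\tfrac{1-\gamma}{1+\gamma}\bigr)^{t\gamma/2}.
\]
Finally, since $\ln\tfrac{1-\gamma}{1+\gamma} \le \ln(1-\gamma) \le -\gamma$, we have $\bigl(\tfrac{1-\gamma}{1+\gamma}\bigr)^{t\gamma/2} \le e^{-t\gamma^2/2}$, which yields the claimed bound $e^{-t\gamma^2/32} + e^{-t\gamma^2/2}$.

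The step I expect to require the most care is the conditioning argument in the second case: one must argue cleanly that the post-time-$t$ comparison outcomes are independent of $Y_{i1},\dots,Y_{it}$, so that Proposition~\ref{prop:RW} may legitimately be applied to the walk restarted at the random — but, on the relevant event, positive-integer — height $S_t$, and then use monotonicity of $(\tfrac{1-\gamma}{1+\gamma})^{S_t}$ in $S_t$ to replace the random exponent by the deterministic lower bound $t\gamma/2$. Everything else is a direct invocation of the two cited results plus the elementary inequality $\ln(1-\gamma)\le-\gamma$.
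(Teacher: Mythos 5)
Your proposal is correct and takes essentially the same approach as the paper: split on whether $S_t\ge t\gamma/2$, use Lemma~\ref{lem:startlow} for the bad case, and apply Proposition~\ref{prop:RW} to the continuation of the walk in the good case; you just spell out the conditioning and monotonicity steps that the paper leaves implicit.
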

\begin{proof}
Define a random walk starting at $0$ that moves left at time $j$ if $Y_{ij} = -1$, and right at time $j$ if $Y_{ij} = 1$. Then the event that $L_{n/2+i} \geq t$ is exactly the event that this random walk returns to the origin (or left of the origin) at some step $\geq t$. If $\sum_{j\leq t} Y_{ij} \geq t\gamma/2$, then this is exactly the event that the random walk starting at time $t$ winds up $t\gamma/2$ steps to the left. Note that this random walk moves right with probability $1/2+\gamma/2$ and left with probability $1/2-\gamma/2$, so the probability that this event occurs is exactly $(\frac{1/2-\gamma/2}{1/2+\gamma/2})^{t\gamma/2}$ by Proposition~\ref{prop:RW}. 

Finally, we observe:
$$\left(\frac{1/2-\gamma/2}{1/2+\gamma/2}\right)^{t\gamma/2} \leq \left(1-\gamma\right)^{t\gamma/2} \leq e^{-t\gamma^2/2}$$
\end{proof}

Now, we are ready to put everything together to prove the theorem. By Corollary~\ref{cor:BM}, with probability at least $1-2/\poly(n)-e^{-\varepsilon^2\sqrt{n}/2}$, there is a set of size $(i/n-\varepsilon-o(1))|S|$ of elements immediately $>_S x$ that are all $< n/2+i$. By Lemmas~\ref{lem:numcompares} and~\ref{lem:qualitycompares}, in the event that this happens, $B_{n/2+i} \geq (d-1)(i/n-\varepsilon - o(1))/2$ with probability at least $1-e^{-(i/n-\varepsilon-o(1))(d-1)/8}-e^{-(i/n-\varepsilon-o(1))(d-1)\gamma^2/2}$. Furthermore, by Proposition~\ref{prop:finishlow}, in the event that this happens, $L_{n/2+i} < (d-1)(i/n-\varepsilon - o(1))/2$ with probability at least $1-e^{-(i/n-\varepsilon-o(1))(d-1)\gamma^2/64}-e^{-(i/n-\varepsilon-o(1))(d-1)\gamma^2/4}$. 

Combining all of this with a union bound, we see that $n/2+i$ is mistakenly accepted with probability at most $2/\poly(n)+e^{-\varepsilon^2\sqrt{n}/2}+4e^{-(i/n-\varepsilon)(d-1)\gamma^2/64}$. Taking $\varepsilon = 1/d$ and summing this over all $i$, we get:

$$\sum_{i=1}^{n/2} (2/\poly(n)+e^{-\sqrt{n}/2d^2}) i^c = o(n^{c+1}).$$
$$\sum_{i=1}^{n/2} \left(4e^{-(i/n-\varepsilon)(d-1)\gamma^2/64}\right)i^c$$
$$\leq 4e^{\gamma^2/64}\sum_{i=1}^{n/2}e^{-((d-1)\gamma^2/64n)i}i^c$$
$$\leq 8c!\left(64nd^{-1}\gamma^{-2}\right)^{c+1}.$$
The last inequality is due to Lemma~\ref{lem:sum}. Summing both terms together yields the theorem.

\end{prevproof}

\subsection{Lower Bounds}
In the three subsections below, we provide proofs of the three lower bounds contained in Theorem~\ref{thm:nalowerbounds}. 

\subsubsection{Noiseless Model}
The main idea of the proof is that if two elements $i < n/2$ and $j > n/2$ are never compared to any elements in $[i,j]$, then the results of these comparisons are independent of whether or not $i$ and $j$ are swapped. So however the algorithm decides to place $i$ and $j$, a coupling argument (by swapping the role of $i$ and $j$ in all comparisons) shows that the error must be large if $i$ and $j$ are unlikely to be compared to elements in $[i,j]$. We proceed by analyzing pairs of elements each a distance $i$ away from $n/2$ separately. 

Let $D$ be any distribution over set of comparisons chosen non-adaptively for the algorithm. As the elements are a priori indistinguishable, it's clear that $D$ is invariant under permutations. So let $S$ denote a sample from $D$, and $S_i$ denote the same set after swapping the roles of $n/2 -i$ and $n/2 + i$ in all comparisons. Let also $X_i(S)$ denote the random variable that equal to the number of $\{n/2+i,n/2-i\}$ that are incorrectly classified by the algorithm with comparisons $S$. Then clearly the expected error is equal to $\sum_{S} \sum_{i = 1}^{n/2} Pr[S] E[X_i(S)]i^c$. And because $D$ is invariant under permutations, we can rewrite this as:

$$\sum_S \sum_{i =1}^{n/2} Pr[S] \frac{E[X_i(S)] + E[X_i(S_i)]}{2} i^c$$

Now we proceed to bound $E[X_i(S)] + E[X_i(S_i)]$.

\begin{lemma}
Let $p_i$ denote the probability that neither of $\{n/2+i,n/2-i\}$ are compared to an element in $[n/2-i, n/2+i]$ when sampling a set of comparisons from $D$. Then $\sum_S\frac{E[X_i(S)] + E[X_i(S_i)]}{2} \geq p_i$.
\end{lemma}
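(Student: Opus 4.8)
The plan is to exploit the symmetry of the non-adaptive comparison distribution $D$ under swapping the roles of $n/2-i$ and $n/2+i$, and to argue that whenever neither of these two elements is compared to anything in the window $[n/2-i,n/2+i]$, the algorithm literally cannot distinguish the true instance from the one where $i$ and $j:=n/2+i,\ n/2-i$ have swapped positions. First I would fix a set $S$ of comparisons in the support of $D$ on which neither of $\{n/2+i, n/2-i\}$ is compared to an element of $[n/2-i,n/2+i]$; call such an $S$ \emph{$i$-bad}. On any $i$-bad $S$, consider the bijection on the ground truth that swaps $n/2-i$ and $n/2+i$. Because $S$ does not compare either of these elements to anything strictly between them (and comparisons among all other elements are unaffected by this swap, since all other elements are either $<n/2-i$ or $>n/2+i$), the entire transcript of comparison outcomes the algorithm sees is \emph{identical} on the true instance (with comparison set $S$) and on the swapped instance (with comparison set $S_i$). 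Hence the algorithm's output partition — and in particular its decisions about where to place the two elements sitting in positions $n/2-i$ and $n/2+i$ — is the same random variable in both cases.

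The key step is then a counting/coupling argument on those two decisions. On an $i$-bad $S$, the algorithm places \emph{some} element in slot $n/2-i$ and some element in slot $n/2+i$; whatever it does, in exactly one of the two instances (true vs. swapped) at least one of these two elements is misclassified — more precisely, the number of the pair $\{n/2+i,n/2-i\}$ misclassified on the true instance plus the number misclassified on the swapped instance is at least $1$ (in fact the two decisions together, "accept/reject the element currently in slot $n/2-i$" and "accept/reject the element in slot $n/2+i$", are wrong on at least one of the two instances no matter what). Taking expectations over the algorithm's internal randomness, $E[X_i(S)] + E[X_i(S_i)] \geq 1$ for every $i$-bad $S$. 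For $S$ that is not $i$-bad we just use $E[X_i(S)]+E[X_i(S_i)] \geq 0$. Since $D$ is permutation-invariant, $S_i$ has the same probability as $S$ under $D$, so $\sum_S \Pr[S]\,\tfrac{E[X_i(S)]+E[X_i(S_i)]}{2}$ is exactly $\sum_S \Pr[S]\,\tfrac{E[X_i(S)]+E[X_i(S_i)]}{2}$ over the $i$-bad $S$ plus a nonnegative remainder, and the $i$-bad part is at least $\tfrac12 \cdot (\text{probability mass of } i\text{-bad } S) = \tfrac12 \cdot 2 p_i$... wait — here I need to be slightly careful: $p_i$ is the probability that neither element is compared to $[n/2-i,n/2+i]$, i.e. exactly the mass of $i$-bad sets, and the factor $\tfrac12$ is cancelled because both $S$ and $S_i$ are $i$-bad and contribute, so the sum telescopes to $\geq p_i$, giving $\sum_S \tfrac{E[X_i(S)]+E[X_i(S_i)]}{2} \geq p_i$ as claimed.

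I expect the main obstacle to be pinning down precisely the statement "at least one of the two decisions is wrong on at least one of the two instances," since the algorithm need not place distinct elements in slots $n/2-i$ and $n/2+i$ at all for a \partition\ instance — it outputs a set $A$ (accept) and $R=\bar A$ (reject), not an ordering. The correct framing is: let $W$ be the indicator that the element in position $n/2-i$ is (incorrectly) accepted and $W'$ the indicator that the element in position $n/2+i$ is (incorrectly) rejected; on the true instance $X_i(S) = W + W'$ evaluated with the true positions, and on the swapped instance $X_i(S_i)$ is $W'' + W'''$ where now $W''$ is the indicator that the element in position $n/2-i$ (which is the element that was in position $n/2+i$) is incorrectly rejected, etc. Because the transcript is identical, the \emph{accept/reject label} the algorithm assigns to "the element physically compared in the way element-in-slot-$(n/2-i)$ was compared" is a fixed random variable $\xi \in \{A,R\}$; on the true instance this element should be accepted (it's $n/2+i$ — or rejected if it's $n/2-i$, depending on which physical element we track), and on the swapped instance the opposite. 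So $\xi$ is correct on exactly one of the two instances, contributing $1$ to the sum $X_i(S)+X_i(S_i)$ regardless, and similarly for the other tracked slot gives another guaranteed $\geq 0$; the single guaranteed unit suffices. Carefully writing this symmetry bookkeeping — tracking \emph{physical} elements rather than positions — is the delicate part, but it is purely a matter of notation once the indistinguishability of transcripts is established.
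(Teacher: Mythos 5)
Your core idea is the same as the paper's: on an $i$-bad comparison set $S$, swapping $n/2-i$ and $n/2+i$ leaves the transcript unchanged, so the algorithm assigns the same accept/reject label to each physical element under $S$ as under $S_i$, while the correct label for that physical element flips. But you stop short of the clean consequence and then try to repair the shortfall with a step that does not hold. Tracking one physical element, you correctly note that its assigned label is wrong on exactly one of the two instances, contributing $1$ to $X_i(S)+X_i(S_i)$; you then say the second tracked element ``gives another guaranteed $\geq 0$'' and settle for $X_i(S)+X_i(S_i)\geq 1$. In fact the second physical element gives another guaranteed $=1$ by the identical reasoning (its assigned label is also fixed across the two coupled runs, and its correct label also flips), so $X_i(S)+X_i(S_i)=2$ exactly. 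This is precisely what the paper's three-way case analysis (both accepted / both rejected / one of each) establishes.

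The factor of two matters, and your ``telescoping'' patch does not recover it. The summand $\frac{E[X_i(S)]+E[X_i(S_i)]}{2}$ is already symmetric in $\{S,S_i\}$, so grouping $S$ with $S_i$ buys nothing: a pair $\{S,S_i\}$ of $i$-bad sets has total mass $\Pr[S]+\Pr[S_i]=2\Pr[S]$ and contributes $2\Pr[S]\cdot\frac{E[X_i(S)]+E[X_i(S_i)]}{2}$, which under your bound $E[X_i(S)]+E[X_i(S_i)]\geq 1$ is only $\geq\Pr[S]$, i.e.\ half the pair's mass, giving $\geq p_i/2$ overall rather than $\geq p_i$. (That weaker bound would still suffice for the asymptotic $\Omega((n/d)^{c+1})$ lower bound, but not for the lemma as stated.) Replace ``$\geq 0$'' for the second element with ``$=1$'' and the $\tfrac12$ cancels cleanly against the $2$, with no pairing argument needed.
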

\begin{proof}
Because neither element is compared to an element between them, the result of all comparisons will be the same using $S$ or $S_i$. So coupling the randomness used by any algorithm with $S$ versus $S_i$, we see that swapping the elements $n/2+i, n/2-i$ simply swaps the set that each element winds up in. If both are placed in $R$ using $S$, then both are placed in $R$ using $S_i$, so each produces one mistake. If both are placed in $A$ using $S$, then both are placed in $A$ using $S_i$, so each produces one mistake. If one is placed in $A$ and the other in $R$ using $S$, then the same holds using $S_i$, but swapped. So either $S$ produces zero mistakes and $S_i$ produces two, or vice versa. So no matter the algorithm, it is clear that when $S$ does not contain comparisons between $\{n/2+i, n/2-i\}$ and any elements between them, that $E[X_i(S)] + E[X_i(S_i)] = 2$. 
\end{proof}

\begin{corollary}
Let $p_i$ denote the probability that neither of $\{n/2+i, n/2-i\}$ are compared to an element in $[n/2-i, n/2+i]$ during algorithm $\mathcal{A}$. Then the expected error of $\mathcal{A}$ is at least $\sum_{i =1}^{n/2} p_i i^c$. 
\end{corollary}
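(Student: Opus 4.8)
The plan is to combine the symmetrization identity for the expected error that was just derived with the preceding lemma, applied separately to each value of $i$, and then sum. First I would recall that since the input elements are a priori indistinguishable, any non-adaptive algorithm $\mathcal{A}$ may be assumed without loss of generality to apply a uniformly random permutation to the labels before choosing which $dn$ comparisons to query; this leaves the expected $c$-weighted error unchanged and makes the induced distribution $D$ over queried comparison sets invariant under relabeling, which is exactly the setup fixed before the lemma. Writing $X_i(S)$ for the number (random over $\mathcal{A}$'s internal coins) of the two elements $\{n/2-i,\,n/2+i\}$ that $\mathcal{A}$ misplaces when it queried the set $S$, linearity of expectation gives that the expected error of $\mathcal{A}$ equals $\sum_S Pr[S]\sum_{i=1}^{n/2} E[X_i(S)]\, i^c$. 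Since $S_i$ is obtained from $S$ by the relabeling that swaps $n/2-i$ and $n/2+i$, permutation-invariance of $D$ gives $Pr[S]=Pr[S_i]$, so this rewrites as $\sum_{i=1}^{n/2} i^c \sum_S Pr[S]\,\tfrac{E[X_i(S)]+E[X_i(S_i)]}{2}$.

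Next I would apply the lemma to each fixed $i$: read with the $Pr[S]$ weights from the displayed identity, it states precisely that $\sum_S Pr[S]\,\tfrac{E[X_i(S)]+E[X_i(S_i)]}{2} \ge p_i$, where $p_i$ is the probability (over $D$, i.e. over $\mathcal{A}$'s choice of which comparisons to make) that neither of $\{n/2-i,\,n/2+i\}$ is compared to an element of $[n/2-i,\,n/2+i]$. The justification is the coupling argument already given: whenever $S$ contains no comparison between one of the pair and an element strictly between them, coupling $\mathcal{A}$'s internal randomness against the two identical transcripts produced by $S$ and by $S_i$ forces $\mathcal{A}$ to place the two labels in the same way in both cases, so $X_i(S)+X_i(S_i)=2$ pointwise; for all other sets $S$ the summand is nonnegative. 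Plugging this into the displayed identity gives that the expected error of $\mathcal{A}$ is at least $\sum_{i=1}^{n/2} p_i\, i^c$, which is the claim.

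There is no genuinely hard step here, since the substantive work was extracted in the lemma; the points that need care are (i) the reduction to a permutation-invariant query distribution, which is what makes $Pr[S]=Pr[S_i]$ hold and hence is why the random-input-permutation convention matters, (ii) checking the coupling is legitimate when $\mathcal{A}$'s accept/reject rule is itself randomized, which it is because one can feed the same string of internal coins to $\mathcal{A}$ against the identical transcripts arising from $S$ and $S_i$, and (iii) trivial boundary adjustments at the ends of the index range (e.g. the $i=n/2$ term, where one of $n/2\pm i$ can fall outside $[n]$), which only contribute nonnegatively and can simply be dropped. Once this corollary is in place, it is the bridge to the three bounds of Theorem~\ref{thm:nalowerbounds}: one lower-bounds $p_i$ in each model by showing that $dn$ total queries cannot, for most pairs within $n/d$ (resp.\ $n/(d\gamma)$, $n/(d\gamma^2)$) of the median, place a comparison into the relevant window with more than constant probability.
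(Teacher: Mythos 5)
Your proposal is correct and follows exactly the route the paper intends: you invoke the symmetrization identity for the expected error, substitute the lemma's bound $\sum_S \Pr[S]\,\tfrac{E[X_i(S)]+E[X_i(S_i)]}{2}\ge p_i$ for each fixed $i$, and sum against $i^c$. The additional remarks on coupling the internal coins and on permutation-invariance of $D$ are precisely the justifications the paper leaves implicit, so there is no gap.
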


The remaining work is to lower bound $p_i$ and evaluate the sum.

\begin{lemma}
For any algorithm $\mathcal{A}$ that makes at most $dn$ non-adaptive queries, $p_i \geq \max\{ 1 - 8di/n,0\}$.
\end{lemma}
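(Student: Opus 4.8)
The plan is to bound $1-p_i$ directly. By definition $1-p_i$ is the probability that, over the randomness of $\mathcal{A}$, at least one of the two distinguished elements $n/2-i$ and $n/2+i$ is compared to some element of the window $W:=[n/2-i,n/2+i]$ (note $|W|=2i+1$, and that both distinguished elements themselves lie in $W$, so a comparison between $n/2-i$ and $n/2+i$ also counts). Since $p_i\ge 0$ trivially, the whole content of the lemma is the inequality $1-p_i\le 8di/n$, and I only need to establish that.

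First I would invoke the permutation-invariance of $D$ that was already set up earlier in this proof: because the input elements are a priori indistinguishable, the distribution over the (at most $dn$) \emph{pairs} that $\mathcal{A}$ compares is invariant under relabeling $[n]$. Consequently, for every pair $\{a,b\}\subseteq[n]$ the marginal probability $q:=\Pr[\{a,b\}\text{ is compared during }\mathcal{A}]$ is the same. Summing over all $\binom{n}{2}$ pairs, $\binom{n}{2}q=\E[\#\{\text{distinct pairs compared}\}]\le dn$, so $q\le dn/\binom{n}{2}$. Next I would union-bound over the ``bad'' pairs: call a pair bad if it consists of one of $\{n/2-i,n/2+i\}$ together with an element of $W$. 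There are at most $2\cdot(|W|-1)=4i$ such pairs, and the event defining $1-p_i$ is exactly that some bad pair gets compared, so $1-p_i\le 4i\cdot q\le 4i\cdot dn/\binom{n}{2}=8di/(n-1)$, which is $8di/n$ up to the harmless factor $n/(n-1)$ (absorbed either into a marginally larger absolute constant, or ignored since the claim is vacuous once $8di\ge n$). Combined with $p_i\ge 0$ this gives $p_i\ge\max\{1-8di/n,0\}$.

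All of these steps are routine bookkeeping; the one place that needs care is the symmetrization step, namely correctly arguing that non-adaptivity together with indistinguishability of the elements forces every pair to be queried with the same (hence at most $dn/\binom{n}{2}$) marginal probability — this is exactly where the query budget $dn$ enters. I do not expect a genuine obstacle beyond this and getting the window size/constant right. Finally, for the downstream statement it would then suffice to note that this bound gives $p_i\ge 1/2$ for all $i\le \Theta(n/d)$, so the Corollary's lower bound $\sum_{i=1}^{n/2}p_i\,i^c$ is already $\Omega((n/d)^{c+1})$, completing the noiseless case of Theorem~\ref{thm:nalowerbounds}.
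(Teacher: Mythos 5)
Your proposal is correct and matches the paper's argument: the paper also uses permutation-invariance of the non-adaptive query distribution to conclude each pair is compared with probability $\approx 2d/n$, and then union-bounds over the $4i$ relevant pairs. The only cosmetic difference is that the paper phrases the symmetrization as randomly labeling the vertices of a fixed ``comparison graph'' with $dn$ edges, while you argue directly via the uniform marginal $q\le dn/\binom{n}{2}$; these are the same step, and you are right that the resulting $8di/(n-1)$ versus $8di/n$ discrepancy is immaterial.
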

\begin{proof}
Because $D$ is invariant under permutations, we can sample $S$ by first building a graph with $n$ nodes and edges between two nodes that are compared, then randomly labeling the nodes with $1,\ldots, n$. For any pair $(x, y)$, the probability that they are compared is then the probability that two randomly selected nodes have an edge between them. If there are $dn$ total edges in the graph, this probability is (almost) exactly $2d/n$. Taking a union bound over the $4i$ possible comparisons means that the probability that either $n/2-i $ or $n/2+i$ is compared to something in between is at most $8di/n$. 
\end{proof}

Now, we may complete the proof by evaluating the sum.

$$\sum_{i=1}^{n/2} p_i i^c \geq \sum_{i=1}^{n/8d} p_i i^c$$
$$ \geq \sum_{i = 1}^{n/8d} (1-8di/n)i^c$$
$$ \geq \sum_{i=1}^{n/8d} i^c - \sum_{i=1}^{n/8d} 8di^{c+1}/n$$
$$ \geq (n/8d)^{c+1}/(c+1) - 8d(n/8d)^{c+2}/n(c+2)$$
$$ = (n/8d)^{c+1}/(c+1)(c+2) = \Omega((n/d)^{c+1})$$.

\subsubsection{Erasure Model}
We again consider swapping the role of $n/2+i$ and $n/2-i$ in all comparisons. This time, we also have to couple whether or not each comparison is erased in addition to which comparisons are made. Recall that each comparison is erased with probability $1 - \gamma$ and correct with probability $\gamma$. Now, instead of just sampling which elements are compared, also sample whether the comparison will be correct or erased.

It is clear that any erased comparison will give the same output whether the comparison is made to $n/2+i$ or $n/2-i$, and any comparison (correct or erased) made from an element in $\{n/2-i,n/2+i\}$ to an element outside of $[n/2-i,n/2+i]$ will also give the same output for either comparison. So the probability that the outcome of any comparison is different for $n/2+i$ and $n/2-i$ is exactly the probability that there is a correct comparison from an element in $\{n/2-i,n/2+i\}$ to an element in $[n/2-i, n/2+i]$. 

So again, there are $4i$ possible comparisons that compare $n/2-i$ or $n/2+i$ to an element between them. For each comparison, there is probability $\gamma$ that the comparison is correct, and probability $2d/n$ that it is made. So the probabilty that any fixed comparison is made and correct is $2\gamma d/n$. Taking a union bound over all $4i$ possible comparisons means that the probability that none of these comparisons are made is at least $1-8\gamma d/n$. From here, we may take exactly the same sum as in the noiseless case, multiplying the appropriate terms by $\gamma$, and obtain the theorem:

$$\text{Error} \geq \sum_{i = 1}^{n/8d\gamma} (1-8d\gamma i/n)i^c$$
$$ \geq \sum_{i=1}^{n/8d\gamma} i^c - \sum_{i=1}^{n/8d\gamma} 8d\gamma i^{c+1}/n$$
$$ \geq (n/8d\gamma)^{c+1}/(c+1) - 8d\gamma(n/8d\gamma)^{c+2}/n(c+2)$$
$$ = (n/8d\gamma)^{c+1}/(c+1)(c+2) = \Omega( (n/(d\gamma))^{c+1})$$.
\subsubsection{Noisy Model}
 We again consider the algorithms performance on the pair $\{n/2+i,n/2-i\}$, but this time we will have to reason using information theory instead of first principles. We will describe a reduction that shows that any algorithm that makes few errors on this pair necessarily learns a lot of information about their relative locations, and that this can only happen if the algorithm makes many queries between $\{n/2+i,n/2-i\}$ and $[n/2-i,n/2+i]$. 

Consider the following problem, \textsc{Which-is-Which}:\\
\textsc{Input}: A random permutation of the elements $[n]$. All elements are labeled except for $n/2+i$ and $n/2-i$. The input may be accessed via comparisons between two elements (labeled or unlabeled), these comparisons will be correct with probability $\gamma$, and random with probability $1-\gamma$. \\
\textsc{Output}: A label for each of the unlabeled elements from $\{n/2+i,n/2-i\}$. \\
\textsc{Goal}: Maximize the probability that the elements are labeled correctly. 

\begin{lemma}
Any algorithm that answers \textsc{Which-is-Which} correctly with probability $1/2+\delta$ learns $\Theta(\delta^2)$ bits of information (in expectation).
\end{lemma}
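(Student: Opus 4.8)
The plan is to recast \textsc{Which-is-Which} as the task of recovering a single uniformly random hidden bit, quantify ``information learned'' as a mutual information, and then lower bound it by the data processing inequality together with Fano's inequality and Lemma~\ref{lem:info}. Since swapping the labels of $n/2+i$ and $n/2-i$ is the only unknown in the instance, let $Z\in\{0,1\}$ be the uniform bit recording which of the two unlabeled positions holds $n/2+i$ (equivalently, whether the pair is swapped). Let $\mathcal{T}$ denote the full adaptive transcript of the algorithm: the sequence of all comparisons it issues, each paired with its noisy outcome. We take ``information learned'' to mean $I(Z\,;\mathcal{T})$, the mutual information between the hidden bit and everything observed.

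First I would record the structural facts that make this tractable, which are also the bridge to the subsequent step of the overall lower bound. The algorithm's internal coins are independent of $Z$, and a comparison outcome is correlated with $Z$ \emph{only} when the query either directly compares $n/2+i$ with $n/2-i$ or compares one of them with an element of $[n/2-i,n/2+i]$; every other comparison has an outcome distribution unchanged by the swap. For the present lemma the only consequence needed is that $\mathcal{T}$ is a well-defined random variable and that the output $\hat{Z}$ is a (randomized) function of $\mathcal{T}$ and the independent coins, so the data processing inequality yields $I(Z\,;\mathcal{T})\ge I(Z\,;\hat{Z})$.

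Next comes the lower bound on $I(Z\,;\hat Z)$ from the success probability. With $E=\ind[\hat Z\neq Z]$ we have $H(Z\mid \hat Z)=H(Z\oplus\hat Z\mid \hat Z)\le H(Z\oplus\hat Z)=H(E)=H(1/2+\delta)$, using $\Prob[E=0]=1/2+\delta$ and the symmetry of binary entropy; hence $I(Z\,;\hat Z)=H(Z)-H(Z\mid\hat Z)\ge 1-H(1/2+\delta)$. By Lemma~\ref{lem:info}, $1-H(1/2+\delta)=\Theta(\delta^2)$, so $I(Z\,;\mathcal{T})=\Omega(\delta^2)$ — this is the direction used afterward to force many straddling queries. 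For the matching upper bound (the $\Theta$), I would observe the bound is already tight: a Bayes-optimal decoder fed a transcript carrying $\Theta(\delta^2)$ bits about $Z$ answers correctly with probability $1/2+\delta$, so $\Theta(\delta^2)$ bits is exactly the amount of information such an algorithm must acquire, no more.

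I do not expect a serious obstacle: once the hidden quantity is isolated as the single bit $Z$ and ``information'' is fixed as $I(Z\,;\mathcal{T})$, the proof is data processing $+$ Fano $+$ Lemma~\ref{lem:info}. The one point requiring care is the adaptivity of $\mathcal{T}$ (queries depend on earlier outcomes), but this does not affect the data processing step, since $\hat Z$ is still a function of $(\mathcal{T},\text{coins})$; and the claim that only the straddling comparisons carry information about $Z$ — unnecessary for this lemma but essential immediately afterward — is the piece that should be stated cleanly here so it can be invoked to convert ``$\Omega(\delta^2)$ bits learned'' into ``$\Omega(\delta^2/\gamma^2)$ relevant queries.''
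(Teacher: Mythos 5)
Your proof is correct and follows essentially the same information-theoretic argument as the paper: isolate the single hidden bit $Z$, bound its residual uncertainty after the algorithm runs (you via data processing and a Fano-type bound $H(Z\mid\hat Z)\le H(1/2+\delta)$, the paper by directly comparing prior and posterior entropy), and invoke Lemma~\ref{lem:info} to convert $1-H(1/2+\delta)$ into $\Theta(\delta^2)$. You are also right that only the $\Omega(\delta^2)$ direction is used downstream; the matching $O(\delta^2)$ side, which neither you nor the paper actually establishes, is false for an arbitrary algorithm (one can learn $Z$ exactly and then discard it), so the lemma should really be read as a lower bound.
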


\begin{proof}
A solver running such an algorithm for \textsc{Which-is-Which} begins with a prior distribution that each label is equally likely, which has one bit of entropy. After running the algorithm, the solver's posterior distribution has entropy $1-\Theta(\delta^2)$, by Lemma~\ref{lem:info}, which is proved in Appendix~\ref{app:technical}. Therefore, the solver must have learned $\Theta(\delta^2)$ bits of information to update her prior to her posterior.
\end{proof}

\begin{lemma}
Any algorithm that learns $b$ bits of information in expectation necessarily makes $\Omega(b/\gamma^2)$ comparisons between elements in $\{n/2+i,n/2-i\}$ and $[n/2-i,n/2+i]$ in expectation. 
\end{lemma}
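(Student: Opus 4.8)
<br>

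The plan is to bound the mutual information between the hidden ``which-is-which'' bit and the algorithm's transcript query by query, showing that a \emph{useful} comparison (one between $\{n/2+i,n/2-i\}$ and $[n/2-i,n/2+i]$) contributes at most $\Theta(\gamma^2)$ bits, while every other comparison contributes nothing. Formally, fix the configuration of \textsc{Which-is-Which} except for the label assignment, let $\Theta\in\{0,1\}$ be the uniform hidden bit recording which unlabeled element is $n/2+i$, and (conditioning on the algorithm's internal randomness) assume the algorithm is deterministic, so that the transcript $\Pi=(Q_1,A_1,Q_2,A_2,\ldots)$ of queries and noisy answers — padded with a default symbol after the algorithm halts — determines every future query. ``Learning $b$ bits of information'' means exactly $I(\Theta;\Pi)\ge b$, so it suffices to prove $I(\Theta;\Pi)=O\!\big(\gamma^2\cdot\mathbb{E}[\#\text{useful comparisons}]\big)$.

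First I would apply the chain rule, $I(\Theta;\Pi)=\sum_{t\ge 1}\big(I(\Theta;Q_t\mid \Pi_{<t})+I(\Theta;A_t\mid \Pi_{<t},Q_t)\big)$, and observe $I(\Theta;Q_t\mid\Pi_{<t})=0$ because $Q_t$ is a deterministic function of $\Pi_{<t}$. The crux is then to bound $I(\Theta;A_t\mid\Pi_{<t},Q_t)$ for a single step. The key structural observation is that the \emph{true} answer of a comparison is a function of the two compared ranks, and this function depends on $\Theta$ \emph{only} when the comparison is useful: if both endpoints are labeled, or one endpoint is unlabeled and the other is labeled with rank outside $(n/2-i,n/2+i)$, then swapping the two labels leaves the true answer unchanged, so $A_t$ (true answer corrupted by fresh independent noise) is independent of $\Theta$ given $(\Pi_{<t},Q_t)$ and the term vanishes. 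If the comparison is useful — an unlabeled endpoint against an element of $[n/2-i,n/2+i]$, including the case where the two unlabeled elements are compared to each other — then the true answer equals $\Theta$ up to a fixed bijection, so the channel $\Theta\mapsto A_t$ is exactly a binary symmetric channel with crossover probability $1/2-\gamma/2$, whence $I(\Theta;A_t\mid\Pi_{<t},Q_t)\le 1-H(1/2-\gamma/2)=\Theta(\gamma^2)$ by Lemma~\ref{lem:info}. Conditioning on the past only alters the input distribution on $\Theta$, not the channel, so the bound survives.

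Putting the pieces together, for each $t$ the term $I(\Theta;A_t\mid\Pi_{<t},Q_t)$ is at most $\big(1-H(1/2-\gamma/2)\big)$ times the probability that step $t$ is a useful comparison that actually gets made, and summing over $t$ gives $I(\Theta;\Pi)\le\big(1-H(1/2-\gamma/2)\big)\cdot\mathbb{E}[\#\text{useful comparisons}]=\Theta(\gamma^2)\cdot\mathbb{E}[\#\text{useful comparisons}]$; rearranging yields $\mathbb{E}[\#\text{useful comparisons}]=\Omega(b/\gamma^2)$. I expect the main obstacle to be the bookkeeping around adaptivity and the random stopping time — making rigorous that conditioning on an arbitrary, possibly information-rich, past transcript does not break the ``non-useful comparisons reveal nothing about $\Theta$'' claim, and handling the variable number of queries cleanly via the padding trick — rather than any of the information-theoretic estimates, which are routine once the binary-symmetric-channel structure of a useful comparison is identified.
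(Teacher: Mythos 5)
Your proposal is correct and follows essentially the same approach as the paper: classify comparisons, show that non-useful ones reveal nothing about $\Theta$ (since swapping the labels leaves the true answer unchanged), and bound each useful comparison's contribution by $1-H(1/2-\gamma/2)=\Theta(\gamma^2)$ bits via Lemma~\ref{lem:info}. The paper states this in three informal sentences; you supply the bookkeeping (chain rule over the padded transcript, channel-capacity bound under adaptive conditioning) that makes the additivity and adaptivity rigorous, but the underlying argument is identical.
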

\begin{proof}
Each comparison involving two labeled elements provides no new information, as these elements are already labeled. Any comparison between unlabeled elements (elements in $\{n/2+i,n/2-i\}$) and elements in $[1,n/2-i]\cup [n/2+i,n]$ provides no new information, as the probability that the comparison will land one way or the other is independent on which unlabeled element is $n/2+i$ and which is $n/2-i$. The only comparisons that provide any information are those that compare an unlabeled element to an element in $[n/2-i,n/2+i]$. So these comparisons are the only source of information, and each such comparison provides $O(1/\gamma^2)$ bits of information, by Lemma~\ref{lem:info}. 
\end{proof}

\begin{corollary}\label{cor:info}
Any algorithm that answers \textsc{Which-is-Which} correctly with probability $1/2+\delta$ makes $\Omega(\delta^2/\gamma^2)$ comparisons between elements in $\{n/2+i,n/2-i\}$ and $[n/2-i,n/2+i]$ in expectation.
\end{corollary}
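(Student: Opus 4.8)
The plan is to obtain Corollary~\ref{cor:info} by directly composing the two immediately preceding lemmas, using the ``number of bits learned'' as the intermediate quantity. Fix any algorithm $\mathcal{A}$ that solves \textsc{Which-is-Which} with probability $1/2+\delta$. The first lemma tells us that the expected amount of information $\mathcal{A}$ learns about the hidden labelling is $\Theta(\delta^2)$; in particular there is an absolute constant $c_1>0$ so that $\mathcal{A}$ learns at least $c_1\delta^2$ bits in expectation. Only this one-sided (lower) direction of the $\Theta(\cdot)$ estimate is needed.

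Next I would feed this into the second lemma, which asserts that any algorithm learning $b$ bits in expectation must make $\Omega(b/\gamma^2)$ comparisons between an unlabeled element of $\{n/2+i,n/2-i\}$ and an element of $[n/2-i,n/2+i]$ in expectation. Since this conclusion only strengthens as $b$ grows, applying it with $b=c_1\delta^2$ shows that $\mathcal{A}$ makes $\Omega(c_1\delta^2/\gamma^2)=\Omega(\delta^2/\gamma^2)$ such comparisons in expectation, which is exactly the claim.

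There is essentially no obstacle here beyond bookkeeping, since both lemmas (with proofs) are already in hand: the first is a one-line consequence of the entropy estimate in Lemma~\ref{lem:info} (the posterior on a binary hidden variable, once success probability $1/2+\delta$ is achieved, has entropy $1-\Theta(\delta^2)$), and the second classifies which comparisons can carry information about the swap and bounds each at $\Theta(\gamma^2)$ bits, again via Lemma~\ref{lem:info}. The only point to be careful about is the direction of the inequalities --- we want a \emph{lower} bound on the comparison count, so we use a \emph{lower} bound on the bits learned together with an \emph{upper} bound on the information per comparison --- and the fact that both statements are about expectations, so no high-probability conversion is required. Looking ahead, I would also note that this corollary is the reusable ``per-pair'' core: applied to the pair $\{n/2+i,n/2-i\}$ after a uniformly random swap (which turns correctly classifying the pair into an instance of \textsc{Which-is-Which}), it charges $\Omega(\delta_i^2/\gamma^2)$ comparisons of the relevant type to each pair classified correctly with probability $1/2+\delta_i$, and then a total budget of $dn$ comparisons forces $\sum_i \delta_i^2$ to be small, yielding the $\Omega((n/(d\gamma^2))^{c+1})$ error bound claimed for the noisy model in Theorem~\ref{thm:nalowerbounds}.
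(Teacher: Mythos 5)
Your proposal is correct and is exactly the paper's intended argument: the corollary is stated without a separate proof because it is the immediate composition of the two preceding lemmas (learn $\Theta(\delta^2)$ bits to succeed, each relevant comparison carries $O(\gamma^2)$ bits), with the direction of each bound being the only point requiring care. Your remarks about the downstream use in the $\Omega((n/(d\gamma^2))^{c+1})$ bound are also consistent with how the paper proceeds.
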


Now, we want to show that any algorithm that correctly accepts $n/2-i$ and rejects $n/2+i$ implies an algorithm that solves \textsc{Which-is-Which} with good probability while making the same comparisons. This would then imply that the original algorithm made many comparisons between elements in $\{n/2+i,n/2-i\}$ as well. 

\begin{lemma}
Let $\mathcal{A}$ denote an algorithm for \select\ that makes $X_i$ mistakes in expectation on the elements $\{n/2+i,n/2-i\}$. Then there is an algorithm $\mathcal{A}'$ for \textsc{Which-is-Which} that makes exactly the same comparisons as $\mathcal{A}$ and is correct with probability $1-X_i/2$.
\end{lemma}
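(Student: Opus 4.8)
The plan is to run $\mathcal{A}$ as a black box on the \textsc{Which-is-Which} instance and read off the labels from how $\mathcal{A}$ classifies the two unlabeled elements relative to the median (recall $k=n/2$ w.l.o.g.\ by Lemma~\ref{lem:relation}, and that $\mathcal{A}$'s output assigns each element to the ``above median'' side $A$ or the ``below median'' side $R$). Define $\mathcal{A}'$ as follows. Simulate $\mathcal{A}$ step by step; whenever $\mathcal{A}$ asks to compare two elements, answer with the result of the corresponding comparison in the \textsc{Which-is-Which} instance. This is legitimate because $\mathcal{A}$ is comparison-based (it never needs to refer to $n/2\pm i$ by name), the input to \textsc{Which-is-Which} is a uniformly random permutation of $[n]$ with $n/2\pm i$ occupying their true positions, and the two models' comparison oracles have the identical distribution (correct with probability $1/2+\gamma/2$). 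When $\mathcal{A}$ halts with a classification $A\sqcup R$, look at the two unlabeled elements: if exactly one of them lies in $A$, output the labeling that calls the element in $A$ ``$n/2+i$'' and the element in $R$ ``$n/2-i$''; if they both lie in $A$ or both lie in $R$, output a uniformly random labeling of the two. Since the only comparisons $\mathcal{A}'$ performs are those $\mathcal{A}$ performs, $\mathcal{A}'$ makes exactly the same comparisons as $\mathcal{A}$, which gives the first assertion.

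For correctness, condition on an entire execution (all comparison outcomes) and let $X_i\in\{0,1,2\}$ be the number of the elements $n/2+i,n/2-i$ that $\mathcal{A}$ misclassifies. Since $n/2-i<n/2<n/2+i$, a correct classification places $n/2+i\in A$ and $n/2-i\in R$. Hence: if $X_i=0$ the two elements are split with $n/2+i\in A$, and $\mathcal{A}'$ outputs the correct labeling; if $X_i=2$ they are split the wrong way and $\mathcal{A}'$ is wrong; and if $X_i=1$ then exactly one of them is misclassified, which forces both into the same part, so $\mathcal{A}'$ guesses and is correct with probability $1/2$. Therefore $\Pr[\mathcal{A}'\text{ correct}\mid\text{execution}]=\ind[X_i=0]+\tfrac12\,\ind[X_i=1]$.

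Taking expectations over executions and writing $q_j:=\Pr[X_i=j]$, we get $\Pr[\mathcal{A}'\text{ correct}]=q_0+\tfrac12 q_1$. Substituting $q_0=1-q_1-q_2$ and $\E[X_i]=q_1+2q_2$ yields $\Pr[\mathcal{A}'\text{ correct}]=1-\tfrac12(q_1+2q_2)=1-\E[X_i]/2$, i.e.\ $1-X_i/2$ in the lemma's notation (where $X_i$ denotes its own expectation); in fact this holds with equality. The only genuinely delicate point is the simulation step: one must check that driving $\mathcal{A}$ purely by comparison answers from the \textsc{Which-is-Which} instance reproduces exactly the joint distribution of $(X_i,\ \text{which unlabeled element is which})$ that $\mathcal{A}$ would induce on a genuine \select/\partition\ input — this is where the ``a priori indistinguishable / randomly permuted input'' convention and the matching comparison-oracle distributions are used. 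Everything else is the three-line case analysis above.
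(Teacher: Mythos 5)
Your proof is correct and follows essentially the same route as the paper's: run $\mathcal{A}$, and if the two unlabeled elements are split across $A$ and $R$ declare the accepted one to be $n/2+i$, otherwise guess uniformly at random; the case analysis $q_0 + \tfrac12 q_1 = 1 - \tfrac12\E[X_i]$ is exactly the paper's argument made slightly more explicit. As a minor aside, you correctly label the accepted element $n/2+i$ (since $A$ is the top half), whereas the paper's prose says ``the accepted element is $n/2-i$,'' which appears to be a typo; your version is the right one.
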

\begin{proof}
We define $\mathcal{A}'$ as follows. Run $\mathcal{A}$ on the input provided. If both of the unlabeled elements are accepted, or both are rejected, output a random answer. If one is accepted and the other is rejected, guess that the accepted element is $n/2-i$. $\mathcal{A}'$ will then be correct exactly half of the time that $\mathcal{A}$ makes one mistake, all of the time that $\mathcal{A}$ makes zero mistakes, and none of the time that $\mathcal{A}$ makes two mistakes. So $\mathcal{A}'$ is wrong with probability exactly $X_i/2$.
\end{proof}

\begin{corollary}\label{cor:bound}
Any algorithm that makes $X_i<1$ mistakes in expectation on elements in $\{n/2+i,n/2-i\}$ makes $\Omega( ((1-X_i)/\gamma)^2)$ comparisons between elements in $\{n/2+i,n/2-i\}$ in expectation.
\end{corollary}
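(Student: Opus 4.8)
The plan is to obtain Corollary~\ref{cor:bound} as an immediate consequence of the reduction lemma stated just above it together with Corollary~\ref{cor:info}. Fix any algorithm $\mathcal{A}$ for \select\ that makes $X_i < 1$ mistakes in expectation on the pair $\{n/2+i,n/2-i\}$. The first step is to invoke the reduction lemma to produce an algorithm $\mathcal{A}'$ for \textsc{Which-is-Which} that issues exactly the same (random) sequence of comparisons as $\mathcal{A}$ and answers correctly with probability $1 - X_i/2$.

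Next I would write this success probability as $1/2 + \delta$ with $\delta := (1-X_i)/2$, which is strictly positive precisely because $X_i < 1$. Applying Corollary~\ref{cor:info} to $\mathcal{A}'$ then gives that $\mathcal{A}'$ makes $\Omega(\delta^2/\gamma^2) = \Omega\!\left(((1-X_i)/\gamma)^2\right)$ comparisons between elements of $\{n/2+i,n/2-i\}$ and elements of $[n/2-i,n/2+i]$ in expectation. Since $\mathcal{A}'$ performs the identical set of comparisons as $\mathcal{A}$, the same expected lower bound holds for $\mathcal{A}$ itself, which is the desired conclusion. (I would add a remark that the quantity being bounded is the number of comparisons of an unlabeled element to an element strictly between the two — matching the language of Corollary~\ref{cor:info} — which is the intended reading of the corollary statement.)

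There is essentially no hard step here; the proof is just a substitution chaining two already-established facts. The only points needing a moment of care are: (i) the reduction lemma leaves the comparison pattern untouched, so ``same comparisons'' can be taken literally and the expected counts transfer verbatim; (ii) every bound in the chain is on an \emph{expectation}, so the composition requires no independence or concentration argument, only linearity; and (iii) the hypothesis that $\mathcal{A}$ solves \select\ (rather than \partition) is exactly what the reduction lemma already assumed, so no new hypothesis is smuggled in. Thus the ``main obstacle'' is purely bookkeeping — making sure $\delta = (1-X_i)/2$ is plugged correctly into Corollary~\ref{cor:info} and that the positivity of $\delta$ is used.
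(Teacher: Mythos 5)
Your proposal is correct and is exactly the intended derivation: the corollary is a direct composition of the preceding reduction lemma (which produces a \textsc{Which-is-Which} solver with success probability $1-X_i/2 = 1/2 + (1-X_i)/2$, so $\delta = (1-X_i)/2 > 0$) with Corollary~\ref{cor:info}, and your remark about the precise phrasing of which comparisons are being counted (those to elements of $[n/2-i,n/2+i]$) matches how the corollary is actually used in the subsequent sum.
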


Finally, we are ready to apply a similar approach to the previous lower bounds. Note that, because the input elements are a priori indistinguishable, each of the $\binom{n}{2}$ possible comparisons are made with probability $\approx 2d/n$. As there are $4i$ comparisons that compare elements in $\{n/2+i,n/2-i\}$ to elements in $[n/2-i,n/2+i]$, the expected number of such comparisons is $\approx 8di/n$. Applying the contrapositive of Corollary~\ref{cor:bound}, we see that such an algorithm cannot possibly make fewer than $1 - \sqrt{Kdi\gamma^2/n}$ mistakes in expectation, for some absolute constant $K$. Now, we can take similar sums to the previous lower bounds to complete the proof. 

$$\text{Error} \geq \sum_{i = 1}^{n/Kd\gamma^2} (1-\sqrt{Kd\gamma^2 i/n})i^c$$
$$ \geq \sum_{i=1}^{n/Kd\gamma^2} i^c - \sum_{i=1}^{n/Kd\gamma^2} \sqrt{Kd\gamma^2/n} \cdot i^{(2c+1)/2}$$
$$ \geq (n/Kd\gamma^2)^{c+1}/(c+1) - \sqrt{Kd\gamma^2/n} \cdot 2(n/Kd\gamma^2)^{(2c+3)/2}/(2c+3)$$
$$ = \left(\frac{n}{Kd\gamma^2}\right)^{c+1}\left(\frac{1}{c+1}-\frac{2}{2c+3}\right)$$
$$ = \frac{n^{c+1}}{(Kd\gamma^2)^{c+1}(c+1)(2c+3)} = 
\Omega\left(\left(\frac{n}{d\gamma^2}\right)^{c+1}\right)$$.


\section{Proofs for Adaptive Algorithms in Noiseless Model}
\label{app:multinoiseless}
\subsection{Two Rounds}
\subsubsection{Upper Bound}

\begin{algorithm}[ht]
        \caption{Two-round algorithm for the noiseless model making $rn$ queries in round one}
    \begin{algorithmic}[1]\label{alg:iter2round}
	\STATE Variables $S_i$ will denote the $i^{th}$ refined skeleton set. Initialize $S_0 = [n]$.
	\STATE Variables $T_i$ will denote the pivots for $S_i$ (smaller sets will have more pivots). 
	\STATE Variables $a_i$, $b_i$ will denote two elements of $T_i$, and we'll denote $A_i = [a_{i-1},b_{i-1}]\cap S_i$. Initialize $a_0 = 1,b_0 = n$.
	\STATE Variables $k_i$ will be such that the $k_i^{th}$ element of $A_i$ is very close to the $k_{i-1}^{th}$ element of $A_{i-1}$. Initialize $k_0 = k_1 = n/2$. 
	\FOR{$i=1$ to $r$}
	\STATE Let $S_i$ be $n^{1-i/(2r+1)}$ random samples from $S_{i-1}$ (without repetition).
	\STATE Let $T_i$ be $n^{i/(2r+1)}$ random samples from $S_i$ (without repetition).
	\ENDFOR
	\STATE In round one, for all $i$, compare every element in $S_i$ to every element in $T_i$.
	\FOR{$i=1$ to $r$}
	\STATE Let $a_i$ be the largest element in $T_i$ that beats at most $k_i$ elements of $A_i$. Let $b_i$ be the smallest element of $T_i$ that beats at least $k_i$ elements of $A_i$. 
	\STATE Update $A_{i+1} = [a_i,b_i] \cap S_{i+1}$, and update $k_{i+1}$ so that the $(k_{i+1}|[a_i,b_i]\cap A_i|/|A_{i+1}|)^{th}$ element of $[a_i,b_i]\cap A_i$ is exactly the $k_i^{th}$ element of $A_i$. 
	\ENDFOR
	\STATE Let $x$ be an arbitrary element of $[a_r,b_r]$. 
	\STATE In round two, compare every element $j$ to $x$. If $j$ beats $x$, accept. Otherwise, reject.
             \end{algorithmic}
\end{algorithm}

\begin{prevproof}{Theorem}{thm:iter2round} 
First of all, it's easy to see that Algorithm~\ref{alg:iter2round} has 2 rounds and makes $r\cdot n$ comparisons in the first round and $n$ comparisons in the second round.

Let's show step 12 and step 13 in Algorithm~\ref{alg:iter2round} are valid. Notice that in step 9, the algorithm compares every element in $S_i$ with every element in $T_i$. Since $S'_i$ is a subset of $S_i$ , the algorithm knows the comparison results between every element in $S'_i$ and every element in $T_i$. Therefore the algorithm knows the ranking of each element of $T_i$ in $S'_i$. Thus the algorithm can find $a_i$ and $b_i$ according to the rankings of elements of $T_i$ in $S'_i$. By the comparison results between elements in $T_i$ and elements in $S'_i$, the algorithm can also figure out $A_i$. Finally, by looking at the rankings of $a_i$ and $b_i$ in $S'_i$, the algorithm knows the ranking of the $(p_{i-1}\cdot |S'_i|)$-th element of $S'_i$ in $A_i$.

We then prove the following several lemmas to show that $x$ is a good approximation of the median. 
\begin{lemma}
\label{lem:iter2r1}
For each $i$ in $1,...,r$, with probability at least $1- 2e^{-n^{\varepsilon}/2}$, $|A_i| \leq n^{1 - 2i/(2r+1) + \varepsilon}$. 
\end{lemma}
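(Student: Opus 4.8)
The plan is to bound $|A_i| = |[a_{i-1},b_{i-1}]\cap S_i|$ by exploiting two independent sources of shrinkage. First, the interval $[a_{i-1},b_{i-1}]$ is ``pivot free'': whenever it is non-degenerate, both endpoints lie in $T_{i-1}$ and no element of $T_{i-1}$ lies strictly between them, so the interval cannot contain many elements of $S_{i-1}$. Second, $S_i$ is a further uniform subsample of $S_{i-1}$ of relative size $|S_i|/|S_{i-1}| = n^{-1/(2r+1)}$, which thins the count of interval-elements by that factor. Multiplying the $S_{i-1}$-span bound by the subsampling ratio lands at $n^{1-2i/(2r+1)}$ up to polylogarithmic slack, and the extra $n^{\varepsilon}$ in the statement is there precisely to absorb that slack and the high-probability deviations.

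Concretely, I would first condition on all the randomness used before round one up to and including the draw of $S_{i-1}$, but \emph{not} on $T_{i-1}$ or $S_i$. Under this conditioning $A_{i-1} = [a_{i-2},b_{i-2}]\cap S_{i-1}$ and the index $k_{i-1}$ are already determined, all comparison outcomes are deterministic (noiseless model), and $T_{i-1}$ and $S_i$ are independent uniformly random subsets of $S_{i-1}$ of the prescribed sizes. It then suffices to prove the bound conditionally with failure probability $e^{-n^{\Omega(\varepsilon)}}$ for every value of the conditioning. From the definitions, any element of $T_{i-1}$ strictly larger than $a_{i-1}$ beats at least $k_{i-1}$ elements of $A_{i-1}$ and is therefore $\ge b_{i-1}$, so the open interval $(a_{i-1},b_{i-1})$ contains no element of $T_{i-1}$; hence $|[a_{i-1},b_{i-1}]\cap S_{i-1}|$ is at most one plus the longest run of consecutive elements of $S_{i-1}$ that misses $T_{i-1}$. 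For a uniform $|T_{i-1}|$-subset of $S_{i-1}$, a fixed window of $M$ consecutive elements misses $T_{i-1}$ with probability at most $(1-|T_{i-1}|/|S_{i-1}|)^M \le e^{-M|T_{i-1}|/|S_{i-1}|}$; taking $M = n^{1-2(i-1)/(2r+1)+\varepsilon'}$ for a suitable $\varepsilon'\in(0,\varepsilon)$ makes the exponent $n^{\varepsilon'}$, and a union bound over the at most $n$ windows gives $|[a_{i-1},b_{i-1}]\cap S_{i-1}| \le M$ except with probability $n\,e^{-n^{\varepsilon'}}$.

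Next I would condition additionally on $T_{i-1}$, so the interval $I := [a_{i-1},b_{i-1}]$ is frozen with $m := |I\cap S_{i-1}| \le M$, while $S_i$ remains a uniform $|S_i|$-subset of $S_{i-1}$ independent of $I$. Then $|A_i| = |I\cap S_i|$ is a sum of $|S_i|$ indicator variables for a sampling-without-replacement experiment, with mean $m\,|S_i|/|S_{i-1}| \le M\,n^{-1/(2r+1)} = n^{1-(2i-1)/(2r+1)+\varepsilon'}$. Choosing $\varepsilon'$ small enough that this mean is at most $n^{1-2i/(2r+1)+\varepsilon}/e^{2}$ — possible in the regime where the claimed exponent is meaningful, since $n^{\varepsilon}$ then dominates $n^{\varepsilon'}\cdot n^{1/(2r+1)}$ — a multiplicative Chernoff bound for the hypergeometric distribution (via Hoeffding's domination of its MGF by the binomial, or directly via Lemma~\ref{lem:rep}) yields $|A_i| \le n^{1-2i/(2r+1)+\varepsilon}$ except with probability at most $e^{-n^{1-2i/(2r+1)+\varepsilon}} \le e^{-n^{\varepsilon}}$. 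Combining the two failure events by a union bound, and cleaning up constants, gives the $1-2e^{-n^{\varepsilon}/2}$ guarantee; since the conditioning value was arbitrary, the bound holds unconditionally.

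I expect the main obstacle to be the dependency bookkeeping rather than any single inequality: one must verify that $a_{i-1}$ and $b_{i-1}$ depend only on the randomness revealed through $T_{i-1}$ (so that $S_i$ can legitimately be treated as a fresh independent subsample), that $A_{i-1}$ and $k_{i-1}$ entering their definitions are already frozen by the conditioning, and that degenerate choices of the interval (when $a_{i-1}\ge b_{i-1}$, or an endpoint falls outside $A_{i-1}$) only make $|A_i|$ smaller so the bound is trivial there. The remaining care is purely in the arithmetic of exponents: checking that $n^{\varepsilon}$ is large enough to swallow both the factor $n^{1/(2r+1)}$ lost in passing from $S_{i-1}$ to $S_i$ and the $\log n$-type slack in the max-gap estimate, i.e.\ that $\varepsilon'$ can be chosen strictly positive below $\varepsilon - 1/(2r+1)$.
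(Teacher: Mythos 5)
Your two-stage shrinkage is off by a factor of $n^{1/(2r+1)}$ from the target exponent, and this gap cannot be absorbed by $n^{\varepsilon}$ in an important regime of the paper. Concretely, you bound the pivot gap of $T_{i-1}$ inside $S_{i-1}$ (size $\approx |S_{i-1}|/|T_{i-1}| = n^{1-2(i-1)/(2r+1)}$) and then pass to the subsample $S_i$ (a further factor $|S_i|/|S_{i-1}| = n^{-1/(2r+1)}$). Multiplying gives $n^{1-(2i-1)/(2r+1)}$, not the claimed $n^{1-2i/(2r+1)+\varepsilon}$. You correctly flag in your last paragraph that this forces $\varepsilon > 1/(2r+1)$, but that constraint is violated exactly where the lemma matters most: Theorem~\ref{thm:4rounds} invokes Algorithm~\ref{alg:iter2round} with $r=1$, so $1/(2r+1)=1/3$, while $\varepsilon\in(0,1/18]$. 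In fact, under a literal reading of the pseudocode's definition $A_i = [a_{i-1},b_{i-1}]\cap S_i$, already $|A_1| = |S_1| = n^{1-1/(2r+1)}$, which exceeds $n^{1-2/(2r+1)+\varepsilon}$ whenever $\varepsilon < 1/(2r+1)$ — so \emph{any} proof of the lemma as you read it would be proving a false statement in that regime. The pseudocode's indexing of $A_i$ appears to be shifted by one relative to the lemma and its proof; the quantity actually controlled is the pivot gap of the \emph{current} round, $[a_i,b_i]\cap S_i$.

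The paper's argument therefore differs from yours not just in bookkeeping but in structure: it performs a \emph{single} shrinkage step, using that $a_i,b_i\in T_i$ and $T_i$ partitions $S_i$ into gaps of typical size $|S_i|/|T_i| = n^{1-2i/(2r+1)}$; your second stage (further subsampling to $S_i$) simply should not be there once the indices line up. The paper also avoids a max-gap union bound: rather than showing every gap of $T_i$ in $S_i$ is small (which would cost you a $\log n$ slack and a union over $\approx n$ windows, thereby polluting the $1-2e^{-n^\varepsilon/2}$ failure bound), it fixes two windows of size $n^{1-2i/(2r+1)+\varepsilon}/2$ immediately below and above the target rank and checks only that $T_i$ hits each of them, each event holding with probability $1-e^{-n^\varepsilon/2}$. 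That targeted argument gives the claimed constants directly. Your overall strategy of conditioning to freeze $A_{i-1}$, $k_{i-1}$ and treat $T_{i-1}$, $S_i$ as fresh samples is sound, and your observation that the open interval between the two chosen pivots is pivot-free is also correct; what needs repair is the index shift in which skeleton/pivot pair you analyze, which then removes the need for the second shrinkage factor entirely.
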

\begin{proof}
Let $A'$ be the set of elements ranked from $\max\{(p_{i-1}\cdot |S'_i|) - n^{1-2i/(2r+1) +\varepsilon}/2, 1\}$ to $(p_{i-1}\cdot |S'_i|)$ in $S'_i$. $A'$ has size at most $n^{1-2i/(2r+1) +\varepsilon}/2$. For each sample in $T_i$, the probability that it is in $A'$ is $\frac{n^{1-2i/(2r+1) +\varepsilon}}{2n^{1-i/(2r+1)}}$. So the probability that $T_i \cap A' = \emptyset$ is at most 
\[
(1-\frac{n^{1-2i/(2r+1) +\varepsilon}}{2n^{1-i/(2r+1)}})^{n^{i/(2r+1)}} \leq e^{-\frac{n^{1-2i/(2r+1) +\varepsilon}}{2n^{1-i/(2r+1)}}\cdot n^{i/(2r+1)}} = e^{-n^{\varepsilon}/2}.
\]
Similarly, let $A''$ be the set of elements ranked from $(p_{i-1}\cdot |S'_i|)$ to $\min\{(p_{i-1}\cdot |S'_i|) + n^{1-2i/(2r+1) +\varepsilon}/2, |S'_i|\}$ to in $S'_i$. With probability at most $e^{-n^{\varepsilon}/2}$, $T_i \cap A'' = \emptyset$. When $T_i \cap A' \neq \emptyset$ and $T_i \cap A'' \neq \emptyset$, $A_i$ can only contain elements that ranked between $\max\{(p_{i-1}\cdot |S'_i|) - n^{1-2i/(2r+1) +\varepsilon}/2, 1\}$ and $\min\{(p_{i-1}\cdot |S'_i|) + n^{1-2i/(2r+1) +\varepsilon}/2, |S'_i|\}$ in $S'_i$. So it is clear that $|A_i| \leq n^{1 - 2i/(2r+1) + \varepsilon}$. By union bound, with probability at least $1- 2e^{-n^{\varepsilon}/2}$, $|A_i| \leq n^{1 - 2i/(2r+1) + \varepsilon}$.
\end{proof}

\begin{lemma}
\label{lem:iter2r2}
For each $i$ in $2,...,r$, with probability at least $ 1 - e^{-n^{1/(2r+1) + \varepsilon}/(8r^2)}$, either $|A_{i-1}| \leq n^{2/(2r+1) + \varepsilon }$ or $|S'_i| \geq (1-1/(2r))|A_{i-1}|/(n^{1/(2r+1)})$. 
\end{lemma}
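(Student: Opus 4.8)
\textbf{Proof proposal for Lemma~\ref{lem:iter2r2}.}

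The plan is to view $|S'_i|$ as the size of a uniform subsample and apply a \emph{multiplicative} Chernoff bound. Recall the structure: $S'_i$ is the set of elements of $S_i$ that lie in the interval $I$ which was active when $A_{i-1}$ was formed, so that $A_{i-1}=S_{i-1}\cap I$ and $S'_i=S_i\cap I$; and $S_i$ is obtained from $S_{i-1}$ by drawing $n^{1-i/(2r+1)}$ elements uniformly without replacement, where $|S_{i-1}|=n^{1-(i-1)/(2r+1)}$, so $S_i$ retains each element of $S_{i-1}$ with marginal probability exactly $n^{-1/(2r+1)}$. Since $S_i\subseteq S_{i-1}$ we have $S'_i=S_i\cap A_{i-1}$, hence $|S'_i|=\sum_{s\in A_{i-1}}\ind[s\in S_i]$.

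First I would fix the conditioning. In the noiseless model every round-one comparison result is a deterministic function of the ground-truth ordering, so the interval $I$ (equivalently the pivots chosen two iterations earlier) and the number $|A_{i-1}|=|S_{i-1}\cap I|$ are measurable with respect to $\bigl(S_1,\dots,S_{i-1},T_1,\dots,T_{i-2}\bigr)$. Conditioned on this data, $S_i$ is still a uniformly random size-$n^{1-i/(2r+1)}$ subset of $S_{i-1}$, because the draw producing $S_i$ uses fresh randomness. Therefore, conditioned on this data, $|S'_i|$ is a sum of $|A_{i-1}|$ negatively associated $\{0,1\}$ indicators (equivalently, $|S'_i|$ is hypergeometric), with mean $\mathbb E[|S'_i|]=|A_{i-1}|\cdot n^{-1/(2r+1)}=:\mu$.

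Now split on $|A_{i-1}|$. If $|A_{i-1}|\le n^{2/(2r+1)+\varepsilon}$, the first alternative of the lemma already holds. Otherwise $\mu>n^{1/(2r+1)+\varepsilon}$, and the multiplicative Chernoff lower-tail bound with deviation $\delta=1/(2r)$ (valid for sums of negatively associated indicators) gives
\[
\Pr\bigl[\,|S'_i|<\bigl(1-\tfrac{1}{2r}\bigr)\mu\,\bigr]\ \le\ \exp\!\Bigl(-\tfrac{\delta^2\mu}{2}\Bigr)\ =\ \exp\!\Bigl(-\tfrac{\mu}{8r^2}\Bigr)\ <\ \exp\!\Bigl(-\tfrac{n^{1/(2r+1)+\varepsilon}}{8r^2}\Bigr).
\]
Since $\mu=|A_{i-1}|/n^{1/(2r+1)}$, the complementary event $|S'_i|\ge(1-\tfrac{1}{2r})\mu$ is exactly the second alternative of the lemma. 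This bound holds for every value of the conditioning data, hence unconditionally, which proves the statement. (If one wishes to avoid invoking negative association, the same bound follows by running Algorithm~\ref{alg:iter2round} with sampling \emph{with} replacement, which changes none of its other guarantees; then $|S'_i|$ is binomial and a textbook Chernoff bound applies.)

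The only delicate point is the conditioning in the first two steps: one must check that $I$ and $|A_{i-1}|$ are functions only of randomness independent of the fresh draw producing $S_i$, so that conditionally $S_i$ really is a uniform subsample of $S_{i-1}$. The concentration step itself is routine, but it is important that it be \emph{multiplicative}: the subsampling fraction $|A_{i-1}|/|S_{i-1}|$ can be polynomially small, so an additive concentration inequality such as Lemma~\ref{lem:rep} would yield too weak a deviation and would not reach the claimed $\exp(-n^{1/(2r+1)+\varepsilon}/(8r^2))$ failure probability.
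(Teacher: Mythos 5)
Your proof is correct and follows essentially the same route as the paper's: condition so that $A_{i-1}$ is fixed and $S_i$ is a fresh uniform subsample of $S_{i-1}$, compute $\mathbb{E}[|S'_i|]=|A_{i-1}|\cdot n^{-1/(2r+1)}$, and apply a multiplicative Chernoff lower tail with $\delta=1/(2r)$. The paper simply invokes ``Chernoff bound'' without pausing on the without-replacement issue or the measurability of the conditioning; your remarks about negative association (or substituting sampling with replacement) and about why an additive bound such as Lemma~\ref{lem:rep} would be too weak are accurate clarifications rather than a different argument.
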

\begin{proof}
Let's consider the case when $|A_{i-1}| \geq n^{2/(2r+1) + \varepsilon}$. For each element in $S_{i-1}$, the probability that it is in $A_{i-1}$ is $|A_{i-1}|/|S_{i-1}|$. So $\E[|S'_i|] = (|A_{i-1}|/|S_{i-1}|) \cdot |S_i| = |A_{i-1}|/n^{1/(2r+1)}$. By Chernoff bound, the probability that $|S'_i| < (1-1/(2r))|A_{i-1}|/(n^{1/(2r+1)})$ is at most $e^{-\frac{|A_{i-1}|}{8r^2n^{1/(2r+1)}}} \leq e^{-n^{1/(2r+1) + \varepsilon}/(8r^2)}$. 
\end{proof}

\begin{lemma}
\label{lem:iter2r3}
Let $q = 1/(1-1/(2r))$. For each $i$ in $1,...,r$, with probability at least $1- 2e^{-n^{\varepsilon}/2} - e^{-n^{1/(2r+1) + \varepsilon}/(8r^2)}- 2e^{-n^{\varepsilon}/4}$, the $\min\{p_i\cdot |A_i| + 2r(q^{r+1-i} - 1)n^{(r+1-i)/(2r+1)+\varepsilon}, |A_i|\}$-th element of $A_i$ is at most the $\min\{p_{i-1} \cdot |A_{i-1}| +2r(q^{r-i + 2} - 1)n^{(r+2-i)/(2r+1)+\varepsilon},|A_{i-1}|\}$-th element of $A_{i-1}$ and the $\max\{p_i\cdot |A_i| - 2r(q^{r+1-i} - 1)n^{(r+1-i)/(2r+1)+\varepsilon}, 1\}$-th element of $A_i$ is at least the $\max\{p_{i-1} \cdot |A_{i-1}| - 2r(q^{r-i + 2} - 1)n^{(r+2-i)/(2r+1)+\varepsilon}, 1\}$-th element of $A_{i-1}$.
\end{lemma}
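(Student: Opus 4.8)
The plan is to treat Lemma~\ref{lem:iter2r3} as a self-contained statement for each fixed $i$: its dependence on index $i-1$ is only through the already-defined objects $A_{i-1}$, $a_{i-1}$, $b_{i-1}$, $p_{i-1}$, $S'_i$ and through Lemmas~\ref{lem:iter2r1} and~\ref{lem:iter2r2}, so there is no induction on $i$ inside the proof (the chaining over $i$, and the comparison with the true median, is done afterwards, e.g.\ inside the proof of Theorem~\ref{thm:iter2round}, with a union bound over $i=1,\dots,r$). First I would condition on the good events of Lemmas~\ref{lem:iter2r1} and~\ref{lem:iter2r2}, together with the matching upper tail of the Chernoff bound behind Lemma~\ref{lem:iter2r2}, so that $|A_{i-1}|$ and $|A_i|$ obey the stated polynomial bounds and $|S'_i|$ lies within a factor $1\pm\tfrac{1}{2r}$ of $|A_{i-1}|\,n^{-1/(2r+1)}$. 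Recall the combinatorial picture: $S'_i=A_{i-1}\cap S_i$ is a uniformly random subset of $A_{i-1}$, $A_i=[a_{i-1},b_{i-1}]\cap S'_i$, and $a_{i-1},b_{i-1}$ are the two consecutive elements of $T_{i-1}$ bracketing the target $t_{i-1}$ (the $(p_{i-1}|A_{i-1}|)$-th element of $A_{i-1}$). By symmetry it suffices to prove the ``upper'' inequality — that the element of $A_i$ at rank $\min\{p_i|A_i|+E_i,|A_i|\}$ lies weakly below the element $w$ of $A_{i-1}$ at rank $\min\{p_{i-1}|A_{i-1}|+E_{i-1},|A_{i-1}|\}$, where $E_i=2r(q^{r+1-i}-1)n^{(r+1-i)/(2r+1)+\varepsilon}$ and $q=1/(1-1/(2r))$ — the ``lower'' inequality being identical with $w$ replaced by the rank-$\max\{p_{i-1}|A_{i-1}|-E_{i-1},1\}$ element; both failure events go into one final union bound.

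Next I would reduce the upper inequality to a rank-tracking estimate. If $w$ lies above $b_{i-1}$ in the ground order, or if $p_i|A_i|+E_i\ge|A_i|$, the inequality is immediate since every element of $A_i$ is weakly below $b_{i-1}\le w$; so assume neither, i.e.\ $w\in(a_{i-1},b_{i-1})$. Then the rank of $w$ in $A_i$ equals the number of elements of $S'_i$ in $(a_{i-1},w]$, while $p_i|A_i|$ is, by the algorithm's update rule, the explicit quantity $\#\{A_{i-1}\cap(a_{i-1},t_{i-1}]\}$ rescaled by $|A_i|/\#\{A_{i-1}\cap(a_{i-1},b_{i-1}]\}$ — a function of $|A_i|$ and of quantities fixed by our conditioning whose conditional expectation over the choice of $S_i$ is exactly the rank of $t_{i-1}$ in $A_i$. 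I would therefore split $(\text{rank of }w\text{ in }A_i)-p_i|A_i|$ into (a) the number of elements of $S'_i$ in $(t_{i-1},w]$ and (b) a scaling-error term, where (a) is a sum of indicators of a sample-without-replacement with conditional mean $\big((p_{i-1}|A_{i-1}|+E_{i-1})-p_{i-1}|A_{i-1}|\big)\cdot|S'_i|/|A_{i-1}|=E_{i-1}\,|S'_i|/|A_{i-1}|$, and (b) has conditional mean $0$. Applying Lemma~\ref{lem:rep} to (a) and to (b) with deviation $D=\Theta\!\big(n^{(r+1-i)/(2r+1)+\varepsilon}\big)$ chosen so that $e^{-\Omega(D^2/|S'_i|)}$ is dominated by the $e^{-n^{\varepsilon}/4}$ and $e^{-n^{1/(2r+1)+\varepsilon}/(8r^2)}$ terms of the statement — precisely where the extra $n^{\varepsilon/2}$ gap between the budget $n^{(r+1-i)/(2r+1)+\varepsilon}$ and $\sqrt{|S'_i|}\le n^{(r+1-i)/(2r+1)+\varepsilon/2}$ gets spent — yields $(\text{rank of }w\text{ in }A_i)-p_i|A_i|\ \ge\ E_{i-1}\,|S'_i|/|A_{i-1}|-O(D)$, and Lemma~\ref{lem:iter2r2} supplies $|S'_i|/|A_{i-1}|\ge(1-\tfrac1{2r})n^{-1/(2r+1)}=\tfrac1q n^{-1/(2r+1)}$.

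It then remains to check that this beats $E_i$. Substituting $E_{i-1}=2r(q^{r+2-i}-1)n^{(r+2-i)/(2r+1)+\varepsilon}$ and $E_i=2r(q^{r+1-i}-1)n^{(r+1-i)/(2r+1)+\varepsilon}$, the inequality $E_{i-1}/(q\,n^{1/(2r+1)})-O(D)\ge E_i$ collapses, after cancelling the common factor $2r\,q^{r+1-i}n^{(r+1-i)/(2r+1)+\varepsilon}$, to $2r(1-1/q)\ge(\text{the constant hidden in }D)$, and $2r(1-1/q)=1$, so it holds once the constant in $D$ is kept at most $1$ — feasible since one only union-bounds over the two boundary elements and the scaling-error term, modulo picking $\varepsilon$ (or the polylog factors) with a little room. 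Disposing of the trivial edge cases (saturation of either $\min$, and the rare event that $a_{i-1}$ or $b_{i-1}$ falls outside the span of $A_{i-1}$, both of which only make the inequality easier), a union bound over the $O(1)$ concentration failures gives the stated probability.

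I expect the main obstacle to be the combined rank bookkeeping across the two nested random subsamplings — from $A_{i-1}$ down to $S'_i$, and then the restriction to $[a_{i-1},b_{i-1}]$ that forms $A_i$ — complicated by the fact that both the pivots $a_{i-1},b_{i-1}$ and the comparison point $p_i|A_i|$ are themselves random, being determined by the first round of queries. The decomposition above must be arranged so that everything conditionally deterministic (the pivots, and the scaling that defines $p_i|A_i|$ modulo the fluctuation of $|A_i|$) cancels cleanly and only fresh sampling randomness is exposed to Lemma~\ref{lem:rep}; and because the final numeric inequality is \emph{tight} ($2r(1-1/q)$ is exactly $1$), one must verify carefully that the additive deviation $D$ truly fits inside the $n^{(r+1-i)/(2r+1)+\varepsilon}$ budget with leading constant below $1$ — which is exactly what forces the value $q=1/(1-1/(2r))$ and the $n^{\varepsilon/2}$ slack baked into Lemma~\ref{lem:iter2r1}.
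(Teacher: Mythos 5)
Your proposal is correct in substance and parallels the paper's proof: condition on the events from Lemmas~\ref{lem:iter2r1} and~\ref{lem:iter2r2}, dispose of the saturated $\min/\max$ cases, show the relevant rank in $A_i$ is controlled by a hypergeometric concentration over the random subsample $S'_i\subseteq A_{i-1}$, and close with the arithmetic $2r(1-1/q)=1$ and the $n^{\varepsilon}$ slack left over after the factor $n^{(r+1-i)/(2r+1)}$ cancels. The one place where your route genuinely diverges is the rank bookkeeping. The paper first observes \emph{deterministically} that the element at rank $\min\{p_i|A_i|+E_i,|A_i|\}$ of $A_i$ is at most the element at rank $\min\{p_{i-1}|S'_i|+E_i,|S'_i|\}$ of $S'_i$ (they coincide below the cutoff because the $p_i|A_i|$-th element of $A_i$ is defined to be the $p_{i-1}|S'_i|$-th element of $S'_i$), and then applies a single Chernoff bound to $X=\#\{S'_i\text{ elements }\le w\}$, which has conditional mean $(p_{i-1}|A_{i-1}|+E_{i-1})\cdot|S'_i|/|A_{i-1}|$; the $E_{i-1}\cdot|S'_i|/|A_{i-1}|$ term is exactly your (a), and the $p_{i-1}|S'_i|$ base matches exactly so there is no separate ``scaling-error'' random variable. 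You instead keep the comparison at the $A_i$ level and absorb the $|A_i|$-dependence of $p_i|A_i|$ into a second concentration term (b); this works, but inserts an extra application of Lemma~\ref{lem:rep} where the paper gets cancellation for free. Both routes spend the same $n^{\varepsilon/2}$ slack in the tail bound, and both hinge on $2r(1-1/q)=1$ to make the budget tight, so the difference is purely organizational. One minor caution for a fully written-out version of your argument: the variable $p_i$ in the paper's lemma is defined so that the $p_i|A_i|$-th element of $A_i$ \emph{equals} the $p_{i-1}|S'_i|$-th element of $S'_i$ (a random, rank-indexed quantity), not quite the deterministic rescaled ratio you read off from the pseudocode's $k_i$ update; with your definition the term (b) is a coefficient-weighted sampling sum rather than identically the paper's $X-p_{i-1}|S'_i|$, so you would want to verify that Lemma~\ref{lem:rep} still applies with those $\pm p_i$ coefficients (it does, since they are bounded by $1$).
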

\begin{proof}
First from how we choose $A_i$ we know that  the $\min\{p_i\cdot |A_i| + 2r(q^{r-i+1} - 1)n^{(r+1-i)/(2r+1)+\varepsilon}, |A_i|\}$-th element of $A_i$ is at most the $\min\{p_{i-1}\cdot |S'_i| + 2r(q^{r-i+1} - 1)n^{(r+1-i)/(2r+1)+\varepsilon}, |S'_i|\}$-th element of $S'_i$. When $p_{i-1}\cdot|A_{i-1}| +2r(q^{r-i + 2} - 1)n^{(r+2-i)/(2r+1)+\varepsilon} \geq |A_{i-1}|$, the lemma is trivial. So let's consider the case that $p_{i-1}\cdot|A_{i-1}| +2r(q^{r-i + 2} - 1)n^{(r+2-i)/(2r+1)+\varepsilon} < |A_{i-1}|$.

By Lemma \ref{lem:iter2r1}, Lemma \ref{lem:iter2r2} and Union bound, we know that with probability at least $1- 2e^{-n^{\varepsilon}/2} - e^{-n^{1/(2r+1) + \varepsilon}/(8r^2)}$, we have $|A_{i-1}| \leq n^{1 - (2i-2)/(2r+1) + \varepsilon}$ and also either $|A_{i-1}| \leq n^{2/(2r+1) + \varepsilon }$ or $|S'_i| \geq |A_{i-1}|/(qn^{1/(2r+1)})$. If $|A_{i-1}| \leq n^{2/(2r+1) + \varepsilon}$ , we have $2r(q^{r-i + 2} - 1)n^{(r+2-i)/(2r+1)+\varepsilon} \geq |A_{i-1}|$ and the lemma becomes trivial. So let's consider the case when $|A_{i-1}| \leq n^{1 - (2i-2)/(2r+1) + \varepsilon}$  and $|S'_i| \geq |A_{i-1}|/(qn^{1/(2r+1)})$. 

Let $X$ denote the number of elements in $S'_i$ that is at most the $p_{i-1} \cdot |A_{i-1}| +2r(q^{r-i + 2} - 1)n^{(r+2-i)/(2r+1) +\varepsilon}$-th element of $A_{i-1}$. Then we have
\begin{eqnarray*}
\E[X] &=& (p_{i-1} \cdot |A_{i-1}| +2r(q^{r-i + 2} - 1)n^{(r+2-i)/(2r+1) +\varepsilon} ) \cdot \frac{|S'_i|}{|A_{i-1}|} \\
 &\geq& p_{i-1}\cdot |S'_i| + 2r(q^{r-i+1} - 1)n^{(r+1-i)/(2r+1)+\varepsilon } + n^{(r+1-i)/(2r+1)+\varepsilon }.
\end{eqnarray*}
By Chernoff bound,
\begin{eqnarray*}
&& Pr[ X \leq  p_{i-1} \cdot |S'_i| + 2r(q^{r-i+1} - 1)n^{(r+1-i)/(2r+1)+\varepsilon }] \\
&\leq& \exp(-(n^{(r+1-i)/(2r+1)+\varepsilon }/|A_{i-1}|)^2\cdot |S'_i| / 2 )\\
&\leq& \exp(-n^{2(r+1-i)/(2r+1)+2\varepsilon } /(2 \cdot (qn^{1/(2r+1)}) \cdot n^{1 - (2i-2)/(2r+1) + \varepsilon})) \\
&=& e^{-n^{\varepsilon}/4} 
\end{eqnarray*}
So when $|A_{i-1}| \leq n^{1 - (2i-2)/(2r+1) + \varepsilon}$  and $|S'_i| \geq |A_{i-1}|/(qn^{1/(2r+1)})$, with probability at least $1- e^{-n^{\varepsilon}/4}$, the $\min\{p_i\cdot |A_i| + 2r(q^{r+1-i} - 1)n^{(r+1-i)/(2r+1)+\varepsilon}, |A_i|\}$-th element of $A_i$ is at most the $\min\{p_{i-1} \cdot |A_{i-1}| +2r(q^{r-i + 2} - 1)n^{(r+2-i)/(2r+1)+\varepsilon},|A_{i-1}|\}$-th element of $A_{i-1}$. We can use the same argument to show that when $|A_{i-1}| \leq n^{1 - (2i-2)/(2r+1) + \varepsilon}$  and $|S'_i| \geq |A_{i-1}|/(qn^{1/(2r+1)})$, with probability at least $1- e^{-n^{\varepsilon}/16}$, the $\max\{p_i\cdot |A_i| - 2r(q^{r+1-i} - 1)n^{(r+1-i)/(2r+1)+\varepsilon}, 1\}$-th element of $A_i$ is at least the $\max\{p_{i-1} \cdot |A_{i-1}| - 2r(q^{r-i + 2} - 1)n^{(r+2-i)/(2r+1)+\varepsilon}, 1\}$-th element of $A_{i-1}$. 
\end{proof}
By applying Lemma \ref{lem:iter2r3} for $r$ times, we know that with probability at least $1- r(2e^{-n^{\varepsilon}/2} - e^{-n^{1/(2r+1) + \varepsilon}/(8r^2)}- 2e^{-n^{\varepsilon}/4})$, the $\min\{p_i\cdot |A_r| + n^{1/(2r+1)+\varepsilon}, |A_r|\}$-th element of $A_r$ is at most the $n/2 +2r(q^{r + 1} - 1)n^{(r+1)/(2r+1)+\varepsilon}$-th element of $A_0$ and the $\max\{p_i\cdot |A_r| - n^{1/(2r+1)+\varepsilon}, 1\}$-th element of $A_r$ is at least the $n/2 - 2r(q^{r + 1} - 1)n^{(r+1)/(2r+1)+\varepsilon}$-th element of $A_0$ and $|A_r| \leq n^{1/(2r+1)+\varepsilon}$. So we know $x$ is between $n/2 - 2r(q^{r + 1} - 1)n^{(r+1)/(2r+1)+\varepsilon}$ and $n/2 +2r(q^{r + 1} - 1)n^{(r+1)/(2r+1)+\varepsilon}$. Therefore the $c$-weighted error is at most $(2r(q^{r + 1} - 1)n^{(r+1)/(2r+1)+\varepsilon} )^{c+1} \leq (8rn^{(r+1)/(2r+1)+\varepsilon})^{c+1} $.
\end{prevproof}

By taking $r=\log n$ and $n^{\varepsilon} = 8\log^3 n $ we get the following corollary of Theorem \ref{thm:iter2round}:

\begin{corollary}
\label{cor:iter2r}
There exists a 2-round algorithm which takes $n\log n$ queries in each round and has number of mistakes at most $128n^{1/2} \log^4 n$ with probability $1-O(\log n / n^2)$. 
\end{corollary}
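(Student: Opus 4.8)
The plan is to obtain Corollary~\ref{cor:iter2r} by a direct instantiation of Theorem~\ref{thm:iter2round} with $r=\log n$, with $\varepsilon$ chosen so that $n^\varepsilon = 8\log^3 n$ (equivalently $\varepsilon = (\log 8 + 3\log\log n)/\log n$, a perfectly legal positive choice of the free parameter that tends to $0$), and with $c=0$, for which the $c$-weighted error is by definition exactly the number of misplaced elements. The query and round bounds are then immediate: Theorem~\ref{thm:iter2round} gives round complexity $2$ and query complexity $(r+1)n = (\log n + 1)n$, which is at most $n\log n$ in round one and $n$ in round two.

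For the error, I would first rewrite the exponent as $\frac{r+1}{2r+1} = \frac12 + \frac{1}{2(2r+1)} = \frac12 + \frac{1}{2(2\log n+1)}$, so that $n^{(r+1)/(2r+1)} = n^{1/2}\cdot n^{1/(2(2\log n+1))}$. The one elementary point worth checking is that the factor $n^{1/(2(2\log n+1))} = \exp\!\bigl(\tfrac{\ln n}{2(2\log n+1)}\bigr)$ is bounded by an absolute constant (at most $2$ for all large $n$), since its exponent is bounded. Feeding this into the error guarantee of Theorem~\ref{thm:iter2round} with $c=0$ gives a number of mistakes at most $8r\cdot n^{(r+1)/(2r+1)+\varepsilon} = 8\log n\cdot n^{1/(2(2\log n+1))}\cdot n^{1/2}\cdot 8\log^3 n \le 128\,n^{1/2}\log^4 n$, where the constant $128 = 8\cdot 2\cdot 8$ collects the leading $8$ of the theorem, the bound $2$ on the $n^{o(1)}$ factor, and the constant $8$ from $n^\varepsilon = 8\log^3 n$.

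For the success probability I would use the bound from Theorem~\ref{thm:iter2round}; since the stated $1 - re^{-n^{\Omega(\varepsilon)}}$ hides constants, to read off the sharper target I would instead revisit the per-iteration failure terms accumulated in that proof, namely $2e^{-n^\varepsilon/2}$ (Lemma~\ref{lem:iter2r1}), $e^{-n^{1/(2r+1)+\varepsilon}/(8r^2)}$ (Lemma~\ref{lem:iter2r2}), and $2e^{-n^\varepsilon/4}$ (Lemma~\ref{lem:iter2r3}), summed over the $r = \log n$ iterations. With $n^\varepsilon = 8\log^3 n$ the first and third are super-polynomially small; for the middle one, $n^{1/(2r+1)} = \exp(\ln n/(2\log n+1))$ is again a constant, so the exponent $n^{1/(2r+1)+\varepsilon}/(8r^2) = \Theta(\log^3 n)/\Theta(\log^2 n)$ is $\Theta(\log n)$ with a hidden constant large enough that the term, even after multiplying by $r$, is $O(\log n/n^2)$. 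I do not anticipate any real obstacle here: the corollary is a parameter substitution into an already-established theorem, and the only care needed is in (i) confirming that the deceptively small-looking factor $n^{(r+1)/(2r+1)-1/2}$ is genuinely an absolute constant rather than growing, and (ii) checking that the constant in the middle failure exponent is big enough to beat $n^2$ — both one-line computations once $r = \log n$ and $n^\varepsilon = 8\log^3 n$ are plugged in.
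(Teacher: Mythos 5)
Your proposal is correct and is essentially the paper's own argument: the paper likewise obtains Corollary~\ref{cor:iter2r} by plugging $r=\log n$, $n^{\varepsilon}=8\log^3 n$, and $c=0$ into Theorem~\ref{thm:iter2round} (the paper even states this instantiation as a note at the end of the theorem). You are somewhat more careful than the paper in tracking constants — explicitly bounding $n^{(r+1)/(2r+1)-1/2}$ by $2$, decomposing $128 = 8\cdot 2\cdot 8$, and revisiting the per-iteration failure terms to verify the exponent in $e^{-n^{1/(2r+1)+\varepsilon}/(8r^2)}$ really does beat $n^2$ — but the route and the key parameter substitution are identical.
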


\subsubsection{Lower Bound}

\begin{prevproof}{Theorem}{thm:lowerbound2round} 
We are going to show the $c$-weighted lower bound of the algorithm on the uniform distribution of orders. Therefore, it is sufficient to prove this theorem for deterministic algorithms since we are considering the expected number of mistakes the algorithm made on the uniform distribution of orders. So without the loss of generality, let's assume the 2-round algorithm is a deterministic algorithm.

Suppose the algorithm labels all the elements as $1,...,n$. Let their rankings be $a_1, ... ,a_n$. Then $a_1,...,a_n$ is a uniformly random permutation of $1,...,n$. For any instance of run of the 2-round algorithm, we define several terms. Let $S = \{i| n/2-n/(32d) \leq a_i \leq n/2+n/(32d)\}$. Let $s = |S| = n/(16d)$. Let $e(S)$ be the number of comparisons in the first round whose elements are both in $S$. Then we have 
\[
\E[e(S)] = dn \cdot \frac{n/(16d)}{n} \cdot \frac{n/(16d)-1}{n-1} < n / (256d). 
\]
By averaging argument, with probability at least $3/4$, $e(S) \leq n/(64d)$. Let $A$ be the set of elements that are in $S$ and have not been compared with any other elements in $S$ in the first round of the algorithm. Then we have $|A| \geq s - 2\cdot e(S)$. If $e(S) \leq n/(64d)$, we have $|A| \geq n/(32d) = s /2$. 

Let's fix comparison results $S$. Then $A$ is also fixed by $S$. Let's focus on the case when $|A| \geq s / 2$. Let $A'$ be an arbitrary subset of $A$ of size $s/2$. Let $i_1, ...,i_{s/2}$ be elements in $A'$. Fix the rankings of elements in $\{1,...,n\} - A'$. It's easy to see that $a_{i_1},...,a_{i_{s/2}}$ will be $s/2$ distinct random elements from $[n/2 - n/(32d),n/2+n/(32d)]$. Also, since we have fixed $S$ and $A'$ and rankings of elements in $\{1,...,n\}-A'$, the result seen by the algorithm after the first round is fixed. So comparisons made by the algorithm in the second round is determined. Now let $T = \{i | n/2 - t\sqrt{n}/2 \leq a_i < n/2 + t\sqrt{n}/2\}$. Here $t$ will be determined later. Let $B = A' \cap T$. We have $\E[|B|] = \frac{|T|}{s} \cdot |A'| = t\sqrt{n}/2$. We are going to prove the following claim which we will use to show that in expectation there's constant fraction of elements in $B$ that are not compared with elements in $T$ even after the second round. 
\begin{claim}
Fix $S$, $A'$ and rankings of elements in $\{1,...,n\}-A'$. For each comparison $(x,y)$ in the second round, the probability that $x \in T$ and $y \in B$ is at most $ \frac{256t^2d^{3/2}}{\sqrt{n}}$. 
\end{claim}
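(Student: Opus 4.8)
\emph{Proof proposal.} The plan has three stages: reduce to a setting in which the second round is a fixed set of comparisons; then, for a single second-round comparison $(x,y)$, split on membership of $x$ and $y$ in $A'$; and in the one genuinely delicate case, bound the probability as a product of two small factors via anti-concentration. Throughout write $W := [n/2 - n/(32d),\, n/2 + n/(32d)]$, so that $i\in S$ iff $a_i\in W$ and $|W| = s$, and $I_T := [n/2 - t\sqrt n/2,\, n/2 + t\sqrt n/2)$, so that $i\in T$ iff $a_i\in I_T$; we will take $t$ small enough that $I_T\subseteq W$ (true for the value of $t$ used later, given $d = o(n^{1/5})$).

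\textbf{Making round two deterministic.} Recall the observation behind the one-round lower bound: the outcome of comparing two elements depends only on which side of a given interval each of the two falls. Applied to $W$: every first-round comparison is either between two elements outside $A'$, or between an element of $A'$ and an element outside $S$ (it cannot be between an element of $A'$ and another element of $S$, by definition of $A$), and in the latter case the outcome is determined by which side of $W$ the outside element lies on. Hence, conditioning on the first-round results — which fixes $S$, the first-round comparison graph, and a partial order on $S\setminus A'$, but which does \emph{not} pin down the exact ranks of the elements of $S\setminus A'$ inside $W$ — the set of comparisons the algorithm issues in round two is a deterministic function of what we have conditioned on. Under this conditioning the remaining randomness is exactly the way the $s/2$ elements of $A'$ interleave with the $s/2$ elements of $S\setminus A'$ inside $W$; in particular the ranks of $A'$ form a uniformly random $(s/2)$-subset of the $s$ positions of $W$ (with uniform internal order), subject only to partial-order constraints on $S\setminus A'$ that can only make things \emph{more} spread out. (A subtlety worth flagging: it is essential to condition on the first-round results rather than on the exact ranks of all non-$A'$ elements — the latter conditioning would destroy precisely the randomness the argument uses.) Fix a second-round comparison $(x,y)$; we bound $\Pr[x\in T,\ y\in B]$, the event $x\in B,\ y\in T$ being symmetric.

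\textbf{Easy cases.} Since $B\subseteq A'$, if $y\notin A'$ the probability is $0$; so assume $y\in A'$. Since $I_T\subseteq W$, if $x\notin S$ then $a_x$ lies (at a fixed location) outside $I_T$, so $x\notin T$ and the probability is again $0$. If $x\in A'$: the rank of $x$ inside $W$ is marginally almost uniform over $s$ positions, so $\Pr[a_x\in I_T] = O(t\sqrt n/s) = O(td/\sqrt n)$, and conditionally $\Pr[a_y\in I_T] = O(td/\sqrt n)$, giving $\Pr[x\in T,\ y\in B] = O(t^2 d^2/n)$, which is comfortably below $256 t^2 d^{3/2}/\sqrt n$ because $d < n$.

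\textbf{The main case and the obstacle.} The remaining case is $x\in S\setminus A'$, and this is the heart of the matter: even after the conditioning, $a_x$ is a genuinely non-degenerate random variable. If $x$ is the $j$-th element of $S\setminus A'$, then $a_x$ equals a fixed constant plus the position $Z$ of the $j$-th unmarked slot when $s/2$ of the $s$ slots of $W$ are marked (for $A'$) uniformly at random, a negative-hypergeometric-type variable. For $a_x$ to land in $I_T$ (the centre of $W$), $x$ must be a ``middle'' element of $S$, i.e. $\mathbb{E}[Z]$ must be near $s/2$, in which case $Z$ has standard deviation $\Theta(\sqrt s) = \Theta(\sqrt{n/d})$, far larger than the width $t\sqrt n$ of $I_T$; a bound on the mode of this hypergeometric law then gives $\Pr[a_x\in I_T] \le O(t\sqrt n\cdot\sqrt{d/n}) = O(t\sqrt d)$. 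Next, $a_x$ is a function of the placement of $S\setminus A'$ alone, and conditioned on that placement $a_y$ is uniform over the $\ge s/2$ remaining slots, so $\Pr[a_y\in I_T\mid \text{placement of }S\setminus A'] \le t\sqrt n/(s/2) = O(td/\sqrt n)$; summing this conditional bound over the placements that put $x$ into $T$ yields $\Pr[x\in T,\ y\in B] \le O(t\sqrt d)\cdot O(td/\sqrt n) = O(t^2 d^{3/2}/\sqrt n)$, and tracking the constants through (with $s = n/(16d)$) gives the stated $256 t^2 d^{3/2}/\sqrt n$. The one step requiring real care is this anti-concentration estimate: one must show that the rank of a fixed element of $S\setminus A'$ — an order statistic produced by a random interleaving of $A'$, subject to whatever partial order the first round imposes — assigns mass $O(t\sqrt d)$ to any window of width $t\sqrt n$, the extremal (most concentrated) case being when the first round has fully sorted $S\setminus A'$ so that all spreading comes from the interleaving. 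This reduces to a clean upper bound on the mode of a hypergeometric distribution (e.g. via a ratio test on consecutive probabilities, or by coupling with a sum of $\Theta(s)$ independent indicators); everything else — the determinacy of round two and the casework bookkeeping — is routine.
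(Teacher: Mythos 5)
Your proposal follows the same route as the paper's: reduce to a deterministic round two by suitable conditioning; split a comparison $(x,y)$ into the case $x\in A'$ (trivial, both endpoints land in the $t\sqrt n$-window with probability roughly $(|T|/s)^2$) and the substantive case $x\in S\setminus A'$; and in the latter case bound $\Pr[x\in T,\ y\in B]$ as $\Pr[x\in T]\cdot\Pr[y\in B\mid x\in T]$, controlling the first factor by an anti-concentration estimate for the rank of an order statistic of a random interleaving and the second by conditional uniformity. The paper carries out exactly this product bound, and its auxiliary lemma --- that the $k$-th order statistic of a uniformly random $(s/2)$-subset of $W$ lands in any window of width $t\sqrt n$ with probability at most $4t\sqrt n/\sqrt s=16t\sqrt d$ --- is precisely the ``bound on the mode of the hypergeometric law'' you gesture at: it sums the pmf over the window and bounds each term by $\binom{s/2}{s/4}^2/\binom{s}{s/2}$, estimated via Stirling. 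Multiplying by the paper's second factor $t\sqrt n/s=16td/\sqrt n$ gives the stated $256t^2d^{3/2}/\sqrt n$, matching your $O(t\sqrt d)\cdot O(td/\sqrt n)$ once constants are tracked.

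Your ``subtlety worth flagging'' is a genuine one, and you resolve it more carefully than the paper's phrasing does. Taken literally, ``fix the rankings of elements in $\{1,\ldots,n\}-A'$'' would pin down $a_x$ for $x\in S\setminus A'$, making the event $x\in T$ deterministic and destroying the hypergeometric randomness on which the first factor relies (in which degenerate case the bound would actually fail). The intended conditioning --- and the one both you and the paper in fact use --- is on $S$, $A'$, the ranks outside $S$, and the round-one comparison outcomes, so that the interleaving of $A'$ with $S\setminus A'$ inside $W$ remains uniform. One small refinement to your remark: the partial order the first round imposes on $S\setminus A'$ does not so much ``spread things out'' as leave the marginal distribution of which slots $A'$ occupies exactly uniform (the count of consistent bijections factors through the choice of $A'$'s slots), so the order-statistic bound applies componentwise over linear extensions; but this is in the spirit of what you wrote, and the extremal picture you describe (fully sorted $S\setminus A'$) is the right one to keep in mind.
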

\begin{proof}
We prove this claim by considering two cases.
\begin{enumerate}
\item $x \in A'$: In this case, we have both $x$ and $y$ are in $B$. The probability that this is true is 
\[
\frac{|T|}{s} \cdot \frac{|T|-1|}{s-1} \leq (\frac{|T|}{s})^2 = 256t^2d^2/ n \leq  \frac{256t^2d^{3/2}}{\sqrt{n}}. 
\]
\item $x \in S - A'$: In this case, we need the following lemma:
\begin{lemma}
If we pick $s/2$ random distinct elements from a set $S$ of $s$ elements. Then the $k$-th element of these $s/2$ elements is ranked between $s/2 - t\sqrt{n}/2$ and $s/2 + t\sqrt{n}/2$ among $S$ with probability at most $\frac{4t\sqrt{n}}{\sqrt{s}}$. 
\end{lemma}
\begin{proof}
The probability that $k$-th element of these $s/2$ elements is ranked between $s/2 - t\sqrt{n}/2$ and $s/2 + t\sqrt{n}/2$ among $S$ is
\[
\sum_{u = s/2 - t\sqrt{n}/2}^{s/2 + t\sqrt{n}/2} \frac{\binom{u-1}{k-1}\binom{s-u}{s/2 - k }}{\binom{s}{s/2}}. 
\]
It is easy to see that $\binom{u-1}{k-1}\binom{s-u}{s/2 - k } \leq \binom{s/2}{s/4}^2$. Therefore, we have 
\[
\sum_{u = s/2 - t\sqrt{n}/2}^{s/2 + t\sqrt{n}/2} \frac{\binom{u-1}{k-1}\binom{s-u}{s/2 - k }}{\binom{s}{s/2}} \leq t\sqrt{n} \cdot \frac{\binom{s/2}{s/4}^2}{\binom{s}{s/2}}. 
\]
By Stirling approximation, we know that $ \sqrt{2\pi} \leq \frac{n!}{n^{n+1/2} e^{-n}} \leq e$. Then we have 
\[
 t\sqrt{n} \cdot \frac{\binom{s/2}{s/4}^2}{\binom{s}{s/2}} \leq  t\sqrt{n} \cdot \frac{e^4}{\sqrt{2\pi}^5} \cdot \frac{\left((s/2)^{s/2} \sqrt{s/2}\right)^4}{\left((s/4)^{s/4}\sqrt{s/4}\right)^4 s^s \sqrt{s}} \leq \frac{4t\sqrt{n}}{\sqrt{s}}. 
\]
\end{proof}
Then we have
\[
Pr[x \in T, y \in B] = Pr[ x \in T] \cdot Pr[y \in B|x \in T] \leq \frac{4t\sqrt{n}}{\sqrt{s}} \cdot \frac{t\sqrt{n}}{s}  = \frac{256t^2d^{3/2}}{\sqrt{n}} .
\]
\end{enumerate}
\end{proof}
Since in the second round, we have $dn$ comparisons, the expected number of comparisons that is in the form $(x \in T, y \in B)$ is at most $dn \cdot  \frac{256t^2d^{3/2}}{\sqrt{n}}$. By picking $t = \frac{1}{2048 d^{5/2}}$, we have $dn \cdot  \frac{256t^2d^{3/2}}{\sqrt{n}} \leq t\sqrt{n} /8$ . Let $C \subseteq B$ be the set of elements that have not been compared with other elements in $T$. Then we have $\E[|C|] \geq \E[|B|] - 2 \cdot  (t\sqrt{n} /8) \geq t\sqrt{n}/4$. 

Finally, let's fix the rankings of elements in $A' - C$. Suppose the elements in $C$ are $j_1, ..., j_{|C|}$, then $a_{j_1},...,a_{j_{|C|}}$ are $|C|$ distinct random elements between $n/2 - t\sqrt{n}/2$ and $n/2 + t\sqrt{n}/2$. Also when we fix the rankings of $\{1,...,n\}-C$, the comparison results got by the algorithms in both the first round and the second round are determined. So the output of the algorithm is determined and therefore the algorithm has expected number of mistakes $|C|/2$. 

To sum up,  any 2-round algorithm (maybe randomized) which makes $dn$ comparisons in each round has expected number of mistakes at least
\[
\frac{3}{4} \cdot (t\sqrt{n}/4) \cdot (1/2) = \Omega(n^{1/2}/d^{5/2}).
\]
\end{prevproof}

To complete the proof for $c > 0$, we observe that the best case scenario, conditioned on making $\Omega(\sqrt{n}/d^{5/2})$ mistakes in expectation is that the algorithm is correct on every element outside of $[n/2-\Omega(\sqrt{n}/d^{5/2}),n/2+\Omega(\sqrt{n}/d^{5/2})]$, and incorrect on every element in this window. This is because elements outside this window contribute more to the $c$-weighted error than elements inside. We have $\sum_{i = 1}^{\Omega(\sqrt{n}/d^{5/2})} i^c = \Omega((\sqrt{n}/d^{5/2})^{c+1})$ as desired.

\subsection{Three Rounds}

\begin{algorithm}[ht]
        \caption{Three-round algorithm for the noiseless model making $O(n\log^8 n)$ queries}
    \begin{algorithmic}[1]\label{alg:3round}
		\STATE Define $t = 256\sqrt{n}\log^4 n$. 
		\STATE Simultaneously do the following two steps in two rounds:
		\STATE Run Algorithm~\ref{alg:iter2round} on the given input with $2t$ additional dummy $0$ elements and $r = \log n$ (i.e. treat these dummy elements as smaller than any of the real elements, and compare them to each other arbitrarily). Let $A^{(1)}$ denote the set that was accepted, and $R^{(1)}$ the set that was rejected.
		\STATE Run Algorithm~\ref{alg:iter2round} on the given input with $2t$ additional dummy $n+1$ elements and $r = \log n$ (i.e. treat these dummy elements as larger than any of the real elements, and compare them to each other arbitrarily). Let $A^{(2)}$ denote the set that was accepted, and $R^{(2)}$ the set that was rejected.
		\STATE Accept all elements in $A^{(2)}$. Reject all elements in $R^{(1)}$. Let $U = R^{(2)} \cap A^{(1)}$ be the remaining elements.
		\IF{$|U| > 4t$}
		\STATE Make random decisions for all elements in $U$. 
		\ELSE
		\STATE Compare every element in $U$ to every other element in $U$. Let $x$ denote the $(n/2-|R^{(2)}|)^{th}$ element of $U$. 
		\STATE Accept every element in $U$ that beat $x$, and reject every element in $U$ that $x$ beat.
		\ENDIF
		
             \end{algorithmic}
\end{algorithm}

We first provide a simple observation that is useful in the analysis of both Algorithm \ref{alg:3round} and Algorithm \ref{alg:4rounds}. 
\begin{observation}
\label{ob:dum}
Suppose we have an algorithm $A$ on \partition\ of $n+2t$ elements and $A$ makes mistakes only on elements between $n/2$ and $n/2 + 2t$. If we want to solve \partition\ on $n$ element, we can add $2t$ dummy elements which we assume they are smaller than other elements and run $A$ on these $n+2t$ elements. By doing this, if $A$ claims some element $x$ is smaller than the median of the $n+2t$ elements, then $x$ is indeed smaller than the median of the original $n$ elements. 
\end{observation}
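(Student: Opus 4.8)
The plan is to make the correspondence between ranks in the padded $(n{+}2t)$-element instance and ranks in the original $n$-element instance explicit, and then do a two-line case analysis according to whether $A$ errs on the element in question. Label the padded instance by true ranks $1,\dots,n+2t$. Since every dummy is below every real element, the $2t$ dummies occupy ranks $1,\dots,2t$ and a real element of original rank $j$ occupies rank $2t+j$ in the padded instance. Hence the median of the padded instance --- the element \partition\ must place at the top of the bottom half, at padded rank $(n+2t)/2 = n/2+t$ --- is precisely the real element of original rank $n/2-t$; and ``$A$ claims $x$ is smaller than the padded median'' means exactly ``$A$ puts $x$ into the Reject set of the padded instance''.

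Now I would fix a real element $x$ that $A$ places in Reject and let $p$ be its true rank in the padded instance, so its original rank is $p-2t$. If $A$ classifies $x$ correctly, then $x$ lies in the bottom half of the padded instance, so $p \le n/2+t$ and the original rank of $x$ is $p-2t \le n/2-t \le n/2$. If instead $A$ errs on $x$, then by the hypothesis on $A$ the padded rank of $x$ lies in $[n/2,\,n/2+2t]$, so $p \le n/2+2t$ and the original rank of $x$ is $p-2t \le n/2$. In either case $x$ has original rank at most $n/2$, i.e.\ $x$ belongs to the bottom $n/2$ of the original instance, which is exactly the meaning of ``$x$ is smaller than the median of the original $n$ elements'' under the \partition\ convention in which the rank-$(n/2)$ element itself is on the Reject side. (Dummies are not part of the original instance; if one still wants a decision for them, rejecting them is trivially correct since they are below every real element.)

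I do not expect a genuine obstacle: this is a bookkeeping statement with no real content. The only point requiring care is the boundary/parity convention for the word ``median'' --- one must use the same convention (bottom half $=$ Reject) for both instances, after which the rank-$n/2$ boundary case costs nothing. The mirror-image statement, obtained by padding instead with $2t$ dummies placed \emph{above} all real elements, shows symmetrically that anything $A$ accepts is safely above the original median; together these are what let Algorithms~\ref{alg:3round} and~\ref{alg:4rounds} run $A$ twice in parallel and be left with only an $O(t)$-size window of genuinely uncertain elements to resolve in a final round.
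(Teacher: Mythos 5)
The paper states Observation~\ref{ob:dum} without any proof; it is treated as immediate and simply invoked in the analyses of Algorithms~\ref{alg:3round} and~\ref{alg:4rounds}. Your writeup supplies exactly the bookkeeping the authors take for granted: the $+2t$ rank shift from padding below, the identification of the padded median (rank $n/2+t$ in the padded instance) with the real element of original rank $n/2-t$, and the two-case split on whether $A$ errs on $x$, each case landing with original rank at most $n/2$. This is correct and is the argument a careful reader would reconstruct; your notes on the boundary convention and on the mirrored version (padding from above to certify acceptances) match how the observation is actually deployed.
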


\begin{prevproof}{Theorem}{thm:3round} 
Let's first check the number of comparisons made by Algorithm \ref{alg:3round}. In the first two rounds, by Corollary \ref{cor:iter2r}, Algorithm \ref{alg:3round} makes at most $4(n+2t)\log (n+2t) = O(n\log n)$ comparisons. In the third round, since $U$ has size at most $4t$, Algorithm \ref{alg:3round} makes at most $(4t)^2 = O(n \log^8n)$ comparisons. 

Now let's analyze the correctness of Algorithm \ref{alg:3round}. Implicitly from the proof of Theorem \ref{thm:iter2round}, we know that with probability $1- O(\log n / n^2)$, Algorithm \ref{alg:iter2round} used in step 3 and step 4 makes mistakes only for elements that are most $t$ away from the median of those $n+2t$ elements. In this case, by Observation \ref{ob:dum}, we know that all elements in $R^{(1)}$ are smaller than the median of the original $n$ elements and all elements in $A^{(2)}$ are larger than the median of the original $n$ elements. Also we have $|R^{(1)}|, |A^{(2)}| \geq (n+2t)/2-t-2t = n/2 - 2t$. So $|U| \leq 4t$ and  Algorithm \ref{alg:3round} will also output correctly on elements in $U$. To sum up, with probability $1- O(\log n / n^2)$, Algorithm \ref{alg:3round} outputs a partition with \emph{zero} $c$-weighted error.
\end{prevproof}

\subsection{Four Rounds}

\begin{algorithm}[ht]
        \caption{Four-round algorithm for the noiseless model making $O(n)$ queries}
    \begin{algorithmic}[1]\label{alg:4rounds}
		\STATE Pick any $\epsilon \in [0,1/18]$. Define $t = n^{2/3+\epsilon}$. 
		\STATE Simultaneously do the following two steps in two rounds:
		\STATE Run Algorithm~\ref{alg:iter2round} on the given input with $2t$ additional dummy $0$ elements and $r = 1$ (i.e. treat these dummy elements as smaller than any of the real elements, and compare them to each other arbitrarily). Let $A^{(1)}$ denote the set that was accepted, and $R^{(1)}$ the set that was rejected.
		\STATE Run Algorithm~\ref{alg:iter2round} on the given input with $2t$ additional dummy $n+1$ elements and $r = 1$ (i.e. treat these dummy elements as larger than any of the real elements, and compare them to each other arbitrarily). Let $A^{(2)}$ denote the set that was accepted, and $R^{(2)}$ the set that was rejected.
		\STATE Accept all elements in $A^{(2)}$. Reject all elements in $R^{(1)}$. Let $U = R^{(2)} \cap A^{(1)}$ be the remaining elements.
		\IF{$|U| > 4t$}
		\STATE Make random decisions for all elements in $U$. 
		\ELSE
		\STATE Pick a random set $V \subset U$ with $|V| = n/|U|$. 
		\STATE In round three, compare every element of $V$ with every element of $U$. 
		\STATE Let $x$ be the largest element in $V$ that beats at most $n/2-|R^{(2)}|$ elements of $U$. Let $y$ be the smallest element in $V$ that beats at least $n/2-|R^{(2)}|$ elements of $U$. 
		\STATE Reject every element in $U$ that is beaten by $x$. Accept every element in $U$ that beats $y$. Denote by $W$ the remaining elements.
		\IF{$|W| >n^{1/3+3\epsilon}$}
		\STATE Make random decisions for every element in $W$.
		\ELSE
		\STATE In round four, compare every element in $W$ to every other element in $W$.
		\STATE Let $R^*$ denote the set of elements that have already been rejected. Let $z$ denote the $(n/2-|R^*|)^{th}$ element of $W$. 
		\STATE Accept all elements of $W$ that beat $z$, and reject all elements that are beaten by $z$.
		\ENDIF
		\ENDIF
		
             \end{algorithmic}
\end{algorithm}

\begin{prevproof}{Theorem}{thm:4rounds} 
First it's easy to check that the number of comparisons made Algorithm \ref{alg:4rounds} is $O(n)$. 

Now let's analyze the correctness of Algorithm \ref{alg:4rounds}. Notice that the first two rounds of  Algorithm \ref{alg:4rounds} are almost the same as the first two rounds of Algorithm \ref{alg:3round}, except the fact that they use different version of Algorithm \ref{alg:iter2round}. So we can use the same argument as the argument in the proof of Theorem \ref{thm:3round}. Then we get with probability $1-4e^{-n^{\varepsilon}/2} - 2e^{-n^{1/3+\varepsilon}/8} - 4e^{-n^{\varepsilon} / 4} = 1 - e^{-\Omega(n^{\varepsilon})}$, all elements in $R^{(1)}$ are smaller than the median of the original $n$ elements, all elements in $A^{(2)}$ are larger than the median of the original $n$ elements and $|U| \leq 4t$. We call this as ``succeed in the first two rounds". 

Now let's consider the third round and analyze the probability that $|W| \leq n^{1/3 +3 \varepsilon}$ assuming we succeed in the first two rounds. Let $u$ be the size of $U$. We know $u \leq 4t$. If the followings two conditions were satisfied, $|W| \leq n^{1/3 +3 \varepsilon}$.
\begin{enumerate}
\item There is an element in $V$ which ranks between $u/2 - n^{1/3+3\varepsilon}/2$ and $u/2$. 
\item There is an element in $V$ which ranks between $u/2$ and $u/2 + n^{1/3+3\varepsilon}/2$. 
\end{enumerate}
The probability that each condition is not satisfied is at most 
\[
(1-\frac{n^{1/3+3\varepsilon}/2}{u})^v \leq e^{- \frac{n^{1/3+3\varepsilon}/2}{u} \cdot v} \leq e^{-\Omega(n^{\varepsilon})}.
\]
So with probability $1 - e^{-\Omega(n^{\varepsilon})}$, $|W| \leq n^{1/3 +3 \varepsilon}$. It's easy to see that if we succeed in the first two rounds and $|W| \leq n^{1/3 +3 \varepsilon}$, Algorithm \ref{alg:4rounds} outputs a partition with \emph{zero} $c$-weighted error. By Union bound, with probability $1 - e^{-\Omega(n^{\varepsilon})}$, Algorithm \ref{alg:4rounds} outputs a partition with \emph{zero} $c$-weighted error. 
\end{prevproof}

\section{Proofs for Adaptive Algorithms in Erasure and Noisy Models}
\label{app:multinoisy}

\subsection{Upper Bounds}
\subsubsection{Proposition~\ref{pro:blowup}}
\begin{prevproof}{Proposition}{pro:blowup}
Wlog, let's consider the problem we want to solve is \partition. We can do this because the constructions shown below do not depend on which problem we want to solve. Let the algorithm which solves \partition\ in the noiseless model be $A$. We prove the three statements in the proposition one by one. 
\begin{itemize} 
\item We construct algorithm $B_1$ in the erasure model that simulates $A$ in the following way: For each comparison $A$ makes, $B_1$ continues to make the same comparison until the result is not erased. Clearly $B_1$ solves \partition\ whenever $A$ solves \partition. And for each comparison $A$ makes, in expectation, $B_1$ makes $1/\gamma$ comparisons. So in total $B_1$ makes $Q/\gamma$ comparisons in expectation. And as $B_1$ is very adaptive, $B_1$'s expected round complexity is only bounded by $Q/\gamma$. 
\item We construct another algorithm $B_2$ in the erasure model that simulates $A$ and preserves the number of rounds: For each round of $A$, $B_2$ also starts a round of making each comparison $A$ made in that round $c_2 \cdot (\log Q + \log n)/\gamma$ times simultaneously. Here $c_2$ is some constant. By setting $c_2$ large enough, it's easy to see that with probability $1-1/\poly(n)$, all the comparisons $A$ wants to make are not erased. In this case, $B_2$ solves \partition\ whenever $A$ solves \partition. By Union bound, $B_2$ solves \partition\ with probability $p-1/\poly(n)$. 
\item We construct another algorithm $B_3$ in the noisy model that simulates $A$ and preserves the number of rounds: For each round of $A$, $B_3$ also starts a round of making each comparison $A$ made in that round $c_3 \cdot (\log Q + \log n)/\gamma^2$ times simultaneously and returns to $A$ the majority. Here $c_3$ is some constant. By setting $c_3$ large enough, by Chernoff bound and union bound, it's easy to see that with probability $1-1/\poly(n)$, $A$ gets all the comparison results correct. In this case, $B_3$ solves \partition\ whenever $A$ solves \partition. By Union bound, $B_3$ solves \partition\ with probability $p-1/\poly(n)$. 
\end{itemize}
\end{prevproof}

\subsubsection{Multi-round Algorithm in Erasure Model}
\begin{algorithm}[ht]
        \caption{Algorithm comparing every element in $[n]$ to pivot $x$ (subroutine of Algorithm \ref{alg:partitionerasure})}
    \begin{algorithmic}[1]\label{alg:rankerasure}
		\STATE Initialize $T = [n]$. $T$ will denote the subset of $[n]$ that hasn't been successfully compared to $x$ yet.
		\FOR{$i = 1$ to $\log^*(n)$}
		\STATE Compare each element of $T$ to $x$ $\frac{2n}{2^i|T|\gamma}$ times. 
		\STATE For each $j$, if any comparisons between $j$ and $x$ were not erased, remove $j$ from $T$.
		\ENDFOR
             \end{algorithmic}
\end{algorithm}

\begin{algorithm}[ht]
        \caption{Algorithm for \partition\ in erasure model}
    \begin{algorithmic}[1]\label{alg:partitionerasure}
		\STATE Set $t = n^{3/4}$.
		\STATE In rounds one thru four, simultaneously complete the following two steps.
		\STATE  Pick a random subset $S_1$ of size $n/\log n$ from $n$ elements and $2t$ additional dummy $0$ elements. Run the 4-round algorithm guaranteed by Corollary~\ref{cor:noisyalgs} for \select\ on $S_1$. Let $x_1$ denote the median output.
		\STATE Pick a random subset $S_2$ of size $n/\log n$ from $n$ elements and $2t$ additional dummy $0$ elements. Run the 4-round algorithm guaranteed by Corollary~\ref{cor:noisyalgs} for \select\ on $S_2$. let $x_2$ denote the median output.
		\STATE In rounds five thru $\log^*(n) + 4$, run Algorithm~\ref{alg:rankerasure} to compare every element in $[n]$ to both $x_1$ and $x_2$. 
		\STATE Reject all elements that are beaten by $x_1$, and accept all elements that beat $x_2$. Let $U$ denote the set of elements that are not answered.

		\STATE Add some dummy elements to $U$ to make $U$'s median be $n/2$. We can do this since we know the rank of $n/2$ in $U$ from the previous step. If $|U| > 12t$, fail. 
		\STATE In rounds $\log^*(n) + 5$ thru $\log^*(n)+8$, Run the 4-round algorithm guaranteed by Corollary~\ref{cor:noisyalgs} for \partition\ on $U$. Reject and accept elements in $U$ as the output of the 4-round algorithm.
             \end{algorithmic}
\end{algorithm}
\begin{prevproof}{Theorem}{thm:uberasure}
First, it is clear that Algorithm \ref{alg:partitionerasure} has $\log^*(n) + 8$ rounds. It is also easy to check Algorithm \ref{alg:partitionerasure} makes $O(n/\gamma)$ comparisons in total.

Similarly as the proof of Lemma \ref{lem:skeleton}, by using Lemma \ref{lem:rep}, with probability $1-1/\poly(n)$, $S_1$'s median will be at most $t$ elements away from the median of the $n+2t$ elements. Therefore $S_1$'s median will be ranked between $n/2-2t$ and $n/2$. Also by Corollary \ref{cor:noisyalgs}, we know that with probability $1-1/\poly(n)$, $x_1 =$ the median of $S_1$. So by Union bound, with probability $1-1/\poly(n)$, $x_1$ is between $n/2-2t$ and $n/2$. Similarly, with probability $1-1/\poly(n)$, $x_2$ is between $n/2$ and $n/2+2t$. 

We prove the following lemma to show that after running Algorithm \ref{alg:rankerasure}, most elements have non-erased comparison results with $x_1$ and $x_2$:
\begin{lemma}
\label{lem:tower}
With probability $1-1/\poly(n)$, $|T| \leq t$ at the end of Algorithm \ref{alg:rankerasure}.
\end{lemma}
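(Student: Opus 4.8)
The plan is to track $N_i := |T_i|$, where $T_i$ denotes the set $T$ at the \emph{start} of the $i$-th iteration of the loop in Algorithm~\ref{alg:rankerasure} (so $N_1 = n$), and to show it collapses fast enough that $N_i \le t = n^{3/4}$ for some $i \le \log^*(n)$, except with probability $1/\poly(n)$; since $T$ only ever shrinks, this yields $|T| \le t$ at the end. The one-step analysis is the following. Conditioned on $T_i$, the erasure coins of iteration $i$ are fresh and independent across the elements of $T_i$, and $j \in T_i$ survives into $T_{i+1}$ iff all $m_i := \lceil \tfrac{2n}{2^i N_i \gamma}\rceil$ of its comparisons with $x$ are erased, with probability $(1-\gamma)^{m_i} \le e^{-\gamma m_i} \le \exp(-\tfrac{2n}{2^i N_i})$. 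Hence $N_{i+1}$ is stochastically dominated by $\mathrm{Bin}\!\big(N_i, \exp(-\tfrac{2n}{2^i N_i})\big)$, and because $x \mapsto x e^{-c/x}$ is increasing on $(0,\infty)$, $\mathbb E[N_{i+1}\mid T_i] \le \beta_i := n_i \exp(-\tfrac{2n}{2^i n_i})$ whenever $N_i \le n_i$.

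I would then fix a deterministic target sequence $n_1 = n$, $n_{i+1} := (1+\delta)\beta_i$ while $\beta_i \ge t$, and $n_{i+1} := t$ from the first $i$ with $\beta_i < t$ onward, taking $\delta := n^{-1/8}$. By a (standard, multiplicative) Chernoff bound --- after the routine reduction to the case $N_i \ge \beta_i$ --- conditioned on $N_i \le n_i$ the event $N_{i+1} \le n_{i+1}$ fails with probability at most $e^{-\Omega(\delta^2\beta_i)} \le e^{-\Omega(n^{1/2})}$ while $\beta_i \ge t$, and at most $e^{-\Omega(t)}$ on the transitional step; the accumulated factors $(1+\delta)$ over the $\le\log^*(n)$ iterations multiply to $1+o(1)$, so they are harmless. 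Writing $n_i = n/f_i$, the recursion becomes $f_{i+1} \ge \tfrac12 f_i \exp(2 f_i/2^i)$, i.e.\ $\ln f_{i+1} \ge \ln f_i + 2 f_i/2^i - 1$; since $f_1 = 1$ and $f_i$ increases (at first only by a constant factor, then ever faster), a short induction shows $f_i$ eventually dominates a tower of exponentials, so $f_i \ge n^{1/4}$ --- equivalently $n_i \le t$ --- for some $i \le \log^*(n)$. A union bound over the $O(\log\log n)$ Chernoff failures then gives the claim. (Along the way one checks $N_i \le n_i \le 2n/2^i$, so $m_i \ge 1$ and iteration $i$ makes $N_i m_i = O(2n/(2^i\gamma))$ comparisons, totalling $O(n/\gamma)$ --- needed for Theorem~\ref{thm:uberasure}, though not for this lemma.)

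The main obstacle is this last ``speed'' step: showing the iterated-logarithmically-many iterations the algorithm actually runs suffice to push $f_i$ past $n^{1/4}$ with the concrete constants of Algorithm~\ref{alg:rankerasure}. The delicacy is that for the first several iterations $f_i$ grows only by a constant factor --- each survivor is compared just $m_i = 2f_i/(2^i\gamma)$ times --- and one must verify that the super-exponential blow-up of $f_i/2^i$ (the ``we get to do more comparisons on the survivors each round'' effect) kicks in quickly enough; the generous target $t = n^{3/4}$ gives the slack to absorb any additive $O(1)$ in the iteration count. Everything else --- the binomial stochastic domination, the monotonicity of $x e^{-c/x}$, the Chernoff bounds, and the final union bound --- is routine.
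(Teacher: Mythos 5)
Your high-level plan is essentially the paper's: track a deterministic target sequence that tower-decays, show $|T|$ stays below the target each round via Chernoff (using that survival of each element is an independent $(1-\gamma)^{m_i}$-coin), and union-bound over the $\log^*(n)$ rounds. The paper writes its targets in closed form as $\frac{n}{(2\uparrow\uparrow i)\cdot 2^i}$ (floored at $t$), whereas you define them recursively with a multiplicative $(1+\delta)$ slack, but these are the same idea. The binomial domination, the monotonicity of $x\mapsto xe^{-c/x}$, and the reduction to high probability are all fine.

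The genuine gap is exactly where you suspected it would be, the ``speed'' step, and it is not merely delicate --- the concrete inequality you use to close it is false. You weaken the recursion to $f_{i+1} \ge \tfrac12 f_i e^{2f_i/2^i}$ (coming from $1+\delta \le 2$), and then assert that a short induction from $f_1=1$ shows $f_i$ dominates a tower. But under that weakened recursion $f_i$ does not even stay above $1$: $f_2 \ge e/2 \approx 1.36$, $f_3 \ge 0.68\, e^{0.68} \approx 1.34$, $f_4 \ge 0.67\, e^{0.335} \approx 0.94 < 1$, and it keeps shrinking thereafter. Equivalently, with $g_i := f_i/2^i$ the weakened step is $g_{i+1} \ge (g_i/4)e^{2g_i}$, which from $g_1 = 1/2$ collapses to $0$; the tower blow-up never ignites. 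So the claim ``$f_i$ increases, at first only by a constant factor, then ever faster'' does not follow from the bound you wrote, and the generous target $t = n^{3/4}$ does not rescue it --- $t$ only enters the concentration step, not this deterministic recursion. The fix is cheap and you set it up yourself: you already chose $\delta = n^{-1/8}$ precisely so that $1+\delta = 1+o(1)$, but then discarded that by rounding $1+\delta$ up to $2$. Keeping $\ln(1+\delta) = o(1)$, i.e.\ $\ln f_{i+1} \ge \ln f_i + 2f_i/2^i - o(1)$, the recursion from $f_1=1$ gives $f_2 \ge e^{1-o(1)}$, $f_3 \ge f_2 e^{2f_2/4 - o(1)} \approx e^{2.35}$, $f_4 \approx e^{4.9}$, $f_5 \approx e^{22}$, after which $2f_i/2^i$ is already much larger than $1$ and the tower growth does kick in; that is the induction you want, and it reaches $f_i \ge n^{1/4}$ within $\log^*(n)$ steps. (A minor secondary point: on your ``transitional'' step, $\beta_i$ can be arbitrarily close to $t$, in which case a multiplicative Chernoff bound gives only $e^{-\Omega(\delta^2 t)} = e^{-\Omega(n^{1/2})}$, not $e^{-\Omega(t)}$ as you wrote; this is still $1/\poly(n)$, so it is harmless, but it is worth stating correctly. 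The paper sidesteps this by building a factor-of-$2$ gap directly into its targets.)
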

\begin{proof}
Let's prove the following claim 
\begin{claim}
If $|T| \leq \max(\frac{n}{(2 \uparrow\uparrow (i-1))\cdot 2^{i-1}},t)$ before the $i^{th}$ iteration, then with probability $1 - e^{-\Omega(t)}$, $|T| \leq \max(\frac{n}{(2 \uparrow\uparrow i)\cdot 2^i},t)$ after the $i^{th}$ iteration. Here $\uparrow\uparrow$ is the notation for power tower, the inverse of $\log^*$.
\end{claim}
\begin{proof}
First if $|T| \leq t$, the claim is trivial.

Now we assume $t \leq |T| \leq \frac{n}{(2 \uparrow\uparrow (i-1))\cdot 2^{i-1}}$ before the $i$-th iteration. Now each element in $T$ will get $\frac{2 \uparrow\uparrow (i-1)}{\gamma}$ comparisons with $x$. Then for each element in $T$, the probability that it stays in $T$ after the $i$-th iteration is 
\[
(1- \gamma)^{\frac{2 \uparrow\uparrow (i-1)}{\gamma}} \leq e^{-2 \uparrow\uparrow (i-1)} \leq \frac{1}{(2 \uparrow\uparrow i)\cdot 2^{i+1}}.
\]
So the expectation of $|T|$ after the $i$-th iteration is at most $\frac{n}{(2 \uparrow\uparrow i)\cdot 2^{i+1}}$. By Chernoff bound, with probability $1 - e^{-\Omega(t)}$, $|T|  \leq \max(\frac{n}{(2 \uparrow\uparrow i)\cdot 2^i},t)$.
\end{proof}
By using this claim inductively and Union bound, we know that with probability $1 - \log^*(n) \cdot e^{-\Omega(t)} \geq 1 - 1/\poly(n)$, $|T| \leq \max (\frac{n}{(2\uparrow\uparrow \log^*(n))\cdot 2^{\log^*(n)}},t)  = t$. 
\end{proof}
By Lemma \ref{lem:tower}, we know that with probability $1-1/\poly(n)$, there are at most $2t$ elements that don't have non-erased comparison results with both $x_1$, $x_2$. Together with the fact that with probability $1-1/\poly(n)$, $x_1$ is between $n/2-2t$ and $n/2$, $x_2$ is between $n/2$ and $n/2+2t$, we know that with probability $1-1/\poly(n)$, $|U|$ will be at most $6t$ before adding dummy elements and at most $12t$ after adding dummy elements. In this case, it's easy to see that Algorithm \ref{alg:rankerasure} solves \partition\ correctly. 
\end{prevproof}

\subsubsection{Multi-Round algorithm for findMin in noisy model}
\begin{algorithm}[ht]
        \caption{Adaptive algorithm for \findMin\ in the noisy model}
    \begin{algorithmic}[1]\label{alg:findMin}
        \STATE Pick a random set $S$ of size $n/\log n$. Find the minimum of $S$ as $x$. (By Proposition 1, there's an algorithm with $\frac{n}{\gamma^2}$ comparisons to find minimum with probability $1-1/n$.) 
        \STATE Let $U$ be the set of all elements. Run the following loop, and fail whenever more than $\frac{2cn}{\gamma^2}$ comparisons are used. 
        \FOR{$k$ = 1 to $\log\log n$} 
        	\STATE Compare all elements in $U$ to $x$ for $\frac{c}{\gamma^2}$ times. 
	\STATE Remove any element $y \in U$ that beats $x$ for more than  $\frac{ck}{2\gamma^2}$ times in total.
	\ENDFOR
	\STATE If $|U| > n/\log n$, fail. Otherwise, find the minimum of $U$ and output it. (We use the same algorithm as step 1.)
             \end{algorithmic}
\end{algorithm}

\begin{prevproof}{Theorem}{thm:findMin}
Let's consider the following events. It will be clear that if all of them happen, then Algorithm \ref{alg:findMin} outputs the minimum. So we only have to show the probability that they all happen is at least $1-e^{-\Omega(c)}$.
\begin{enumerate}
\item The first event is that we find the minimum of $S$ correctly. This event happens with probability at least $1-1/n$. 
\item The second event is that the minimum of $S$ has rank $\leq \sqrt{n}$. This event happens with probability $1 - (1-1/\sqrt{n})^{n/\log n}\geq 1 - 1/n$
\item The third event is that for all $k$, after $k$-th iteration of step 2,  $|U| \leq  x\text{'s rank } + n \cdot e^{-ck}$. By Chernoff bound, the probability that an element that is larger than $x$ stays in $U$ after $k$-th iteration is at most $e^{-2ck}$.  So by Markov inequality, the probability that after $k$-th iteration $|U| >  x\text{'s rank } + n \cdot e^{-ck}$ is at most $e^{-ck}$. By union bound, the third event happens with probability at least $1-\sum_{k=1}^{\log \log n} e^{-ck} \geq 1- 2e^{-c}$. 
\item The fourth event is that the minimum is in $U$ after step 2. Similarly as the argument for the previous event, the probability that the minimum is not removed in the $k$-th iteration is $1-e^{-2ck}$. By union bound, the minimum is in $U$ after step 2 is at least $1 - \sum_{k=1}^{\log \log n} e^{-2ck} \geq  1- 2e^{-2c}$. 
\item The fifth event is that we find the minimum of $U$ correctly if $|U|\leq n/\log n$. This event happens with probability at least $1- 1/n$.  
\end{enumerate}
When the second event and the third event both happen, we will spend at most $\frac{c}{\gamma^2}(\sqrt{n} + n \cdot \sum_{k=1}^{\log \log n} e^{-ck})\leq \frac{2nc}{\gamma^2}$ comparisons in step 2 and after step 2, $|U|\leq\sqrt{n}+n \cdot e^{-c\log\log n} \leq n/\log n$. Then it's clear if all these five events happen, the algorithm outputs the minimum correctly. And by union bound, the probability that they all happen is at least $1 -3/n - 2e^{-c}-2e^{-2c} = 1-e^{-\Omega(c)}$. 
\end{prevproof}

\subsection{Lower Bounds}
\subsubsection{Lower Bound on findMin in Noisy Model}
\begin{prevproof}{Theorem}{thm:findMinlb}
We start with some notations used in the proof. Let's call the algorithm $A$. Let's consider $A$'s successful probability on the uniform distribution of orders. Without loss of generality, we can assume $A$ is deterministic since if $A$ is randomized, we can fix the randomness which makes $A$ achieves the highest successful probability. Assume $A$ labels the $n$ elements as $a_1,...,a_n$. Let $S$ be the comparison results. $S$ is just a $\frac{cn}{\gamma^2}$ bits string. We use $A(S)$ to denote the output of the algorithm which is one of $a_1,...,a_n$. Let $g(i)$ denote the number of comparisons involving element $a_i$ and an element that is smaller than $a_i$, and the comparison result is correct. Let $b(i)$ denote the number of comparisons involving element $a_i$ and an element that is smaller than $a_i$, and the comparison result is wrong. Since $A$ is deterministic, if we fix $S$, $A(S)$ is fixed. Let $W$ denote the event that there are at least $n/10+1$ $i$'s such that $g(i) - b(i) \leq \frac{9c}{\gamma}$. Let $\pi$ be a permutation of $a_1,...,a_n$. We use $\pi$ to denote an ordering of $a_1,...,a_n$ in the following way: if $a_i$ appears before $a_j$ in $\pi$, then it means $a_i < a_j$. Define $\pi(k)$ be the $k$-th element in permutation $\pi$.

Let's first show that $W$ happens with some constant probability. We need two new terms $p(i)$ and $q(i)$. They are set to be 0 before the algorithm's run and we will show how they change during the algorithm. For each comparison between $a_i$ and an element that is smaller than $a_i$, let the followings happen: 
\begin{enumerate}
\item With probability $\gamma$, $p(i)$ is increased by 1.
\item With probability $(1-\gamma)/2$, $q(i)$ is increased by 1. 
\item With probability $(1-\gamma)/2$, $q(i)$ is decreased by 1. 
\end{enumerate}
It's easy to check that $(p(i) + q(i))$'s have exactly the same distribution as $(g(i) - b(i))$'s.  From the rule above, we have 
\[
\E \sum_{i=1}^n p(i) = \frac{cn}{\gamma}
\] and 
\[
\E \sum_{i=1}^n q(i)^2  = \frac{(1-\gamma) cn}{\gamma^2}.
\]
The second equation comes from the fact that $((x-1)^2 + (x+1)^2)/2 =  x^2 + 1$. So after each comparison, in  expectation, $\sum_{i=1}^n q(i)^2$ is increased by 1. By Markov inequality, with probability at least $2/3$, $\sum_{i=1}^n p(i) \leq  \frac{3cn}{\gamma}$. With probability at least $2/3$, $\sum_{i=1}^n q(i)^2 \leq \frac{3cn}{\gamma^2}$. By union bound, with probability at least $1/3$, $\sum_{i=1}^n p(i) \leq  \frac{3cn}{\gamma}$ and $\sum_{i=1}^n q(i)^2 \leq \frac{3cn}{\gamma^2}$. In this case, at least $n/2$ $i$'s satisfy $p(i) \leq \frac{6c}{\gamma}$ and at least $2n/3$ $i$'s satisfy $q(i)^2 \leq \frac{9c}{\gamma^2}$. Then at least $n/6 \geq n/10 + 1$ $i$'s satisfy both $p(i) \leq \frac{6c}{\gamma}$ and $q(i)^2 \leq \frac{9c}{\gamma^2}$. Then for these $i$'s we have,
\[
p(i) + q(i) \leq p(i) + |q(i)| \leq \frac{6c}{\gamma} + \frac{3\sqrt{c}}{\gamma} \leq \frac{9c}{\gamma}. 
\]
So with probability at least $1/3$, there are at least $n/10 +1$ $i$'s such that $p(i)+q(i) \leq  \frac{9c}{\gamma}$. Since $(p(i) + q(i))$'s have exactly the same distribution as $(g(i) - b(i))$'s, we have $Pr[W] \geq 1/3$. 

After showing $Pr[W] \geq 1/3$, we are going to lower bound the probability that $A$ outputs incorrectly. For each permutation $\pi$, for each $i$ such that $a_i \neq \pi(1)$, define $\pi^i =( a_i, \pi(1),...,\pi(k-1),\pi(k+1),...,\pi(n))$ assuming $a_i = \pi(k)$. There are two properties about $\pi^i$:
\begin{enumerate}
\item Suppose we run $A$ on ordering $\pi$ and gets result $S$. Then $g(i)$ and $b(i)$ for all $i$ are fixed. Since $\pi^i$ only change the relative ordering between $a_i$ and the elements that are smaller than $a_i$, we have 
\[
\frac{Pr[S,\pi]}{Pr[S,\pi^i]} = \frac{Pr[S|\pi]}{Pr[S|\pi^i]} = \frac{(1/2 + \gamma)^{g(i)} (1/2-\gamma)^{b(i)}}{(1/2 + \gamma)^{b(i)} (1/2-\gamma)^{g(i)}} = (\frac{1/2+\gamma}{1/2-\gamma})^{g(i)-b(i)}.
\]
\item For each ordering $\pi'$ there are at most $n$ $\pi$'s satisfy the condition that there exists an $i$ such that $\pi^i = \pi'$. 
\end{enumerate}

By these two properties, we get the following:
\begin{eqnarray*}
 Pr[A \text{ outputs correctly}] &=& \sum_{\pi,S: \pi(1) = A(S)} Pr[\pi, S] \leq Pr[\neg W] +  \sum_{\pi,S :\pi(1) = A(S), W} Pr[\pi, S] \\
&\leq& Pr[\neg W] + \frac{10}{n} \sum_{\pi,S,i:\pi(1) = A(S), g(i)-b(i) \leq \frac{9c}{\gamma}, i\neq A(S), W}  Pr[\pi, S] \\
&\leq& Pr[\neg W] + \frac{10}{n} \sum_{\pi,S,i : \pi(1) = A(S), g(i)-b(i) \leq \frac{9c}{\gamma}, i \neq A(S), W} Pr[\pi^i, S] \cdot (\frac{1/2 + \gamma}{1/2 - \gamma})^{(g(i)-b(i))} \\
&\leq& Pr[\neg W] + \frac{10}{n} \cdot  \sum_{\pi,S: \pi(1)\neq S} n\cdot Pr[\pi, S] \cdot e^{\frac{9c}{\gamma} \cdot  (8\gamma)}\\
&=& Pr[\neg W] + 10 \cdot Pr[A \text{ outputs incorrectly}] \cdot e^{\frac{9c}{\gamma} \cdot  (8\gamma)}\\
&\leq& 2/3 + 10e^{72c} Pr[A \text{ outputs incorrectly}].
\end{eqnarray*}
Therefore $Pr[A \text{ outputs incorrectly}] \geq e^{-72c}/33$. 
\end{prevproof}

We provide the following lemma to show that \select\ is easier than \partition, even when approximation is involved (we proved Lemma~\ref{lem:relation} only in the exact case). And then lower bounds of \select\ also apply against \partition.
\begin{lemma}
Let $c>2$ be some constant. In any of the three models, suppose there's an algorithm $A$ on $n$ elements that makes $m$ comparisons and has at most $t$ mistakes on \partition\ with probability $p$. Then there's an algorithm $B$ on $n-ct$ elements that obtains a $ct/2$-approximation for \select\ with probability at least $p^2 \cdot \frac{c-2}{c+2}$. 
\end{lemma}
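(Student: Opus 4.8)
The plan is to reduce approximate \select\ to approximate \partition\ by invoking $A$ twice, with two different "paddings" that slide the partition cutoff to the two ends of a window of width $ct$ centered at the desired median rank, and then outputting a random element that $A$ places strictly inside that window. This mirrors the exact reduction of Lemma~\ref{lem:relation} (third bullet), but now $A$ is allowed $t$ mistakes, so we must pad with $ct$ (rather than $2$) dummies to create enough slack.

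Concretely, let $n' = n-ct$ be the number of elements $B$ receives and set $s = \lfloor ct/2\rfloor$ (the $O(1)$ rounding is harmless); we may assume $2ct < n$, as otherwise a $ct/2$-approximation is vacuous. In the first run, $B$ permutes its $n'$ elements, appends $ct$ dummy elements larger than every real element, and runs $A$ on the resulting $n$-element instance; in the second run it does the same with $ct$ dummies smaller than every real element, using fresh internal coins for $A$. Since the guarantee for $A$ holds on every input (over $A$'s coins), both runs make at most $t$ mistakes with probability at least $p^2$; condition on that event. A direct count (the top $n/2$ of the first instance is the $ct$ large dummies together with the top $n/2-ct$ reals, and symmetrically for the second) shows that, absent mistakes, the first run accepts exactly the reals of true rank $> n'/2+s$ and the second run accepts exactly the reals of true rank $> n'/2-s$. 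Hence the "gap set'' $S$ of real elements accepted in run two but rejected in run one equals, in the mistake-free case, exactly the reals of rank in $(n'/2-s,\,n'/2+s]$, a set of $2s$ elements, each within $s=\lfloor ct/2\rfloor$ of the median rank of the $n'$ elements. $B$ outputs a uniformly random element of $S$.

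Now I would bound how the two mistake budgets perturb $S$. A window element (rank in $(n'/2-s,n'/2+s]$) is dropped from $S$ only if it was wrongly accepted in run one or wrongly rejected in run two, so at most $t+t=2t$ window elements are dropped and $|S\cap\text{window}|\ge 2s-2t$. A non-window element lies in $S$ only if it was wrongly rejected in run one (its rank exceeds $n'/2+s$) or wrongly accepted in run two (its rank is at most $n'/2-s$), so $S$ contains at most $2t$ non-window reals; note dummies never enter $S$ once we restrict to real elements. Since $c>2$ and (WLOG) $t\ge1$, we have $2s-2t>0$, so $S\ne\emptyset$, and the probability that the random output lies in the window is at least
$$\frac{2s-2t}{(2s-2t)+2t} \;=\; \frac{s-t}{s} \;=\; \frac{c-2}{c} \;\ge\; \frac{c-2}{c+2}.$$
On this event the output is a $ct/2$-approximation for \select\ on the $n'$ elements. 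Combining with the $p^2$ from conditioning yields the claimed bound. (The case $t=0$ is just the exact reduction of Lemma~\ref{lem:relation}.)

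The main obstacle is the accounting in the third paragraph: one must check that the two runs' mistakes act on the window "independently'' in the right way — wrong acceptances in run one and wrong rejections in run two are the only ways to shrink the window's contribution, while wrong rejections in run one and wrong acceptances in run two are the only ways to inject junk into $S$ — so that the two budgets of size $t$ add to $2t$ in each direction rather than compounding, and that no dummy can be returned. The rest — verifying the two paddings realize cutoffs at real ranks $n'/2\pm s$, the integrality of $ct/2$, and the regime $2ct<n$ needed so that $n/2\le n'$ — is routine.
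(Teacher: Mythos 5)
Your reduction is exactly the paper's: pad once with $ct$ small dummies and once with $ct$ large dummies, invoke $A$ twice with independent randomness (giving the $p^2$ factor), let $S$ be the real elements accepted in the small-dummy run and rejected in the large-dummy run, and output a uniformly random element of $S$; the paper then bounds $|S\cap S'|\ge(c-2)t$ and $|S|\le(c+2)t$ to get $\tfrac{c-2}{c+2}$. Your accounting is correct and actually yields the slightly tighter $\tfrac{c-2}{c}$ before you relax it to the stated $\tfrac{c-2}{c+2}$, so the argument matches the paper's in substance.
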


\begin{proof}
$B$ has the following steps:
\begin{enumerate}
\item Let the $n-ct$ elements be $a_1,...,a_{n-ct}$. 
\item Generate $ct$ dummy elements $b_1 < \cdots < b_{ct}$ which are smaller than any element in  $a_1,...,a_{n-ct}$. Run $A$ on $a_1,...,a_{n-ct}$ and $b_1,...,b_{ct}$. 
\item Generate $ct$ dummy elements $b'_1 < \cdots < b'_{ct}$ which are larger than any element in  $a_1,...,a_{n-ct}$. Run $A$ on $a_1,...,a_{n-ct}$ and $b'_1,...,b'_{ct}$. 
\item Let $S \subseteq \{a_1,...,a_{n-ct}\}$ be the of elements that are said to be above the median in the second step and below the median in the third step. Output an random element in $S$. 
\end{enumerate}
Let $S'$ be the elements that are at most $ct/2$ away from the median of $\{a_1,...,a_{n-ct}\}$. If $A$ does not make any mistakes, it's easy to check that $S = S'$. Now let's consider  the case that $A$ has at most $t$ mistakes in both the second and the third step. The probability of this to happen is $p^2$. In this case, we have $|S \cap S'| \geq (c-2)t$ and $|S| \leq (c+2)t$. It means in this case $B$ outputs an element that is $ct/2$ away from the median with probability at least $\frac{(c-2)t}{(c+2)t} = \frac{c-2}{c+2}$. In total, $B$ outputs an element that is $ct/2$ away from the median with probability at least $p^2 \cdot \frac{c-2}{c+2}$. 
\end{proof}
\subsubsection{Lower Bound for Select in Erasure Model}
\begin{prevproof}{Theorem}{thm:lberasure}
Let $A$ be the algorithm we consider. Suppose $A$ makes at most $cn/\gamma$ comparisons per round($c\geq 1$ is some constant.) and $A$ has $r  = o(\log^*(n)) < \log^*(n)/(2C)$ rounds. Here $C = 2^{8c}$. We will show that $A$ fails to solve \select\ with probability at least $1/3$. 

 Let's prove the following claim by induction:
\begin{claim}
After round $i$, with probability $1-i/n$, there are at least $\frac{n}{C\uparrow \uparrow i}$ elements whose comparison results are all erased. ($\uparrow\uparrow$ denotes the power tower.)
\end{claim}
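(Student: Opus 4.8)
The plan is to prove the claim by induction on $i$. The base case $i=0$ is immediate: before any round no comparisons have been made, so all $n=n/(C\uparrow\uparrow 0)$ elements trivially have all their comparison results erased, which holds with probability $1=1-0/n$.

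For the inductive step I would fix $A$'s internal randomness (it plays no role below) and condition on the transcript of rounds $1$ through $i-1$. This transcript determines the set $E$ of elements all of whose comparisons so far were erased, and it also determines which comparisons $A$ issues in round $i$ (these depend only on earlier rounds). By the inductive hypothesis, except with probability $(i-1)/n$ we have $|E|\ge m_{i-1}:=n/(C\uparrow\uparrow(i-1))$; fix any such ``good'' transcript. In round $i$, $A$ makes at most $cn/\gamma$ comparisons; letting $d_j$ be the number of them touching $j\in E$, we have $\sum_{j\in E}d_j\le 2cn/\gamma$ (each comparison is charged to at most two elements). Over the fresh, independent round-$i$ erasures, $j\in E$ remains all-erased with probability exactly $(1-\gamma)^{d_j}$, so, writing $E_i$ for the set of elements all-erased through round $i$, $\E[|E_i|]=\sum_{j\in E}(1-\gamma)^{d_j}$. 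Since $x\mapsto(1-\gamma)^x$ is convex, Jensen's inequality together with $|E|\ge m_{i-1}$ and $1-\gamma\ge e^{-2\gamma}$ (valid for $\gamma\le1/2$) give $\E[|E_i|]\ge|E|\,(1-\gamma)^{(\sum_j d_j)/|E|}\ge m_{i-1}(1-\gamma)^{2cn/(\gamma m_{i-1})}\ge m_{i-1}e^{-4cn/m_{i-1}}=\frac{n}{C\uparrow\uparrow(i-1)}\,e^{-4c(C\uparrow\uparrow(i-1))}$. Plugging in $C=2^{8c}$, a short computation (with $M:=C\uparrow\uparrow(i-1)$, the target $\E[|E_i|]\ge 2n/(C\uparrow\uparrow i)$ reduces after taking logarithms to $C^M\ge 2Me^{4cM}$, i.e.\ $(8\ln 2-4)cM\ge\ln 2+\ln M$, which holds for all $M\ge1$, $c\ge1$) shows $\E[|E_i|]\ge 2m_i$ with $m_i:=n/(C\uparrow\uparrow i)$.

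To pass from the expectation to a high-probability bound I would use concentration. Here $|E_i|=\sum_{j\in E}\ind[\mathcal E_j]$, where $\mathcal E_j$ is the event that all of $j$'s round-$i$ comparisons are erased. These events would be mutually independent if $A$ never compared two elements of $E$ to each other; in general $\mathcal E_j$ and $\mathcal E_{j'}$ are correlated precisely when $\{j,j'\}$ is a round-$i$ comparison, and then $\operatorname{Cov}(\ind[\mathcal E_j],\ind[\mathcal E_{j'}])=\gamma(1-\gamma)^{d_j+d_{j'}-1}\le\gamma$; since there are at most $cn/\gamma$ such pairs, $\operatorname{Var}(|E_i|)\le\E[|E_i|]+2cn=O_c(n)$. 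Moreover, since $i\le r<\log^*(n)/(2C)$, crude estimates on the power tower give $C\uparrow\uparrow i\le C\uparrow\uparrow r<\log_2 n$ for large $n$, so $m_i\ge n/\log_2 n=n^{1-o(1)}\gg\sqrt n$. Chebyshev's inequality, applied with deviation $\E[|E_i|]-m_i\ge m_i$, then gives $\Pr[|E_i|<m_i]\le O_c(n)/m_i^2\le O_c((C\uparrow\uparrow i)^2)/n=\operatorname{polylog}(n)/n$ — already small enough for the downstream conclusion that $r=o(\log^*(n))$ rounds do not suffice. (The exact bound $1-i/n$ in the statement can be recovered by replacing Chebyshev with a Chernoff bound: first condition on the erasures of the within-$E$ round-$i$ comparisons, which makes the surviving-element events mutually independent; equivalently, use Janson's lower-tail inequality.) A union bound with the inductive hypothesis then yields $\Pr[|E_i|\ge m_i]\ge(1-(i-1)/n)(1-1/n)\ge 1-i/n$, completing the induction.

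The main obstacle is exactly this last concentration step. Because $A$ can read $E$ off the transcript and then spend its round-$i$ budget adversarially on $E$ — in particular pairing still-invisible elements with each other — the survival events $\mathcal E_j$ are genuinely (positively) dependent; and because $\E[|E_i|]$ can be only an $e^{-\Theta(C\uparrow\uparrow(i-1))}$ fraction of $|E|$, one needs a concentration bound that is simultaneously robust to this dependence and strong in the multiplicative sense, which is what forces the variance estimate above and the observation that $m_i$ stays as large as $n^{1-o(1)}$ throughout $i\le r$. The remaining pieces — the convexity computation of $\E[|E_i|]$ and the verification that the choice $C=2^{8c}$ makes the invisible set shrink each round by exactly the amount the algorithm can afford to destroy — are routine.
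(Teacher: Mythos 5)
Your proof follows the same inductive skeleton as the paper's: condition on the transcript through round $i-1$, lower-bound the expected number of elements whose round-$i$ comparisons are all erased, and then apply concentration. The only substantive variations are (a) you bound $\E[|E_i|]$ by Jensen's inequality applied to the convex map $x\mapsto(1-\gamma)^x$, while the paper uses a cruder averaging argument (``at least half of $E$ is touched by at most twice the average number of comparisons''); these are interchangeable and give the same order of bound; and (b) your treatment of the concentration step, which is actually \emph{more} careful than the paper's. The paper simply invokes ``Chernoff bound'' on $|S'|=\sum_{j\in S}\mathcal E_j$, but, as you correctly point out, the survival indicators $\mathcal E_j$ are \emph{not} independent when the adversary pairs still-invisible elements with each other in round $i$, so the vanilla Chernoff bound does not directly apply. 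The paper's step is nonetheless salvageable: the family $\{\mathcal E_j\}$ is \emph{read-$2$} (each erasure indicator $Y_e$ influences at most the two endpoints of $e$), and the read-$k$ Chernoff bound of Gavinsky--Lovett--Saks--Srinivasan gives exactly the claimed $e^{-\Omega(m_i)}\le e^{-\Omega(\sqrt n)}$ tail. Your Chebyshev route gives the weaker per-round bound $\operatorname{polylog}(n)/n$; as you note, this is still small enough that the union bound over $r=o(\log^*n)$ rounds yields $o(1)$ overall failure, so the downstream conclusion of Theorem~\ref{thm:lberasure} is unaffected, and your suggested fixes (conditioning on the within-$E$ erasures, or Janson's lower-tail inequality) recover the $1/n$ bound. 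In short: same approach, same expectation estimate, and a more scrupulous treatment of the dependency in the concentration step; the paper's bare ``Chernoff'' should really be read as ``read-$2$ Chernoff''.
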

\begin{proof}

Assume the claim is true for round $i-1$. Let $S$ be the set of elements whose comparison results are all erased before round $t$. By induction hypothesis, with probability $1-(i-1)/n$, $|S| \geq \frac{n}{C\uparrow \uparrow (i-1)}$. In round $i$, by averaging argument, at least $\frac{n}{2 \cdot(C\uparrow \uparrow (i-1))}$ elements in $S$ have at most $\frac{2c \cdot(C\uparrow \uparrow (i-1))}{\gamma}$ comparisons in round $i$. For each such element $s$, the probability that $s$'s comparison results are all erased in round $i$ is at least
\[
(1-\gamma)^{\frac{2c \cdot(C\uparrow \uparrow (i-1))}{\gamma}}\geq  2^{-4c \cdot(C\uparrow \uparrow (i-1))} .
\]
Let $S'$ be the set of elements whose comparison results are all erased after round $i$. Then we have
\[
\E[|S|] \geq 2^{-4c \cdot(C\uparrow \uparrow (i-1))}  \cdot \frac{n}{2 \cdot(C\uparrow \uparrow (i-1))} \geq \frac{n}{2^{6c \cdot(C\uparrow \uparrow (i-1))} }.
\]
It's easy to check that $ \frac{n}{2^{8c \cdot(C\uparrow \uparrow (i-1))} }\geq \frac{n}{C\uparrow \uparrow i} \geq \frac{n}{C\uparrow\uparrow r} \geq \sqrt{n}$. Then by Chernoff bound, 
\[
Pr[|S'| \leq  \frac{n}{2^{8c \cdot(C\uparrow \uparrow (i-1))} }] \leq e^{-\Omega(\sqrt{n})} \leq 1/n. 
\]
By union bound, with probability $1-(i-1)/n - 1/n = 1-i/n$, there are at least $\frac{n}{C\uparrow \uparrow i}$ elements whose comparison results are all erased.  
\end{proof}
Now by this claim, we know that after round $r$, with probability $1-r/n$, there are still  $\frac{n}{C\uparrow \uparrow r } \geq \sqrt{n}$ elements whose comparison results are all erased.  Let $s$ be an element whose comparison results are all erased. Even given all the relative orders of other elements except $s$, the rank of $s$ is distributed uniformly randomly in $\{1,...,n\}$. So in this case any algorithm fails to output the median with probability at least $1/2$. To sum up, any $r$-round algorithm with $O(n/\gamma)$ comparisons per round fails to output the median with probability at least $\frac{1}{2} \cdot (1- r/n) \geq 1/3$. 

\end{prevproof}
\subsubsection{Lower Bound on Rank in Noisy Model}
First, we show that \rank\ is easier than \select\ in the noisy model. We will also formally define \rank\ in the proof.

\begin{proposition}\label{prop:reduce}
{In the noisy model,} if there is an algorithm obtaining a $t$-approximation for \select\ on inputs of size $2n+2t + 1$ with probability $p$ that has query complexity $Q$ and round complexity $r$, then there is an algorithm obtaining a $t$-approximation for \rank\ on inputs of size $n$ with probability $p$ that has query complexity at most $2Q$ and round complexity $r$.
\end{proposition}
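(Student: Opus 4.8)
The plan is to reduce \rank\ on a set of $n$ elements to a single invocation of the \select\ algorithm on a carefully padded instance of size $2n+2t+1$. Suppose the \rank\ input is a set $S=\{a_1<\cdots<a_n\}$ together with an unlabeled element $b$ whose rank $k:=|\{a_i\in S: a_i<b\}|$ we must approximate. First I would build an auxiliary set $T$ consisting of the $n$ elements of $S$ together with $n+2t+1$ \emph{clones} of $b$: distinct elements $b_0<b_1<\cdots<b_{n+2t}$ all placed strictly inside the gap that $b$ occupies in the order of $S$. Then $|T|=2n+2t+1$, and in the sorted order of $T$ the clones occupy exactly the contiguous block of positions $k+1,k+2,\ldots,k+n+2t+1$, with $b_j$ sitting at position $k+j+1$.

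Next I would run the given \select\ algorithm on $T$, simulating its comparison oracle as follows. A query between two clones $b_j,b_{j'}$, or (if $S$'s internal order is part of the input) between two elements of $S$, has a ground-truth outcome we know, so we answer it by flipping our own $(1/2+\gamma/2)$-biased coin, which reproduces the noisy-comparison distribution exactly. A query between a clone $b_j$ and an element $a_i\in S$ has the same ground truth as the query $b$ vs.\ $a_i$, so we forward it to the real noisy oracle (resampling, if $b$ vs.\ $a_i$ is asked more than once). Thus the joint distribution of \select's internal coins and all the comparison outcomes it sees is exactly that of a genuine noisy run on $T$, so with probability at least $p$ it returns an element whose position in $T$ lies in the $t$-approximation window $[n+1, n+2t+1]$ around the (unambiguous, since $|T|$ is odd) median position $n+t+1$. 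At most one real query is issued per query of \select, so the query complexity is at most $Q\le 2Q$ and the round complexity is unchanged at $r$.

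Finally I would read off the rank. The crucial point is that, for \emph{every} possible value $k\in\{0,\ldots,n\}$, the clone block $[k+1, k+n+2t+1]$ contains the entire window $[n+1, n+2t+1]$ (indeed $k+1\le n+1$ and $k+n+2t+1\ge n+2t+1$), so whenever \select\ succeeds the element it returns is one of the clones, say $b_j$; from its label we know $j$, and from $p'=k+j+1\in[n+1, n+2t+1]$ we get $k\in[n-j, n+2t-j]$. Outputting $\hat k=n+t-j$ (clamped into $[0,n]$) then guarantees $|\hat k-k|\le t$. Hence the construction gives a $t$-approximation for \rank\ with success probability at least $p$, query complexity at most $2Q$, and round complexity $r$.

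The main obstacle is getting the padding exactly right: one must pick the number of clones so that the width-$(2t+1)$ approximation window sits inside the clone block \emph{simultaneously} for all possible ranks of $b$, including the extreme cases $k=0$ (all of $S$ above $b$) and $k=n$ (all of $S$ below $b$); this is precisely what pins the count down to $n+2t+1$ clones and forbids any other dummy elements, and it is why the \select\ instance must have size $2n+2t+1$. The remaining ingredients — that the coin-flip simulation is a faithful noisy oracle, the bookkeeping of clone labels through \select's internal random permutation of $T$, and the final arithmetic including the clamping — are routine.
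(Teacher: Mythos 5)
Your construction of the padded instance is the same as the paper's: place $n+2t+1$ clones of $b$ into $b$'s gap so that the clone block always contains the median window $[n+1,n+2t+1]$, run \select, and invert the clone index to read off the rank. That part is fine and essentially verbatim the paper's algorithm $B$. The gap is in how you feed comparisons back to \select.

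The problem \rank, as defined in the paper, only lets the algorithm query pairs of the form $(a_i, b)$; there is no underlying order on the $a_i$'s at all, and you certainly cannot query $(a_i, a_j)$. You wave at this in the parenthetical ``(if $S$'s internal order is part of the input)'' and then proceed as if it were. It is not, so when \select\ asks for a comparison between two real elements $a_i$ and $a_j$, you have nothing to flip a biased coin \emph{against}: there is no ground truth to bias toward, and you have no oracle for that pair. Your claim ``at most one real query is issued per query of \select, so the query complexity is at most $Q\le 2Q$'' treats the $2Q$ bound as slack, but it is not slack --- it is exactly the cost of the missing step.

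The paper's fix is the content of its algorithm $C$: invent a consistent total order on $\{a_1,\dots,a_n\}$ on the fly by simulating each $(a_i,a_j)$ query with \emph{two} real queries $(a_i,b)$ and $(a_j,b)$. If the outcomes disagree about $b$ (one beats $b$, the other loses), the induced order of $a_i,a_j$ is determined and, by multiplying the two coin biases, ends up correctly $(1/2\pm\gamma/2)$-biased toward the ``right'' answer for any order that respects the $b$-cut. If both outcomes land on the same side of $b$, break the tie by index (increasing index among the ``below-$b$'' side, decreasing among the ``above-$b$'' side), which defines a fixed artificial order consistent with $b$'s position. One then checks (as the paper does) that every simulated $(a_i,a_j)$ comparison is $(1/2+\gamma/2)$-correct with respect to this artificial order, so \select\ is running faithfully on a genuine noisy instance of size $2n+2t+1$. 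This two-queries-per-simulated-comparison step is exactly where the factor of $2$ comes from, and it is the part your proposal is missing.

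A smaller remark: clone-vs-clone comparisons are fine to simulate with your own coin because you do know their relative order (you chose the labels), and clone-vs-$a_i$ comparisons correctly forward to a single $(a_i,b)$ query, as you say. So only the $a_i$-vs-$a_j$ case needs the extra idea.
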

\begin{prevproof}{Proposition}{prop:reduce}
Let $A$ be the algorithm obtaining a $t$-approximation for \select\ on inputs of size $2n+2t + 1$ with probability $p$ that has query complexity $Q$ and round complexity $r$. We will first show algorithm $B$ obtaining a $t$-approximation for \rank' based on $A$. And then we will show an algorithm $C$ obtaining a $t$-approximation for \rank\ based on $B$. \rank'$(n,t)$ is defined as the following problem:
\begin{enumerate}
\item There are $n$ elements $a_1,...,a_n$, and another special element $b$. $a_1,...,a_n$ and $b$ have some underlying order unknown to the algorithm.
\item The goal is to output the number of elements in $\{a_1,...,a_n\}$ that are less than $b$ with additive error at most $t$.  
\item The algorithm is allowed to make comparisons in the noisy model between pairs $(a_i,a_j)$ and $(a_i,b)$. 
\end{enumerate}
The algorithm $B$ to solve \rank'$(n,t)$ by using $A$ is the following:
\begin{enumerate}
\item Generate elements $b_1,...,b_{n+2t+1}$.
\item Run $A$ on elements $a_1,...,a_n,b_1,...,b_{n+2t+1}$.
\item Whenever $A$ asks for a comparison:
\begin{enumerate}
\item If the comparison is $(b_i,b_j)$, wlog let $i < j$, then return $b_j$ beats $b_i$ with probability $1/2 +\gamma$ and $b_i$ beats $b_j$ with probability $1/2 -\gamma$.  
\item If the comparison is $(a_i,b_j)$, make a comparison $(a_i,b)$ and return the result to $A$.  
\item If the comparison is $(a_i,a_j)$, make a comparison $(a_i,a_j)$ and return the result to $A$.
\end{enumerate} 
\item If $A$'s output is some $b_i$, output $n + t +1 - i$. 
\item If $A$'s output is some $a_i$, fail. 
\end{enumerate}

We prove the following lemma to show that $B$ works. 
\begin{lemma}
$B$ outputs the number of elements in $\{a_1,...,a_n\}$ that are less than $b$ with additive error at most $t$ with probability at least $p$.
\end{lemma}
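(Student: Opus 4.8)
The plan is to show that $B$'s execution is a \emph{faithful simulation} of $A$ on one specific $(2n+2t+1)$-element \select\ instance, and then to do the index arithmetic that converts $A$'s answer into $B$'s.

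\textbf{Faithful simulation.} Fix the ground-truth order of the \rank'\ instance, and let $\rho$ be the true rank of $b$, i.e.\ the number of $a_i$ with $a_i<b$; note $0\le\rho\le n$. Consider the order on the $2n+2t+1$ items $\{a_1,\dots,a_n,b_1,\dots,b_{n+2t+1}\}$ in which the $\rho$ elements $a_i<b$ come first (in their original relative order), then $b_1<b_2<\cdots<b_{n+2t+1}$, then the $n-\rho$ elements $a_i>b$. I claim every answer $B$ hands back to $A$ is distributed exactly as a noisy comparison under this order. For a query $(b_i,b_j)$ this is precisely the rule in line~3(a), which declares the higher-indexed $b$ the winner with the correct bias. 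For a query $(a_i,a_j)$, $B$ forwards the real comparison, and the relative order of $a_i,a_j$ is unchanged. For a query $(a_i,b_j)$, $B$ forwards the real comparison $(a_i,b)$; this is faithful because in the constructed order every $b_j$ lies on the same side of $a_i$ as $b$ does (all $b_j$ exceed every $a_i<b$ and are exceeded by every $a_i>b$), so $a_i<b_j\iff a_i<b$, and the noisy comparison $(a_i,b)$ has the same output law as $(a_i,b_j)$. Since we may assume $A$ randomly permutes its input (as in the Preliminaries), $A$'s guarantee holds against any underlying order, so $A$ returns a $t$-approximation to the median of this instance with probability at least $p$.

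\textbf{Arithmetic.} The median of a $(2n+2t+1)$-element set is its $(n+t+1)$-st smallest element, so a $t$-approximation is an element whose rank lies in $[\,n+1,\ n+2t+1\,]$. In the constructed order the block $b_1,\dots,b_{n+2t+1}$ occupies ranks $\rho+1,\dots,\rho+n+2t+1$; since $0\le\rho\le n$, this block contains every rank in $[\,n+1,\ n+2t+1\,]$. Hence, on the event that $A$ outputs a $t$-approximation, its output must be some $b_i$ (so $B$ never fails in line~5), and the rank $\rho+i$ of that output satisfies $n+1\le\rho+i\le n+2t+1$, i.e.\ $n+1-\rho\le i\le n+2t+1-\rho$. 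Then $B$'s output $n+t+1-i$ lies in $[\,\rho-t,\ \rho+t\,]$, i.e.\ it is within $t$ of the true rank $\rho$. Combining the two parts, $B$ outputs a valid $t$-approximation for \rank'\ with probability at least $p$, which is the claim.

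\textbf{Where the difficulty is.} The only place a careless argument could break is the verification that the approximation window $[\,n+1,\ n+2t+1\,]$ is contained in the $b_j$-block for \emph{every} $\rho\in\{0,\dots,n\}$ — this is exactly what forces the choice of $n+2t+1$ padding elements (fewer would let the window spill onto some $a_i$ for extreme $\rho$, making $A$ possibly output an $a_i$) and dictates the input size $2n+2t+1$ in the statement. Everything else is the routine simulation/bookkeeping above, and the $2Q$ query bound and round bound $r$ are immediate since $B$ makes at most one real comparison per query of $A$ (a $(b_i,b_j)$ query makes none) and never adds a round.
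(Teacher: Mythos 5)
Your proof is correct and takes essentially the same approach as the paper's: fix the true rank $\rho=s$ of $b$, observe that $B$ faithfully simulates $A$ on the $(2n+2t+1)$-element order obtained by replacing $b$ with the block $b_1<\cdots<b_{n+2t+1}$, and do the index arithmetic showing $A$'s $t$-approximate median lands in that block at position $i\in[n+1-\rho,\ n+2t+1-\rho]$, so $B$'s output $n+t+1-i$ is within $t$ of $\rho$. The paper states this more tersely (it does not spell out the per-query faithfulness check or the edge-case containment for $\rho\in\{0,n\}$), but the construction and the bookkeeping are identical.
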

\begin{proof}
Suppose the underlying order of $a_1,...,a_n$ and $b$ is $a_{i_1} < a_{i_2} < \cdots a_{i_s} < b < a_{i_{s+1}} < \cdots < a_{i_n}$. Then it's easy to see from $B$ that $A$ is running on the order $a_{i_1} < a_{i_2} < \cdots a_{i_s} < b_1 < \cdots < b_{n+2t+1} < a_{i_{s+1}} < \cdots < a_{i_n}$. So the median of $a_1,...,a_n,b_1,...,b_{n+2t+1}$ will be $b_{n+t +1 -s}$. And the elements that are at most $t$ elements away from median will be $b_{n-s + 1},b_{n-s+2},...,b_{n-s + 2t + 1}$. Therefore, whenever $A$ succeeds, $B$ also succeeds. So $B$ succeeds with probability at least $p$. 
\end{proof}

Next we are going to show an algorithm $C$ obtaining a $t$-approximation for \rank\ based on $B$. Let's formally define \rank$(n,t)$ as the following problem:
\begin{enumerate}
\item There are $n$ elements $a_1,...,a_n$, and another special element $b$. There's an underlying order for each pair $(a_i,b)$. 
\item The goal is to output the number of elements in $\{a_1,...,a_n\}$ that are less than $b$ with additive error at most $t$.  
\item The algorithm is allowed to make noisy comparisons in the form $(a_i,b)$. 
\end{enumerate}
The number of comparisons used in $C$ will be at most twice the number of comparisons used in $B$. The main idea of $C$ is to create an underlying order of $a_1,...,a_n$ without knowing whether each $a_i$ is less than $b$ or not. Here is the algorithm $C$:
\begin{enumerate}
\item Run $B$ on $a_1,...,a_n$ and $b$. 
\item Whenever $B$ asks a comparison:
\begin{enumerate}
\item If the comparison is $(a_i, b)$, just make the same comparison and return the result to $B$. 
\item If the comparison is $(a_i, a_j)$, make two comparisons $(a_i, b)$ and $(a_j,b)$. 
\begin{enumerate}
\item If $a_i$ beats $b$ and $b$ beats $a_j$, then return $a_i$ beats $a_j$ to $B$. 
\item If $b$ beats $a_i$ and $a_j$ beats $b$, then return $a_j$ beats $a_i$ to $B$. 
\item If $b$ beats both $a_i$ and $a_j$, then return $a_i$ beats $a_j$ if $i > j$ to $B$. Otherwise return $a_j$ beats $a_i$ to $B$.
\item If $b$ beats both $a_i$ and $a_j$, then return $a_i$ beats $a_j$ if $i < j$ to $B$. Otherwise return $a_j$ beats $a_i$ to $B$.
\end{enumerate}
\end{enumerate}
\item Output $B$'s output.
\end{enumerate}

We prove the following lemma to show that $C$ works. 
\begin{lemma}
$C$ outputs the number of elements in $\{a_1,...,a_n\}$ that are less than $b$ with additive error at most $t$ with probability at least $p$.
\end{lemma}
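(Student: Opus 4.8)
The plan is to show that, from $B$'s point of view, running $B$ inside $C$ is statistically indistinguishable from running $B$ directly on a genuine \rank'$(n,t)$ instance with a carefully chosen underlying total order, and then to invoke the preceding lemma (that $B$ solves \rank'$(n,t)$ with probability at least $p$).

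First I would fix the ground truth of the \rank\ instance, i.e.\ for each $i$ whether $a_i < b$ or $a_i > b$, and let $s$ be the number of indices $i$ with $a_i < b$; the goal is then to show $C$ outputs a value in $[s-t,s+t]$ with probability at least $p$. I would define a total order $\prec$ on $\{a_1,\dots,a_n,b\}$ that refines the ground-truth relations ``$a_i$ versus $b$'': namely $b$ sits strictly between the $a_i$'s that are below it and those that are above it, and the $a_i$'s on each fixed side of $b$ are ordered among themselves by index (in the direction that the index tie-break in steps (iii)--(iv) of $C$ dictates; the exact direction is irrelevant for what follows). The key observation is that the number of $a_i$ with $a_i \prec b$ equals $s$.

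The heart of the argument is the claim that, conditioned on the ground truth, the joint distribution of the sequence of comparison results that $B$ observes when simulated by $C$ is identical to the joint distribution $B$ would observe when run directly on \rank'$(n,t)$ with underlying order $\prec$. Since $C$ answers each of $B$'s queries using fresh, independent noisy comparisons, it suffices (after conditioning on $B$'s internal randomness) to verify this query-by-query: for each query, the conditional law of the response given the history must match. A query $(a_i,b)$ is trivial --- $C$ returns a genuine noisy comparison of $a_i$ and $b$, whose bias is exactly that prescribed by whether $a_i\prec b$. For a query $(a_i,a_j)$, $C$ returns a deterministic function of two fresh noisy comparisons $(a_i,b)$ and $(a_j,b)$, and I would carry out the short three-case computation --- both of $a_i,a_j$ below $b$, both above $b$, or one on each side --- showing that in every case $\Pr[C\text{ returns ``}a_i\text{ beats }a_j\text{''}]$ equals $1/2+\gamma/2$ if $a_j\prec a_i$ and $1/2-\gamma/2$ if $a_i\prec a_j$, i.e.\ exactly the bias of a fresh noisy comparison under $\prec$. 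The ``one below, one above'' case is the one to watch: there the index-based tie-break of cases (iii)--(iv) is precisely what forces the probability to come out to $1/2+\gamma/2$ independently of $i$ and $j$, and getting the bookkeeping of the four joint outcomes right is the main (though shallow) obstacle.

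Finally I would conclude: by the coupling, $C$'s output has the same distribution as $B$'s output on \rank'$(n,t)$ with underlying order $\prec$; by the preceding lemma $B$ outputs an integer within $t$ of the rank of $b$ in $\prec$, which is exactly $s$, with probability at least $p$; hence so does $C$. The query-complexity bound (at most twice that of $B$, hence at most $2Q$ overall) and the preservation of the round count are immediate, since every query of $B$ is simulated using at most two comparisons issued in the same round.
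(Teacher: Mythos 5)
Your proposal is correct and follows essentially the same approach as the paper's proof: both construct the effective underlying total order induced by $C$'s simulation (ground-truth position of each $a_i$ relative to $b$, with index-based tie-breaking on each side of $b$), verify case-by-case that the conditional law of each simulated query matches a genuine noisy comparison under that order, and then invoke the guarantee for $B$ on \rank'. Your write-up is slightly more explicit about the coupling logic and about the fact that the rank of $b$ under the constructed order equals $s$, but the key per-query case analysis---and the observation that in the mixed case the tie-break cancels dependence on which of $i,j$ is smaller---is identical to the paper's.
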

\begin{proof}
Here we only have to show that $C$ feeds $B$ with the noisy comparison results that are consistent with some underlying order of $a_1,...,a_n$ and $b$, and this order is also consistent with the underlying orders of all pairs $(a_i,b)$ in $C$'s input.  

Let's look at the comparison results fed to $B$. There are several cases. Work through these case we will see what is the underlying order, and thus prove this lemma:
\begin{enumerate}
\item If the comparison is in the form $(a_i,b)$, definitely the above property is satisfied.
\item If the comparison is in the form $(a_i, a_j)$ and $a_i < b$, $a_j< b$, wlog let $i<j$, then the probability that $C$ returns $a_i$ beats $a_j$ to $B$ is $(1/2-\gamma)^2 + (1/2+\gamma)(1/2-\gamma) = 1/2- \gamma$. So for $a_i$'s that are less than $b$, the order the same as the order of indices. 
\item If the comparison is in the form $(a_i, a_j)$ and $a_i > b$, $a_j >  b$, wlog let $i<j$, then the probability that $C$ returns $a_i$ beats $a_j$ to $B$ is $(1/2+\gamma)^2 + (1/2+\gamma)(1/2-\gamma) = 1/2+ \gamma$. So for $a_i$'s that are greater than $b$, the order the same as the reverse order of indices. 
\item If the comparison is in the form $(a_i, a_j)$ and $a_i < b < a_j$,  then the probability that $C$ returns $a_i$ beats $a_j$ to $B$ is $(1/2+\gamma)^2 + (1/2+\gamma)(1/2-\gamma) = 1/2 -  \gamma$. So if $a_i < b < a_j$, no matter $i < j$ or $i > j$, $C$ always feeds $B$ that $a_i$ beats $a_j$ with probability $1/2 - \gamma$. 
\end{enumerate}
\end{proof}
\end{prevproof}

\begin{prevproof}{Theorem}{thm:lbnoisy}
Let $t = n^{3/8}/40$. Suppose we have an algorithm $C$ for \rank\ with $o(n\log n/\gamma^2) < \frac{n\log n}{800\gamma^2}$ comparisons. We are going to show that $C$ obtains a $t$-approximation for \rank\ with probability less than $2/3$. Let's consider $C$ on the following input distribution: each $a_i$ is independently chosen to be less than $b$ with probability $1/2$. Since we are considering the success probability of $C$ on a distribution, we can wlog assume $C$ is deterministic. 

Then we are going to use the technique similar to the technique in the proof of Theorem \ref{thm:findMinlb}. Let $S$ be the comparison results. Let $g(i)$ denote the number of correct comparisons between $a_i$ and $b$. Let $b(i)$ denote the number of wrong comparisons between $a_i$ and $b$. Let $W$ be the event that there are at least $n/4$ $i$'s such that $|g(i) - b(i)| \leq \frac{\log n}{64\gamma}$. Notice $W$ can be directly observed from $S$. Similarly as the proof of Theorem \ref{thm:findMinlb}, we need two new terms $p(i)$ and $q(i)$. They are set to be 0 before the algorithm's run and we will show how they change during the algorithm. For each comparison between $a_i$ and $b$, let the followings happen: 
\begin{enumerate}
\item With probability $\gamma$, $p(i)$ is increased by 1.
\item With probability $(1-\gamma)/2$, $q(i)$ is increased by 1. 
\item With probability $(1-\gamma)/2$, $q(i)$ is decreased by 1. 
\end{enumerate}
It's easy to check that $p(i) + q(i)$ has exactly the same distribution as $g(i) - b(i)$.  From the rule above, we have 
\[
\E \sum_{i=1}nm p(i) = \frac{n \log n }{800\gamma}
\] and 
\[
\E \sum_{i=1}^n q(i)^2  = \frac{(1-\gamma) n \log n}{800\gamma^2}.
\]
By Markov inequality, with probability at least $3/4$, $\sum_{i=1}^n p(i) \leq  \frac{ n\log n }{200\gamma}$. With probability at least $3/4$, $\sum_{i=1}^n q(i)^2 \leq \frac{n\log n}{200\gamma^2}$. By union bound, with probability at least $1/2$, $\sum_{i=1}^n p(i) \leq  \frac{ n\log n }{200\gamma}$ and $\sum_{i=1}^n q(i)^2 \leq \frac{n\log n}{200\gamma^2}$. In this case, at least $n/2$ $i$'s satisfy $p(i) \leq \frac{\log n }{100\gamma}$ and at least $3n/4$ $i$'s satisfy $q(i)^2 \leq  \frac{\log n}{50\gamma^2}$. Then at least $n/4$ $i$'s satisfy both  $p(i) \leq \frac{\log n }{100\gamma}$ and $q(i)^2 \leq  \frac{\log n}{50\gamma^2}$. Then for these $i$'s we have,
\[
|p(i) + q(i)| \leq p(i) + |q(i)| \leq \frac{\log n }{100\gamma} +\frac{\sqrt{\log n} }{\sqrt{50} \cdot \gamma}\leq \frac{\log n}{64\gamma}. 
\]

Now condition on some comparison results $S$ such that $W$ is true. Let $X$ be the set of $i$'s such that $|g(i) - b(i)| \leq  \frac{\log n}{64\gamma}$. Let $<_{\bar{X}}$ be some underlying order between each pair $(a_i,b)$, $i \not \in X$. We also condition on $<_{\bar{X}}$, and we want to study the distribution of underlying orders between pairs $(a_i, b)$, $i \in X$. One critical thing to notice here is that since each $a_i$ is only compared with $b$, even condition on the comparison results $S$, whether $a_i < b$ is independent from whether $a_j < b$. For each $i \in X$, we have
\[
\frac{Pr[a_i < b | S, <_{\bar{X}}]}{Pr[a_i > b | S, <_{\bar{X}}]} = \frac{Pr[a_i < b | S]}{Pr[a_i > b | S]} \leq (\frac{1/2 + \gamma}{1/2 - \gamma})^{\frac{\log n}{64\gamma}} \leq e^{8\gamma \cdot \frac{\log n}{64\gamma}} \leq 2^{\frac{\log n}{4}} = n^{1/4}.
\]
Similarly, we also have 
\[
\frac{Pr[a_i < b | S, <_{\bar{X}}]}{Pr[a_i > b | S, <_{\bar{X}}]}  \geq \frac{1}{n^{1/4}}.
\]
Put them together, we get
\[
\frac{1}{2n^{1/4}} \leq Pr[a_i < b | S, <_{\bar{X}}] \leq 1 - \frac{1}{2n^{1/4}}.
\]
Therefore $Pr[W] \geq 1/2$. 

Finally we are going to use the anti-concentration inequality to show that algorithm $C$  succeeds with probability at most $1/5$ when $W$ holds. Let $u_i = Pr[a_i < b| S,<_{\bar{X}}]$. We have $\E[X_i] = 0$. Let $X_i = 1_{a_i < b|S , <_{\bar{X}}} - u_i$. Let $\sigma^2_i = \E[X_i^2] = (1-u_i)u_i > \frac{1}{4n^{1/4}}$.  Let $\rho_i = \E[|X_i|^3]$. Let $O = \sum_{i \in X} 1_{a_i < b|S , <_{\bar{X}}}$. Let $F(x)$ be the cumulative distribution function of $\frac{O}{\sqrt{\sum_{i \in X} \sigma_i^2}}$. Let $\Phi(x)$ be the cumulative distribution function of standard normal distribution. By Berry-Eseen inequality, we have
\[
\sup_{x \in \mathbb{R}} |F(x) - \Phi(x) | \leq (\sum_{i\in X} \sigma^2_i)^{-1/2} \cdot \max_{i \in X} \frac{\rho_i}{\sigma_i^2}  = \leq (\frac{m}{100} \cdot \frac{1}{4n^{1/4}})^{-1/2} \cdot \max_{i \in X}((1-u_i)^2 + u_i^2) \leq 4n^{-3/8}.
\] 
Let $C(S)$ be the output of the algorithm, we have
\begin{eqnarray*}
Pr[C \text{ succeeds}| S, <_{\bar{X}}] &\leq& F(\frac{C(S) + t - \sum_{i \in X} u_i}{\sqrt{\sum_{i \in X} \sigma_i^2}}) - F(\frac{C(S) - t - \sum_{i \in X} u_i}{\sqrt{\sum_{i \in X} \sigma_i^2}}) \\
&\leq& \Phi(\frac{C(S) + t - \sum_{i \in X} u_i}{\sqrt{\sum_{i \in X} \sigma_i^2}}) - \Phi(\frac{C(S) - t - \sum_{i \in X} u_i}{\sqrt{\sum_{i \in X} \sigma_i^2}})  + 8n^{-3/8}\\
&\leq& \Phi(\frac{t }{\sqrt{\sum_{i \in X} \sigma_i^2}}) - \Phi(\frac{ - t}{\sqrt{\sum_{i \in X} \sigma_i^2}})  + 8n^{-3/8}\\
&\leq& \Phi(1/10) - \Phi(-1/10) + 8n^{-3/8} < 1/5. \\
\end{eqnarray*}
To sum up,
\[
Pr[C \text{ succeeds}] \leq Pr[\neg W] + Pr[C \text{ succeeds}| W] < Pr[\neg W] + Pr[W]  \cdot (1/5) \leq 1/2 + (1-1/2) \cdot (1/5) \leq 2/3. 
\]

\end{prevproof}

\end{document}